\newcommand{\wins}{\mathit{Rank}}
\newcommand{\winsp}{\mathit{RankDown}}
\newtheorem{theorem}{Theorem}
\newtheorem{corollary}[theorem]{Corollary}
\newtheorem{lemma}[theorem]{Lemma}
\newtheorem{proposition}[theorem]{Proposition}
\newtheorem{conjecture}[theorem]{Conjecture}
\newtheorem{definition}[theorem]{Definition}
\newtheorem{example}{Example}
\newtheorem{claim}[theorem]{Claim}
\newcommand{\Oh}{\mathcal{O}}
\newcommand{\C}{\mathcal{C}}
\newcommand{\D}{\mathcal{D}}
\renewcommand{\phi}{\varphi}
\newcommand{\CC}{\C}
\renewcommand{\cal}{\mathcal}
\renewcommand{\subset}{\subseteq}
\newcommand{\from}{\colon}
\newcommand{\set}[1]{\{#1\}}
\renewcommand{\le}{\leqslant}
\renewcommand{\ge}{\geqslant}
\newcommand{\wh}{\widehat}
\newcommand{\DD}{\cal D}
\newcommand{\dist}{\mathrm{dist}}
\newcommand{\CCC}{\C} 
\newcommand{\tup}[1]{{\bar{#1}}} 
\newcommand{\str}{\mathbb}
\newcommand{\Oof}{\mathcal O}
\newcommand{\setof}[2]{\set{#1 \mid #2}}
\newcommand{\tower}{\mathsf{tower}}
\newcommand{\AW}{\mathsf{AW}}
\newcommand{\FPT}{\mathsf{FPT}}
\newcommand{\DKT}{{Dvo\v r\'ak, Kr\'al, and Thomas}\xspace}
\newcommand{\topminors}{\textup{TopMinors}}
\DeclareMathOperator{\poly}{poly}
\renewcommand{\leq}{\leqslant}
\renewcommand{\le}{\leqslant}
\renewcommand{\geq}{\geqslant}
\renewcommand{\ge}{\geqslant}
\renewcommand{\setminus}{-}
\title{Elementary first-order model checking for sparse graphs\thanks{The work of Jakub Gajarsk{\'y}, Micha{\l} Pilipczuk, Marek Soko{\l}owski, Giannos Stamoulis, and Szymon Toru{\'n}czyk on this manuscript is a part of a project that has received funding from the European Research Council (ERC), grant agreement No 948057 --- BOBR. Further work of Jakub Gajarsk{\'y}, after the end of his employment in BOBR, was also supported by the Polish National Science Centre SONATA-18 grant number  2022/47/D/ST6/03421. In particular, a majority of work on this manuscript was done while G.S. was affiliated with University of Warsaw. G.S. also acknowledges support by the French-German Collaboration ANR/DFG Project UTMA (ANR-20-CE92-0027).}}
\date{}
\author{
Jakub Gajarsk{\'y}\thanks{Institute of Informatics, University of Warsaw, Poland (\texttt{gajarsky@mimuw.edu.pl})}
\and
Micha{\l} Pilipczuk\thanks{Institute of Informatics, University of Warsaw, Poland (\texttt{michal.pilipczuk@mimuw.edu.pl})}
\and
Marek Sokołowski\thanks{Max Planck Institute for Informatics, Saarbr\"ucken (\texttt{msokolow@mpi-inf.mpg.de})}
\and
Giannos Stamoulis\thanks{Université Paris Cité, CNRS, IRIF, F-75013, Paris, France (\texttt{stamoulis@irif.fr})}
\and
Szymon Toru{\'n}czyk\thanks{Institute of Informatics, University of Warsaw, Poland (\texttt{szymtor@mimuw.edu.pl})}
}
\newcommand{\N}{\mathbb{N}}
\begin{document}
\maketitle
\thispagestyle{empty}
\begin{textblock}{20}(-1.9, 8.2)
  \includegraphics[width=40px]{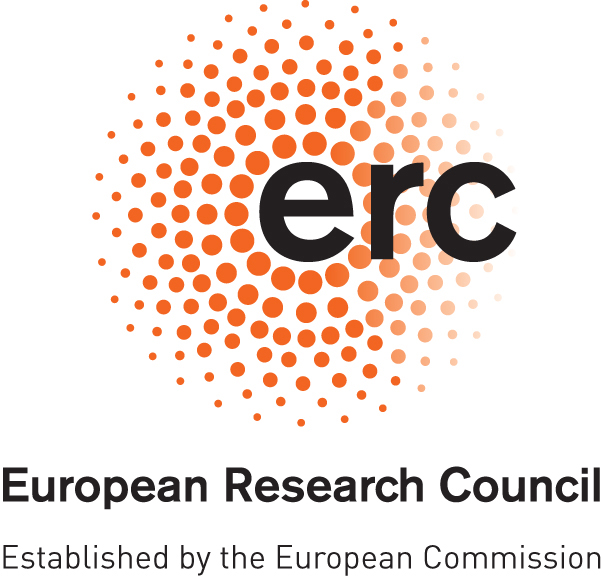}%
\end{textblock}
\begin{textblock}{20}(-2.15, 8.5)
  \includegraphics[width=60px]{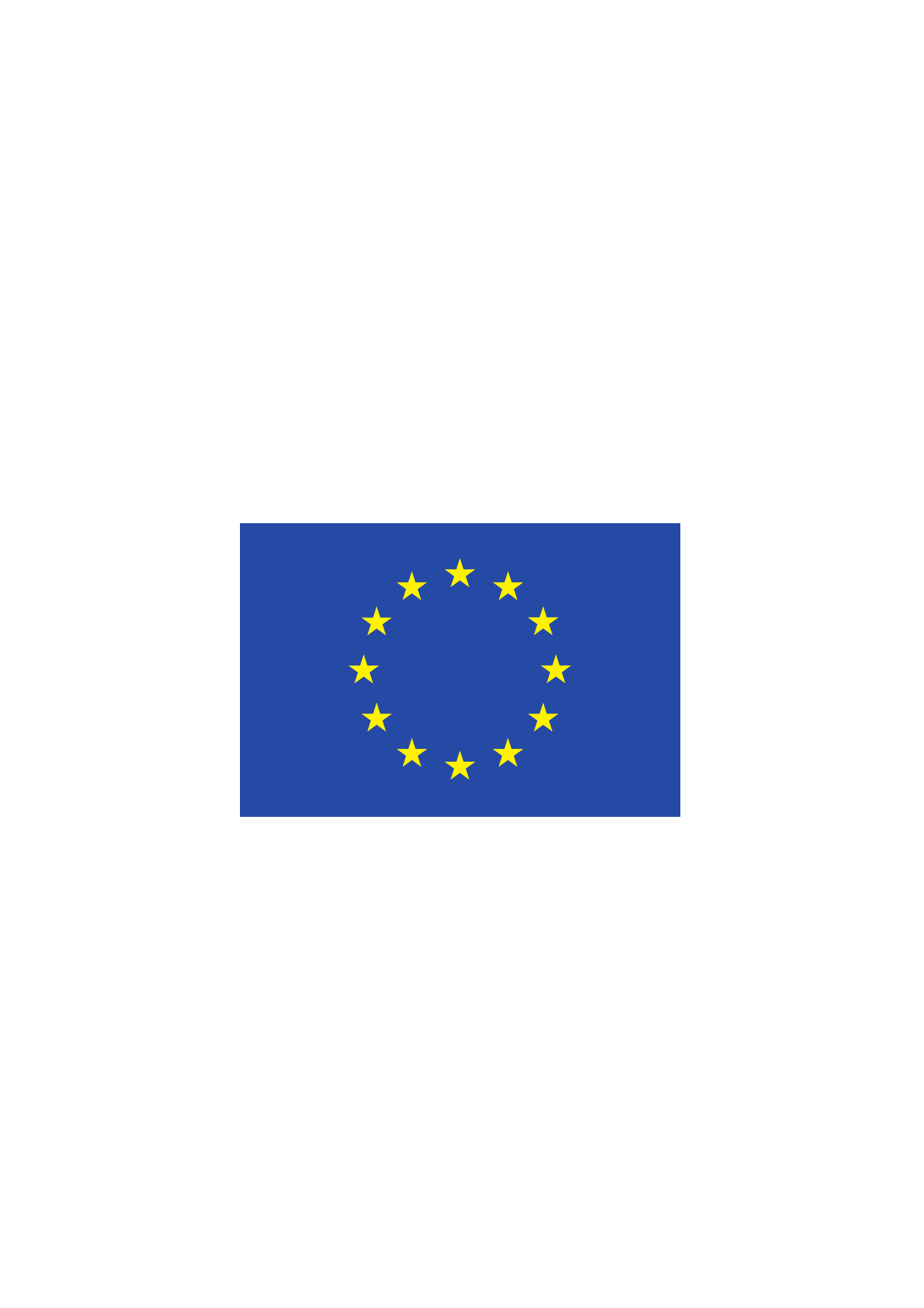}%
\end{textblock}

\begin{abstract}
    It is known that for subgraph-closed graph classes the first-order model checking problem is fixed-parameter tractable if and only if the class is nowhere dense [Grohe, Kreutzer, Siebertz, STOC~2014]. However, the dependency on the formula size is non-elementary, and in fact, this is unavoidable even for the class of all trees [Frick and Grohe, LICS~2002]. On the other hand, it is known that the dependency is elementary for classes of bounded degree [Frick and Grohe, LICS~2002] as well as for classes of bounded pathwidth [Lampis, ICALP 2023]. In this paper we generalise these results and~almost completely characterise subgraph-closed graph classes for which the model checking problem is fixed-parameter tractable with an elementary dependency on the formula size. Those are the graph classes for which there exists a number $d$ such that for every~$r$, some tree of depth $d$ and size bounded by an elementary function of $r$ is avoided as an $({\leq} r)$-subdivision in all graphs in the class. In particular, this implies that if the class in question excludes a fixed tree as a topological minor, then first-order model checking for graphs in the class is fixed-parameter tractable with an elementary dependency on the formula size.
\end{abstract}

\newpage

\section{Introduction}
A major theme in algorithmic graph theory and in finite model theory explores 
the tractability frontier of the model checking problem on restricted graph classes.
In the model checking problem for first-order logic,
the goal is to decide if a given graph $G$ satisfies a given first-order formula $\phi$. In the setting of parameterised complexity, the length of the formula $\phi$ is treated as a parameter,
and then the aim is to devise an algorithm that runs in time 
\begin{align}\label{eq:fpt}
    f(|\phi|)\cdot |G|^c,    
\end{align}
for some constant $c$ and computable function $f$.
Since this problem is believed to be intractable in general (it is $\AW[\ast]$-complete), the objective is to characterise those graph classes $\CC$ for which the problem becomes \emph{fixed parameter tractable} (fpt) when the inputs are restricted to graphs from~$\CC$,
that is, admits an algorithm as in \eqref{eq:fpt} for $G\in\CC$.
Initiated by Flum and Grohe~\cite{flum-grohe},
this  direction of research reveals a deep interplay between structural graph theory,
logic, and algorithms, and has enjoyed major successes.

Results expressing that the model checking problem for 
formulas of some logic is fpt on some class of graphs are sometimes called \emph{algorithmic meta-theorems}.
The prime example of such a result is the theorem of Courcelle~\cite{Courcelle90}, 
which states that model checking  \emph{monadic second-order} (MSO) formulas is fpt on all classes of bounded treewidth. In particular, every problem expressible in MSO can be solved in linear time on all graphs of treewidth bounded by a fixed constant.
However, the hidden constants in such an algorithm
are often astronomical. Indeed, in Courcelle's result, 
the function $f$ in \eqref{eq:fpt} is not even bounded by any elementary function. Equivalently, it is not bounded by any function of the form
$$n\mapsto\underbrace{2^{2^{\cdot^{\cdot^{2^{n}}}}}}_{\text{height $h$}},$$
for a fixed height $h$.
As observed by Frick and Grohe~\cite{frick2004complexity},
this non-elementary growth of $f$ is unavoidable in Courcelle's theorem. We discuss this in greater detail below.

In this paper, we  focus on first-order (FO) formulas,
which are much more restrictive than MSO formulas, but still powerful enough to express problems such as the existence of an independent set of a given size, or the existence of a dominating set of a given size, which are intractable in general (W[1]-complete and W[2]-complete, respectively).

In a landmark paper, Grohe, Kreutzer, and Siebertz \cite{gks} proved that for a \emph{monotone} (that is, subgraph-closed) graph class $\CC$, 
the model checking problem for first-order logic is fpt  on $\CC$ if and only if $\CC$ is \emph{nowhere dense}, a central notion of sparsity theory developed by Ne\v set\v ril and Ossona de Mendez~\cite{sparsity}.
A monotone graph class $\CC$ is called nowhere dense if 
for every $r\in\N$ there is some clique $K_n$ that is not an $r$-shallow topological minor of any graph  $G\in \CC$.
A graph $H$ is an \emph{$r$-shallow topological minor} of $G$ if by replacing edges of $H$ by paths of length at most $r+1$ we can obtain some subgraph of $G$\footnote{In the literature, see e.g.~\cite{sparsity}, one often allows paths of length at most $2r+1$ in the definition of $r$-shallow topological minors, instead of $r+1$. This difference is immaterial for all the results presented in this work.}.
Nowhere dense classes include e.g.\ the class of trees, 
the class of planar graphs, every class of bounded degree,
every class of bounded treewidth, every class that excludes a fixed graph as a (topological) minor, and every class of bounded expansion.

As proved already by Frick and Grohe \cite{frick2004complexity},
assuming $\AW[\ast]{\neq}\FPT$,
even the class of trees
does not admit an algorithm
solving the model checking problem for first-order logic
running in time as in~\eqref{eq:fpt}, if we require $f\from\N\to\N$ to be bounded by an elementary function.
That is, when the inputs graphs $G$ are trees, 
there is no algorithm 
with running time 
\begin{align}\label{eq:elfpt}
    \underbrace{2^{2^{\cdot^{\cdot^{2^{|\phi|}}}}}}_{\text{height $h$}}\cdot |G|^c,    
\end{align}
for any constants $c$ and $h$, unless $\AW[\ast]{=}\FPT$.
On the other hand, Frick and Grohe observed that every class of bounded degree does admit an algorithm with running time as in~\eqref{eq:elfpt} (with $c=1$ and triply exponential dependence on $|\phi|$).

The aim of this paper is to characterise those monotone graph classes for which the model checking problem for first-order logic is \emph{elementarily-fpt}, that is, can be solved in time as in \eqref{eq:elfpt} for graphs $G\in\CC$, where $c$ and $h$ are constants.
For instance, by the results of Frick and Grohe, the model checking problem is not elementarily-fpt on the class of trees (assuming $\AW[\ast]{\neq}\FPT$),
while it is such on every class of bounded degree \cite{frick2004complexity}. 
By a result of Gajarsk{\'y} and Hlin{\v{e}}n{\'y}~\cite{gajarsky-hlineny},
model checking is elementarily-fpt on classes of trees of bounded depth, and more generally, on classes of bounded treedepth (roughly, a graph has bounded treedepth if it has a tree decomposition of bounded depth and bag size).
This result was recently generalized by Lampis~\cite{lampis} to classes of bounded pathwidth (a graph has bounded pathwidth if it admits a tree decomposition in the form of a path, with bags of bounded size).
Note that the results of Lampis and of Frick and Grohe are incomparable in scope, since  bounded degree and  bounded pathwidth are incomparable properties.

\subsection*{Contribution}
Motivated by the apparent connection between the considered problem
and the existence of deep trees in the class, we define the  \emph{tree rank} of a graph class $\CC$.
Roughly, the tree rank of a class $\CC$ is the maximum  $d\in\N\cup\set\infty$ such that $\CC$ contains 
all trees of depth $d$ as $r$-shallow topological minors, for some fixed $r$.
By a \emph{tree} in this paper we always mean a rooted tree, that is, a connected acyclic graph with a distinguished root. The \emph{depth} of a tree is the maximal number of vertices on a root-to-leaf path (so stars rooted naturally have depth $2$).
Let $\cal T_d$ denote the class of trees of depth $d$.
For a graph class $\CC$, let $\topminors_r(\CC)$ denote the class of all $r$-shallow topological minors of graphs $G\in\CC$.

\begin{definition}\label{def:rank}
    The \emph{tree rank} of a graph class $\CC$ is defined as:
    $$\textup{tree\,rank}(\CC):=\max\setof{d\in \N}{\exists r\in\N:
    \cal T_d\subset \topminors_r(\CC)}.$$
    If this maximum exists, we say that $\CC$ has \emph{bounded tree rank} and otherwise, $\CC$ has \emph{unbounded tree rank}.    
\end{definition}

\smallskip
Clearly, the class of all trees has unbounded tree rank.
 On the other hand, every class $\CC$ that excludes some tree $T$ as a topological minor (that is, excludes $T$ as an $r$-shallow topological minor, for all~$r$),
 has tree rank smaller than the depth of $T$.
Below we list some classes that exclude a tree as a topological minor.

\begin{example}\label{ex:examples}
    \begin{enumerate}
    \item The class of trees of depth $d$ has tree rank $d$.
    \item A graph class has bounded degree if and only if it has tree rank~$1$.
    \item The class $\CC$ of graphs of pathwidth $d$ has tree rank exactly $d+1$.
The upper bound uses the fact that the complete ternary tree $T$ of depth $d+2$ has pathwidth $d+1$ \cite[Prop. 3.2]{KIROUSIS1986205} (see also \cite[Theorem 68]{BODLAENDER19981}).
It follows that $\CC$ excludes $T$ as a topological minor, so $\CC$ has tree rank at most $d+1$.
For the lower bound, $\CC$ includes all trees of depth $d+1$.
\end{enumerate}
    \end{example}

    We now give examples of classes that contain every tree as a topological minor, and have tree rank $2$.
\begin{example}
    Consider the class $\CC$ consisting of, for every $n\in\N$ and $n$-vertex graph $G$, the graph $G^{(n)}$ obtained from $G$ by replacing every edge by a path of length $n+1$. Then $\CC$ has tree rank $2$.

    As another example, 
     consider the following class of trees, which also has tree rank $2$. For every tree $T$, root $T$ in any vertex $r$, and construct $T'$ from $T$ by replacing every edge $uv$ with a path of length $d+1$, where $d$ is the distance from $r$ to the closer among $u$ and $v$; then the class consists of all trees $T'$ obtained in this manner.
\end{example}

Clearly, every class of bounded tree rank is nowhere dense.
In fact, those classes have \emph{bounded expansion}, a more restrictive notion which is also central in sparsity theory \cite{grad-and-bounded-expansion-Nesetril} (see Proposition~\ref{prop:be}).

    \medskip
For our main algorithmic result, we need the following, effective variant of tree rank.
\begin{definition}\label{def:el-sparse-rank}
    The \emph{elementary tree rank} of a class $\CC$ 
    is the least number $d\in\N$ with the following property.
 There is an elementary function $f\from \N\to\N$ 
    such that for all $r\in\N$ there is a tree $T$ of depth $d+1$ and of size $f(r)$, such that $T\not\in\topminors_r(\CC)$.
    If such $d$ exists, we say that $\CC$ has \emph{bounded elementary tree rank}.
\end{definition}
Note that if we drop the requirement that $f$ is elementary, the above definition becomes equivalent 
to the definition of tree rank. Hence, if $\CC$ has elementary tree rank at most $d$, then also $\CC$ has tree rank at most $d$, but the converse implication is false.
However, if $\CC$ excludes a tree $T$ as a topological minor, then the condition in Def. \ref{def:el-sparse-rank} holds for  $f\from\N\to\N$ being a constant function (the converse implication also holds).
In particular, all the examples listed in Example~\ref{ex:examples} have bounded elementary tree rank.

\medskip
We can now state our main result.
\begin{theorem}\label{thm:main}
    Let $\CC$ be a graph class of bounded elementary tree rank.
Then model checking first-order logic is elementarily-fpt on $\CC$.
\end{theorem}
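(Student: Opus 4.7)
The plan is to proceed by induction on the elementary tree rank $d$ of the class $\CC$. The base case $d=1$ coincides with bounded degree by Example~\ref{ex:examples}(2), for which Frick and Grohe's triply-exponential model checker \cite{frick2004complexity} already delivers an elementary-fpt algorithm, so there is nothing to do.

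For the inductive step, assume the statement holds for all classes of elementary tree rank at most $d$, and let $\CC$ have elementary tree rank $d+1$. My first move would be to apply Gaifman's locality theorem to rewrite the input formula $\phi$ as a Boolean combination of local sentences of the form
\[
\exists x_1,\ldots,x_s.\ \bigwedge_{i\ne j}\dist(x_i,x_j)>2r\ \land\ \bigwedge_i \psi(x_i),
\]
where $\psi$ is $r$-local and $r,s$ depend elementarily on $|\phi|$. The task then splits into (a) computing, for each vertex $v\in V(G)$, whether the $r$-ball $B_r(v)$ satisfies $\psi$, and (b) a greedy selection of a scattered set of witnesses, which is routine once (a) is handled.

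The core technical step for (a) is a structural lemma: for a class of elementary tree rank $d+1$, every $r$-ball admits a decomposition into a ``core'' around $v$ of elementarily bounded size, together with residual components whose union lies in a graph class of elementary tree rank at most $d$. Using the inductive hypothesis I would compute FO types of the components in elementary time, and then compose them across the core using Feferman--Vaught to obtain the type of $B_r(v)$. Since FO types of any fixed quantifier rank live in an elementarily-bounded space, all the bookkeeping stays within the elementary-fpt regime, and ranging over all $v$ reduces (a) to polynomially many such recursive calls.

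The main obstacle is establishing the structural lemma with elementary dependencies throughout. The intuition is that forbidding an elementary-sized depth-$(d+2)$ tree as an $r$-shallow topological minor limits any $r$-ball to only a few ``branching layers'' above the root, so peeling off these layers must drop the tree rank by one. Making this precise while avoiding the non-elementary overhead of generalized-coloring-number arguments (standard in the nowhere dense setting) is the delicate issue, and it is precisely here that the \emph{elementary} bound on the size of the excluded tree, rather than a mere upper bound, should be indispensable.
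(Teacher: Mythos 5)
Your plan shares the paper's skeleton (Gaifman locality plus induction on the rank, driven by the fact that removing a small set from an $r$-ball makes things ``simpler''), but the key structural lemma is stated in a form that is not actually true, and the paper explicitly flags this as the central technical subtlety. If you delete the Splitter's batch of $m$ vertices from $B_r(v)$, the resulting graph has smaller $(r,m)$-\emph{game rank} --- i.e.\ Splitter wins faster in the radius-$r$ game on it --- but it does \emph{not} have smaller tree rank: the same graph can still host deep $r'$-shallow topological minors of trees for radii $r'>r$. So the class $\{B_r^G(v)\setminus S : G\in\CC,\ v\in V(G),\ S\ \text{good}\}$ need not have elementary tree rank $d$, and your induction hypothesis (stated on elementary tree rank) cannot be applied to it. The paper resolves this by strengthening the inductive statement: it is parameterised by a radius $r$ and proved for arbitrary classes of bounded $(r,m)$-game rank at most $d$ (Lemma~\ref{lem:alternation}, built on the game characterisation of Theorem~\ref{thm:game_char}), not merely for classes of tree rank $d$. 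Without this reformulation your induction simply does not close.

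Two further points. First, removing a bounded batch $S$ from $B_r(v)$ does not split the ball into components with the low-rank part sitting cleanly in separate pieces, so plain Feferman--Vaught composition does not apply; the paper instead encodes the removal and the interaction between $S$ and the rest via an interpretation with parameters (Section~\ref{sec:aux}, Lemma~\ref{lem:interp_lemma}), which plays the role you were hoping FV would play. Second, and at a higher level, the paper's route is genuinely different from yours after Gaifman: rather than computing $r$-ball types and composing, it first proves a normal-form theorem (Theorem~\ref{thm:alternations}: every formula is equivalent on $\CC$ to a formula of alternation rank $\le 3d$, of elementary size), then observes that bounded tree rank implies elementarily bounded expansion (Proposition~\ref{prop:be}), and finally invokes a refined Dvo\v r\'ak--Kr\'al--Thomas quantifier-elimination procedure whose cost is elementary for fixed alternation rank (Lemma~\ref{lem:dkt-constant}), wrapped in a ``selector'' argument (Lemmas~\ref{lem:selectors-suffice} and~\ref{lem:selectors-exist}) to sidestep the fact that the low-alternation formula is only known to exist, not to be computable in elementary time. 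That separation of concerns --- logical normal form first, then generic sparse-graph QE --- is what keeps the bookkeeping manageable; folding it all into a single recursion as you propose would in effect force you to re-derive the bounded-expansion QE machinery inside the inductive step, and it is not clear that can be done cleanly.
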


\begin{corollary}\label{cor:main}
    Let $\CC$ be a graph class that avoids some fixed tree $T$ as a topological minor.
    Then model checking first-order logic is elementarily-fpt on $\CC$.
\end{corollary}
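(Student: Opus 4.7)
The plan is to deduce this corollary directly from Theorem~\ref{thm:main} by verifying that a class excluding a fixed tree as a topological minor has bounded \emph{elementary} tree rank, not merely bounded tree rank. The authors essentially flag this equivalence in the paragraph preceding Theorem~\ref{thm:main}, so the proof amounts to unpacking the definitions.

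Concretely, suppose $T$ is a tree of depth $h$ which is not a topological minor of any $G \in \CC$. Unfolding definitions, this means that for no $r \in \N$ can we obtain a subgraph of any $G \in \CC$ by replacing edges of $T$ by internally disjoint paths of length at most $r+1$; equivalently, $T \notin \topminors_r(\CC)$ for every $r \in \N$. Then the constant function $f \from \N \to \N$ defined by $f(r) := |T|$ is trivially elementary and witnesses the condition from Definition~\ref{def:el-sparse-rank}: for every $r$ there exists a tree (namely $T$ itself) of depth $h$ and of size $f(r) = |T|$ that is not in $\topminors_r(\CC)$. This shows that $\CC$ has elementary tree rank at most $h - 1$, and in particular bounded elementary tree rank.

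The result then follows by applying Theorem~\ref{thm:main} to $\CC$. There is essentially no obstacle: the entire work lies in Theorem~\ref{thm:main}, and the corollary is just the observation that the topological-minor-free hypothesis is one of the simplest ways of certifying bounded elementary tree rank, because the witnessing family of excluded trees can be taken to be a single tree $T$ (yielding a constant, hence elementary, size bound).
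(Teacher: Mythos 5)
Your proof is correct and is exactly the argument the paper intends: the paper explicitly remarks just before Theorem~\ref{thm:main} that excluding a fixed tree $T$ as a topological minor means the condition in Definition~\ref{def:el-sparse-rank} holds with a constant function $f$, and Corollary~\ref{cor:main} then follows by applying Theorem~\ref{thm:main}. Your unpacking (taking $T$ of depth $h$, so $\CC$ has elementary tree rank at most $h-1$) is precisely this observation made rigorous.
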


Theorem~\ref{thm:main} (and even Corollary~\ref{cor:main}) vastly extends the result of Frick and Grohe \cite{frick2004complexity}
(concerning classes of bounded degree), as well as the result of Lampis \cite{lampis} 
(concerning classes of bounded pathwidth); see Example~\ref{ex:examples}.

Conversely, we prove the following lower bound, generalizing (and building upon) the result of Frick and Grohe~\cite{frick2004complexity}.

\begin{theorem}\label{thm:lowerbound-intro}Assume $\AW[*]{\neq}\FPT$.
    Let $\CC$ be a monotone graph class.
    If $\CC$ admits elementarily-fpt model checking first-order logic,
    then $\CC$ has bounded tree rank.
\end{theorem}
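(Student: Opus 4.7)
The plan is to prove the contrapositive: assuming $\mathcal{C}$ is monotone and has unbounded tree rank, I will show that first-order model checking on $\mathcal{C}$ is not elementarily-fpt (modulo $\AW[\ast] \neq \FPT$). The core idea is to transfer the non-elementary Frick--Grohe lower bound from trees to $\mathcal{C}$, by encoding bounded-depth trees as shallow subdivisions inside graphs of $\mathcal{C}$.

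The first step is a depth-calibrated strengthening of Frick and Grohe's theorem: for every $H \in \mathbb{N}$ there exists $d = d(H) \in \mathbb{N}$ such that, unless $\AW[\ast] = \FPT$, first-order model checking on trees of depth at most $d$ cannot be performed in time $\tower_H(|\phi|) \cdot |T|^{O(1)}$. This formulation is not stated explicitly in \cite{frick2004complexity}, but can be extracted from their proof: the reduction from a non-elementary decision problem produces, at tower height $H$, tree instances whose depth grows at least linearly in $H$. By a routine modification (contracting maximal degree-two chains in these trees), one may additionally assume that the hard trees have no vertex of degree exactly two.

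The second step is the encoding of bounded-depth trees into $\mathcal{C}$. Fix $d$; by unbounded tree rank, there is a constant $r$ such that every tree of depth $d$ is an $r$-shallow topological minor of some graph in $\mathcal{C}$. Since $\mathcal{C}$ is monotone, for any such tree $T$ without a vertex of degree two, some subdivision $S$ of $T$ (with each edge subdivided at most $r$ times) already belongs to $\mathcal{C}$. Inside $S$, the vertices of $T$ are precisely those of degree different from two (subdivision vertices have degree exactly two), and edges of $T$ correspond to paths of length at most $r+1$ in $S$ whose interior vertices all have degree two. Since $r$ is a fixed constant, both conditions are first-order expressible by formulas of length $O(1)$, giving a first-order interpretation of $T$ inside $S$. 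Translating a formula $\phi$ over $T$ through this interpretation produces a formula $\phi^{\star}$ over $S$ with $|\phi^{\star}| = O(|\phi|)$ and $|S| = O(|T|)$, such that $T \models \phi$ if and only if $S \models \phi^{\star}$.

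To conclude, suppose for contradiction that first-order model checking on $\mathcal{C}$ admits an elementarily-fpt algorithm with running time $\tower_H(|\phi|) \cdot |G|^c$ for some constants $H, c$. Composing with the encoding above at $d = d(H+1)$ (from Step~1) yields an algorithm for first-order model checking on depth-$d$ trees with running time bounded by $\tower_H(O(|\phi|)) \cdot O(|T|)^c \leq \tower_{H+1}(|\phi|) \cdot |T|^{O(1)}$, contradicting the Step~1 bound at depth $d$. Hence $\AW[\ast] = \FPT$. I expect the main obstacle to be the first step---isolating the depth-sensitive statement from Frick and Grohe's argument; a secondary complication is that the map $T \mapsto S$ is slightly non-uniform (one picks a witness subdivision in $\mathcal{C}$), but since only existence is needed for a lower bound, this level of non-uniformity is harmless.
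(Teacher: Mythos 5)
Your high-level plan is the same as the paper's: encode bounded-depth trees as $\leq r$-subdivisions inside $\CC$ (using unbounded tree rank and monotonicity), recover the tree by a first-order interpretation that distinguishes principal vertices (degree $\neq 2$) from subdivision vertices, and then transfer a depth-calibrated Frick--Grohe lower bound. However, the proposal has two genuine gaps.

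First, your Step~1 --- the depth-calibrated strengthening --- is asserted rather than proved, and it is in fact the heart of the matter. The paper devotes Lemma~\ref{lem:lowerbound:constr} (encoding an $n$-vertex graph $G$ into a forest $F$ of depth $d+2$ over $[m]$, where $m$ is chosen so that $\tower(d,m-1)< n \le \tower(d,m)$, with $|F|=\Oh(n^{2d})$ and formula size $\Oh(|\varphi|\, d m)$), Lemma~\ref{lem:uncoloring-forests} (de-colouring), and the careful tower-arithmetic in Lemma~\ref{lemma:lower-bound:reduction} precisely to establish what you summarise in one sentence. The claim that this ``can be extracted from their proof'' is doing all the work; without it the argument is not a proof.

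Second, the ``routine modification (contracting maximal degree-two chains)'' to make the hard trees degree-$2$-free does not work. Contracting a long path changes which first-order sentences the tree satisfies --- a path and a single edge are not elementarily equivalent --- so the contracted trees are not a faithful substitute for the original hard instances, and the Frick--Grohe reduction would break. The paper uses a different device (Lemma~\ref{lem:uncoloring-forests}): each node of colour $i$ (with extra colours for ``root'' and ``internal node'') receives $i+2\ge 3$ pendant leaves, which simultaneously makes every original node of degree at least~$3$, encodes the colours as degree counts, and preserves the formula up to a linear blow-up. You would need this, or some similarly careful encoding, to make your Step~2 interpretation sound.
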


Thus, Theorem~\ref{thm:main} and Theorem~\ref{thm:lowerbound-intro} almost completely characterise those monotone graph classes 
for which the model checking problem is elementarily-fpt.
There is a slight gap left in between the two statements, 
as there exist classes that have 
bounded tree rank but have unbounded elementary tree rank.

\paragraph{Alternation hierarchy}
As another main contribution, we study the expressive power of first-order logic on classes of bounded tree rank.
Namely, we prove that every first-order formula is equivalent on such a class to a formula with a bounded number of quantifier alternations (roughly, the number 
of alternations between existential and universal quantifiers in the formula, assuming the negations are only applied to atomic formulas). In all statements to follow, by formulas we mean first-order formulas.

\begin{theorem}\label{thm:alt}
    Let $\CC$ be a graph class of tree rank $d$.
    Then every formula $\phi$ is equivalent on $\CC$
    to a formula $\psi$ of alternation rank $3d$.
    Also, if $\CC$ has elementary tree rank $d$,
    then the size of $\psi$ is elementary in the size of~$\phi$.
\end{theorem}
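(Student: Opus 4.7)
The plan is to prove Theorem~\ref{thm:alt} by induction on the tree rank $d$ of $\CC$. The base case $d=0$ is trivial, since such classes contain essentially no graphs, so every formula is equivalent to a constant. For the inductive step the strategy is to combine Gaifman's locality theorem with a structural ``peeling'' lemma that trades one unit of tree rank for exactly three quantifier alternations.

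First I would apply Gaifman's locality theorem to $\phi$, obtaining an equivalent Boolean combination of basic local sentences of the form $\exists x_1 \dots x_k.\;\bigwedge_{i\ne j} \dist(x_i,x_j) > 2r \,\wedge\, \bigwedge_i \psi^r(x_i)$, where each $\psi^r(x)$ is $r$-local. The outer block contributes a bounded number of alternations, so the task reduces to rewriting the $r$-local formulas $\psi^r(x)$ on the relativised neighbourhood class, which again has tree rank at most $d$.

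Second, I would establish a structural peeling lemma: on any class of tree rank $d$, every $r$-local formula is equivalent to an $\exists\forall\exists$-formula whose innermost body is evaluated in a derived class of tree rank $d-1$. The intuition is that the forbidden-tree property at depth $d+1$ yields a shallow vertex set whose removal provably drops the tree rank; the three alternations encode (i) existentially guessing this set, (ii) universally verifying its extremal/separating properties, and (iii) existentially describing the behaviour of the resulting pieces. Applying the inductive hypothesis to the inner formula yields alternation rank $3(d-1)$ there, and prefixing the outer $\exists\forall\exists$ gives the target bound $3d$.

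For the elementary size bound, I would verify that Gaifman's theorem and the peeling step each incur at most an elementary blow-up in formula size; since we recurse $d$ times and a finite composition of elementary functions is elementary, the final $\psi$ has size elementary in~$|\phi|$, provided the ``forbidden tree'' witnessing the tree-rank bound has size elementary in $r$ -- which is precisely the elementary tree rank hypothesis. The main obstacle will be the peeling lemma: one must translate the purely combinatorial statement ``no tree of depth $d+1$ embeds as an $r$-shallow topological minor'' into a first-order rewriting that produces an \emph{explicitly definable} shallow set whose removal lowers the tree rank, and do this with exactly three alternations and only elementary blow-up. This is the place where sparsity theory (how to extract such separators from the forbidden-subtree property) meets logic (how to quantify over them in first-order), and it is the technically decisive step.
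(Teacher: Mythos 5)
Your overall strategy matches the paper's: apply Gaifman locality to reduce to $r$-local formulas, then peel one level of tree rank per inductive step, with each level of peeling costing roughly three quantifier blocks. The alternation-count accounting ($3d$) and the argument for elementary size (bounded recursion depth, elementary blow-up per step) are also in line with what the paper does. That much is on target.

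However, the central claim of your ``peeling lemma'' — that the innermost body is evaluated in a \emph{derived class of tree rank $d-1$} — is exactly the naive statement the paper identifies as false. Removing a small set $S$ from the ball $B^G_r(v)$ makes Splitter win the $m$-batched game of radius $r$ one round faster, but this says nothing about the game for radii larger than $r$; the class
$\{B^G_r(v)\setminus S : G\in\CC,\ S \text{ $v$-good}\}$
need not have tree rank $d-1$. Since the inductive hypothesis for the inner formula will itself invoke Gaifman at a larger locality radius, your induction would not go through as stated. The paper's fix is to switch from tree rank to a radius-parameterised notion: it first proves the game characterisation (Theorem~\ref{thm:game_char}) relating tree rank to the batched Splitter game, introduces the $(r,m)$-game rank (Definition~\ref{def:rm_rank}), and then proves the stronger statement (Lemma~\ref{lem:alternation}) that for each $d$ and each quantifier rank $q$ there is a \emph{pre-computed} radius $r(d,q)$, set so that it already accounts for the Gaifman radii used at every level of the recursion (via $r(d,q):=r(d-1,f(q))$), such that the rewriting works for all graphs of $(r,m)$-game rank $d$. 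Removing $S$ from the ball then provably yields a graph of $(r,m)$-game rank $d-1$ for \emph{that same} $r$, which is what the inner inductive hypothesis needs.

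A second ingredient you do not mention but would have to supply is how the inner formula, which speaks about the modified ball $B^G_r(v)\ast S$, is pulled back to a formula over $G$ itself. The paper handles this with interpretations with parameters (Lemma~\ref{lem:interp_lemma}) together with a batched formula $\winsp$ (Lemma~\ref{lem:swins_formula}) that \emph{defines} the set $S$ as ``a set whose removal drops the game rank''; your plan gestures at ``universally verifying its extremal/separating properties'' but this verification itself must be a bounded-alternation formula, and making it so requires the explicit game-rank formula. Without these two pieces — the strengthened, radius-parameterised induction hypothesis and the interpretation machinery — the peeling step as you describe it does not close.
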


Apart from being of independent interest from the logical perspective, this result is a key part of our proof of Theorem~\ref{thm:main}.

It is well known that 
the alternation hierarchy is strict:
for every number $k$ there is a formula $\phi$ of alternation rank $k+1$ that is not equivalent 
to any formula of alternation rank~$k$.
We prove that among monotone graph classes,
classes of bounded tree rank are precisely those classes, for which 
the alternation hierarchy collapses.

\begin{theorem}\label{thm:collapse}
    Let $\CC$ be a monotone graph class. Then $\CC$ has bounded tree rank if and only if there is $k\in\N$ such that every formula is equivalent to a formula of alternation rank $k$ on~$\CC$.
\end{theorem}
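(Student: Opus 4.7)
For the forward direction, suppose $\CC$ has bounded tree rank, say $d$. Then Theorem~\ref{thm:alt} immediately gives the collapse: every formula is equivalent on $\CC$ to a formula of alternation rank $3d$, so the hierarchy collapses at level $k = 3d$. The substantive content lies in the converse, which I address below by proving the contrapositive.

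Assume $\CC$ is monotone and has unbounded tree rank; the goal is to exhibit, for every $k \in \N$, a formula of alternation rank greater than $k$ that is not equivalent on $\CC$ to any formula of alternation rank $k$. The first step exploits monotonicity. Unbounded tree rank means that for every $d$ there is some $r_d \in \N$ with $\cal T_d \subset \topminors_{r_d}(\CC)$, so every $(\le r_d+1)$-subdivision of every tree of depth $d$ is a subgraph of some graph in $\CC$. Since $\CC$ is subgraph-closed, these subdivisions themselves lie in $\CC$, so $\CC$ contains a rich family of subdivided trees of arbitrary depth.

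The idea is then to transfer a strict first-order alternation hierarchy from trees to $\CC$ via a low-complexity interpretation. I would invoke the classical strictness result on the class of all trees: for every $k$ there is a formula of alternation rank $k+1$ inequivalent on trees to any formula of alternation rank $k$, provable via an Ehrenfeucht--Fra\"iss\'e argument with bounded alternation on deeply nested trees, or by encoding $\Sigma_k$-QBF. To move from trees to their shallow subdivisions, observe that given a subdivision $T'$ of a tree $T$ with edge-paths of length at most $t$, the tree $T$ is FO-interpretable inside $T'$: its original vertices are those of degree $\neq 2$, and its edges are given by the existential formula ``there is a path of length at most $t$ whose internal vertices all have degree $2$''. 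This interpretation has constant alternation rank independent of $t,d,k$. Hence a strictness witness of alternation rank $k+1$ on trees pulls back to a strictness witness of alternation rank $k + O(1)$ on the subdivided trees present in $\CC$, yielding non-collapse of the alternation hierarchy on $\CC$.

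Two technical points will require care. First, the strict alternation hierarchy on trees must be formulated so that for each $k$ the witness trees lie in an explicit bounded family of depths; this is standard but passes through a delicate bounded-alternation EF game. Second, the naive interpretation above assumes original tree vertices can be identified by having degree $\neq 2$, which fails for internal vertices of degree $2$ (as in paths). I would address this by restricting the strictness witnesses to trees in which every internal vertex has degree at least $3$---obtained, for instance, by padding internal vertices with extra leaves in a way that preserves inequivalence of the chosen formulas. This is the main obstacle: stating the tree-strictness result robustly enough so that it survives the subdivision interpretation with only constant loss in alternation rank.
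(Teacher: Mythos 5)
Your forward direction is correct and coincides with the paper's: it is an immediate consequence of Theorem~\ref{thm:alt}. The converse, however, has two related gaps.

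First, the claim that unbounded tree rank, together with monotonicity, yields that \emph{every} $({\le}r_d)$-subdivision of every tree of depth $d$ lies in $\CC$ is false. The condition $\cal T_d\subset\topminors_{r_d}(\CC)$ only says that for each tree $T$ of depth $d$ there is \emph{some} $({\le}r_d)$-subdivision of $T$ that is a subgraph of a member of $\CC$; monotonicity then puts that particular subdivision into $\CC$, but different subdivisions of the same tree are not subgraphs of one another, so nothing forces the other subdivisions to be present. This matters because the rest of your argument tacitly assumes you may pick a canonical subdivision (e.g.\ the exact-$r$-subdivision) to work with.

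Second, and more fundamentally, the interpretation you build goes the wrong way for the contradiction you need. Given a strictness witness $\phi_k$ on trees, your FO-interpretation $I$ from a subdivision $T'$ to the underlying tree $T$ produces $\hat\phi_k = I(\phi_k)$ on $T'$ with alternation rank roughly $k+O(1)$. That is fine. But to contradict a hypothetical collapse on $\CC$, you must take a candidate low-rank $\psi$ on $\CC$ that is equivalent to $\hat\phi_k$, and turn it into a low-rank formula $\psi^\dagger$ \emph{on rooted trees} with $T\models\psi^\dagger\iff T'\models\psi$ for the specific subdivision $T'$ of $T$ in $\CC$. This requires an interpretation from $T$ to $T'$, not from $T'$ to $T$, and it is only possible if you know (and can encode into $\psi^\dagger$) the lengths of the long edges of $T'$. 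Since you do not control which subdivision landed in $\CC$, the lengths can vary arbitrarily among long edges, and no bounded-size FO-interpretation recovers $T'$ from $T$. You never address this direction. The paper's proof exists precisely to handle it: it first applies a Ramsey-type extraction (Lemma~\ref{lemma:ramsey-like}, Lemma~\ref{lemma:canonical-subd-tree}) to show that for each $T$ some \emph{level-uniform} subdivision $T^{(r_1,\ldots,r_d)}$ with all $r_i\in\{1,\ldots,r{+}1\}$ is in $\CC$, and then proves a careful level-by-level translation (Claim~\ref{lem:ass}) that converts any formula on $T^{(r_1,\ldots,r_d)}$ into a formula of the \emph{same} alternation rank on $T$. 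Without both of these ingredients the reduction back to the Chandra--Harel hierarchy on rooted trees does not close, so the non-collapse conclusion does not follow.

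Your observation that the strictness witnesses should be taken among trees with no degree-2 vertices (to make principal vertices FO-recognizable) matches what the paper does in Lemma~\ref{lem:chandra_harel}, and the high-level plan of transferring the Chandra--Harel hierarchy via subdivisions is the right one; what is missing is the control over which subdivisions actually occur in $\CC$ and the reverse translation.
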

The forward implication follows immediately from Theorem~\ref{thm:alt}, whereas the backward implication
extends the known proof
of the strictness of the alternation hierarchy, due to Chandra and Harel~\cite{chandra1982structure}.

The overview of the proofs of the main results is relegated to Section~\ref{sec:overview}.

\subsection*{Discussion}
The following discussion puts our results in a broader context, of sparsity theory and of model theory.
We also discuss related~work.

\paragraph{New notion in sparsity theory}
Sparsity theory defines combinatorial notions
of tameness of graph classes, such as bounded expansion and nowhere denseness.
These notions are often defined by forbidding certain 
graphs as shallow topological minors, 
and the theory then develops decomposition results,
or duality results, which state that such notions can be alternatively characterised in terms of the existence of certain decompositions. Such decompositions often turn out to be useful for algorithmic and logical~purposes.

Bounded tree rank is another notion in this vein.
In this paper, we develop basic decomposition results 
for such classes, and demonstrate that they can be employed to obtain results in algorithms and logic.
We believe that this new notion will grow to become an important part of sparsity theory.

\paragraph{Towards dense graph classes}
A recent trend in algorithmic graph theory is to lift 
structural and algorithmic results concerning sparse graphs,
to the setting of dense (or not necessarily sparse) graphs.
Monotone graph classes with fpt model checking of first-order logic are completely understood, and are exactly the nowhere dense classes. In a nowhere dense graph class,
every $n$-vertex graph has $\Oof(n^{1+\varepsilon})$ edges, for every fixed $\varepsilon>0$~\cite{dvorak2007asymptotical},
and therefore such classes are considered sparse.

On the other hand, there are hereditary classes with fpt model checking that are not sparse, such as the class of all cliques (trivially), but also 
e.g. classes of bounded cliquewidth.

Lifting sparsity theory to the setting of dense classes is a vast project, with many promising results \cite{bd_deg_interp,tww1,tww4,blcw,dms,DBLP:journals/corr/abs-2311-18740,flipwidth}. It reveals a deep connection between sparsity theory and stability theory, which is now the main branch of model theory.
It is conjectured that a hereditary graph class $\CC$ 
has fpt model checking of first-order logic if and only if $\CC$ is \emph{monadically NIP}, that is, $\CC$ does not transduce the class of all graphs. 
Roughly, a class $\CC$ \emph{transduces} a class $\DD$, if there is a first-order formula $\phi(x,y)$ 
such that every graph $H\in\DD$ can be obtained from some (colored) graph $G\in \CC$, by defining new edges via the formula $\phi(x,y)$, and then removing some vertices.
Several important notions from sparsity theory and model theory are (conjecturally) characterised and related in terms of forbidding transductions to some classes \cite{gajarsky2022stable, transduction-quasiorder, flipwidth}.
For instance, it is known that model checking is fpt on \emph{structurally nowhere dense} classes \cite{dms} --- classes that can be transduced from some nowhere dense class --- and this has been very recently extended to \emph{monadically stable} classes --- classes that do not transduce the class of all half-graphs. Such classes are typically dense.

Similarly, there are hereditary classes with elementarily-fpt model checking that are not sparse,
as listed below. Here, for a property $\Pi$, we say that a class is {\em{structurally $\Pi$}} if it can be transduced from a class satisfying $\Pi$.

\begin{example}\label{ex:dense-mc}
    \begin{enumerate}
\item 
A class has bounded shrubdepth \cite{shrubdepth-journal} if and only if it can be transduced from a class of trees of bounded depth.
Every class of bounded shrubdepth  admits elementarily-fpt model checking of first-order logic, which can be easily derived from \cite{gajarsky-hlineny}.
\item  A class has structurally bounded degree if it can be transduced from a class of bounded degree. 
It follows easily from \cite{bd_deg_interp} that
every class of structurally bounded degree
has elementarily-fpt model checking of first-order logic.
\item A class has structurally bounded pathwidth if it can be transduced from a class of bounded pathwidth. 
It follows easily from \cite{lampis,NesetrilRMS20} that 
every class of structurally bounded pathwidth
has elementarily-fpt model checking of first-order logic.
    \end{enumerate}
\end{example}

The research in this paper is motivated by the following conjecture,
posed by Pilipczuk and Toruńczyk  \cite[Conjecture 11.26]{flipwidth}.

\begin{conjecture}[\cite{flipwidth}]\label{conj:dense}
    A hereditary graph class $\CC$ has elementarily-fpt model checking if and only if $\CC$ does not transduce the class $\cal T_d$ of all trees of depth $d$, for some $d\in\N$.
\end{conjecture}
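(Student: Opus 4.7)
The plan is to split the biconditional, expecting the forward direction---elementarily-fpt model checking on $\CC$ implies that $\CC$ does not transduce $\cal T_d$ for some $d$---to be substantially more tractable than the backward direction. For the forward direction I argue by contrapositive, aiming to transport Frick and Grohe's non-elementary lower bound from the class of trees to $\CC$. Their construction produces, for each $h$, a formula $\phi_h$ of size $\Oof(h)$ whose model checking on trees requires dependency on $|\phi|$ of tower height at least $h$; the witness trees can be chosen of depth at most some $d_h$. Assuming $\CC$ transduces every $\cal T_d$, pick a transduction $\tau_{d_h}$ with $\cal T_{d_h} \subseteq \tau_{d_h}(\CC)$, and translate $\phi_h$ through $\tau_{d_h}$ into a formula $\phi_h'$ on $\CC$ of size $c(d_h)\cdot|\phi_h|$, whose model checking encodes that of $\phi_h$ on trees. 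One subtle point to address is that the transductions should be chosen so that $c(d_h)$ is elementary in $h$; this should follow from a uniformity/amalgamation argument on the transduction quasi-order of \cite{transduction-quasiorder}. Granting this, elementarily-fpt model checking on $\CC$ with fixed tower height $H$ would solve each $\phi_h$ on trees in tower height $H+\Oof(1)$, contradicting Frick and Grohe for $h$ sufficiently large.

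For the backward direction I would aim to establish a dense counterpart of our Theorem~\ref{thm:main}. The goal is to introduce a dense analogue of tree rank---call it \emph{flip tree rank}---by replacing the shallow topological minor operation in Definition~\ref{def:rank} by a flip-closed analogue in the spirit of the flipwidth framework of~\cite{flipwidth}, and to prove (a) that a hereditary class not transducing $\cal T_d$ for some $d$ has bounded flip tree rank, and (b) that bounded flip tree rank implies elementarily-fpt model checking. Part (b) should mirror the proof of Theorem~\ref{thm:main}: a decomposition of $\CC$ into bounded-depth flip-trees of bounded-size pieces, followed by a quantifier-elimination procedure yielding equivalent formulas of elementary size and bounded alternation rank, i.e., a dense analogue of Theorem~\ref{thm:collapse}.

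The principal obstacle is the backward direction, concretely the structural extraction (a) above. Extracting a flip-tree-rank decomposition from the assumption that some $\cal T_d$ is not transduced by $\CC$ is not known even in restricted dense regimes such as bounded twin-width or monadically stable classes, and appears to require genuinely new tools in the structural theory of first-order transductions. A secondary difficulty mirrors the gap between Definitions~\ref{def:rank} and~\ref{def:el-sparse-rank} in the monotone case: even granting the structure theorem, one may only obtain a non-effective bound on the depth of the decomposition, leaving a residual gap that closes only via an effective refinement in the spirit of Definition~\ref{def:el-sparse-rank}. A natural intermediate target would be to prove the conjecture first for classes of bounded twin-width, in order to calibrate the correct flip-based invariant before tackling the general case.
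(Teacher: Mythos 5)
The statement you are attempting to prove is an \emph{open conjecture} (attributed to~\cite{flipwidth}); the paper does not prove it and explicitly presents it only as motivation. The paper's contributions towards it are partial: Theorem~\ref{thm:main} and Theorem~\ref{thm:lowerbound-intro} together resolve the monotone (sparse) case up to the gap between tree rank and elementary tree rank, and Theorem~\ref{thm:dense} (in Section~\ref{sec:dense}) shows that for weakly sparse classes the transduction-based rank and tree rank coincide, thereby transporting those results into the language of the conjecture. For general hereditary classes the conjecture remains unresolved, and your proposal --- as you acknowledge --- does not close it either.

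Beyond the openly conceded gap in the backward direction, the forward direction of your sketch has a gap that you flag but underestimate. You write that it ``should follow from a uniformity/amalgamation argument'' that the transductions $\tau_{d_h}$ witnessing $\cal T_{d_h}\subseteq \tau_{d_h}(\CC)$ can be chosen with size elementary in $h$. This is exactly the nontrivial issue the authors raise when they remark that ``it seems that the conjecture should rather involve an elementary variant of rank, analogous to elementary tree rank, although currently we have no counterexample.'' If $\CC$ transduces every $\cal T_d$ but only via transductions whose complexity grows non-elementarily in $d$, your contrapositive argument does not yield a contradiction with any fixed tower height $H$. This is the dense analogue of the gap between Definitions~\ref{def:rank} and~\ref{def:el-sparse-rank} that leaves the sparse characterisation (Theorems~\ref{thm:main} and~\ref{thm:lowerbound-intro}) itself slightly incomplete, so it cannot be waved away; resolving it is a prerequisite, not a footnote. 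A second unaddressed issue in the same direction: to transport the Frick--Grohe reduction you need, given a witness tree $T$ of depth $d_h$, to efficiently produce a preimage $G\in\CC$ of bounded size under $\tau_{d_h}$; transductions guarantee existence of such $G$, not a polynomial bound on $|G|$ in terms of $|T|$ nor an efficient construction, and the lower bound argument breaks without both. In short, your reading of the difficulty is inverted in one respect --- the forward direction is not as tractable as you suggest, and the precise formulation of the conjecture (rank vs.\ an elementary refinement of rank) is itself in doubt.
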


Say that a class $\CC$ of graphs has 
\emph{rank $d$} if $\CC$ does not transduce 
the class $\cal T_{d+1}$ of trees of depth $d+1$.
All the examples in Ex.~\ref{ex:dense-mc} have bounded rank, in accordance with the conjecture.
Conjecture \ref{conj:dense} predicts 
that hereditary classes of bounded rank are precisely those with elementarily-fpt model checking.
(In retrospect, it seems that the conjecture should rather involve an elementary variant of rank, analogous to elementary tree rank, although currently we have no counterexample to Conjecture~\ref{conj:dense}.)

As a first step towards Conjecture~\ref{conj:dense},
in this paper, we focus on the case of monotone graph classes, and more generally (in this context), on weakly sparse graph classes. A graph class $\CC$ is \emph{weakly sparse} if it avoids some biclique as a subgraph.
Relating the main results of the present paper with Conjecture~\ref{conj:dense}, in Section~\ref{sec:dense} we prove the following.

\begin{theorem}\label{thm:dense}
    Let $\CC$ be a weakly sparse graph class. 
    Then $\CC$ has bounded tree rank if and only 
if $\CC$ has bounded rank.
\end{theorem}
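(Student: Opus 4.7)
}
I would prove the two implications separately. For the direction \emph{unbounded tree rank $\Rightarrow$ unbounded rank}, the idea is to exhibit an explicit transduction. Assume that for every $d \in \N$ there is $r = r(d)$ with $\cal T_d \subset \topminors_r(\CC)$. Fix $d$ and consider the FO-transduction $\tau_d$ that uses unary predicates to mark, in an input graph $G \in \CC$, (i)~the branch vertices of a fixed subdivision witnessing some $T \in \cal T_d$ inside $G$, and (ii)~the internal vertices of each subdivision path, together with an auxiliary ``midpoint'' predicate uniquely identifying which subdivision path connects a given pair of branch vertices. The output edge formula asserts the existence of a length-$\le r+1$ path consisting of the correctly marked internal vertices and the correct midpoint, and the output is restricted to the set of branch vertices. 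This produces precisely $T$ (without spurious edges), so $\CC$ transduces $\cal T_d$ for every $d$, hence $\CC$ has unbounded rank.

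For the direction \emph{bounded tree rank $\Rightarrow$ bounded rank}, I would use Proposition~\ref{prop:be} (stated earlier in the introduction) to first conclude that a weakly sparse class of bounded tree rank has bounded expansion. The central technical ingredient is then a \emph{sparsification principle} for transductions of bounded-expansion classes: if $\CC$ has bounded expansion and $\tau$ is an FO-transduction such that $\tau(\CC)$ is weakly sparse, then there exists $r = r(\tau) \in \N$ with $\tau(\CC) \subset \topminors_r(\CC)$. This is a known consequence of the theory of sparse transductions: in a bounded-expansion class, any FO-definable edge in a weakly sparse quotient can be traced back, via locality, to a bounded-length path in the input graph. I would invoke the appropriate reference from the sparsity-transduction literature.

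With sparsification in hand, the forward direction is a short contrapositive argument. Suppose $\CC$ has tree rank $d$ but unbounded rank, so $\CC$ transduces $\cal T_k$ for every $k$. Each $\cal T_k$ is weakly sparse because trees are $K_{2,2}$-free, so sparsification yields some $r_k$ with $\cal T_k \subset \topminors_{r_k}(\CC)$. Taking $k = d+1$ contradicts the definition of tree rank~$d$.

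\emph{Main obstacle.} The genuinely non-trivial step is the sparsification principle for weakly sparse transductions of bounded-expansion classes; its proof requires normalising the transduction's FO formula (e.g., via Gaifman-local arguments or via quantifier elimination on low tree-depth covers) and using the bounded neighbourhood complexity of bounded-expansion classes to bound the length of paths realising the produced edges. A secondary technical point, on the backward side, is arranging the transduction's markings so that the output is exactly the intended tree and not a supergraph of it---this is handled by the ``midpoint'' predicate idea sketched above, but needs careful verification that the additional colours still fit within a single fixed transduction.
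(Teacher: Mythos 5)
Both directions of your proposal have genuine gaps, and neither matches the paper's route.

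\paragraph{Direction ``unbounded tree rank $\Rightarrow$ unbounded rank''.}
The obstacle you flag as a ``secondary technical point'' is in fact a real gap. The $\le r$-subdivision of $T$ is merely a \emph{subgraph} of $G\in\CC$, and $G$ may contain further edges between marked internal vertices of distinct subdivision paths. Your edge formula asserts existence of a short path through marked internal vertices containing a midpoint, but such a path can jump between internal vertices of different subdivision paths using edges of $G-T'$, creating spurious adjacencies. A single ``midpoint'' colour cannot identify which subdivision path an internal vertex belongs to, since this requires unboundedly many colours. The paper avoids this entirely: in the bounded-expansion case it first transduces the \emph{monotone closure} of $\CC$ (Lemma~\ref{lem:star-chrom}, via bounded star-chromatic number), where one has the subdivision $T'$ as a graph on its own and the recovery of $T$ is indeed easy; and in the unbounded-expansion case it invokes Dvo\v{r}\'ak's theorem together with Kierstead--Penrice to transduce \emph{all} trees directly. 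Your explicit construction does not address either case, and in particular nowhere uses the hypothesis that $\CC$ is weakly sparse, which is essential for this implication.

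\paragraph{Direction ``bounded tree rank $\Rightarrow$ bounded rank''.}
The ``sparsification principle'' you invoke --- that if $\CC$ has bounded expansion and $\tau(\CC)$ is weakly sparse then $\tau(\CC)\subset\topminors_r(\CC)$ for some $r$ --- is false. Take $\CC$ to be the class of edgeless graphs: it has bounded expansion and trivially transduces the weakly sparse class of all stars (mark one centre vertex by a unary predicate and define it adjacent to every other vertex), yet $\topminors_r(\CC)$ consists only of edgeless graphs for every $r$. The genuine sparsification theorem of \cite{gajarsky2022stable} yields only that such $\tau(\CC)$ again has bounded expansion, which is far weaker and does not help bound the rank. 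The paper's proof of this direction (Proposition~\ref{prop:sparse_rank_rank}) is the more substantial one: it combines Gaifman locality for the transduction formula (Lemma~\ref{lem:transduction_radius}) with an induction along the rounds of the $m$-batched splitter game (Lemma~\ref{lem:sparse_rank_implies_rank}), gives the quantitative bound $\mathrm{rank}(\CC)\le\mathrm{tree\,rank}(\CC)+1$, and works for arbitrary classes (weak sparsity is not needed here).
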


Our work opens a way towards  Conjecture~\ref{conj:dense}.
For instance, it follows from Theorem~\ref{thm:dense} that classes that can be transduced from a class of bounded elementary tree rank, have bounded rank,
so Conjecture~\ref{conj:dense} predicts that such classes are elementarily-fpt.
This is left for future work.
Theorem~\ref{thm:dense} also suggests the following conjecture, generalizing Theorem~\ref{thm:collapse}.

\begin{conjecture}
    Let $\CC$ be a hereditary graph class.
    Then $\CC$ has bounded rank if and only if 
    there is $k\in\N$ such that every first-order formula is equivalent on $\CC$ to a formula of alternation rank $k$.
\end{conjecture}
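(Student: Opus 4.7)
The plan is to generalise Theorem~\ref{thm:collapse} from monotone to hereditary classes, replacing tree rank by rank and, correspondingly, shallow topological minors by first-order transductions. Both directions mirror the structure of the proof of Theorem~\ref{thm:collapse}, but each requires non-trivial new input.

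For the forward direction (bounded rank implies collapse of the alternation hierarchy), I would aim at a transduction analogue of Theorem~\ref{thm:alt}: whenever $\CC$ does not transduce $\cal T_{d+1}$, every first-order formula is equivalent on $\CC$ to one of alternation rank bounded by some function of $d$. In the weakly sparse case this already follows from combining Theorem~\ref{thm:dense} with Theorem~\ref{thm:alt}, so only the genuinely dense case requires new tools. The natural route is to lift the decomposition-based strategy of Theorem~\ref{thm:alt} to the dense setting, for instance by using flip-width or related decomposition notions developed for monadically dependent classes, and to show that such a decomposition, combined with locality-style arguments, yields a normal form of bounded alternation rank.

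For the backward direction (bounded alternation rank implies bounded rank), I would argue by contrapositive. Suppose $\CC$ has unbounded rank; then for every $d$ there is a first-order transduction $\tau_d$ from $\CC$ onto $\cal T_d$, of complexity $c_d$ possibly growing with $d$. The Chandra--Harel argument used in the proof of Theorem~\ref{thm:collapse} shows that the alternation hierarchy is strict on $\cal T_d$: for every $k$ there exist trees $T_1, T_2 \in \cal T_d$ that agree on all sentences of alternation rank $k$ but disagree on some sentence of alternation rank $k+1$. Choosing preimages $G_1, G_2 \in \CC$ of $T_1, T_2$ under $\tau_d$ and pulling back the separating sentence gives a sentence on $\CC$ of alternation rank $\leq (k+1)+c_d$ separating $G_1$ from $G_2$. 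The key technical step is to verify, via an Ehrenfeucht--Fra\"iss\'e game with alternations, that $G_1$ and $G_2$ can be chosen to satisfy the same sentences of alternation rank $k$ on $\CC$. Ranging over all $k$ and taking $d$ sufficiently large that $k+c_d$ still meets the strictness threshold on $\cal T_d$ then yields unboundedness of the alternation hierarchy on $\CC$.

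The main obstacle is the forward direction: the decomposition techniques underlying Theorem~\ref{thm:alt} rely on sparse notions (shallow topological minors, weak colouring numbers, bounded expansion) that are unavailable for genuinely dense hereditary classes, and identifying the right structural substitute appears to be the crux. A secondary difficulty, in the backward direction, is that the transductions $\tau_d$ depend on $d$ with possibly growing complexity $c_d$, so the argument must arrange for $d$ to grow much faster than $c_d$; this in turn demands quantitative control over how the strictness threshold on $\cal T_d$ scales with $d$, a refinement of Chandra--Harel that should nonetheless be within reach.
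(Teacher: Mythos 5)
You are attempting a statement that the paper itself poses as an \emph{open conjecture} (it appears in the introduction, explicitly labeled as such and described as ``suggested by Theorem~\ref{thm:dense}''), so there is no proof in the paper to compare against. Your sketch correctly identifies both directions and honestly flags the hard spots, but both directions contain genuine gaps, which is precisely why the statement remains a conjecture.

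On the forward direction your assessment is accurate: combining Theorem~\ref{thm:dense} with Theorem~\ref{thm:alt} does dispose of the weakly sparse case, and outside the weakly sparse regime one would need a decomposition notion for dense monadically dependent classes (flip-width or some relative) playing the role that the batched splitter game plays in Section~\ref{sec:alt}. That is an honestly open problem, not a routine extension.

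The backward direction is where I want to push back on the implicit claim that it is ``within reach.'' First, pulling a sentence $\psi$ on $\cal T_{d+1}$ back through a transduction $\tau_d$ does \emph{not} produce a sentence over the signature of $\CC$: the transduction introduces an unconstrained coloring of $G$ by unary predicates, and the resulting $\hat\psi$ lives over the colored expansion, with the coloring step amounting to a monadic second-order existential. So ``a sentence on $\CC$ of alternation rank $\le (k+1)+c_d$'' is not what the pullback gives you without further work. Second, and more fundamentally, your ``key technical step'' — choosing preimages $(G_1,c_1)$, $(G_2,c_2)$ of $T_1, T_2$ that agree on all alternation-rank-$k$ sentences — has no justification: interpretations preserve logical equivalence \emph{forward} (if $G_1 \equiv_q G_2$ then $I(G_1) \equiv_{q'} I(G_2)$) but there is no converse, and a transduction in general collapses many non-equivalent sources onto the same target. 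This is exactly the point that the paper's own argument in Lemma~\ref{lem:converse_alternations} sidesteps by being in the monotone case: there, the class literally contains the subdivided trees $T^{(r_1,\ldots,r_d)}$, and Claim~\ref{lem:ass} exploits the fact that the map between $T$ and its canonical subdivision is a bi-interpretation (principal vertices are first-order definable in the subdivision), so alternation ranks can be tracked in both directions. A general transduction provides no such inverse. Resolving this — perhaps by proving that a transduction from $\CC$ onto $\cal T_d$ can be upgraded on a suitable subclass to a bi-interpretation, or by constructing the Chandra--Harel family directly inside $\CC$ rather than pulling it back — is the actual content of the conjecture's backward direction, not a refinement that is ``nonetheless within reach.''
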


\paragraph{Extensions}
Our results can be easily extended in various ways which are by now standard in the area of sparsity theory.
For instance, we may consider classes $\CC$ of logical structures, not only graphs, and define the tree rank of $\CC$ as the tree rank of the class of Gaifman graphs underlying the structures in $\CC$.
Our results easily extend to this more general setting.

Similarly, apart from the model checking problem,
we may consider problems specified by first-order formulas $\phi(x_1,\ldots,x_k)$ with free variables, such as query enumeration
(compute an enumerator for all $k$-tuples satisfying $\phi$), query answering (compute a data structure that determines whether a given $k$-tuple satisfies $\phi$), and counting answers to queries (compute the number of $k$-tuples satisfying $\phi$).
All those problems are known to be solvable efficiently on classes of bounded expansion, with preprocessing time $f(|\phi|)\cdot |G|$ 
and query time $f(|\phi|)$, for some function $f\from\N\to\N$
(see \cite{dvovrak2013testing,DBLP:journals/lmcs/KazanaS19,aggregate-queries}).
From our results it can be easily derived  that for classes of bounded elementary tree rank,  $f$ can be made an elementary function in all those extensions. We omit the details in this paper.

\paragraph{Related work}
The work most related to ours has already been mentioned: the work of Frick and Grohe~\cite{frick2004complexity}, of Gajarsk{\'y} and Hlin{\v{e}}n{\'y} \cite{gajarsky-hlineny}, and of Lampis \cite{lampis}.
See \cite{lampis} for a broad overview of other relevant works, in particular in the context of parameterised complexity.

In~\cite{model-theory-makes-formulas-large}, Dawar, Grohe, Kreutzer, and Schweikardt
study upper and lower bounds on the sizes of various normal forms of formulas, on restricted graph classes.
In particular, a non-elementary lower bound on the size of the Gaifman normal form is obtained,
whereas for classes of bounded degree, an elementary upper bound is obtained. It would be interesting to see how the upper and lower bounds studied in that paper relate to our notion of tree~rank.

\section{Proof overview}\label{sec:overview}
We now describe the main ideas behind the proofs of 
 Theorem~\ref{thm:main} and Theorem~\ref{thm:alt}.
 Our starting point is a game characterisation of tree rank. The game is a modification of the splitter game characterizing nowhere dense graphs, which lies at the core of the model checking result of Grohe, Kreutzer, and Siebertz \cite{gks}.

 Fix integers $r,m\in\N$. The \emph{$m$-batched Splitter game} of radius $r$
 on a graph $G$ is played between two players,
 called Splitter and Localiser. In each round of the game,
 first Localiser replaces $G$ with a subgraph $G'$ of $G$ of radius at most $r$,
 and then Splitter deletes at most $m$ vertices from $G'$, obtaining a new graph $G''$ on which the game continues, unless $G''$ has no vertices; in this case Splitter wins the game.

\begin{theorem}\label{thm:splitter}
    Let $d\in\N$. The following conditions are equivalent for a graph class $\CC$:
    \begin{enumerate}
        \item $\CC$ has (elementary) tree rank $d$,
        \item There exists an (elementary) function $f\from\N\to\N$ such that for every $r\in\N$ Splitter wins the $f(r)$-batched splitter game of radius $r$ in at most $d$ rounds, on any $G\in\CC$.
    \end{enumerate}
\end{theorem}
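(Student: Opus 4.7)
The plan is to prove the two implications separately; both proceed by induction on $d$, and both preserve the elementary-size bounds.

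For the direction $(2) \Rightarrow (1)$ we argue the contrapositive. Suppose the tree rank of $\CC$ is at least $d+1$, witnessed by some $r_0 \in \N$: every tree of depth $d+1$ lies in $\topminors_{r_0}(\CC)$. We set the game radius to $r := d(r_0+1)$ and show that Localiser has a strategy lasting $d+1$ rounds against any batch size $m$, contradicting the supposition that Splitter wins in $\le d$ rounds. Fix a perfect tree $T$ of depth $d+1$ with branching $b > m$ and its subdivided realization $T' \subseteq G$ for some $G \in \CC$. Localiser maintains an ``active'' subdivided subtree, initially $T'$, whose eccentricity from its current root is $\le d(r_0+1) = r$, and plays it as $G'$. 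After Splitter's $m$ deletions, at most $m < b$ of the $b$ child-subtrees of the active root are damaged, so Localiser picks an untouched one as the next active subtree. After $d+1$ such rounds the state is still non-empty. The required tree $T$ has $O((f(r)+1)^{d+1})$ vertices, elementary in $r$ whenever $f$ is, yielding the elementary variant.

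For the direction $(1) \Rightarrow (2)$ we proceed by induction on $d$. The base $d=1$ is the folklore bounded-degree case: every radius-$\le r$ subgraph has $\le \Delta^{r+1}$ vertices, and Splitter erases it in one round. The inductive step rests on the following decomposition claim: if $\CC$ has (elementary) tree rank $\le d$, there is a function $f$ (elementary in the elementary case) such that for every $G \in \CC$ and every connected radius-$\le r$ subgraph $G' \subseteq G$, one can find $S \subseteq V(G')$ with $|S| \le f(r)$ such that the class of radius-$\le r$ connected subgraphs of $G' - S$ has (elementary) tree rank $\le d-1$. Granting this, Splitter plays $S$ in round~$1$; every later Localiser pick is a radius-$\le r$ connected subgraph of $G' - S$ and so lies in a tree-rank-$(d-1)$ class, and the inductive hypothesis supplies a $(d-1)$-round winning strategy.

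The main obstacle is the decomposition claim. We plan to exploit the forbidden depth-$(d+1)$ tree $T^{\ast}_r$ (of elementary size in $r$) not lying in $\topminors_r(\CC)$. Consider the BFS layering of $G'$ from its center; the forbidden tree bounds how many BFS levels may contain ``branch-heavy'' vertices---vertices whose short-radius neighborhoods branch into many pairwise-disjoint deep subtrees---since otherwise one could chain one such vertex per BFS level via short subdivided paths and reassemble a forbidden subdivided depth-$(d+1)$ tree inside $G'$. Taking $S$ to consist of the branch-heavy vertices across these few levels then forces every connected component of $G' - S$ to carry only restricted branching, and hence to sit in a class of tree rank $\le d-1$. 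Keeping the tower-height of $f$ bounded by $d$ through the recursion---so that each inductive step contributes at most one additional exponential---is the most delicate quantitative ingredient.
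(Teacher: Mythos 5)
Your direction $(2)\Rightarrow(1)$ matches the paper's Lemma~\ref{lem:connector_wins}: Localiser repeatedly plays the root of a subdivided perfect tree with branching larger than the batch size, surviving one round per tree level. That part is fine.

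The direction $(1)\Rightarrow(2)$ contains a genuine gap in the inductive invariant. Your decomposition claim asks that, after deleting a small set $S$ from a radius-$r$ ball $G'$, \emph{the class of radius-$\le r$ connected subgraphs of $G'-S$ has tree rank $\le d-1$}. But tree rank is a universal-over-all-radii notion: $\cal T_{d}\not\subset\topminors_{r'}(\cdot)$ for \emph{some} $r'$, with no control over which one. Deleting vertices from a fixed ball can only be guaranteed to destroy $\le(r-1)$-subdivisions of deep trees, not $\le r'$-subdivisions for all $r'\ge r$. The paper flags precisely this trap in the overview of Section~\ref{sec:alt}: removing a Splitter-good set ``makes the graph simpler only with respect to this particular $r$, and not necessarily with respect to larger radii,'' so the resulting class need not have smaller tree rank and the inductive hypothesis cannot be invoked. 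The paper's fix, in Lemma~\ref{lem:win}, is to induct on a \emph{per-graph, fixed-radius} invariant: ``$G$ contains no $T^{d+1}_k$ as an $(r-1)$-shallow topological minor,'' which Lemma~\ref{lem:bd_num_first_moves} shows is reducible (with a one-step drop in depth and a new parameter $k'$) by deleting from any $r$-ball the bounded set $B^G_r(v)\cap S^{d-1}_{k',r-1}(G)$ of roots of subdivided depth-$(d-1)$ trees. You should restate your claim in that form; as written, it asserts something strictly stronger than what is needed and very likely false.

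Your sketch for proving the decomposition claim is also unconvincing. Chaining one ``branch-heavy'' vertex per BFS level of $G'$ produces a long path of such vertices, not a subdivided tree of depth $d+1$; you never explain how the chain reassembles into a bounded-depth tree, and depth must drop with $d$, not with the BFS depth of $G'$. The paper's engine is different: Lemma~\ref{lem:many_reachable} extracts, from $t^r+1$ marked vertices inside a radius-$r$ ball, a single vertex $u$ from which $t$ of them are reachable by internally disjoint paths of length $\le r$; these fan out into disjoint copies of depth-$(d-1)$ subdivided trees, which (after a pruning argument to disentangle paths from trees) reassemble into a subdivided $T^d_k$ rooted at $u$. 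If you want your route to go through, you will need a comparably precise disjointness argument and the correct fixed-radius invariant; the BFS-layering heuristic, as stated, does not supply either.
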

Theorem~\ref{thm:splitter} is proved in Section~\ref{sec:splitter} through elementary combinatorial methods.
 This key result allows us to apply induction on the number of rounds, when considering classes of bounded tree rank.
Using this in combination with the locality property of first-order logic,
we prove Theorem~\ref{thm:collapse} in Section~\ref{sec:alt}.
Very roughly, locality says that for every first-order formula $\phi$ there is a radius $r$ such that the satisfaction of $\phi$ in $G$ only depends on the first-order definable properties of radius-$r$ balls that can be found in $G$. 
This number $r$ gives rise to the radius of the Splitter game that is played on $G$. By Theorem~\ref{thm:splitter}, Splitter wins the $f(r)$-batched splitter game of radius $r$ in at most $d$ rounds.
To obtain the desired formula $\psi$ of alternation rank bounded in terms of $d$, essentially, for every radius-$r$ ball in $G$ we use the fact that there exists an $f(r)$-tuple of vertices whose removal results in a graph in which the inductive hypothesis can be applied. Those $f(r)$ many vertices are quantified using existential quantifiers, and this is followed by a formula obtained by inductive hypothesis.

Thus, if $\CC$ is a fixed class of tree rank $d$, 
then every formula $\phi$ is equivalent on $\CC$ 
to a formula $\psi$ of  alternation rank at most $3d$.
Moreover, the size of $\psi$ is elementary in terms of the size of $\phi$, provided $\CC$ has elementary tree rank $d$.
However, we do not know how to compute $\psi$ from $\phi$ in elementary time.
To simplify the reasoning, let us pretend below that we \emph{can} compute $\psi$, given $\phi$, in elementary time. The actual argument in the proof of Theorem~\ref{thm:main} is a bit more involved, and only relies on the existence of $\psi$ of size elementary in the size of $\phi$.

Next we observe (in Proposition~\ref{prop:be}) that every class of bounded tree rank has 
bounded expansion. 
A monotone class $\CC$ has \emph{bounded expansion} if and only if for every $r\in\N$ there is a constant $c=c(r)$ such that all $r$-shallow topological minors $H$ of graphs from $\CC$ satisfy $|E(H)|\le c(r)\cdot |V(H)|$.
Proposition~\ref{prop:be} follows easily from the known fact that every graph of a sufficiently large average degree contains every tree of a given size as a subgraph.
Moreover, if $\CC$ has bounded elementary tree rank, then 
the function $r\mapsto c(r)$ witnessing bounded expansion is elementary, that is, $\CC$ has \emph{elementarily bounded expansion}.

For classes $\CC$ of bounded expansion,
there is a known algorithm, due to \DKT~\cite{dvovrak2013testing}, that solves the model checking problem in time $f(\psi)\cdot |G|$ for $G\in\CC$.
Other proofs of this result are known \cite{grokre11,DBLP:journals/lmcs/KazanaS19,lsd-journal}.
By inspecting and refining the proof from~\cite{lsd-journal}, we show that the running 
time is elementary for formulas $\psi$ whose alternation rank is fixed by a constant, whenever $\CC$ has elementarily bounded expansion.
Since $\psi$ has alternation rank $3d$, this yields an elementarily-fpt model checking algorithm for $\CC$,
assuming that the formula $\psi$ can be computed from $\phi$ 
in Theorem~\ref{thm:alt} in elementary time.

In the actual proof, in Section~\ref{sec:mc}, we rely on the existence of a formula $\psi$ whose size is elementary in terms of the size of $\phi$.
Using the elementarily-fpt algorithm for classes of elementarily bounded expansion and formulas of a fixed alternation rank,
we compute a partition of the vertex set 
of a given graph into an elementary number of parts, 
where each part consists of vertices of the same first-order type of quantifier rank $q$, where $q$ is the quantifier rank of the original formula $\phi$.
 This is sufficient 
to perform model checking, by known arguments.
This  completes the proof sketch for Theorem~\ref{thm:main}.

As mentioned, the lower bounds -- Theorem~\ref{thm:lowerbound-intro} and the backward implication in Theorem~\ref{thm:collapse} -- follow by refining the 
known lower bounds, due to Frick and Grohe~\cite{frick2004complexity}, and Chandra and Harel~\cite{chandra1982structure}. This is done in Section~\ref{sec:lowerbound} and Subsection~\ref{sec:lower-alt}, respectively.

In Section~\ref{sec:dense}, 
we prove Theorem~\ref{thm:dense}, relating tree rank and rank for weakly sparse classes.
We upper bound the rank of a class $\CC$ by its tree rank plus one,  again using an inductive argument based on the batched splitter game.
In the converse direction, we bound the tree rank of a weakly sparse class $\CC$ by its rank,
using known results concerning the existence of large trees as induced subgraphs in weakly sparse classes of unbounded degeneracy.

\section{Preliminaries}
\label{sec:prelims}

\paragraph{Graph theory} 
For a graph $G$, by $|G|$ we denote the number of vertices of $G$.
The \emph{length} of a path $P$ is the number of edges in $P$.
The \emph{depth} of a rooted tree $T$ is the maximal number of vertices on a path from the root to any leaf of $T$. For $k,d \in \N$ we denote by $T^d_k$ the rooted tree of depth $d$ in which every non-leaf node has exactly $k$ children, and $F^d_k$ the forest consisting of $k$ disjoint copies of $T^d_k$.

The \emph{radius} of a~connected graph $G$ is defined as the minimum possible value $r \in \N$ such that for some vertex $v \in V(G)$, all vertices of $G$ are at distance at most $r$ from $v$, where the \emph{distance} of two vertices is defined as the minimum length of a path connecting them.
If $G$ is disconnected, we put $+\infty$ as its radius.
Also, given a graph $G$, an integer $r\in\N$, and a vertex $v\in G$,
we denote by $B^G_r(v)$ the set of all vertices $u$ of $G$ that are at distance at most $r$ from $v$ in~$G$. To unclutter the notation, when it is clear from the context we often use $B_r^G(v)$ instead of $G[B_r^G(v)]$.

Let $H$ be a graph. We say that a graph $H'$ is an \emph{${\le} r$-subdivision} of $H$ if it can be obtained from $H$ by replacing each edge $uv$ of $H$ by a path with endpoints $u,v$ and with at most $r$ internal vertices. 
We say that $H$ is an \emph{$r$-shallow topological minor} of $G$ if $G$ contains an ${\le}r$-subdivision of $H$ as a subgraph.

\paragraph{Elementary functions}
For $a,b\in\N$, define $\tower(a,b)$ by induction on $a$: $\tower(0,b)=b$ and $\tower(a,b)=2^{\tower(a-1,b)}$ for~$a\ge 1$.

We do not define elementary functions,
as we only need to rely on the fact that a function $f\from\N\to\N$ is bounded by an elementary function if and only if there is some $d\in\N$ such that 
$f(k)\le \tower(d,k)$, for all $k\in\N$.
Also importantly,
elementary functions are preserved by addition, multiplication, maximum, and composition, and include the  function $n\mapsto c^n$, for any constant $c$.

\paragraph{Tree rank}
Recall the notion of tree rank (see Def.~\ref{def:rank}).
It is easily seen that the tree rank of a graph class $\CC$  is the least number $d$ such that for every $r$ there exists $k$ such that $\C$ does not contain $T^{d+1}_k$ as an $r$-shallow topological minor. 

Similarly, it is easily seen (see Def.~\ref{def:el-sparse-rank}) that the elementary tree rank of a graph class $\CC$ is the least number $d$ such that there is an elementary function $f\from\N\to\N$ such that for every $r$, the class $\C$ does not contain $T^{d+1}_{f(r)}$ as an $r$-shallow topological minor. 

\section{Game characterisation}\label{sec:splitter}

In this section we prove a characterization of classes of bounded tree rank via games.

\paragraph{Batched splitter game}
Let $r,m \in \mathbb{N}$. The $m$-\emph{batched splitter game} with radius $r$ is played by two players, Splitter and Localiser, who play on a graph $G$.
At the beginning we set $G_0\coloneqq G$. 
For $i \ge 1$, in the $i$-th round the players make a move as follows:
\begin{itemize}
\item Localiser picks $v \in G_{i-1}$ and sets $G_{i-1}'\coloneqq  B_r^{G_{i-1}}(v)$.
\item Splitter removes at most $m$ vertices from $G_{i-1}'$ to obtain $G_i$.
\end{itemize}
Splitter wins when $V(G_i)$ is empty.

 Note that on any finite graph Splitter can always win in a finite number of rounds, as his move decreases the number of vertices in $G$ in each round. We will be interested in the smallest number of rounds such that Splitter can win the $m$-batched splitter game with radius $r$.
In particular, we will be interested in graph classes for which there exists $d\in\N$ such that the following holds: For every $r\in\N$ there is $m\in\N$ so that Splitter wins the $m$-batched splitter game with radius $r$ in at most $d$ rounds on any $G \in \C$.
As a special case, consider any class $\C$ of graphs of maximum degree at most $t$. Then, for every $r$, we can set $m\coloneqq t^{r}+1$ and one easily checks that Splitter wins in one round on any $G\in \C$.

The statement below describes our characterisation of tree rank. Its proof spans the rest of this~section.

\begin{theorem}
\label{thm:game_char}
Let $d\in\N$ with $d\ge 1$.
The following are equivalent for any graph class $\C$:
\begin{enumerate}
    \item $\C$ has tree rank $d$,
    \item There exists a function $g\from \N\to \N$ such that, for every $r\in\N$, Splitter wins the $g(r)$-batched splitter game of radius $r$ in  $d$ rounds, on any $G \in \C$.
\end{enumerate}
Moreover, $\C$ has elementary tree rank $d$ if and only if there is an elementary function $g$ as in condition 2 above.
\end{theorem}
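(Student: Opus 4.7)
The plan is to prove the two directions $(1) \Leftrightarrow (2)$ separately; the elementary variant follows by tracking the growth of all involved parameters throughout both arguments.

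\textbf{Direction $(2) \Rightarrow (1)$.} I argue the contrapositive: assume $\C$ has tree rank strictly greater than $d$, so there is some $r^*$ such that for every $k$, some $G_k \in \C$ contains $T^{d+1}_k$ as an $r^*$-shallow topological minor. For every $m$, I exhibit a Localiser strategy surviving $d$ rounds of the $m$-batched game of radius $r := (d+1)(r^*+1)$ on $G_k$ with $k > m$. Inductively, before round $i$ Localiser is positioned at the original-tree root of a subdivision of $T^{d+2-i}_k$ that has not yet been damaged; since its radius in $G_k$ is at most $d(r^*+1) \le r$, the ball of radius $r$ around this root contains the entire subdivided subtree. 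Splitter's removal of at most $m$ vertices corrupts at most $m$ of the $k$ immediate child-subtrees, so Localiser descends into one of the $k - m \ge 1$ unaffected subtrees. After $d$ rounds Localiser is inside a subdivided $T^2_k$, which contains $1 + k(r^*+1) > m$ vertices, so some vertex survives and $G_d$ is nonempty. For the elementary version, the required $k$ grows linearly in $m$, so an elementary $g$ would force elementary avoidance of $T^{d+1}_k$, contradicting elementary tree rank exceeding $d$.

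\textbf{Direction $(1) \Rightarrow (2)$.} I proceed by induction on $d$. The base case $d = 1$ uses Example~\ref{ex:examples}(2): tree rank at most $1$ is equivalent to bounded maximum degree, so if $t$ is the class degree bound, every ball of radius $r$ has at most $t^{r+1}$ vertices, and Splitter wins in one round with $g(r) := t^{r+1}$. For the inductive step $d \ge 2$, the key ingredient is a single-round reduction lemma: for every $r$ there are $R \ge r$ and integers $K, M$ (each elementary in $r$ and the inductive parameters when the class has elementary tree rank) such that, for every graph $H$ of radius at most $r$ avoiding $T^{d+1}_K$ as an $R$-shallow topological minor, there is $S \subseteq V(H)$ with $|S| \le M$ such that every radius-$r$ ball in $H - S$ avoids $T^{d}_{K'}$ as an $R'$-shallow topological minor, where $(R', K')$ are the parameters required by the inductive hypothesis at depth $d - 1$. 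With the lemma in hand, Splitter's strategy is: when Localiser picks the ball $B = B^{G}_r(v)$, apply the lemma to $B$ to obtain $S$ of size at most $M$, and delete it. The resulting $G_1 = B - S$ then falls within the scope of the inductive hypothesis, so Splitter wins the remaining $d - 1$ rounds with batch size $g_{d-1}(r)$ given by induction. Setting $g(r) := \max(M, g_{d-1}(r))$ completes the step; the elementary variant follows because all intermediate parameters stay elementary in $r$.

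\textbf{Main obstacle.} Proving the single-round reduction lemma is the main technical burden. My planned approach is a level-by-level construction: starting from a distinguished center of $H$ and following BFS layers, greedily attempt to assemble $T^{d+1}_K$ inside $H$; every time the extraction of sufficiently many vertex-disjoint branches from a vertex's neighborhood fails, record the obstructing vertices in the growing set $S$. The two things to verify are (i) $|S|$ remains bounded by an elementary function of $r$, $K$, and the inductive parameters, and (ii) after removing $S$, every radius-$r$ ball in $H - S$ is simple enough to match the inductive hypothesis. Keeping the cascade of parameters $(R, K, M, R', K', \ldots)$ elementary in $r$ across all $d$ levels of induction is what makes the elementary variant go through.
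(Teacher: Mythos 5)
Your $(2)\Rightarrow(1)$ direction is essentially correct and matches the paper's Lemma~\ref{lem:connector_wins}: Localiser sits at the root of the subdivided $T^{d+1}_k$, the whole subdivision fits in a radius-$\Oh(d r^*)$ ball, Splitter's $m$ deletions corrupt at most $m$ child-subtrees, and with $k>m$ Localiser descends into a clean one. The small imprecisions (radius $(d+1)(r^*+1)$ instead of the tighter $d(r^*+1)$; saying ``after $d$ rounds'' when you mean ``entering round $d$'') don't affect correctness.

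Your $(1)\Rightarrow(2)$ direction has the right shape — induction on $d$ driven by a single-round reduction — and this is exactly the structure of the paper's Lemma~\ref{lem:win}. But the ``single-round reduction lemma,'' which you yourself flag as the main obstacle, is where the actual work lives, and your sketch does not supply the key argument for bounding $|S|$. The paper proves the analogous statement (Lemma~\ref{lem:bd_num_first_moves}: the set of vertices in $B^G_r(v)$ that serve as roots of $\le(r-1)$-subdivisions of $T^{d}_{k'}$ has size at most $c$) by a two-step argument that has no counterpart in your greedy BFS plan. First, a pigeonhole-type lemma (Lemma~\ref{lem:many_reachable}) shows that if a ball of radius $r$ contains more than $t^r$ designated vertices, some vertex $u$ reaches $t$ of them by internally vertex-disjoint paths of length $\le r$. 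Second, a careful disjointification argument (sacrificing branches of the candidate subtrees so they avoid each other and the connecting paths, in $k$ rounds with controlled loss) assembles from these $t$ rooted subdivisions a single $\le(r-1)$-subdivision of $T^{d+1}_k$, contradicting the hypothesis. Your proposal to ``record the obstructing vertices in a growing set $S$'' gives no mechanism for why this set stays bounded in terms of $r$ and $k$ alone; without the reachability lemma, $|S|$ could a priori grow with $|H|$.

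There is also a mismatch in how you set up the induction. Your reduction lemma only guarantees that every radius-$r$ \emph{ball} of $H-S$ avoids $T^d_{K'}$, but the inductive hypothesis you are invoking (``Splitter wins in $d-1$ rounds on graphs avoiding $T^d_{K'}$'') expects global avoidance. In the paper this mismatch does not arise: Splitter removes precisely the set of all potential roots of $\le(r-1)$-subdivisions of $T^d_{k'}$ inside the ball, and since any such subdivision in $G_1=B^G_r(v)-S$ would also be one in $G$ with its root in $B^G_r(v)\setminus S$, the graph $G_1$ globally avoids $T^d_{k'}$ as an $(r-1)$-shallow topological minor. Reformulating your lemma to produce this stronger global conclusion — which is what the paper does — is both necessary and, once you have Lemma~\ref{lem:many_reachable}, not harder. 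In short: right skeleton, but the load-bearing lemma is unproven, and the sketch you give for it misses the counting argument that makes the bound on the deleted set possible.
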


\subsection{Proof of implication \texorpdfstring{$(1) \to (2)$}{(1) -> (2)}}

This is the more involved of the two implications. The technical core is Lemma~\ref{lem:bd_num_first_moves}, which tells us that in a graph without an ${\le}r$-subdivision of $T^{d+1}_k$, in the $r$-neighborhood of any vertex we can find a bounded number of vertices such that removing them makes this $r$-neighborhood simpler. These will be the moves of Splitter. We start with a simple lemma which will be used in the proof of Lemma~\ref{lem:bd_num_first_moves}.

\begin{lemma}
\label{lem:many_reachable}
Let $r,t \in \N$ and let $G$ be a graph of radius at most $r$. Let $S\subseteq V(G)$.
If $|S| \ge t^r+1$, then there exists $u \in V(G)$ such that there are $t$ vertices in $S$ which are distinct from $u$ and can be reached from $u$ by internally vertex-disjoint paths of length at most $r$.
\end{lemma}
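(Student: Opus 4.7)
My plan is to reduce the lemma to a pigeonhole-style counting argument on a BFS tree rooted at a center of $G$. First, since $G$ has radius at most $r$, I will fix a vertex $c \in V(G)$ such that every vertex of $G$ lies at distance at most $r$ from $c$. I will build a BFS tree $T \subseteq G$ rooted at $c$; every vertex of $T$ then lies at distance at most $r$ from the root in $T$, so the longest root-to-leaf path of $T$ has at most $r+1$ vertices, i.e., $T$ has depth at most $r+1$ in the paper's convention.

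The core of the argument will be the following claim, which I will prove by induction on depth: in any rooted tree $T'$ of depth at most $D$ with a ``marked'' vertex set $M$, if no vertex of $T'$ has at least $t$ children whose subtree contains a vertex of $M$, then $|M| \leq t^{D-1}$. Letting $f(D)$ denote the maximum of $|M|$ over such situations, I will verify the recurrence $f(D) \leq 1 + (t-1)\cdot f(D-1)$ (one unit for the root, plus at most $t-1$ child subtrees that can contain marked vertices) together with the base case $f(1) \leq 1$, and deduce $f(D) \leq t^{D-1}$ by an inductive bound of the form $1 + (t-1)t^{D-2} \leq t^{D-1}$. Applied to $T$ with $M = S$ and $D \leq r+1$, this gives $|S| \leq t^{r}$, contradicting the hypothesis $|S| \geq t^{r}+1$. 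Hence some vertex $u \in V(T)$ must have $t$ children $v_1,\ldots,v_t$ whose subtrees all meet $S$.

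To conclude, I will pick an arbitrary $s_i \in S$ in the subtree rooted at $v_i$ for each $i \in \{1,\ldots,t\}$. The $s_i$ are pairwise distinct and all distinct from $u$, since they lie in subtrees of distinct children of $u$. The tree paths in $T$ from $u$ to $s_i$ pass through pairwise distinct children $v_i$, so they are pairwise internally vertex-disjoint; moreover, since $u$ is at depth at least $1$ and $s_i$ at depth at most $r+1$ in $T$, each of these paths has length at most $r$. This produces the required vertex $u$ and paths.

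The main obstacle I expect is the tightness of the counting bound in the inductive step. A naive unfolding of $f(D) \leq 1 + (t-1)f(D-1)$ into a geometric sum yields a bound of order $(t-1)^{D}/(t-2)$, which is not strong enough to beat the threshold $t^{r}+1$ demanded by the lemma. The clean bound $f(D) \leq t^{D-1}$ arises only if one applies the inductive hypothesis inside the recurrence rather than summing it out, and handles the base case $D=1$ carefully; this is where most of the work in writing up the proof will go.
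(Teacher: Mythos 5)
Your proof is correct and takes essentially the same approach as the paper's: both fix a BFS tree rooted at a center of $G$ (of depth at most $r+1$) and count the members of $S$ spread over subtrees of children to obtain the $t^r$ threshold. The paper presents this as a direct induction on $r$ that recurses into the densest child subtree, while you extract a standalone combinatorial claim about marked vertices in bounded-depth trees (proved by induction on depth) and then argue by contrapositive; this is a reorganization of the same counting argument rather than a genuinely different route. (One small remark: your worry about the naive unfolding of $f(D)\le 1+(t-1)f(D-1)$ is unfounded, since $\sum_{i=0}^{r}(t-1)^i\le t^r$ also holds, but proving $f(D)\le t^{D-1}$ directly by induction as you propose is cleaner.)
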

\begin{proof}
    By assumption, $G$ has a rooted spanning tree $T$ of depth at most $r+1$. There is a node $u$ of $T$ such that $t$ of the subtrees of $T$ rooted at $u$ contain a node in $S$, as otherwise we would have that $|S|\le t^r$. Then $u$ satisfies the required condition.
\end{proof}

For a graph $G$ and $d,r,k \in \N$, we denote by $S^{d}_{k,r}(G)$ the set of all $v \in V(G)$ such that $v$ is the root of an ${\le}r$-subdivision of $T^d_k$ in $G$.

\begin{lemma}
\label{lem:bd_num_first_moves}
For every $d,r,k \in \N$ with $d \ge 1$ there exist $c\in\N$ and $k'\in\N$ such that the following holds.
Let $G$ be a graph which does not contain $T^d_k$ as an $(r-1)$-shallow topological minor. Then, for every $v \in V(G)$, we have that $|B^G_{r}(v) \cap S^{d-1}_{k',r-1}(G)| \le c$.

Moreover, for any fixed $d$, the dependence of $c$ and $k'$ on $r$ and $k$ is elementary.
\end{lemma}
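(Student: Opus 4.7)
The plan is to argue by contrapositive: assuming $|B^G_r(v) \cap S^{d-1}_{k',r-1}(G)|$ is large, I will build an ${\le}(r-1)$-subdivision of $T^d_k$ in $G$, contradicting the hypothesis. The subdivided tree will have its root $u$ produced by Lemma~\ref{lem:many_reachable}, its $k$ root-children $w_1,\ldots,w_k$ chosen among the many witnesses in $S^{d-1}_{k',r-1}(G)$, the $k$ root-edges subdivided by short internally vertex-disjoint paths supplied by the same lemma, and the subtree at each $w_j$ pruned from an ambient ${\le}(r-1)$-subdivision of $T^{d-1}_{k'}$ rooted at $w_j$. (The case $d=1$ is vacuous, since $T^1_k$ is a single vertex and the hypothesis forces $G$ to be empty.)

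First I apply Lemma~\ref{lem:many_reachable} to the subgraph $B^G_r(v)$ (which has radius at most $r$, witnessed by $v$) with $W := B^G_r(v) \cap S^{d-1}_{k',r-1}(G)$ in place of $S$ and threshold $t := 3k$. If $|W| > (3k)^r$, I obtain a vertex $u$ together with $3k$ internally vertex-disjoint paths $P_1,\ldots,P_{3k}$ of length $\le r$ from $u$ to distinct $w_1,\ldots,w_{3k} \in W$, so I take $c := (3k)^r$. A priori some $w_j$ could be an interior vertex of another $P_i$, which would spoil its role as a clean branch vertex; to deal with this, I form the conflict graph on $\{1,\ldots,3k\}$ whose edges are pairs $\{i,j\}$ with $w_j \in \mathrm{int}(P_i)$ or $w_i \in \mathrm{int}(P_j)$. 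Since the $P_i$'s have pairwise disjoint interiors, each $w_j$ lies in at most one such interior, so the conflict graph has at most $3k$ edges; a Tur\'an-type bound then yields an independent set of size at least $k$, and after relabelling I may assume $w_1,\ldots,w_k$ is such a set.

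The combinatorial heart of the argument is the following auxiliary claim, proved by induction on $d$: if $T$ is an ${\le}(r-1)$-subdivision of $T^{d-1}_{k'}$ rooted at $w$ and $F \subseteq V(T)-\{w\}$ satisfies $k' \ge k + |F|$, then $T$ contains an ${\le}(r-1)$-subdivision of $T^{d-1}_k$ rooted at $w$ that is vertex-disjoint from $F$. Indeed, the connected components of $T - \{w\}$ -- of which there are exactly $k'$, each a subdivision path leaving $w$ followed by the ${\le}(r-1)$-subdivided $T^{d-2}_{k'}$ below the corresponding child -- are pairwise vertex-disjoint, so $F$ meets at most $|F|$ of them; I pick $k$ untouched components, and within each recurse with $F=\emptyset$ to prune to a subdivided $T^{d-2}_k$.

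Finally, I construct the subtrees $Q_j$ at the $w_j$'s greedily. Using $w_j \in S^{d-1}_{k',r-1}(G)$, fix for each $j$ some witness $T_j$; at stage $j$ apply the claim inside $T_j$ with forbidden set
\[ F_j := \{u\} \cup \bigcup_{i=1}^{k} \mathrm{int}(P_i) \cup \{w_i : i \ne j\} \cup \bigcup_{i < j} V(Q_i), \]
producing $Q_j$. The union $\{u\} \cup \bigcup_j V(P_j) \cup \bigcup_j V(Q_j)$ then supports an ${\le}(r-1)$-subdivision of $T^d_k$ rooted at $u$; the filtering in step two and the bookkeeping in $F_j$ guarantee the required vertex-disjointness. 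Using the crude bound $|V(Q_i)| \le r\,k^{d-1}$, I get $|F_j| = O(r\,k^d)$ uniformly in $j$, so taking $k' := k + 1 + kr + (k-1)(1 + r\,k^{d-1})$ makes every invocation of the claim valid. Both $c = (3k)^r$ and this $k'$ are polynomial -- hence elementary -- in $k$ and $r$ for fixed $d$. The main delicate point is the second step, the filtering of the $w_j$'s together with the tracking of the forbidden sets $F_j$ so that all the pieces fit together internally vertex-disjointly; once this is in place, the auxiliary claim does the heavy lifting.
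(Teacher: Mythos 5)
Your proof is correct, and it takes a genuinely different route from the paper's, though both are driven by the same high-level idea of growing a subdivided $T^d_k$ rooted at the vertex $u$ supplied by Lemma~\ref{lem:many_reachable}.

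The paper keeps $k'$ modest and instead takes a large pool of $t = k(z+r)$ candidate witnesses, where $z$ is the maximum size of a $\le(r-1)$-subdivision of $T^{d-1}_k$. It then first trims each $T_i$ so that $P_i$ meets it only in $v_i$, and runs a $k$-round greedy process that in each round commits one pair, discards all candidates whose root lands in the committed pair's vertex set (at most $z+r$ of them), and further trims the surviving trees to dodge the newly committed vertices. The invariant $|U| \geq (k-l)(z+r)$ together with the shrinking branching $k_l = (k-l)(z+r)$ keeps the process alive for $k$ rounds. Your version instead keeps $t = 3k$ (independent of $r$), cleans the candidate set once via the conflict-graph/Tur\'an argument so that no chosen $w_j$ is interior to any chosen $P_i$, and then pushes all the disjointness bookkeeping into the branching parameter $k'$: your pruning claim says a subdivided $T^{d-1}_{k'}$ can be carved down to a subdivided $T^{d-1}_k$ avoiding any forbidden set $F$ with $|F|\leq k'-k$, and you then feed in the explicit forbidden set $F_j$ collecting $u$, all path interiors, the other roots, and the previously built $Q_i$'s. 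The uniform bound $|F_j| = O(r k^d)$ (with the $k=1$ edge case being harmless since then $\bigcup_{i<j} V(Q_i)$ is empty) makes $k'$ polynomial in $k,r$ for fixed $d$, exactly as the lemma requires. In effect, the paper's argument removes conflicting candidates and trims trees incrementally, while yours fixes the candidates upfront and absorbs all conflicts into one explicit forbidden set handled by a single pruning lemma; the latter yields a cleaner $c = (3k)^r$ at the cost of a somewhat larger $k'$, and both constants remain elementary, so the ``moreover'' part of the lemma holds equally well.
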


We remark that the difference between values $r$ and $r-1$ in the lemma is caused by the fact that in an $\le (r-1)$-subdivision of a graph the subdivision paths can have length up to $r$.

\begin{proof}
Let $z$ be the maximum number of vertices any ${\le}(r-1)$-subdivision
of $T^{d-1}_k$ can have. We set $k'\coloneqq k(z+r)+r$,  $t\coloneqq k(z + r)$ and $c\coloneqq t^{r}$.

Suppose for contradiction that for some  $v \in V(G)$ we have that $|B^G_r(v) \cap S^{d-1}_{k',r-1}(G)| > c$. 
By Lemma~\ref{lem:many_reachable}, there is a vertex $u \in B^G_{r}(v)$ and  $t$ vertices $v_1, \ldots, v_t$, distinct from $u$ and reachable from $u$ by internally vertex-disjoint paths $P_1, \ldots, P_t$ of length at most $r$ such that each $v_i$ belongs to $S^{d-1}_{k',r-1}(G)$, or, in other words,  every $v_i$ is a root of an ${\le }{(r-1)}$-subdivision $T_i$ of $T^{d-1}_{k'}$ in $G$. 
Our goal is to use $u$ together with pairs $(P_1, T_1), \ldots, (P_t,T_t)$ to find an ${\le }{(r-1)}$-subdivision of $T^d_{k}$ in $G$, which will yield a contradiction. 

We will first adjust each $T_i$ by sacrificing some branches to ensure that $P_i$ intersects the remaining tree only in $v_i$ as follows.
In any $T_i$, let $R_i^1, \ldots, R_i^{k'}$ be the subtrees of $T_i$ rooted at children of $v_i$. Then $P_i$ intersects at most $r$ trees in $\{R_i^1, \ldots, R_i^{k'}\}$, and by removing all such subtrees from $T_i$ there will be at least $k_0\coloneqq k' - r = k(z+r)$ subtrees left.  This way we obtain $T_i'$ which contains an ${\le }{(r-1)}$-subdivision of $T^{d-1}_{k_0}$ as a subgraph, and $P_i$ is disjoint from $T_i'$, except for their common vertex $v_i$. 

We thus have $t$ pairs $(P_1,T_1'), \ldots, (P_t,T_t')$ such that $P_i$ intersects $T_i'$ only in $v_i$ and each $T_i'$ is a ${\le}{(r-1)}$ subdivision of $T^{d-1}_{k_0}$.
We now proceed by extracting from $(P_1,T_1'), \ldots, (P_t,T_t')$ a set of $k$ pairs $(Q_1, W_1),\ldots, (Q_k,W_k)$ such that
\begin{itemize}[nosep]
    \item each $W_i$ is an ${\le}{(r-1)}$-subdivision of $T^{d-1}_k$,
    \item $V(Q_i) \cap V(W_i) = \{v_i\}$ and $Q_i$ connects the root of $W_i$ to $u$, and
    \item $(V(W_i) \cup V(Q_i)) \cap (V(W_j) \cup V(Q_j)) = \{u\}$ for any $i,j$ with~$i \not= j$.
\end{itemize}
The vertex $u$ together with trees $W_1,\ldots, W_k$ and paths joining them to $u$ will then form an ${\le}{(r-1)}$-subdivision of $T^{d}_k$ in $G$, as desired.

We start with $U = \{1,\ldots, t\}$ and proceed in $k$ rounds. 
We keep the invariant that after $l$-th round we have that $|U| \ge (k-l)(z + r )$ and for each $i \in U$ we have that $T_i'$ contains a subdivision of $T^{d-1}_{k_l}$, where $k_l\coloneqq (k-l)(z+r)$. Clearly this holds for $l=0$.

For each $l \ge 1$, in the $l$-th round we do the following:
\begin{itemize}[nosep]
    \item We pick any $(P_i,T_i')$ such that $i \in U$ and consider a subtree $T_i''$ of $T_i'$ rooted at $v_i$ which is an ${\le }{(r-1)}$-subdivision of $T^{d-1}_{k}$. 
    \item We remove from $U$ both $i$ and every $j$ such that $v_j \in  V(P_i) \cup V(T_i'')$; since all $v_1,\ldots, v_t$ are pairwise different and  $|V(P_i) \cup V(T_i'')| \le z + r$, there are at most $r + z$ indices removed, and so after the $l$-th round there are at least $(k-l)(z+r)$ indices left in $U$.
    \item From every $T_j'$ with $j \in U$ we remove all subtrees rooted at children of $v_j$ which intersect $ V(P_i) \cup V(T_i'')$. Since $|V(P_i) \cup V(T_i'')| \le z + r$ and before the $l$-th round the tree $T_j'$ contained an ${\le}{(r-1)}$-subdivision  of $T^{d-1}_{k_{l-1}}$, after this trimming  $T_j'$ contains an ${\le}{(r-1)}$-subdivision  of $T^{d-1}_{k_l}$.
    \item Finally, we set $W_l\coloneqq T_i''$ and $Q_l\coloneqq P_i$.
\end{itemize}
By our choices of $t$ and $k'$, this process will last for at least $k$ rounds, producing pairs $(Q_1,W_1), \ldots, (Q_k, W_k)$. By the construction, these pairs have the required properties.

It remains to argue that for fixed $d$, the dependence of $k'$ and $c$ on $k$ and $r$ is elementary. It is easily seen that both values depend elementarily on $z$. Thus, to conclude the proof, it suffices to show that the dependence of $z$ on $k$ and $r$ is elementary.
To bound $z$, first note that $T^{d-1}_k$ can have at most $1 + k + k^2 + \ldots + k^{d-1} \le dk^{d-1}$ vertices. Since in any ${\le }{(r-1)}$ subdivision of  $T^{d-1}_k$ each edge of $T^{d-1}_k$  is replaced by at most $r-1$ vertices, and the number $dk^{d-1}$ which bounds the number of vertices also upper-bounds the number edges,  we have that  $z \le dk^{d-1} + dk^{d-1}(r-1)$, which is elementary in $k$ and $r$.
\end{proof}

With the previous lemma at hand we can now show the existence of a winning strategy for Splitter. The idea is that at every round, Splitter removes a suitable set $S^d_{k,r}(G)$ intersected with the ball played by the Localiser, and Lemma~\ref{lem:bd_num_first_moves} ensures that this way he removes only a bounded number of vertices.

\begin{lemma}
\label{lem:win} 
For any $d, r,k \in \N$ with $d \ge 1$ and $r \ge 1$ there exists $m=m(d,r,k)$ such that if $G$ is a graph which does not contain $T^{{d+1}}_k$ as an $(r-1)$-shallow topological minor, then Splitter wins the $m$-batched splitter game of radius $r$ in $d$ rounds on $G$.

Moreover, for any fixed $d$, the dependence of $m$ on $r$ and $k$ is elementary.
\end{lemma}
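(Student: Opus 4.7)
The plan is to proceed by induction on $d$, using Lemma~\ref{lem:bd_num_first_moves} at each step to design Splitter's next move. The intuition is that in any Localiser ball, the vertices that could be ``obstacles'' for the induction are precisely the roots of $\leq(r-1)$-subdivisions of a slightly smaller tree, and Lemma~\ref{lem:bd_num_first_moves} tells us there are only a bounded number of these, so Splitter can afford to delete all of them in one batch.

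For the inductive step $d \geq 2$, suppose $G$ avoids $T^{d+1}_k$ as an $(r-1)$-shallow topological minor, and let $c$ and $k'$ be the quantities provided by Lemma~\ref{lem:bd_num_first_moves} applied with parameters $d+1, r, k$. When Localiser picks $v$ and sets $G_0' \coloneqq B_r^G(v)$, Splitter removes the set $S \coloneqq B_r^G(v) \cap S^d_{k', r-1}(G)$; by Lemma~\ref{lem:bd_num_first_moves}, $|S| \leq c$. The key claim is that the resulting graph $G_1$ contains no $\leq(r-1)$-subdivision of $T^d_{k'}$. Indeed, if $w$ were the root of some such subdivision $H$ in $G_1$, then since $G_1 \subseteq G$, the subgraph $H$ would also witness $w \in S^d_{k', r-1}(G)$; but $w \in V(G_1) \subseteq V(G_0') = B_r^G(v)$, so $w \in S$, contradicting $w \in V(G_1) = V(G_0') \setminus S$.

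Thus the inductive hypothesis applies to $G_1$ with parameters $d-1, r, k'$, yielding a winning strategy for Splitter in $d-1$ further rounds that uses batches of size at most $m(d-1, r, k')$. Setting $m(d, r, k) \coloneqq \max\{c,\, m(d-1, r, k')\}$ completes this step. The base case $d = 1$ is essentially immediate: since $T^1_{k'}$ is a single vertex, every vertex of $G$ lies in $S^1_{k', r-1}(G)$, and Lemma~\ref{lem:bd_num_first_moves} applied at level $d+1 = 2$ bounds $|B_r^G(v)|$ uniformly by $c$; Splitter wins in one round by deleting the entire ball $G_0'$.

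The elementary dependence of $m(d, r, k)$ on $r$ and $k$ for fixed $d$ follows by unrolling the recursion $d$ times: Lemma~\ref{lem:bd_num_first_moves} gives elementary bounds on $c$ and $k'$ at each level, and the composition of a constant number of elementary functions is elementary. The only subtlety in the argument --- and the point where Lemma~\ref{lem:bd_num_first_moves} does its real work --- is the verification that deleting the $T^d_{k'}$-roots of $G$ lying in the Localiser's ball eliminates \emph{all} such roots in $G_1$; this works because $G_1 \subseteq G$, so every subdivision living in $G_1$ automatically lifts to $G$ and is rooted at a vertex of $B_r^G(v)$, hence was already slated for removal.
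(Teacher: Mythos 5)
Your proof is correct and follows essentially the same inductive strategy as the paper: Splitter deletes the intersection of Localiser's ball with $S^d_{k',r-1}(G)$, Lemma~\ref{lem:bd_num_first_moves} (applied with tree-depth parameter $d+1$) bounds the batch size, and the fact that $G_1\subset G$ is what guarantees $G_1$ is clean so the induction hypothesis applies. The only minor deviation is the base case: the paper bounds $|B^G_r(v)|$ directly via a maximum-degree argument (degree $\le k-1$), whereas you reuse Lemma~\ref{lem:bd_num_first_moves} with $d'=2$, observing that $S^1_{k',r-1}(G)=V(G)$ since $T^1_{k'}$ is a single vertex; both give an elementary bound, and yours has the small aesthetic advantage of uniformizing the base and inductive cases.
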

\begin{proof}
We proceed by induction on $d$. For $d=1$, we set $m\coloneqq 1+ (k-1) + \ldots + (k-1)^{r}$.  Since $G$ does not contain $T^2_{k}$ as an $(r-1)$-shallow topological minor, then the maximum degree of any vertex of $G$ is $k-1$. Therefore, the $r$-ball around any $v \in V(G)$ contains at most $m$ vertices, and so no matter which vertex $v$ Localiser plays in the first move, Splitter can make $B^G_r(v)$ empty by removing all vertices.

For $d> 1$, we proceed as follows.
Let us first note that the choice of $m$ will be specified later; it will be clear that it does not depend on any particular graph $G$ but only on $d$, $k$, and $r$. Let $v$ be the vertex played by Localiser in his first move, and let $G_0'\coloneqq B^G_{r}(v)$.
Then, by Lemma~\ref{lem:bd_num_first_moves}, the set $S$ of vertices of $G_0'$ which can serve as the root of an ${\le }{(r-1)}$-subdivision of $T^{d}_{k'}$ has size at most $c$, where $c$ and $k'$ are the constants provided by Lemma~\ref{lem:bd_num_first_moves} for $d$, $k$, and $r$.
Splitter removes these vertices and so $G_1\coloneqq G_0'\setminus S$ does not contain an ${\le }{(r-1)}$-subdivision of $T^{{d}}_{k'}$. 
By the induction hypothesis applied to $d-1,r,$ and $k'$, there exists $m'\in\N$ such that Splitter wins the $m'$-batched game in $d-1$ rounds on $G_1$. Therefore we can set $m\coloneqq \max\{c, m'\}$.

We now argue the elementary dependence of $m$ on $r$ and $k$, for a fixed $d$. Both $c$ and $k'$ from Lemma~\ref{lem:bd_num_first_moves} depend elementarily on $r$ and $k$. Since from the induction hypothesis we know that  $m'=m(d-1,r,k')$ also depends  elementarily on its parameters, we get that $m=\max\{c, m'\}$ is elementary in $r$ and~$k$.
\end{proof}

\begin{proof}[Proof of the implication $(1) \to (2)$ of Theorem~\ref{thm:game_char}]
Let $\C$ be a graph class of tree rank at most $d$. Then there exists a function $f\from\N \to \N$ such that for every $r\in\N$,
no $G \in C$ contains $T^{{d+1}}_{f(r)}$ as $r$-shallow topological minor. For $r=0$, set $g(0)\coloneqq 1$.
For any $r \ge 1 $, set $g(r)\coloneqq m(d,r,f(r-1))$, where $m$ is the function from Lemma~\ref{lem:win}.
Then, by Lemma~\ref{lem:win},  for any $G \in\C$ Splitter wins the $g(r)$-batched splitter game of radius $r$ in at most $d$ rounds on $G$.

Also, assuming that $f$ is an elementary function of $r$, then we see that the function $g = m(d,r,f(r-1))$ is also an elementary function of $r$, because the function $m$ from Lemma~\ref{lem:win}  is elementary~in~$r$.
\end{proof}

\subsection{Proof of implication \texorpdfstring{$(2) \to (1)$}{(2) -> (1)}}
This implication follows easily from the following lemma.

\begin{lemma}
\label{lem:connector_wins}
For every $d, r, m \in \N$ with $d \ge 1$ there is a strategy for Localiser in the $m$-batched splitter game with radius $d\cdot (r+1)$ to make the game last at least $d+1$ rounds on any ${\le}r$-subdivision of $T^{{d+1}}_{m+1}$.
\end{lemma}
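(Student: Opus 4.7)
The plan is to use induction on the round number and maintain the following invariant: for every $i \in \{0, 1, \ldots, d\}$, after $i$ rounds of play the current game graph $G_i$ still contains an ${\le}r$-subdivision of $T^{d+1-i}_{m+1}$ as a subgraph. The base case $i = 0$ is immediate, since $G_0 = G$ is by assumption an ${\le}r$-subdivision of $T^{d+1}_{m+1}$. At the far end, the invariant for $i = d$ guarantees that $G_d$ contains an ${\le}r$-subdivision of $T^{1}_{m+1}$, which is a single vertex; hence $V(G_d) \neq \emptyset$ and the game is forced into a $(d+1)$-th round, yielding the lemma.

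For the inductive step, suppose $G_{i-1}$ contains an ${\le}r$-subdivision $T'$ of $T^{d+2-i}_{m+1}$ with root $v$. In any ${\le}r$-subdivision of a tree each edge is replaced by a path of length at most $r+1$, so every vertex of $T'$ lies within distance $(d+1-i)(r+1) \leq d(r+1)$ from $v$ in $T'$, and hence also in $G_{i-1}$. Localiser now plays the vertex $v$: the ball $G_{i-1}' = B^{G_{i-1}}_{d(r+1)}(v)$ fully contains $T'$. The root $v$ of $T'$ has $m+1$ children in the underlying unsubdivided tree $T^{d+2-i}_{m+1}$, and the subtrees $S_1, \ldots, S_{m+1}$ of $T'$ rooted at these children are pairwise vertex-disjoint; by construction, each $S_j$ is itself an ${\le}r$-subdivision of $T^{d+1-i}_{m+1}$.

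Splitter then removes at most $m$ vertices from $G_{i-1}'$, so by pigeonhole at least one of the $m+1$ vertex-disjoint subtrees $S_j$ remains intact in $G_i$; this surviving $S_j$ witnesses the invariant at step $i$ and completes the induction. The only real content of the argument is this pigeonhole step, which explicitly uses that the branching factor $m+1$ of the underlying tree strictly exceeds Splitter's per-round budget of $m$ vertex removals. I do not anticipate any genuine obstacle: the radius $d(r+1)$ in the statement is exactly what is needed so that Localiser's first ball already covers the entire initial subdivided tree, and each subsequent ball in turn covers the shrinking subdivided subtree supplied by the invariant.
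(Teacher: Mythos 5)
Your proof is correct and follows essentially the same route as the paper's: both have Localiser repeatedly play the root of the current subdivided tree, and both use a pigeonhole argument over the $m+1$ pairwise-disjoint child subtrees to show one survives Splitter's deletions. The only cosmetic difference is that the paper phrases this as induction on $d$ (handling $d=1$ as a base case via a degree argument), whereas you maintain a round-indexed invariant; the content and the radius calculation $d(r+1)$ are identical.
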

\begin{proof}
By induction on $d$. For $d=1$ we note that $T^2_{m+1}$ is a star with $m+1$ leaves.  The root $v$ of an ${\le}r$-subdivision of a star with $m+1$ leaves has degree at least $m+1$ in $G$. Localiser  can therefore play $v$ as his first move, and then $|B^G_r(v)| \ge m+2$, which means that Splitter cannot make $G_0' = B^G_r(v)$ empty in one~move.

For $d > 1$, if $T$ is any ${\le}r$-subdivision of $T^{{d+1}}_{m+1}$, Localiser chooses the root $v$ of $T$ as his first move. Then the whole tree $T$ is contained in $G_0'\coloneqq B^T_{d\cdot (r+1)}(v)$.
Let $v_1,\ldots, v_{m+1}$ be the branching vertices closest to $v$ in $T$; each $v_i$ is the root of an ${\le}r$-subdivision of $T^{{d}}_{m+1}$, call it $T_i$.
Since Splitter can remove at most $m$ vertices, at least one
subtree $T_i$
will not be affected by this, and so $G_1$ will contain an ${\le}r$-subdivision of $T^{{d}}_{m+1}$ as a subgraph. 
By the induction hypothesis, Localiser can survive at least $d$ rounds on $G_1$, and so together with the first move played this gives him a strategy to survive $d+1$ rounds.
\end{proof}

\begin{proof}[Proof of the implication $(2) \to (1)$ of Theorem~\ref{thm:game_char}]
Let $\C$ be a graph class with function $g$ such that for every $r\in\N$,
Splitter wins the $g(r)$-batched game of radius $r$ in at most $d$ rounds on any $G \in \C$.
We define the function $f\from \N \to \N$ by setting $f(r)\coloneqq g(d\cdot (r+1))+1$. We claim that no $G \in \C$ can contain $T^{d+1}_{f(r)}$ as an $r$-shallow topological minor. Assume for contradiction that there exists $r\in\N$ such that some $G \in \C$ contains an ${\le}r$-subdivision of $T^{d+1}_{f(r)}$ as a subgraph. 
Set $m\coloneqq f(r)-1 = g(d\cdot (r+1))$ and consider the $m$-batched splitter game of radius $d\cdot (r+1)$ played on $G$. By our assumption on $\C$ and $g$, Splitter can win in at most $d$ rounds. But by Lemma~\ref{lem:connector_wins}, Localiser can survive at least $d+1$ rounds in the $m$-batched game of radius $d\cdot (r+1)$ on any ${\le}r$-subdivision of $T^{{d+1}}_{m+1} = T^{{d+1}}_{f(r)}$ and consequently on $G$, a contradiction.

Moreover, if $g$ is an elementary function  of $r$, then clearly so is $f$, as desired.
\end{proof}

\section{Auxiliary definitions and logical preliminaries}
\label{sec:aux}
In this section we collect the necessary definitions and tools concerning logic which will be used later on.

\subsection{\texorpdfstring{$(r,m)$}{(r,m)}-game rank}
We will use the following auxiliary definitions related to classes of bounded tree rank. These are based on $m$-batched splitter game instead of excluding trees as $r$-shallow topological minors. This is motivated by Theorem~\ref{thm:game_char}.
\begin{definition}\label{def:rm_rank}
Let $r,m$ be integers.
We say that graph $G$ has $(r,m)$-game rank
at most $d$ if Splitter wins the $m$-batched splitter game with radius $r$ in at most $d$ rounds.
\end{definition}

It follows easily from Definition~\ref{def:rm_rank} and Theorem~\ref{thm:game_char} that a graph class $\C$ has tree rank at most $d$ if for every $r\in\N$ there exists $m\in\N$ such that every $G \in \C$ has $(r,m)$-game rank at most $d$. One also easily observes the following.

\begin{lemma}
Let $G$ be a graph of $(r,m)$-game rank at most $d$. If $r' < r$, then $G$ is of $(r',m)$-game rank at most $d$.
\end{lemma}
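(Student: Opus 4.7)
The plan is to reduce the radius-$r'$ game to the radius-$r$ game by a straightforward simulation argument. The key observation is monotonicity of balls: for any subgraph $H$ of $G$, any vertex $v \in V(H)$, and any radii $r' \le r$, we have $B_{r'}^{H}(v) \subseteq B_{r}^{H}(v)$. So compared with the radius-$r$ game, Localiser in the radius-$r'$ game is strictly more restricted in what he can present to Splitter, and Splitter should be able to mimic his strategy from the larger-radius game against this restricted opponent.

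Concretely, fix a winning strategy $\sigma$ for Splitter in the $m$-batched radius-$r$ game on $G$ that terminates in at most $d$ rounds. I would define a strategy $\sigma'$ for the radius-$r'$ game by maintaining in parallel a virtual play of the radius-$r$ game, with graphs $H_0, H_1, \ldots$, such that at all times the real-game graph $G_i$ satisfies $V(G_i) \subseteq V(H_i)$. Initially $G_0 = H_0 = G$. In round $i$, when Localiser picks $v \in V(G_{i-1})$ in the real game, Splitter pretends that the same $v$ is played in the virtual game; this is legal because $v \in V(G_{i-1}) \subseteq V(H_{i-1})$. Then
$$G_{i-1}' \;=\; B_{r'}^{G_{i-1}}(v) \;\subseteq\; B_{r}^{H_{i-1}}(v) \;=\; H_{i-1}',$$
using both $G_{i-1} \subseteq H_{i-1}$ and $r' \le r$. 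Splitter consults $\sigma$ on the virtual play to obtain a set $S$ of at most $m$ vertices to delete from $H_{i-1}'$, and in the real game deletes $S \cap V(G_{i-1}')$, which is also of size at most $m$. Then $V(G_i) = V(G_{i-1}') \setminus S \subseteq V(H_{i-1}') \setminus S = V(H_i)$, so the invariant is preserved.

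After at most $d$ rounds of this simulation $\sigma$ has won the virtual game, so $V(H_d) = \emptyset$; the inclusion $V(G_d) \subseteq V(H_d)$ then forces $V(G_d) = \emptyset$, and Splitter has won the radius-$r'$ game in at most $d$ rounds, establishing that $G$ has $(r',m)$-game rank at most $d$. There is no real obstacle beyond verifying the invariant $V(G_i) \subseteq V(H_i)$ and that Splitter's derived move is legal, both of which follow immediately from the set-theoretic inclusions above.
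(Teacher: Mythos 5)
Your proof is correct. The paper omits the proof of this lemma entirely, treating it as an immediate observation; your simulation argument, maintaining the invariant $V(G_i)\subseteq V(H_i)$ between the real radius-$r'$ game and a virtual radius-$r$ game, is the natural way to make that observation precise, and all the needed containments (monotonicity of balls under shrinking radius and under passing to a supergraph) are verified correctly.
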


For a graph $G$ and a subset $S$ of its vertices we define $G \ast S$ as the operation that isolates the vertices from $S$ in $G$.
\begin{definition}
\label{def:isolate}
Let $G$ be a graph and $S \subseteq V(G)$.
We define $G \ast S$ to be the graph with vertex set $V(G\ast S)\coloneqq V(G)$ and edge set 
$E(G\ast S)\coloneqq E(G)\setminus \{ uv~|~ u \in S \text{ or } v \in S \}$.
\end{definition}

Using the above definition, we prove the following~result.

\begin{lemma}
\label{lem:game_rank}
Let $G$ be a graph of $(r,m)$-game rank $d >1$, let $v \in V(G)$ and let $u_1,\ldots, u_m$ be vertices such that  $B^G_r(v) \setminus \{u_1,\ldots, u_m\}$ is of $(r,m)$-game rank $d-1$. Then also $B^G_r(v)\ast \{u_1,\ldots,u_m\}$ is of $(r,m)$-game rank at most~$d-1$.
\end{lemma}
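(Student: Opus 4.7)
Let $H_1 \coloneqq B^G_r(v) \setminus \{u_1, \ldots, u_m\}$ and $H_2 \coloneqq B^G_r(v) \ast \{u_1, \ldots, u_m\}$. The plan is to lift Splitter's $(d-1)$-round winning strategy on $H_1$, guaranteed by the hypothesis, to a $(d-1)$-round winning strategy on $H_2$. The crucial observation is that $H_2$ is obtained from $H_1$ simply by reintroducing each $u_i$ as an isolated vertex; consequently, for every $w \in V(H_2) \setminus \{u_1, \ldots, u_m\}$ we have $B^{H_2}_r(w) = B^{H_1}_r(w)$, since an isolated vertex cannot lie in a ball centred elsewhere.

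On $H_2$, Splitter reacts to Localiser's first move $w$ as follows. If $w = u_i$ for some $i$, then $B^{H_2}_r(w) = \{u_i\}$; Splitter deletes $u_i$, empties the arena in a single round, and wins. Since $d \ge 2$, this single round is within the budget of $d-1$. Otherwise $w \notin \{u_1, \ldots, u_m\}$, so $B^{H_2}_r(w) = B^{H_1}_r(w)$, and Splitter runs the winning strategy he had on $H_1$ as if Localiser had opened with $w$ there. After Splitter's first response, the remaining arena is $B^{H_1}_r(w)$ with some vertices removed, which contains none of the $u_i$'s; hence the subsequent evolution of the game on $H_2$ mirrors the imagined game on $H_1$ exactly, and terminates within $d-1$ rounds in total.

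The main point requiring care --- though hardly a real obstacle --- is that isolating the $u_i$'s rather than deleting them should not grant Localiser any additional opportunity. The only way an isolated $u_i$ can influence the game is if Localiser centres a ball at it, and this is handled cleanly by the single-round win above, using the fact that $d - 1 \ge 1$. Note that the assumption that $G$ itself has $(r,m)$-game rank $d$ plays no role in the argument beyond providing the ambient context; the argument relies solely on the hypothesis $d > 1$ together with the inductive assumption on $H_1$.
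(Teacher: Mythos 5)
Your proof is correct and takes essentially the same approach as the paper: both rest on the single observation that $B^G_r(v) \ast \{u_1,\ldots,u_m\}$ is obtained from $B^G_r(v) \setminus \{u_1,\ldots,u_m\}$ by reintroducing the $u_i$'s as isolated vertices, which cannot help Localiser. Your version merely makes explicit the case analysis (Localiser opening at an isolated vertex versus elsewhere) and the precise role of the hypothesis $d>1$ --- details that the paper's one-line justification (``this does not affect the duration of the batched splitter game'') leaves implicit.
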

\begin{proof}
By the assumption, Splitter wins in at most $d-1$ rounds on $B^G_r(v) \setminus \{u_1,\ldots, u_m\}$.
The graph $G\ast \{u_1,\ldots,u_m\}$ differs from $G'$ in that it has $m$ extra isolated vertices, and this does not affect the duration of the batched splitter game.
\end{proof}

\subsection{Tools from logic}

We assume familiarity with basic notions of the first-order logic: signatures, formulas, quantifier rank.

We use $\tup x,\tup y,$ etc to denote sets of variables.
Graphs are modeled as structures over a single binary predicate symbol $E$. 
Often, we will speak of \emph{colored} graphs -- these are graphs equipped with finitely many colors modeled as unary predicates. To each class of colored graphs, we associate a signature $\Sigma$ which consists of the binary predicate symbol $E$ and finitely many unary predicate symbols. Unless explicitly specified otherwise, all classes of structures considered in this paper will be (colored) graphs, and we will therefore use letters $G,H, \ldots$ for structures and $V(G), V(H), \ldots$ for their underlying universes.

\paragraph{Gaifman's theorem}
Several of our results will rely on Gaifman's locality theorem.
We say that a first-order formula $\gamma(x_1,\ldots, x_k)$ is $r$-\emph{local} if for every $G$ and every $v_1,\ldots,v_k \in V(G)$ we have that $G \models \gamma(v_1,\ldots,v_k) \Leftrightarrow B^G_r(v_1,\ldots,v_k) \models \gamma(v_1,\ldots,v_k)$ where $ B^G_r(v_1,\ldots,v_k) = \bigcup_{1 \le i \le k} B^G_r(v_i) $.
A \emph{basic local sentence} is a sentence of the form
$$ \tau \>\equiv\>
	\exists x_1 \ldots \exists x_k	\left(\bigwedge_{1 \le i < j \le k} 
		\dist(x_i,x_j) > 2r 
	\land \bigwedge_{1 \le i \le k} \alpha(x_i) \right)
,$$
where $\alpha(x)$ is $r$-local and $\dist(x,y)>2r$ is a formula expressing that the distance between $x$ and $y$ is larger than $2r$. The number $r$ is called the \emph{locality radius} of the basic local sentence $\tau$.

\begin{theorem}[Gaifman's locality theorem \cite{gaifman1982local}]\label{thm:Gaifman}
Every first-order formula $\phi(x_1,\ldots,x_k)$ is equivalent to a boolean combination of a $7^q$-local formula $\rho(x_1,\ldots,x_k)$ and basic
local sentences with locality radius $7^q$, where $q$ is the quantifier rank of $\phi$.  Furthermore, the quantifier rank of the resulting sentences is at most $f(q)$, where $f$ is some elementary non-decreasing function.
\end{theorem}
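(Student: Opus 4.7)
The plan is to follow the classical inductive proof of Gaifman's theorem, proceeding by induction on the structure of $\phi$, while keeping careful track of how the locality radius and the quantifier rank evolve. The statement for atomic formulas is immediate (they are $1$-local), and boolean combinations preserve the desired normal form --- one can merge boolean combinations of local formulas and basic local sentences into another boolean combination of the same shape, with locality radius the maximum of the two. The only non-trivial step is the case $\phi(\tup x) = \exists y\, \psi(\tup x, y)$, where by the inductive hypothesis $\psi$ is equivalent to a boolean combination $\beta$ of an $s$-local formula $\rho(\tup x, y)$ and basic local sentences of locality radius $s$. Since basic local sentences have no free variables, they can be pulled outside of the existential quantifier, reducing the task to rewriting $\exists y\, \rho(\tup x, y)$ for an $s$-local formula $\rho$.

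The key case analysis distinguishes whether the witness $y$ is close to or far from $\tup x$. If $\dist(y, x_i) \le 2s+1$ for some $i$, then the whole neighborhood needed to evaluate $\rho(\tup x, y)$ lies within distance roughly $3s+1$ of $\tup x$, so the existence of such a $y$ is captured by an $r'$-local formula in $\tup x$ with $r' = \Oof(s)$. If $y$ is far from all of $\tup x$, the $s$-balls around $y$ and around each $x_i$ are disjoint, so the truth of $\rho(\tup x, y)$ depends only on the isomorphism type of $B^G_s(y)$ and the type of $B^G_s(\tup x)$ separately. By a scattered-witnesses argument one then expresses ``there exist $k$ pairwise $2s$-far vertices each satisfying the appropriate local property'' using a basic local sentence of locality radius $s$, for each type separately, and combines these with the close-case local formula. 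A short calculation shows that the radius grows by a constant multiplicative factor per inductive step, which after $q$ steps yields the advertised bound of $7^q$.

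The main obstacle is controlling the quantifier rank of the resulting sentences. At each inductive step the basic local sentences that appear take the form $\exists x_1 \cdots \exists x_k (\ldots)$, where $k$ has to be large enough to cover every local type realised within the model. The number of distinct isomorphism types of radius-$s$ neighborhoods that can be distinguished by the formulas produced so far is bounded by a function of the quantifier rank; tracking this bookkeeping through the induction shows that $k$, and hence the quantifier rank, grows by an exponential (but not worse) factor per level of quantification. Thus the total quantifier rank after $q$ levels is bounded by iterating an exponential $\Oof(q)$ times, which defines an elementary non-decreasing function $f$. Composing this bound with the $7^q$ locality radius finishes the proof.

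Finally, the resulting normal form is a boolean combination of one $r$-local formula $\rho(\tup x)$ (obtained by collapsing all the nested local formulas produced above) and a bounded collection of basic local sentences, exactly as required. The elementary bound $f(q)$ is a direct byproduct of the step-by-step bookkeeping, and no appeal to a non-elementary Ramsey-style argument is needed.
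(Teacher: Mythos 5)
The paper does not reprove Gaifman's theorem; it cites \cite{gaifman1982local} directly and only remarks that the quantifier-rank bound ``follows easily from the proof'' of the original. Your sketch of the classical inductive argument (close/far case split, scattered witnesses, multiplicative radius growth of roughly $3$ per quantifier giving $\leq 7^q$) is therefore more detailed than what the paper offers, and the high-level structure is the right one.

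However, the quantifier-rank bookkeeping contains a genuine error. You claim that the number of scattered witnesses, and hence the quantifier rank, ``grows by an exponential (but not worse) factor per level of quantification,'' and then conclude that iterating an exponential $\mathcal{O}(q)$ times yields an elementary function. It does not: iterating $n\mapsto 2^n$ a total of $q$ times is precisely $\tower(q,\cdot)$, the canonical non-elementary function. So one of two things must give. Either the growth really is exponential per level, in which case the theorem as stated would be false; or (as is actually the case) the growth is only additive per level. In Gaifman's proof the number of scattered witnesses $\ell$ in a basic local sentence is controlled by the number of free variables of the formula being processed (roughly $|\tup x|+1$), not by the number of $r$-local isomorphism types. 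The number of types governs the size of the resulting boolean combination --- and indeed that size can blow up non-elementarily, which is exactly the Dawar--Grohe--Kreutzer--Schweikardt lower bound cited in the paper --- but each individual formula in the combination has quantifier rank growing only by an elementary additive term per level (one new quantifier for $y$ in the close case, $\mathcal{O}(|\tup x|)$ witnesses plus $\mathcal{O}(\log r)$ for distance predicates in the far case). Summing $q$ such terms gives a polynomial, or at worst singly exponential, bound on $f(q)$. You appear to have conflated the non-elementary growth of the \emph{number} of formulas with the growth of the \emph{quantifier rank} of a single formula; separating these two quantities fixes the argument and yields the claimed elementary $f$.
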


The assertion in the last statement of the theorem is not claimed explicitly in \cite{gaifman1982local}, but follows easily from the proof.
Note that if $\phi$ is a sentence, then Gaifman's theorem says that $\phi$ is equivalent to a boolean combination of basic local sentences (there is no formula~$\rho$).

\paragraph{Interpretations}
In some of our proofs we will  rely on the notion of \emph{interpretation},
which is a mechanism using logical formulas to produce new graphs from old ones. We present it here specialized to our setting of (colored) graphs. 
Let $\Sigma$, $\Gamma$ be two signatures of colored graphs and let $p$ be the number of unary predicate symbols in $\Gamma$. An \emph{interpretation} from $\Sigma$ to $\Gamma$ is a tuple $I=(\psi_E, \vartheta, \theta_1, \ldots \theta_p)$ of $\Sigma$-formulas such that $\psi_E$ has two free variables and all other formulas in $I$ have one free variable. To any graph with unary predicates from $\Sigma$ an interpretation $I$ assigns new graph $H\coloneqq I(G)$ such that $V(H)\coloneqq \{v~|~G \models \vartheta(v)\}$, $E(G)\coloneqq \{ uv~|~u,v \in V(H), G \models \psi_E(u,v)\}$, and the $i$-th unary predicate of $\Gamma$ is realized by the set $S_i\coloneqq \{v \in V(H)~|~ G \models \theta_i(v)\}$.

\paragraph{Interpretations with parameters}
In our proof of Theorem~\ref{thm:alternations} we will consider a slightly less well-known form of interpretations which use parameters.
We first present the notion for uncolored graphs.
Fix two formulas $\vartheta(x,y_1,\ldots,y_k)$ and $\psi_E(x_1,x_2,y_1,\ldots, y_k)$ and set $I\coloneqq(\vartheta,\psi_E)$; the variables $y_1,\ldots,y_k$ are called \emph{parameters}.
For any graph $G$ and vertices  $u_1,\ldots,u_k \in V(G)$ these formulas define a new graph $H=I(G,u_1,\ldots,u_k)$ with
$$V(H) \coloneqq \{ w \in V(G)~|~ G \models \vartheta(w,u_1,\ldots,u_k) \} $$
$$ E(H) \coloneqq  \{ w_1w_2 \in V(H)~|~ G \models \psi_E(w_1,w_2,u_1,\ldots,u_k) \}$$

We include two examples, which will be used in our proofs. First, there is an interpretation that from $G$ and $v,u_1,\ldots, u_k$ produces the graph $B^G_r(v) - \{u_1,\ldots,u_k\}$. This uses $k+1$ parameters and consists of formulas $\vartheta(x,y_0,y_1,\ldots,y_k)\coloneqq(\dist(x,y_0) \le r) \land (x\not=y_1) \land \ldots \land (x\not=y_k)$ and $\psi_E(x_1,x_2,y_0,\ldots,y_k)\coloneqq E(x_1,x_2)$.
The second example is an interpretation that from $G$ and $v,u_1,\ldots,u_k$ produces the graph $B^G_r(v) \ast \{u_1,\ldots, u_k\}$. This is given by formulas $\vartheta(x,y_0,y_1,\ldots,y_k)\coloneqq\dist(x,y_0) \le r$ and  $$\psi_E(x_1,x_2,y_0,y_1,\ldots,y_k)\coloneqq E(x_1,x_2) \land \bigwedge_{1\le i \le k} (x_1 \not= y_i)  \land \bigwedge_{1\le i \le k} (x_2 \not= y_i).$$

In the setting of colored graphs, say with $t$ colors, we use $t$ additional formulas $\theta_1(x,y_1,\ldots,y_k),\ldots, \theta_t(x,y_1,\ldots,y_k)$, we set $I\coloneqq\{\vartheta,\psi_E,\theta_1,\ldots,\theta_t\}$ and for each $i$ we mark in $H = I(G, u_1,\ldots,u_k)$ the vertices from the set
$$ S_i\coloneqq \{w \in V(G)~|~ G\models \theta_i(w,u_1,\ldots,u_k) \land \vartheta(w,u_1,\ldots,u_k )\} $$
by a unary predicate $C_i$.

\section{Bounded alternation rank}\label{sec:alt}

This section is devoted to the proof of the following theorem, which is a restatement of one direction of Theorem~\ref{thm:alt} suited for formulas with free variables. See Section~\ref{sec:prelim'} for the definition of the alternation rank of a formula.
Roughly, it is the number of alternations 
between quantifiers $\forall$ and $\exists$ in a formula which applies negations only to atomic formulas.

\begin{theorem}
\label{thm:batched_rank}
\label{thm:alternations}
Let $\C$ be a graph class of tree rank $d$. Every formula $\phi(\tup x)$ is equivalent on $\CC$ to some  formula
$\psi(\tup x)$  of alternation rank at most $3d-1$.
Moreover, if $\C$ is a class of elementary tree rank $d$ then the size of $\psi$ is  elementary in the size of $\phi$.
\end{theorem}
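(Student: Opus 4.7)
The plan is to proceed by induction on the tree rank $d \ge 1$, using the game characterisation (Theorem~\ref{thm:game_char}) to peel off one round of the Splitter game at each step. I would first apply Gaifman's locality theorem (Theorem~\ref{thm:Gaifman}) to $\phi(\bar x)$, rewriting it as a Boolean combination of an $r$-local formula $\rho(\bar x)$ and basic local sentences of locality radius $r=7^q$, where $q$ is the quantifier rank of $\phi$. Since a basic local sentence is an outer existential block followed by a conjunction of $r$-local formulas with one free variable, the core task reduces to bounding the alternation rank needed to express, for any $r$-local formula $\rho(\bar x)$ and any graph $G\in\C$, the relation ``$B^G_r(\bar x)\models\rho(\bar x)$''. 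The induction hypothesis will assert this bound for all (colored) graphs of $(r,m)$-game rank at most $d-1$, for any $r,m$, with the resulting formula having alternation rank at most $3(d-1)-1$.

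For the base case $d=1$, Theorem~\ref{thm:game_char} implies that every $r$-ball in any $G\in\C$ has at most $g(r)$ vertices. Hence $\rho$ on $B^G_r(\bar x)$ can be written by existentially guessing the at most $k\cdot g(r)$ vertices of $B^G_r(\bar x)$ together with their adjacency diagram, yielding an existential formula of alternation rank $1$; combining with basic local sentences (which are themselves existential) and the Boolean combination from Gaifman's theorem gives alternation rank at most $2=3\cdot1-1$. For the inductive step, fix $r$ and let $m$ be large enough that every $G\in\C$ has $(r,m)$-game rank at most $d$. By Splitter's winning strategy, for any tuple $\bar x$ there exist $m$ vertices $\bar u=(u_1,\ldots,u_m)$ in $B^G_r(\bar x)$ whose removal from $B^G_r(\bar x)$ yields a graph of $(r,m)$-game rank $d-1$; by Lemma~\ref{lem:game_rank}, the same holds for $H := B^G_r(\bar x)\ast\bar u$. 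Color $H$ with unary predicates $C_1,\ldots,C_m$, where $C_i$ marks the neighbors of $u_i$ in $B^G_r(\bar x)$. Then $\rho(\bar x)$ on $B^G_r(\bar x)$ is equivalent to a formula $\rho^\dagger(\bar x)$ on the colored $H$ in which each atom $E(x,u_i)$ is rewritten as $C_i(x)$; critically, $\rho^\dagger$ has the same quantifier rank and alternation rank as $\rho$.

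Since the colored $H$ has $(r,m)$-game rank at most $d-1$, the induction hypothesis (extended to colored graphs) provides a formula $\psi^\dagger(\bar x)$ of alternation rank at most $3(d-1)-1=3d-4$ equivalent to $\rho^\dagger$ on the colored $H$. The colored graph $H$ is the image of an interpretation from $G$ with parameters $\bar x,\bar u$, built from the formulas $\vartheta(z,\bar x,\bar u)\equiv\dist(z,\bar x)\le r$ and the edge and color formulas given in Section~\ref{sec:aux}. Pulling $\psi^\dagger$ back along this interpretation yields a formula $\widetilde\psi(\bar x,\bar u)$ on $G$ whose alternation rank exceeds that of $\psi^\dagger$ by at most a small constant, attributable to the relativisation of quantifiers to $B^G_r(\bar x)$ and to the definition of edges via $C_i$'s. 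Finally, writing ``$B^G_r(\bar x)\models\rho(\bar x)$'' as $\exists\bar u\in B^G_r(\bar x):\widetilde\psi(\bar x,\bar u)$ prepends one existential block. A careful accounting shows that the total increase per inductive step is at most $3$, so after $d$ steps the alternation rank is at most $3d-1$, and re-applying Gaifman's theorem to merge Boolean combinations does not increase the alternation rank above this threshold.

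For the elementary size bound, when $\C$ has elementary tree rank $d$, Theorem~\ref{thm:game_char} ensures that $g(r)$, and hence the required $m$, is elementary in $r$; Gaifman's theorem produces sentences whose quantifier rank is elementary in the input. Since each inductive step replaces $\rho$ by a formula whose size is elementary in the size of $\rho$, in $m$, and in $r$, iterating $d$ times (with $d$ constant) keeps the total size elementary in the size of $\phi$. The main obstacle I foresee is the precise alternation-rank accounting under the interpretation pullback: one must ensure that the relativisation to the $r$-ball and the substitution of colors for the removed edges absorb cleanly into existing quantifier blocks rather than producing fresh alternations, so that the constant $3$ per step is actually attained. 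This is where choosing the specific isolation operation $G\ast\bar u$ (rather than deletion) and carefully normalising $\widetilde\psi$ before prefixing the $\exists\bar u$ block will be essential.
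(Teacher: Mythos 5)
Your overall strategy matches the paper's: Gaifman's theorem to localize, the batched-splitter-game characterisation to set up an induction on tree rank, the isolation operation $G\ast\bar u$ together with extra colors so that the induction hypothesis can be applied to a simpler graph, and an interpretation with parameters to pull the resulting formula back to $G$. However, there is one genuine gap, and it is the central point the construction has to get right.

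You finish by writing the local condition ``$B^G_r(\bar x)\models\rho(\bar x)$'' as $\exists\bar u\,.\,\widetilde\psi(\bar x,\bar u)$, where $\widetilde\psi$ is the pullback of the inductively obtained formula $\psi^\dagger$. The forward implication is fine: Splitter's strategy supplies a good $\bar u$, the isolated ball $H=B^G_r(\bar x)\ast\bar u$ then has game rank at most $d-1$, the induction hypothesis applies, and the pullback witnesses $\widetilde\psi$. But the backward implication fails. The existential quantifier $\exists\bar u$ ranges over all $m$-tuples, including ``bad'' ones for which $H$ still has game rank $d$; for such $\bar u$, the induction hypothesis gives you nothing about $\psi^\dagger$, so $\widetilde\psi(\bar x,\bar u)$ may hold without $\rho(\bar x)$ holding. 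The equivalence therefore does not go through: your formula may be strictly weaker or stronger than $\rho$ depending on what $\psi^\dagger$ happens to say on high-rank graphs.

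The paper fixes this by conjoining a guard: the final local formula is $\exists\bar u\,(\winsp(\bar x,\bar u)\wedge\widetilde\psi(\bar x,\bar u))$, where $\winsp$ is a first-order formula (built in Lemma~\ref{lem:swins_formula}) asserting that the game rank of $B^G_{r}(\bar x)-\bar u$ is at most $d-1$. This guard forces the witnessing $\bar u$ to be a legitimate Splitter move, so that Lemma~\ref{lem:game_rank} guarantees the isolated ball has game rank $d-1$ and the induction hypothesis actually applies in the backward direction. Crucially, $\winsp$ must itself have small alternation (batched quantifier rank $3d-3$ in the paper's accounting), and proving that is the content of Lemma~\ref{lem:swins_formula}; without it the ``$+3$ per round'' budget you sketch cannot be established. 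Since you do not even mention the need for such a definable guard, this is more than a bookkeeping worry — it is a missing ingredient without which the inductive step is unsound. A secondary imprecision: your induction hypothesis is phrased ``for any $r,m$,'' but the radius actually has to be chosen top-down ($r(d,q)=r(d-1,f(q))$ with $f$ from Gaifman) so that the recursive calls see a large enough radius; this is easy to repair but should be made explicit.
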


We now briefly discuss the high-level idea behind the proof of Theorem~\ref{thm:alternations}.
From Gaifman's theorem (Theorem~\ref{thm:Gaifman}) we know that deciding whether $G \models \phi$ can be reduced to evaluating formulas on $r$-neighborhoods of vertices of $G$.
Moreover, from the Splitter game characterisation (Theorem~\ref{thm:game_char}) we know that for any class $\C$ of graphs of tree rank $d$ we have the property that for any $G \in \C$ and $v\in V(G)$ we can make the $r$-neighborhood of $v$ simpler (of rank $d-1$) by removing a bounded number of vertices.
Combining these two facts leads to a proof of Theorem~\ref{thm:alternations} by induction on the tree rank of $\C$. First, we express $\phi$ in terms of local formulas. Since these formulas need to be only evaluated on small neighborhoods around vertices of $G$, and these neighborhoods can easily be made of rank $d-1$, we can replace these local formulas by equivalent formulas of desired alternation rank obtained from induction hypothesis. We then combine these formulas to obtain $\psi$.

While this plan is conceptually simple, it comes with certain technical difficulties.
First, one has to be careful about what it means to make the $r$-neighborhood of some vertex $v$ simpler by removing a bounded number of vertices. Consider a graph $G$ such that Splitter wins the $m$-batched splitter game of radius $r$ on $G$ in $d$ rounds. Then we know that if we pick any $v$ in $G$, we can remove a small set $S$ of vertices from $B^G_r(v)$ to get a graph on which Splitter can win the game in $d-1$ rounds. However, this makes the graph simpler only with respect to this particular $r$, and not necessarily with respect to larger radii. In short, the graph class 
$$
\C'\coloneqq\{B^G_r(v)- S~|~G \in \C,~v \in V(G),~S \text{ is $v$-good}\} 
$$
 where being $v$-good means that Splitter wins in $d-1$ rounds on $B^G_r(v)-S$, is not necessarily of tree rank $d-1$. This means that we cannot apply the induction hypothesis to $\C'$. To resolve this, we strengthen the induction hypothesis -- we will prove for every formula $\phi$ there exists an $r\in\N$ such that if $\C$ is any class for which Splitter has a winning strategy with small batch size to win the game of radius $r$ in at most $d$ rounds on any $G \in \C$, then there exists a formula
 $\psi$ equivalent to $\phi$ on $\C$.

Second, the formulas we obtain from the induction hypothesis only work as desired on $r$-neighborhoods in $G$ (and even that only after removing some vertices), while the formula $\psi$ has to work on the whole graph $G$.
To get around this, the formula $\psi$  essentially does the following: it considers $B^G_r(v)$ for any $v \in G$ it needs, identifies vertices which lead to Splitter's win in $d-1$ rounds, removes/isolates them, evaluates the formula of small alternation rank obtained from induction hypothesis, and then translates the result to the original graph. This is achieved by using interpretations with parameters, introduced in Section~\ref{sec:aux} and further discussed in Section \ref{sec:prelim'2}.

\subsection{Alternation rank and batched formulas}\label{sec:prelim'}

Let $\Sigma_i$ denote the set of formulas 
in prenex normal form 
of the form 
$$\underbrace{\exists^*\forall^*\exists^*\cdots }_{i \text{ blocks}}\phi,$$
where $\phi$ is quantifier-free.
Dually, let $\Pi_i$ denote the set of formulas of the the form 
$$\underbrace{\forall^*\exists^*\forall^*\cdots }_{i \text{ blocks}}\phi.$$
Formally, $\Sigma_0$ and $\Pi_0$ are quantifier-free formulas,
while for $i>0$, the set $\Sigma_i$ consists of formulas of the form 
$\exists \bar x. \phi(\bar x,\bar y),$
for $\phi\in\Pi_{i-1}$,
and dually, $\Pi_i$ consists of formulas of the form $\forall \bar x. \phi(\bar x,\bar y),$
for $\phi\in\Sigma_{i-1}$.

Note that every $\Sigma_i$-formula is equivalent to a negation of a $\Pi_i$-formula, and vice-versa.
Let $B\Sigma_i$ denote the set of boolean combinations of formulas in $\Sigma_i$.
Those are equivalent to positive boolean combinations of formulas in $\Sigma_i$ and $\Pi_i$.

A formula $\phi$ is in \emph{negation normal form}
if it contains no negations, apart from 
the literals (atomic formulas or their negations).
Every formula $\phi$ can be converted into negation
normal form by using de Morgan's rules, in polynomial time.

For a formula $\phi$, we define its \emph{alternation rank} as follows.
First, convert $\phi$ into negation normal form.
Next, consider the maximum, over all branches in the syntax tree of $\phi$, of the number of alternations between the symbols $\forall$ and $\exists$.
Namely, a branch in the syntax tree of $\phi$
determines a word over the alphabet $\set{\lor,\land,\exists,\forall}$.
We ignore the symbols $\lor$ and $\land$ in this word, obtaining a word 
over the alphabet $\set{\exists,\forall}$.
The alternation number of this branch is the number of subwords of the form $\exists\forall$ or $\forall\exists$.
The alternation number of a formula in negation normal form is the maximal alternation number of all branches in the syntax tree.

\medskip
For $m\in\N$, we define the set of  \emph{$m$-batched formulas} inductively, as follows. Simultaneously, for each such formula we define its \emph{batched quantifier rank}.
\begin{itemize}
\item Any atomic formula is an $m$-batched formula of batched quantifier rank $0$.
\item Any boolean combination of $m$-batched formulas $\phi_1,\ldots,\phi_k$ is an $m$-batched formula,
whose batched quantifier rank is equal to the maximal batched quantifier rank of the formulas $\phi_1,\ldots,\phi_k$.
\item If $\phi$ is an $m$-batched formula, then also $(\exists x_1 \ldots \exists x_k)  \phi$ and $(\forall x_1 \ldots \forall x_k)  \phi$ are $m$-batched formulas, where $k\le m$ and $x_1,\ldots,x_k$ are any variables.
The batched quantifier rank of those formulas is one plus the batched quantifier rank of $\phi$.
\end{itemize}
A \emph{batched formula} is an $m$-batched formula, for some $m$.

\begin{lemma}\label{lem:alternation-rank}
  Fix $q\ge 1$. The following conditions are equivalent for a formula $\phi$:
	\begin{enumerate}
		\item $\phi$ is equivalent to a  $B\Sigma_q$-formula,
		\item $\phi$ is equivalent to a batched formula of batched quantifier rank $q$, 
		\item $\phi$ is equivalent to a 
		formula of alternation rank $q -1$.
	\end{enumerate}
	Moreover, all translations are computable in polynomial time.
\end{lemma}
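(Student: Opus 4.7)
The plan is to establish a cycle of implications $(1) \Rightarrow (2) \Rightarrow (3) \Rightarrow (1)$, each obtained by an explicit syntactic transformation. The first two implications essentially unwind the definitions, while the converse $(3) \Rightarrow (1)$ contains the substantive content.

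For $(1) \Rightarrow (2)$, I write the given $B\Sigma_q$-formula $\phi$ as a Boolean combination of $\Sigma_q$-formulas, each already prenex with $q$ alternating quantifier blocks. Choosing $m$ to be the maximum block size appearing, each block is recast as a single batched quantifier of size at most $m$, and the surrounding Boolean combination matches the inductive definition of batched formulas, giving batched quantifier rank $q$. For $(2) \Rightarrow (3)$, I first push negations down to the atoms to reach NNF; this swaps $\exists \leftrightarrow \forall$ within a batch but preserves the batched structure. In the resulting NNF formula each batch is a block of same-type quantifiers, so any branch of the syntax tree passes through at most $q$ such blocks and hence contributes at most $q - 1$ alternations.

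For $(3) \Rightarrow (1)$, I proceed by structural induction on $\phi$ in NNF. The atomic case is trivial (an atom lies in $\Sigma_0 \subseteq \Sigma_q$) and the Boolean-connective case uses closure of $B\Sigma_q$ under $\wedge$ and $\vee$. The key case is $\phi = Qx.\psi$ with $Q \in \set{\exists, \forall}$ and $\phi$ of alternation rank at most $q - 1$. By induction $\psi$ is equivalent to a Boolean combination of $\Sigma_q$- and $\Pi_q$-formulas; I then push $Qx$ inward through the Boolean combination using $\exists x.(A \vee B) \equiv (\exists x. A) \vee (\exists x. B)$ and its dual for $\forall$ and $\wedge$, together with absorption of $Qx$ into a leading same-type block via $Qx.Q\bar y.\chi \equiv Q(x\bar y).\chi$. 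Variables are renamed freshly where needed to avoid capture.

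The main obstacle is verifying that this redistribution remains inside $B\Sigma_q$: a naive push of $\exists x$ onto a conjunct whose outermost block is $\forall$ would create a $(q{+}1)$-th alternating block. The alternation-rank hypothesis on $\phi$ rules this out exactly when it matters, because such a configuration would give rise to a branch of $\phi$ with $q$ alternations, contradicting our assumption. Formalising this observation via an inner induction on $q$ on the structure of the Boolean combination returned by the outer induction lets us discharge each conjunct, either by absorbing $Qx$ into an existing leading block or by recognising that the branch prefix already sits in $B\Sigma_{q-1}$. The transformation is local and performed along the existing Boolean skeleton of $\psi$, so the size of the output stays polynomial in the size of the input, yielding the required $B\Sigma_q$-formula.
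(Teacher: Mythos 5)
Your implications $(1)\Rightarrow(2)$ and $(2)\Rightarrow(3)$ are correct, and they are what the paper dismisses as ``immediate.'' The trouble is in $(3)\Rightarrow(1)$, where the substantive content lives, and where your induction hypothesis is too weak to carry the quantifier case.

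Consider $\phi=\exists x.\psi$ with alternation rank at most $q-1$. Your inductive hypothesis yields a $B\Sigma_q$-formula $\beta$ that is \emph{equivalent} to $\psi$; it says nothing about the syntactic shape of $\beta$. When you then try to push $\exists x$ through $\beta$, you correctly note that a $\Pi_q$-conjunct would create a $(q+1)$-st alternating block, and you assert that ``the alternation-rank hypothesis on $\phi$ rules this out \ldots because such a configuration would give rise to a branch of $\phi$ with $q$ alternations.'' This is a non sequitur: the hypothesis on $\phi$ constrains the syntax tree of $\psi$, but it places no constraint on the syntax tree of the merely equivalent formula $\beta$ returned by the induction. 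Nothing forbids the IH from handing you back, say, an equivalent rewriting of a $\Pi_1$-formula as a $\Pi_q$-formula, at which point the push of $\exists x$ genuinely lands you in $\Sigma_{q+1}$. The ``inner induction on $q$'' is supposed to rescue this, but it cannot: the information you need --- that each $\Pi_q$-conjunct corresponds to a $\forall$-headed branch of the \emph{original} $\psi$ with at most $q-2$ alternations --- was already discarded when you replaced $\psi$ by an arbitrary equivalent $\beta\in B\Sigma_q$.

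The repair is to strengthen the IH so that it tracks the quantifier-type of the leading quantifier on each branch. Concretely, prove by structural induction: if $\phi$ in NNF is such that every branch whose first quantifier is $\forall$ has at most $r-1$ alternations and every branch whose first quantifier is $\exists$ (or none) has at most $r$ alternations, then $\phi$ is equivalent to a $\Sigma_{r+1}$-formula; dually for $\Pi_{r+1}$. In the case $\phi=\exists x.\psi$, every branch of $\phi$ starts with $\exists$, so this strengthened condition holds for $\phi$ at level $r=q-1$; inspecting what it forces on $\psi$ gives exactly the same condition for $\psi$ at level $q-1$, and the IH then returns a genuine $\Sigma_q$-formula, into whose leading $\exists$-block $x$ can be absorbed. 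The $\forall$ case is dual. At the top level, a Boolean decomposition of $\phi$ followed by this lemma at each quantified position yields the $B\Sigma_q$-formula. The paper's one-line appeal to ``the standard conversion into prenex normal form'' is implicitly performing exactly this branch-by-branch prenexification; your inward-pushing route can be made to work, but only once the IH is refined in this way.
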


\begin{proof}
	The implications 1$\rightarrow$2 and 2$\rightarrow$3 are immediate.
	The implication 3$\rightarrow$1
	follows by using the standard conversion into prenex normal form, which does not increase the alternation rank.
\end{proof}

\begin{lemma}\label{lem:batched-to-alt}
  Fix $q\in\N$. 
  There is an elementary function $f_{q}\from \N^3 \to\N$ 
  such that for all $m\in\N$, $k \in \N$ and signature $\Sigma$, every $m$-batched $\Sigma$-formula $\phi(x_1,\ldots,x_k)$ of batched quantifier rank~$q$ 
  is equivalent to an $m$-batched $\Sigma$-formula $\psi(x_1,\ldots,x_k)$ 
  of batched quantifier rank~$q$, of size at most $f_{q}(m,k,|\Sigma|)$.
\end{lemma}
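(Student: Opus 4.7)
The plan is to prove, by induction on $q$, the stronger statement that for every $m,k\in\N$ and every signature $\Sigma$, both the number of equivalence classes of $m$-batched $\Sigma$-formulas with $k$ free variables and batched quantifier rank at most $q$, as well as the maximum size of a shortest representative in each class, are bounded by elementary functions of $m,k,|\Sigma|$. The canonical representative we build will be a DNF over a finite set of ``pseudo-atoms'' consisting of atomic formulas together with quantified subformulas in canonical form.

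For the base case $q=0$, an $m$-batched formula is a boolean combination of atomic formulas over $x_1,\dots,x_k$, and the number of such atoms is polynomial in $|\Sigma|$ and $k$ (recall that in our setting the arities are bounded by a constant; the argument extends to any fixed maximum arity). Each equivalence class is determined by its truth assignment over these atoms, giving a doubly exponential bound on the number of classes and an elementary bound on the size of a canonical DNF representative.

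For the inductive step from $q$ to $q+1$, fix a canonical pool of bound variables $\{z_1,\dots,z_m\}$ disjoint from $\{x_1,\dots,x_k\}$. By the inductive hypothesis applied with $k+m$ free variables, there is a finite set $R_q$ of canonical representatives, one per equivalence class, whose sizes and whose cardinality are elementary in $m,k,|\Sigma|$. Let $C$ be the union of the atomic formulas over $x_1,\dots,x_k$ and all formulas of the form $Q z_{i_1}\cdots z_{i_j}\,\psi(\bar x,\bar z)$ with $Q\in\{\exists,\forall\}$, $j\le m$, and $\psi\in R_q$. Then $|C|$ and the maximum size of an element of $C$ are elementary in $m,k,|\Sigma|$.

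Now given an arbitrary $m$-batched $\phi(\bar x)$ of batched quantifier rank at most $q+1$, we recursively replace each maximal quantified subformula $Q\bar y\,\phi'(\bar x,\bar y)$ (with $\phi'$ of rank $\le q$) by $Q\bar z\,\psi(\bar x,\bar z)$, where $\psi$ is the canonical representative from $R_q$ of the $\alpha$-renaming of $\phi'$ that substitutes $\bar z$ for $\bar y$. This yields a formula equivalent to $\phi$ that is a boolean combination of elements of $C$. Its equivalence class depends only on its truth-table over $C$, so it is equivalent to a DNF over $C$ of size at most $2^{|C|}\cdot|C|\cdot\max_{c\in C}|c|$, which is elementary in $m,k,|\Sigma|$; we take this DNF as the canonical representative of $\phi$. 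The main subtlety is the handling of variable renaming: without fixing a canonical pool of bound variables, the set $C$ could be infinite, but since equivalence is invariant under $\alpha$-renaming, choosing a fixed pool of $m$ bound variables reduces the inductive hypothesis to a fixed number $k+m$ of free variables and keeps $|C|$ under control.
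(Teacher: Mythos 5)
Your proof is correct and takes essentially the same route as the paper's: induction on batched quantifier rank, converting each maximal quantified subformula to a canonical representative of bounded size via the inductive hypothesis (applied with $k+m$ free variables), then bounding the boolean combination over a finite set of "pseudo-atoms" by a DNF. The one place you go beyond the paper is your explicit treatment of $\alpha$-renaming via a canonical pool of bound variables; the paper's argument implicitly relies on this when it invokes "the number of distinct formulas of size at most $s$," so your added care is a genuine improvement in rigour rather than a detour.
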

\begin{proof}
  We proceed 
  by induction on $q$. For $q=0$ the statement is trivial, since for fixed $k$ and $\Sigma$, the number of distinct atomic $\Sigma$-formulas on variables among $x_1,\ldots,x_k$ is an elementary function of $k$ and the number of symbols in $\Sigma$.
  In the inductive step,
  every $m$-batched formula $\phi$ is a boolean combination of  formulas of the form $(\exists x_1\ldots \exists x_\ell)\phi'$ 
  for $\ell\le m$ and an $m$-batched formula $\phi'$ of batched quantifier rank $q-1$ with at most $k+m$ free variables.
By inductive assumption, we may convert each $\phi'$ 
to a batched $\Sigma$-formula $\psi'$ of size at most $f_{q-1}(m,k+m,|\Sigma|)$ and batched quantifier rank $q-1$.
Thus, $\phi$ is equivalent to a boolean combination $\phi_1$ of $m$-batched formulas of batched quantifier rank $k$ and of size at most $f_{q-1}(m,k+m,|\Sigma|)+m$ each. Let $f(s)$ denote the number of distinct formulas of size at most $s$; then $f$ is an elementary function.  
Thus, $\phi_1$ is a boolean combination of at most $f(f_{q-1}(m,k+m,|\Sigma|)+m)$ different formulas. 
Let $g(k)$ denote the 
smallest number $t$ such that every boolean combination of $k$ atoms is equivalent to a boolean combination of size at most $t$. Then $g$ is a singly-exponential function.
Hence, $\phi_1$, and therefore $\phi$, is equivalent to a boolean combination of size at most $g(f(f_{q-1}(m,k+m,|\Sigma|)+m))$, of $m$-batched formulas of size at most $f_{q-1}(m,k+m,|\Sigma|)+m$.

Define $f_q(m,k,|\Sigma|)=g(f(f_{q-1}(m,k+m,|\Sigma|)+m))\cdot (f_{q-1}(m,k+m,|\Sigma|)+m)$.
Then $f_q$ is an elementary function, and satisfies the condition of the lemma.
\end{proof}

\subsection{Further preliminaries}
\label{sec:prelim'2}

In the proof of Theorem~\ref{thm:alternations} we will repeatedly use the following fact about interpretations with parameters.
Let $G$ be a graph, $I$ an interpretation with parameters and $\phi$ a sentence.
If $\phi$ is a sentence and we want to determine whether $I(G,v_1,\ldots,v_k) \models \phi$, it is enough to evaluate a suitable formula $\hat{\phi}$ on $G,v_1,\ldots, v_k$. This is formalized as follows.

\begin{lemma}
\label{lem:interp_lemma}
Fix an interpretation $I$ with parameters $y_1,\ldots, y_k$.
 Let $\phi$ be a sentence. Then one can rewrite $\phi$ into a formula $\hat{\phi}(y_1,\ldots,y_k)$ such that for every $G$ and $u_1,\ldots,u_k \in V(G)$ we have that 
 $$  I(G,u_1,\ldots,u_k) \models \phi \Longleftrightarrow G \models \hat{\phi}(u_1,\ldots,u_k).$$   

Moreover, if $\phi$ is $a$-batched of batched quantifier rank $\alpha$ and each formula in the interpretation is $b$-batched of batched quantifier rank $\beta$, then we can make $\hat{\phi}$ to be $c$-batched with batched quantifier rank at most $\alpha + \beta$, where $c\coloneqq\max\{a,b\}$.
\end{lemma}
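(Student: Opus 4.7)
The proof I would give is the standard pullback (or relativization) construction for interpretations, adapted to keep track of batched quantifier rank.

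First I would define $\hat{\phi}$ by induction on the structure of $\phi$, working over the signature $\Gamma$ produced by $I$ (which consists of the binary symbol $E$ and the unary symbols corresponding to $\theta_1, \ldots, \theta_t$). At the atomic level, replace $E(x_1,x_2)$ by $\psi_E(x_1,x_2,\bar y)$ and each $C_i(x)$ by $\theta_i(x,\bar y)$; equalities $x_1=x_2$ remain unchanged. Boolean connectives pass through unchanged. Finally, relativize every quantifier to the domain formula $\vartheta$: replace $\exists x.\,\chi$ by $\exists x.\,(\vartheta(x,\bar y)\land \widehat{\chi})$ and $\forall x.\,\chi$ by $\forall x.\,(\vartheta(x,\bar y)\to \widehat{\chi})$. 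For a block $\exists x_1 \ldots \exists x_\ell$ appearing in a batched formula, relativize the whole block simultaneously as $\exists x_1\ldots\exists x_\ell.\,(\bigwedge_{i}\vartheta(x_i,\bar y)\land \widehat{\chi})$, and dually for universal blocks, so that block sizes are preserved rather than split.

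Correctness is a routine induction on $\phi$, using the definitions of $V(I(G,\bar u))$ and $E(I(G,\bar u))$ and of the unary predicates $C_i$. The only subtle point is to maintain, along the induction, free-variable assignments ranging over $V(I(G,\bar u))$ on one side and $V(G)$ constrained by $\vartheta(\cdot,\bar u)$ on the other; the relativization of quantifiers exactly enforces this correspondence.

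For the quantitative moreover part, I would prove by induction on $\phi$ that $\widehat{\phi}$ has batched quantifier rank at most $\alpha + \beta$ and uses quantifier blocks of size at most $c=\max(a,b)$. Base case: atomic formulas are replaced by $\psi_E$ or $\theta_i$, each $b$-batched of batched quantifier rank at most $\beta$. Boolean step: immediate from the maximum rule. Quantifier-block step: if $\chi$ has batched quantifier rank at most $\alpha-1$ and $\widehat{\chi}$ has batched quantifier rank at most $(\alpha-1)+\beta$ by induction, then the relativized block $\exists x_1\ldots\exists x_\ell.(\bigwedge_i \vartheta(x_i,\bar y)\land\widehat{\chi})$ has an inner part which is a boolean combination of formulas of batched quantifier rank at most $\max(\beta,(\alpha-1)+\beta)=(\alpha-1)+\beta$, so adding one outer block yields $\alpha+\beta$. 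Block sizes stay bounded by $c$ because original $\phi$-blocks have size at most $a\le c$ and the $\vartheta$'s bring in blocks of size at most $b\le c$.

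I do not anticipate any real obstacle: the argument is a bookkeeping exercise. The only point to be careful about is relativizing an entire batched block at once rather than one quantifier at a time, which is what guarantees that block sizes stay bounded by $c$ instead of blowing up, and also keeps the additive rather than multiplicative bound $\alpha+\beta$ on the batched quantifier rank.
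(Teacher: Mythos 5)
Your proposal is correct and follows essentially the same approach as the paper's proof sketch: substitute the interpretation formulas for atomic predicates, relativize entire quantifier blocks at once to $\vartheta$, and then track batch sizes and batched quantifier rank by induction on the syntactic structure. The only cosmetic difference is that the paper first pads all formulas to be exactly $c$-batched via dummy quantifiers, whereas you simply observe directly that all resulting blocks have size at most $c$; both give the same bound.
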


We only sketch the proof of the lemma, as the first part of it is standard, and the second part follows by a simple modification of the proof of the first part.

\begin{proof}[Proof sketch]
To obtain $\hat{\phi}$, we first replace each occurrence of $E(z_1,z_2)$  
by $\psi_E(z_1,z_2,y_1,\ldots, y_k)$ (here $z_1,z_2$ are any variables), and each occurrence of unary predicate symbol $C_1(z)$ by $\theta_i(z,y_1,\ldots,y_k)$. Moreover, the quantifications occurring in $\phi$ are relativized in $\hat{\phi}$ to vertices which satisfy $\vartheta$ --  every subformula $\exists z \rho(z)$ of $\phi$ is replaced by $\exists z (\vartheta(z,y_1,\ldots,y_k) \land \rho(z))$  and every  $\forall z \rho(z)$ of $\phi$ is replaced by $\forall z (\vartheta(z,y_1,\ldots,y_k) \rightarrow \rho(z))$. One can then prove the result by induction on the structure of $\phi$.

If we work with batched formulas, we first turn all formulas in $ \{\phi\} \cup I$ into $c$-batched formulas (by adding dummy quantifiers where necessary). Then we proceed as above, but we handle the relativization of quantifiers in batches, meaning that $\exists z_1 \ldots \exists z_c ~\rho(z_1,\ldots,z_c)$ is replaced with
$$ \exists z_1 \ldots \exists z_c \bigl(\bigwedge_{1 \le i \le c} \vartheta(z_i,y_1,\ldots, y_k) \land \rho(z_1,\ldots,z_c) \bigr) $$
and analogously for the batched universal quantifier. 
Again, induction on the structure of $\phi$ yields the result.
The bound on the batched quantifier rank of $\hat{\phi}$ follows form considering the syntactic tree of $\phi$ and how it is changes when constructing  $\hat{\phi}$.
\end{proof}

\paragraph{Formulas for batched Splitter game}
Finally, we will need batched formulas of small batched quantifier rank to express that Splitter wins the batched Splitter game.

\begin{lemma}
\label{lem:swins_formula}
Let $d,r,m\in\mathbb{N}$  with $d \ge 1$. Set $b:=\max\{m+1,r\}$. There exists a $b$-batched sentence $\wins_d^{r,m}$ of batched quantifier rank $3d-2$ such that for every $G$ we have $G \models \wins_d^{r,m}$ if and only if the $(r,m)$-game rank of $G$ is at most $d$.

Moreover, for every $d>1$ there exists a  $b$-batched formula $\winsp_{d-1}^{r,m}(x,y_1,\ldots,y_m)$ of $b$-batched quantifier rank $3d-4$ such that for any graph $G$ and vertices $v,u_1,\ldots,u_m \in V(G)$ we have that 
$ G \models \winsp_{d-1}^{r,m}(x,y_1,\ldots,y_m)$ if and only if 
the $(r,m)$-game rank of $B^G_r(v) - \{u_1,\ldots,u_m\}$ is at most $d-1$.
\end{lemma}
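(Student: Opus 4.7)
The plan is to proceed by induction on $d$. The recursive structure of the $m$-batched splitter game gives the following: $G$ has $(r,m)$-game rank at most $d \ge 1$ if and only if, for every vertex $v$ (a Localiser move), there exist at most $m$ vertices $u_1, \ldots, u_m$ (a Splitter reply) such that $B^G_r(v) - \{u_1, \ldots, u_m\}$ has $(r,m)$-game rank at most $d-1$. Mirroring this, for $d > 1$ I would set
\[
  \wins_d^{r,m} \;\equiv\; \forall x\, \exists y_1, \ldots, y_m\; \winsp_{d-1}^{r,m}(x, y_1, \ldots, y_m),
\]
where $\winsp_{d-1}^{r,m}(x, y_1, \ldots, y_m)$ is the formula asserting that $B^G_r(x) - \{y_1, \ldots, y_m\}$ has $(r,m)$-game rank at most $d-1$.

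For the base case $d = 1$, I would construct $\wins_1^{r,m}$ directly as a single universal batch over the $m+1$ variables $z_0, z_1, \ldots, z_m$, asserting that they cannot simultaneously be pairwise distinct while $z_1, \ldots, z_m$ all lie in $B^G_r(z_0)$. Since each atom $\dist(z_0, z_i) > r$ is itself a universal statement over $r-1 \le b$ path witnesses, these inner universals can be pulled out and merged with the outer batch via the standard equivalence $\bigvee_i \forall \bar w_i\, R_i(\bar w_i) \equiv \forall \bar w_1, \ldots, \bar w_m\, \bigvee_i R_i(\bar w_i)$, yielding a $b$-batched sentence of batched quantifier rank~$1$. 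For the inductive step $d > 1$, assume $\wins_{d-1}^{r,m}$ is already built, with batched quantifier rank $3(d-1)-2 = 3d-5$. I then construct $\winsp_{d-1}^{r,m}(x, y_1, \ldots, y_m)$ by applying Lemma~\ref{lem:interp_lemma} to $\wins_{d-1}^{r,m}$ and the parameterised interpretation $I$ realising $B^G_r(x) - \{y_1, \ldots, y_m\}$, given by $\vartheta(z, x, y_1, \ldots, y_m) \equiv \dist(x, z) \le r \wedge \bigwedge_i z \ne y_i$ (a $b$-batched formula of batched quantifier rank~$1$, from the at most $r-1 \le b$ existential path witnesses in the distance subformula) and $\psi_E(z_1, z_2, \ldots) \equiv E(z_1, z_2)$ (quantifier-free). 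Lemma~\ref{lem:interp_lemma} then delivers $\winsp_{d-1}^{r,m}$ as a $b$-batched formula of batched quantifier rank $(3d-5) + 1 = 3d-4$, after which prepending the batches $\forall x$ and $\exists y_1, \ldots, y_m$ gives $\wins_d^{r,m}$ of batched quantifier rank $(3d-4) + 2 = 3d-2$, as required. Semantic correctness at every step is immediate from the game's recursive definition together with the translation property of Lemma~\ref{lem:interp_lemma}.

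The main obstacle lies in the careful bookkeeping of batched quantifier ranks and batch sizes across the interpretation step: one must verify that inlining the distance atoms $\dist(\cdot, \cdot) \le r$ and their negations contributes exactly one additional quantifier batch of size at most $b = \max\{m+1, r\}$ rather than several, which is precisely what pins down the choice of~$b$. The base case is the most delicate point, since its tight batched quantifier rank of~$1$ forces all universal quantifiers---over both the $z_i$'s and the path witnesses arising from the negated distance atoms---to be collapsed into a single batch via the pull-out equivalence above; by contrast, the inductive step is a clean application of Lemma~\ref{lem:interp_lemma} once the interpretation $I$ has been specified.
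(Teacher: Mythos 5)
Your inductive step (for $d > 1$) is correct and matches the paper: take $\wins_{d-1}^{r,m}$ from the induction, apply Lemma~\ref{lem:interp_lemma} to the interpretation realising $B^G_r(x) - \{y_1,\ldots,y_m\}$ (which has batch size $\le r \le b$ and batched rank $1$), obtaining $\winsp_{d-1}^{r,m}$ of batched rank $(3d-5)+1 = 3d-4$, then prepend the batches $\forall x$ and $\exists y_1,\ldots,y_m$ to get $\wins_d^{r,m}$ of rank $3d-2$. The bookkeeping there is fine.

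The base case $d=1$, however, has a genuine gap. You express ``$|B_r^G(z_0)| > m$'' by quantifying over the $m+1$ vertices $z_0,\ldots,z_m$ and then expanding each atom $\dist(z_0,z_i) > r$ into a universal block of $r-1$ path witnesses, and you pull all of these into one merged universal batch. But the merged batch then contains $m+1$ variables for the $z_i$'s plus $m(r-1)$ path witnesses, i.e.\ $1+mr$ variables in total, which strictly exceeds $b = \max\{m+1,r\}$ whenever $r \ge 2$ (e.g.\ $m=1,r=2$ gives a batch of $3$ against $b=2$). So the resulting sentence is not $b$-batched, and the claim ``yielding a $b$-batched sentence of batched quantifier rank $1$'' is false. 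The equivalence $\bigvee_i \forall \bar w_i\, R_i(\bar w_i) \equiv \forall \bar w_1,\ldots,\bar w_m\, \bigvee_i R_i(\bar w_i)$ is valid, but it collapses the number of \emph{batches}, not the total batch \emph{size}.

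The paper avoids this because its $(m+1)$-batched formula for the base case uses no auxiliary path witnesses at all. The idea is that $|B_r^G(v)| > m$ can be certified by $m+1$ vertices $z_0 = v, z_1,\ldots,z_m$ alone: take the first $m+1$ vertices in BFS order within the ball; then each $z_i$ ($i\ge 1$) is adjacent to some earlier $z_j$ ($j<i$), and the resulting tree rooted at $z_0$ has depth $\le r$. The distance-$\le r$ condition is thus checked by a (large but quantifier-free) disjunction over all possible BFS-parent structures on $\{0,\ldots,m\}$ with depth $\le r$, using only edge atoms among the $z_i$'s. That yields a genuine $(m+1)$-batched existential sentence of batched quantifier rank $1$, whose negation gives $\wins_1^{r,m}$ within batch size $m+1 \le b$. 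Without this self-certifying witness trick (or something equivalent), the stated bound $b = \max\{m+1,r\}$ cannot be met, so your base case needs to be reworked.
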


\begin{proof}
For $d=1$, Splitter loses if and only if there exists a vertex such that there are at least $m+1$ vertices in its $r$-neighborhood. This is easily expressed by an $(m+1)$-batched existential sentence, which is of batched quantifier rank $1$. By taking its negation, we get the result. 

Let $d> 1$.  
Let $\wins_{d-1}^{r,m}$ be the $b$-batched sentence of batched quantifier rank $3(d-1)-2$ obtained from the induction hypothesis. 
The operation which maps $G$ and $v,u_1,\ldots, u_m$ to  $G':=B^G_r(v) \setminus \{u_1,\ldots, u_m\}$ can be modeled by an $r$-batched interpretation with parameters (that is independent of $G$) of $r$-batched quantifier rank $1$, and therefore, by Lemma~\ref{lem:interp_lemma},  there exists a formula $\winsp_{d-1}^{r,m}(y_0,y_1,\ldots, y_m)$ of $b$-batched quantifier rank $3(d-1)-1$ such that 
 $$B^G_r(v) \setminus \{u_1,\ldots, u_m\} \models \wins_{d-1}^{r,m} \Longleftrightarrow  G \models \winsp_{d-1}^{r,m}(v, u_1,\ldots, u_m).$$ 
Then we can set $$\wins_{d}^{r,m}:=\forall x \exists x_1 \ldots \exists x_m \winsp_{d-1}(x,y_1,\ldots,y_m)$$
which expresses that for every move of Localiser there exist $m$ vertices such that Splitter wins the remainder of the game in $d-1$ rounds. Since we added $2$ quantifier batches, the result follows.
\end{proof}

\subsection{Proof of Theorem~\ref{thm:alternations}}

Theorem~\ref{thm:alternations} will follow from the following lemma.

\begin{lemma}
\label{lem:alternation}
For every $d\in \N$ with $d\ge 1$ and 
sentence $\phi$ of quantifier rank $q$  
there exists $r=r(d,q)$ such that for every $m\in \N$ there exists $p=p(d,q,m)$ and a $p$-batched sentence $\phi'$ of batched quantifier rank at most $3d$ such that for any colored graph $G$ of  $(r,m)$-game rank $d$ we have that 
\[ G \models \phi \quad\Longleftrightarrow\quad  G \models \phi'. \]
Moreover, for every fixed $d$, $r(d,q)$  is elementary in $q$ and  $p(d,q,m)$ is elementary in $q$ and $m$.
\end{lemma}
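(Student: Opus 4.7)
The plan is to proceed by induction on $d$, using Gaifman's locality theorem (Theorem~\ref{thm:Gaifman}) to reduce $\phi$ to a boolean combination of basic local sentences, and then using a splitter move encoded by an interpretation with parameters to lower the game rank by one. For the base case $d=1$, a colored graph of $(r,m)$-game rank~$1$ has all $r$-balls of size at most $m$, so setting $r:=7^q$ and applying Gaifman reduces $\phi$ to basic local sentences $\tau=\exists x_1\ldots\exists x_k\bigl(\bigwedge_{i\ne j}\dist(x_i,x_j)>2r\wedge\bigwedge_i\alpha(x_i)\bigr)$ whose inner local formulas can be evaluated on bounded-size balls by existentially enumerating the ball vertices, universally verifying that the enumeration covers the ball, and then a quantifier-free evaluation on the enumerated structure; merging with the outer existential block yields a batched formula of batched rank at most~$3$ and batch size elementary in $q$ and $m$.

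For the inductive step $d\ge 2$, we set $r_0:=7^q$, apply Gaifman to $\phi$, and examine a single basic local sentence $\tau$ whose inner $r_0$-local $\alpha$ has quantifier rank $q'$ elementary in~$q$. We apply the inductive hypothesis to $\alpha$ to obtain $r_{d-1}:=r(d-1,q')$, and set $r:=\max(r_0,r_{d-1})$. We introduce the interpretation $I$ with parameters $(v,u_1,\ldots,u_m)$ mapping $G$ to the colored graph $B_r^G(v)\ast\{u_1,\ldots,u_m\}$ equipped with unary predicates marking $v$, each $u_j$, and the neighborhood $N_G(u_j)$; the universe formula uses a single existential batch to express $\dist(x,v)\le r$, and the remaining defining formulas are quantifier-free, so the defining formulas of $I$ have batched rank at most~$1$. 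Since the neighbor colors allow quantifier-free recovery of every edge deleted by isolation, the $r_0$-local property $\alpha(v)$ in $G$ is equivalent, for every $\tup u$, to a formula $\alpha_I(v,\tup u)$ on $I(G,v,\tup u)$ of quantifier rank elementary in~$q$.

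By Lemma~\ref{lem:game_rank}, whenever $\tup u$ is a valid splitter response to Localiser's move $v$, $I(G,v,\tup u)$ has $(r_{d-1},m)$-game rank at most $d-1$. Applying the inductive hypothesis to $\alpha_I$ then returns a batched sentence $\alpha_I'$ of batched rank $\le 3(d-1)$; pulling $\alpha_I'$ back through $I$ using Lemma~\ref{lem:interp_lemma} gives $\hat{\alpha_I'}(v,\tup u)$ on $G$ of batched rank $\le 3(d-1)+1=3d-2$. Using $\winsp_{d-1}^{r,m}$ from Lemma~\ref{lem:swins_formula} (batched rank $3d-4$) to certify validity of $\tup u$, we obtain the equivalence
\[
\alpha(v)\;\equiv\;\exists u_1\ldots\exists u_m\bigl(\winsp_{d-1}^{r,m}(v,u_1,\ldots,u_m)\wedge\hat{\alpha_I'}(v,u_1,\ldots,u_m)\bigr).
\]
Substituting this into each $\alpha(x_i)$ in $\tau$ and merging all existential blocks into one $\exists$-batch over $x_1,\ldots,x_k,\tup u^1,\ldots,\tup u^k$ gives $\tau$ batched rank at most $3d-1\le 3d$; boolean combining over all $\tau$'s produces $\phi'$.

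The main technical obstacle is the design of $I$: because $\alpha(v)$ depends on edges incident to $u_j$ that isolation destroys, the neighbor-colors $N_G(u_j)$ and the quantifier-free reconstruction of these edges in $\alpha_I$ are essential, and we must verify that this bookkeeping keeps the batched rank of $I$ at~$1$. Elementarity of $r(d,q)$ and $p(d,q,m)$ for fixed $d$ then follows by composing elementary ingredients: Gaifman's radius $7^q$ and its elementary blow-up on quantifier ranks, the elementary size bounds for batched formulas of bounded batched rank (Lemma~\ref{lem:batched-to-alt}), and the inductively supplied $r_{d-1}$ and batch size, which are elementary by hypothesis.
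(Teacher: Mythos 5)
Your proposal follows essentially the same route as the paper: Gaifman locality to reduce to basic local sentences, an interpretation with parameters sending $G$ to an isolated ball $B^G_\cdot(v)\ast\{u_1,\ldots,u_m\}$ equipped with marker and neighbor colors, the inductive hypothesis applied to the resulting sentence, Lemma~\ref{lem:interp_lemma} to pull back, and $\winsp$ from Lemma~\ref{lem:swins_formula} to certify Splitter's move. The only cosmetic deviations are that you use the larger radius $r$ rather than the Gaifman radius $r'=7^q$ for the interpretation ball (which lets you use $\winsp_{d-1}^{r,m}$ unrelativized, implicitly invoking monotonicity of game rank in the radius), and you merge the existential batches of $\tau$ and $\beta$ to shave the batched quantifier rank to $3d-1$ rather than the paper's $3d$; both choices are sound and do not change the substance.
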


\begin{proof}
Before proceeding with the proof, we will discuss the general strategy for producing $\phi'$ from $\phi$. Let $\phi$ be an arbitrary sentence. After setting $r$ suitably and fixing arbitrary $m$, we proceed as follows. Using Gaifman's theorem, we turn $\phi$ into a boolean combination of basic local sentences of locality radius $r':=7^q$, where $q$ is the quantifier rank of $\phi$. Note that we use two values related to locality in the proof -- the locality radius $r'$ and the value $r$ for the $(r,m)$-game rank. One should think of $r$ as being larger than $r'$.  Each basic local sentence $\tau$ is of the form
\begin{equation}
\label{eq:bls}
 \tau:= \exists x_1 \ldots \exists x_k \bigwedge_{1\le i<j \le k} (\dist(x_i,x_j)>2r') \land \bigwedge_{i=1}^k \alpha(x_i)
\end{equation}
where $\alpha(x)$ is $r'$-local.

We will argue that in each $\tau$ we can replace  $\alpha(x)$ by formula $\beta(x)$ of $p'$-batched quantifier rank $3d-1$ to obtain sentence $\tau'$ such that for any graph $G$ of $(r,m)$-game rank at most $d$ we have $G \models \tau \Leftrightarrow G \models \tau'$. Then, we set $p:=\max\{p', k, 2r'\}$ and add dummy quantifiers where necessary to turn $\tau'$ into a  $p$-batched sentence. 
From the structure of (\ref{eq:bls}) it follows that $\tau'$ has $p$-batched quantifier rank $3d$, and so the sentence $\phi'$ obtained from (the Gaifman form) of $\phi$ by replacing each $\tau$ with $\tau'$ also has $p$-batched quantifier rank $3d$, and we have $G \models \phi' \Longleftrightarrow G \models \phi$ on any $G$ of  $(r,m)$-game rank at most $d$, as desired. Moreover, since  $p=\max\{p', k, 2r'\}$, if we can show that $p'$ depends elementarily on $q$ and $m$, then so does $p$, because $k$ and $r'$ depend elementarily on $q$. 
This finishes the discussion of the general strategy.

We set $r(1,q) := 7^q$ and $r(d,q):= r(d-1,f(q))$, where $f$ is the (elementary and nondecreasing) function from Theorem~\ref{thm:Gaifman}. It is easily seen that this function is elementary in $q$ for any fixed $d$.

We now proceed with the proof by induction on $d$. Let $d=1$. Let $\phi$ be a sentence of quantifier rank $q$. Set $r:=r(1,q) = 7^q$. Let $r'=7^q$ be the number obtained from Gaifman's theorem applied to $\phi$. Let $m$ be arbitrary. As was argued in the proof sketch above, we only need to find formula $\beta(x)$ equivalent to $\alpha(x)$ on all graphs of  $(r,m)$-game rank one. 
In such graphs, since $r=r'$, the  $r'$-neighborhood of any vertex has at most $m$ vertices. 
Since $\alpha(x)$ is $r'$-local, for any $G$ and $v \in V(G)$ we have that 
$G \models \alpha(v) \Leftrightarrow B^G_{r'}(v) \models \alpha(v)$.  Whether $B^G_{r'}(v) \models \alpha(v)$ holds depends only on the isomorphism type
of $B^G_{r'}(v)$, and since $|B_{r'}(v)| \le m$ for the graphs we consider, the formula $\beta(x)$ can be written as a boolean combination of formulas $\delta_1(x),\ldots,\delta_s(x)$, one for each possible isomorphism type. Each such formula guesses the vertices in the $r'$-neighborhood of $x$ and checks their adjacencies. One easily verifies that this guessing and checking can be done by an $m$-batched formula of rank $2$. Clearly $p':=m$ depends elementarily on the parameters.
Consequently, as discussed at the beginning of the proof, we can set $p(1, q, m):= \max\{p',k,2r'\} = \max\{m,f(q),2\cdot 7^q\}$, which is clearly elementary in $q$ and $m$.
 
Assume $d>1$. Let $\phi$ be a sentence of quantifier rank $q$. Set $r:=r(d,q)$.
Let $m\in\N$ be given.
Apply Gaifman's theorem to $\phi$ in order to obtain an equivalent sentence $\bar{\phi}$ which is a boolean combination of basic local sentences and let $r':=7^q$ be the local radius of $\bar{\phi}$. One easily checks that $r' \le r$.
As before, we only need to consider how to convert a subformula $\alpha(x)$ in any basic local sentence $\tau$ into a formula $\beta(x)$ such that for any $G$ of $(r,m)$-game rank $d$ and $v \in V(G)$ we have that 
\begin{equation}
 G \models \alpha(v) \Longleftrightarrow G \models \beta(v)
\end{equation}
 We note that the quantifier rank of $\alpha$ is at most $f(q)$, since $\alpha$ is a subformula of $\bar{\phi}$. Also, since $\alpha$ is $r'$-local, we have for any $G$ and $v \in V(G)$ that 
\begin{equation}
\label{eq:a}
	G \models \alpha(v) \Longleftrightarrow B^G_{r'}(v) \models \alpha(v)
\end{equation}
\paragraph{Construction of $\beta$}
We will describe the construction of  $\beta(x)$ from $\alpha(x)$ in several steps. During the construction we will mostly ignore the batch sizes and batched ranks of formulas involved; the detailed analysis of these will be provided later.

\textsf{Step I:} For any $G$ and $v,u_1,\ldots,u_m$, let $H(G,v,u_1,\ldots,u_m)$ be the graph $B^G_{r'}(v) \ast \{u_1,\ldots, u_m\}$ in which vertices are marked by unary predicates as follows: Each $u_i$ (if it is in $B^G_{r'}(v)$) is marked by $L_i$, and all neighbors of $u_i$ which are in $B^G_{r'}(v)$ are marked with $N_i$. Moreover, $v$ is marked by $L_x$ and all its neighbors by $N_x$. Then there exists a sentence $\alpha'$ such that for any $G$ and $v,u_1,\ldots,u_m$
we have that 
\begin{equation}
\label{eq:b}
 B^G_{r'}(v) \models \alpha(v) \Longleftrightarrow H(G,v,u_1,\ldots,u_m) \models \alpha'.
\end{equation}
Such $\alpha'$ is easily obtained from  $\alpha(x)$ by syntactic manipulation -- we replace each occurrence of $E(z_1,z_2)$ in $\alpha$ (here $z_1,z_2$ are any variables) by $E(z_1,z_2) \lor \bigvee_{1 \le i \le m} (L_i(z_1) \land N_i(z_2))$, each occurrence of $x=z$ by $L_x(z)$ and each occurrence of $E(x,z)$ by $N_x(z)$.
Note that $\alpha'$ 
and $\alpha$ have the same quantifier rank, which is upper-bounded by $f(q)$.

\textsf{Step II:}
Since $r =r(d,q)=r(d-1,f(q))$ and $\alpha'$ has quantifier rank at most $f(q)$, by the induction hypothesis there exists a batched sentence $\alpha''$ of batched rank $3(d-1)$ such that for any $H$ of  $(r,m)$-game rank $d-1$ we have that 
$ H \models \alpha' \Longleftrightarrow H \models \alpha''$.
In particular, if $G$ and $v,u_1,\ldots,u_m$ are such that $H(G,v,u_1,\ldots,u_m)$ is of  $(r,m)$-game rank $d-1$, then
\begin{equation}
\label{eq:c}
 H(G,v,u_1,\ldots,u_m) \models \alpha' \Longleftrightarrow H(G,v,u_1,\ldots,u_m) \models \alpha''.
\end{equation}

\textsf{Step III:}
Since the operation which from $G$ and $v,u_1,\ldots,u_m$ produces  $H(G,v,u_1,\ldots,u_m)$ can be modeled by an interpretation with parameters, by Lemma~\ref{lem:interp_lemma} the sentence $\alpha''$ can be transformed into a formula $\hat{\alpha} (y_0,y_1,\ldots,y_m)$ such that for any $G$ and any $v, u_1,\ldots, u_m$ we have that 
\begin{equation}
\label{eq:d}
  H(G,v,u_1,\ldots,u_m) \models \alpha''  \Longleftrightarrow  G \models \hat{\alpha}(v,u_1,\ldots, u_m).
\end{equation}

\textsf{Step IV:}
Let $\winsp(x,y_1,\ldots,y_m)$ be a formula 
expressing that 
the $(r,m)$-game rank of the graph $B_{r'}(x) - \{y_1,\ldots, y_m\}$ is at most $d-1$. 
We postpone the definition of this formula and the discussion of its batch size and quantifier rank for later, but it will be roughly derived from Lemma~\ref{lem:swins_formula}.
Finally, we~set
\begin{equation}
\label{eq:beta}
  \beta(x):= \exists y_1 \ldots \exists y_m (\winsp(x,y_1,\ldots,y_m)  \land \hat{\alpha}(x,y_1,\ldots,y_m)).
\end{equation}
This finishes the construction of $\beta(x)$.

We claim that for any $G$ of  $(r,m)$-game rank $d$ and $v \in V(G)$ we have that 
$$
 G \models \alpha(v) \Longleftrightarrow G \models \exists y_1 \ldots \exists y_m (\winsp(v,y_1,\ldots,y_m)  \land \hat{\alpha}(v,y_1,\ldots,y_m)).
$$
We first prove the forward direction. Assume that $G \models \alpha(v)$. Since $G$ is of $(r,m)$-game rank $d$, there exist $u_1,\ldots, u_m \in B^G_{r'}(v)$ such that  Splitter wins on $B^G_{r'}(v) - \{u_1,\ldots, u_m\}$ in $d - 1$ rounds, and so we have that $G \models \winsp(v,u_1,\ldots,u_m)$. We need to show that for this choice of $u_1,\ldots,u_m$ we also have $G \models \hat{\alpha}(v,u_1,\ldots,u_m)$.
By Lemma~\ref{lem:game_rank} we know that $H(G,v,u_1,\ldots,u_m)$ is of $(r,m)$-game rank $d-1$. Therefore, from  $G \models \alpha(v)$, we derive that $G \models \hat{\alpha}(v,u_1,\ldots,u_m)$
from (\ref{eq:a}), (\ref{eq:b}), (\ref{eq:c}), and (\ref{eq:d}).

Conversely, assume that there exist $u_1,\ldots, u_m$ such that $G \models \winsp(v,u_1,\ldots,u_m)  \land \hat{\alpha}(v,u_1,\ldots,u_m)$. Then we know that $B^G_{r'}(v) - \{u_1,\ldots, u_m\}$ is of $(r,m)$-game rank $d-1$, and so by Lemma~\ref{lem:game_rank} we have that 
$H(G,v,u_1,\ldots,u_m)$ is also of $(r,m)$-game rank $d-1$. We conclude that $G \models \alpha(v)$ by (\ref{eq:d}), (\ref{eq:c}), (\ref{eq:b}), and (\ref{eq:a}).

\paragraph{The batch size and batched rank of $\beta$} Recall from the discussion at the beginning of the proof that we need to show that $\beta$ is $p'$-batched, of batched quantifier rank at most $3d-1$, and that the dependence of $p'$ on $m$ and $q$ is elementary, for a fixed $d$.

We first analyze the batch size and rank of formula $\hat{\alpha}$.
For convenience we set $b:= p(d-1,f(q),m)$, where $p(d-1,\cdot, \cdot)$ is the function obtained from the induction hypothesis and which is elementary in $q$ and $m$. From the induction hypothesis, we know that $\alpha''$ is $b$-batched and has batched quantifier rank $3d-3$. The interpretation which from $G,v,u_1,\ldots,u_m$ creates $H(G,v,u_1,\ldots,u_m)$ consists of an existential formula with $r'= 7^q$ quantifiers and quantifier-free formulas, and so the formula $\hat{\alpha}(y_0,\ldots,y_m)$ has $b'$-batched quantifier rank $3d-2$, where $b' := \max\{b,r'\}$. 

We now describe and analyze the formula $\winsp$.
First, let  $\winsp_{d-1}^{r,m}$ be the  formula from Lemma~\ref{lem:swins_formula}. This formula is $b''$-batched, where $b''=\max\{r,m+1\}$, and has batched quantifier rank $3d-4$.  We can take as formula $\winsp$ the formula $\winsp_{d-1}^{r,m}$ relativized to the $r'$-neighborhood of $x$. Since this relativization can be done using $r'$-batched quantification and $r' \le r$, we can keep $b''$ as the batch size and the resulting formula will be of batched quantifier rank at most $3d-3$.  

To conclude, from the structure of (\ref{eq:beta}) we see that formula $\beta$ can be made $p'$-batched, where $p':=\max\{b',b'',r'\}$, and its batched quantifier rank is $1 + \max\{3d-2, 3d-3\} = 3d-1$, as required. Finally, since $b$ depends elementarily on the parameters $q$ and $m$ (by the induction hypothesis), so do $b'$ and $b''$. Since $r':=7^q$ is also elementary, we conclude that $p'$  depends elementarily on the parameters.
\end{proof}

\begin{proof}[Proof of Theorem~\ref{thm:alternations}]
Let $\C$ be a class of tree rank at most $d$. 
We first consider the case when $\phi$ is a sentence.
The general case will be argued later.

Let  $q$ be the quantifier rank of $\phi$ and let $r:=r(d, q)$ be obtained by applying Lemma~\ref{lem:alternation} to $q$. Since $\C$ has tree rank $d$, it also has  $(r,m)$-game rank at most $d$ for some $m\in\N$, and so by Lemma~\ref{lem:alternation} there exists a $p$-batched sentence $\phi'$ of batched quantifier rank $3d$
such that for each $G \in \C$ we have that 
\[ G \models \phi \Longleftrightarrow G \models \phi' \]
as desired.

We now argue that if $\C$ has elementary tree rank $d$, then $p$ depends on $q$ elementarily. 
If $\C$ has elementary tree rank $d$, then we have  $m:=g(r)$ for some elementary function $g$. Since from Lemma~\ref{lem:alternation} we know that $r(d, q)$ is elementary in $q$, we also know that $m = g(r(d, q))$ depends elementarily on $q$. Finally, since for a fixed $d$, $p=p(d,q,m)$ is an elementary function of $q$ and $m$  by Lemma~\ref{lem:alternation}, the claim follows.

By Lemma~\ref{lem:batched-to-alt}, sentence $\phi'$ is equivalent to 
a $p$-batched sentence $\psi$ 
of batched quantifier rank $3d$ and 
of size $f(p)$, for some elementary function $f$.
Thus, the size of $\psi$ depends on $q$ elementarily.
Moreover, $\psi$ can be viewed as a sentence of alternation rank $3d-1$. This proves Theorem~\ref{thm:alternations} in the case when $\phi$ is a sentence.

\medskip
Now suppose $\phi(x_1,\ldots,x_k)$ is a formula with free variables. We reduce this case to the case of sentences, in a standard way, by interpreting 
the variables $x_1,\ldots,x_k$ as constants, which in turn are represented by unary predicates.

Let $\Sigma$ be the signature of $\CC$,
and let $\Sigma'$ extend $\Sigma$ by unary predicates $U_1,\ldots,U_k$.
Let $\CC'$ be the class of all graphs $G'$ (over the signature $\Sigma'$) that can be  obtained from a graph   $G\in\CC$ (over the signature $\Sigma$)
by interpreting the unary predicates $U_1,\ldots,U_k$ as subsets of $V(G)$ of size $1$.
Then $\CC'$ has the same tree rank as $\CC$.

Let $\phi'$ be the following sentence over signature $\Sigma'$: $$\phi':=\exists x_1\ldots \exists x_k. U_1(x_1)\ldots U_k(x_k)\land \phi.$$
Apply the result to $\phi'$, obtaining a sentence $\psi'$ over the signature $\Sigma'$, of alternation rank $3d-1$ and size elementary in the size of $\phi'$.

Define the formula $\psi(x_1,\ldots,x_k)$ 
as the formula $\psi'$, in which each atom 
$U_i(x)$ is replaced by the atom $x=x_i$.
Then $\psi$ satisfies the required conditions.
\end{proof}

Note that our  proof of  Theorem~\ref{thm:alternations}
is an existential statement:
for a sentence $\phi$ it proves the existence of a sentence $\phi'$ that is equivalent on graphs of bounded tree rank.
Our proof is constructive, in the sense that $\phi'$ 
can be computed from $\phi$, by following the construction in the proof.
However, the running time of this algorithm is non-elementary in the size of $\phi$, due to our use 
of Gaifman's locality theorem. 
We do not know whether the formula $\phi'$ can be computed  in elementary time, given $\phi$.
If this were true, this would allow us to slightly simplify the argument in Section~\ref{sec:mc}, where we construct an fpt-elementary model checking algorithm
for classes of bounded elementary tree rank.
However, as we show, the mere existence of a sentence $\phi'$ as in the statement of Theorem~\ref{thm:alternations} is sufficient to deduce the existence of an elementarily-fpt model checking algorithm.

\subsection{Lower bound on alternation rank}\label{sec:lower-alt}
We conclude this section by proving the following result, which is the converse of Theorem~\ref{thm:alternations}.

\begin{lemma}\label{lem:converse_alternations}
Let $\CC$ be a monotone graph class 
such that there is a $d\in\N$ such that 
every formula $\phi$ is equivalent to a formula $\psi$ of alternation rank $d-1$ on $\CC$.
Then $\CC$ has tree rank at most $d+1$.
\end{lemma}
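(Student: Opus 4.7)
We argue by contraposition. Assume $\CC$ has tree rank at least $d+2$; we will construct a sentence on $\CC$ that is not equivalent, on $\CC$, to any formula of alternation rank $d-1$. By the definition of tree rank, there exists $r\in\N$ with $\cal T_{d+2}\subseteq\topminors_r(\CC)$. Since $\CC$ is monotone, every $\leq r$-subdivision of every tree of depth $d+2$ lies in $\CC$. Let $\CC'\subseteq\CC$ denote the subclass consisting of $\leq r$-subdivisions of the complete trees $T^{d+2}_k$ for $k\geq 3$; in every such graph, the original tree vertices are precisely those of degree $\neq 2$.

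Next, we interpret the original rooted tree $T^{d+2}_k$ first-order inside its subdivision $T'\in\CC'$. The universe of the interpretation is defined by ``degree $\neq 2$'' (i.e., ``has at most one neighbor or at least three neighbors''), a formula of constant alternation rank; the original adjacency is captured by the existence of a path of length at most $r+1$ in $T'$ whose internal vertices are all of degree $2$, again of constant alternation rank. Consequently, this interpretation induces a back-and-forth translation between sentences on $\cal T_{d+2}$ and sentences on $\CC'$ whose alternation rank blows up by at most an additive constant $c$ independent of $d$. We then invoke the Chandra-Harel proof of the strictness of the alternation hierarchy on rooted trees \cite{chandra1982structure}: for every $k$, on rooted trees of sufficient depth there is a sentence of alternation rank $k$ --- built by iterating ``$\exists$-a-child $\forall$-a-child $\exists$-a-child $\ldots$'' along a root-to-leaf path --- that is not equivalent on this class to any sentence of alternation rank $k-1$, as certified by an Ehrenfeucht-Fraisse argument tailored to the alternation hierarchy.

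Apply this with $k := d + c$, so that on $\cal T_{d+2}$ (which contains deep enough trees since $d+2 \geq k+1$ once $c\leq 1$, and otherwise by a marginal strengthening of the argument via the same tree-rank assumption with a larger $r$) we obtain a sentence $\Phi$ that is not equivalent on $\cal T_{d+2}$ to any sentence of alternation rank $d-1+c$. Pulling $\Phi$ back through the interpretation yields a sentence $\tilde\Phi$ on $\CC'\subseteq\CC$. If $\tilde\Phi$ were equivalent on $\CC$ to some sentence $\Psi$ of alternation rank $d-1$, then pushing $\Psi$ forward through the interpretation would produce a sentence of alternation rank at most $d-1+c$ equivalent to $\Phi$ on $\cal T_{d+2}$, contradicting the lower bound. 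Hence no such $\Psi$ exists, violating the hypothesis. The principal obstacle in filling in this plan is the precise bookkeeping of the interpretation's alternation overhead $c$ --- which dictates the ``$+2$'' (rather than ``$+1$'') in the statement --- and adapting the Chandra-Harel strictness proof so that it applies to the specific uncolored rooted trees $T^{d+2}_k$ which are available inside $\CC$ via monotonicity and the tree-rank assumption.
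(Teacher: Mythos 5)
Your high-level plan is the right one — exclude the existence of a bounded-alternation-rank normal form by reducing to the Chandra--Harel strictness theorem on rooted trees of bounded depth, using the trees and subdivisions guaranteed by the tree-rank assumption and monotonicity. But there are two genuine gaps, and one mistaken fix.

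\textbf{The reverse translation is asserted, not established.} The interpretation you describe (``degree $\neq 2$'' for the universe, ``short path through degree-$2$ vertices'' for adjacency) gives a translation in \emph{one} direction only: for every sentence $\Phi$ on rooted trees, there is a sentence $\tilde\Phi$ on the subdivisions with $T' \models \tilde\Phi \iff T \models \Phi$. What the contradiction actually requires is the \emph{converse} direction: given that $\tilde\Phi$ is equivalent on $\CC$ to some low-alternation-rank $\Psi$, you must convert $\Psi$ (a sentence on subdivisions) into a sentence on rooted trees of controlled alternation rank. You call this ``pushing $\Psi$ forward,'' but an interpretation does not automatically give a translation in that direction, and for an arbitrary $\leq r$-subdivision there is no canonical one — the same tree $T$ has many different $\leq r$-subdivisions, so ``evaluate $\Psi$ on the subdivision'' is not even a well-defined function of $T$. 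The paper resolves this by first applying a Ramsey-type argument (Lemma~\ref{lemma:ramsey-like} / Lemma~\ref{lemma:canonical-subd-tree}): inside a sufficiently branching $\leq r$-subdivision one can extract a \emph{canonical} $(r_1,\ldots,r_d)$-subdivision in which all long edges on a given level have the same length. Monotonicity then puts this canonical subdivision in $\CC$, and Claim~\ref{lem:ass} exhibits an explicit translation from sentences on $T^{(r_1,\ldots,r_d)}$ back to sentences on $T$ that \emph{preserves} alternation rank exactly, by encoding each vertex of the subdivision as a tuple of principal vertices together with an offset $h < r_{i+1}$, so that the extra quantification is an existential block absorbed into the same alternation block. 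Without this canonicalisation step, your plan has no way to go from $\Psi$ back to rooted trees.

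\textbf{The ``larger $r$'' fix is wrong.} Even granting an additive overhead $c$, you note that you may need trees of depth larger than $d+2$ to run Chandra--Harel at alternation rank $d+c$, and propose to obtain them ``via the same tree-rank assumption with a larger $r$.'' This does not work: $r$ controls the lengths of subdivision paths, not the depth of the trees available. The tree-rank assumption fixes the maximum depth of trees you can realize as topological minors; increasing $r$ only gives longer subdivisions of trees of the \emph{same} depth. This is precisely why the paper cannot afford the naive interpretation's alternation-rank overhead and instead constructs the rank-preserving translation of Claim~\ref{lem:ass}. Relatedly, the additive-$c$ bound itself is asserted without justification; relativization and atom-substitution under both polarities do introduce alternations, and the bookkeeping needs to be done explicitly — which is again easier for the canonical subdivision, where the interpretation formulas become purely existential blocks.
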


Together with Theorem~\ref{thm:alternations},
Lemma~\ref{lem:converse_alternations} proves 
Theorem~\ref{thm:collapse} stated in the introduction.

To prove Lemma~\ref{lem:converse_alternations}, we use the following lemma, proved by Chandra and Harel~\cite[Lemma 3.9]{chandra1982structure}.
While their result is not formulated for trees with the claimed combinatorial properties, these 
additional assumptions follow directly from the construction in the proof.
Before presenting this result, let us also clarify that when referring to rooted trees/forests, we view them as structures over a signature $\Sigma$ consisting of a unary function \emph{parent}, which is interpreted as the parent function of the corresponding rooted tree/forest (assuming that for a root $r$, \emph{parent}$(r)=r$).

\begin{lemma}\label{lem:chandra_harel}
Fix $d\in\N$.
There is a sentence $\varphi$ such that for every sentence $\psi$ of alternation rank $d$,
there is a rooted tree  $T$ of depth $d+1$ such that $T\models \phi\Leftrightarrow\neg\psi$. Moreover, as a graph, $T$  has no vertices of degree $2$ and at least two of its leaves are at distance $d$ from the root.
\end{lemma}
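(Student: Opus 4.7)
The proof will follow the classical game-theoretic strategy of Chandra and Harel. The key tool is an Ehrenfeucht--Fra\"iss\'e game tailored to alternation rank: two structures satisfy the same sentences of alternation rank $d$ and quantifier rank $q$ if and only if Duplicator wins the \emph{$d$-alternation EF game} with $q$ pebbles, in which Spoiler may switch sides between the two structures at most $d$ times over the course of the play. I would first verify (or invoke) this characterisation, which is a routine back-and-forth argument generalising the standard EF theorem.

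The plan is then to construct, for each $d$ and each large integer $n$, a pair of rooted trees $T_n^{0}, T_n^{1}$ of depth $d+1$ with a uniform, local structural difference. One concrete option: take identical skeletons in which the root has some large number $N \ge 3$ of children and every non-root internal node has $\ge 3$ children, except that a designated vertex near the root has $n$ children in $T_n^{0}$ and $n+1$ children in $T_n^{1}$. This ensures no vertex of degree $2$ appears and that the trees have many leaves at distance exactly $d$ from the root. The sentence $\varphi$ is fixed to detect precisely this uniform discrepancy, for instance by expressing that a vertex at a designated depth has an even number of children; by construction, exactly one of $T_n^{0}, T_n^{1}$ satisfies $\varphi$, regardless of $n$.

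The heart of the argument is to prove, by induction on $d$, that for any sentence $\psi$ of alternation rank $d$ and quantifier rank $q$, whenever $n$ is sufficiently large compared to $q$, Duplicator wins the $d$-alternation EF game with $q$ pebbles on $(T_n^{0}, T_n^{1})$. Consequently $T_n^{0} \models \psi \Leftrightarrow T_n^{1} \models \psi$. Given such $n$, exactly one of $T_n^{0}, T_n^{1}$ satisfies $\psi$ and exactly one satisfies $\varphi$, so whichever tree has the property that $\varphi$ and $\psi$ disagree on it is the desired $T$ witnessing $T \models \varphi \Leftrightarrow \neg\psi$.

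The main obstacle is this inductive EF-game analysis: Duplicator must counter each of Spoiler's $d$ alternations while respecting both the tree structure and the single differing subtree. The standard strategy exploits the abundance of symmetric branches in $T_n^{0}$ and $T_n^{1}$ to absorb Spoiler's pebbles, matching played vertices via isomorphisms between undisturbed subtrees; with bounded alternations and moves, Spoiler cannot afford to pinpoint and compare the two differing children, since every attempted distinguishing probe can be mirrored by Duplicator inside an unperturbed twin subtree. Making this bookkeeping precise --- in particular arranging that each alternation can at most "reveal" finitely many extra branches, so that the growth rate of $n$ always outruns Spoiler's budget --- is the technical core, and is exactly the content of Chandra and Harel's original construction.
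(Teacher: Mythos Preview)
The paper does not prove this lemma; it simply cites Chandra and Harel~\cite[Lemma 3.9]{chandra1982structure} and remarks that the extra combinatorial conditions (no degree-$2$ vertices, at least two deepest leaves) can be read off from their construction. So there is no in-paper proof to compare against beyond that pointer.

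Your overall framework---build, for each $n$, two depth-$(d+1)$ trees on which a fixed $\varphi$ disagrees while every alternation-rank-$d$ sentence of quantifier rank $\le n$ agrees, and pick whichever tree witnesses $\varphi\Leftrightarrow\neg\psi$---is exactly the Chandra--Harel shape. The gap is in the concrete instantiation. You propose that $T_n^0$ and $T_n^1$ differ only in that a single designated vertex has $n$ versus $n+1$ children, and that $\varphi$ ``express[es] that a vertex at a designated depth has an even number of children''. But parity of the number of children is not first-order definable over trees, so no fixed FO sentence $\varphi$ can distinguish $T_n^0$ from $T_n^1$ uniformly in $n$ with this construction. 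More generally, for two trees that differ only in a child count $n$ vs.\ $n+1$ at one node, any fixed FO sentence will eventually stop distinguishing them once $n$ exceeds its quantifier rank, so the plan collapses regardless of which $\varphi$ you try.

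What makes the actual Chandra--Harel construction work is that the pair $T^0,T^1$ is defined \emph{recursively in $d$}: roughly, at level $d+1$ the root of $T^1$ has as subtrees many copies of both $T^0$ and $T^1$ from level $d$, while the root of $T^0$ has only copies of one of them. The separating sentence $\varphi_{d+1}$ then says ``some (respectively, every) subtree of the root satisfies $\varphi_d$'', which is genuinely one alternation deeper. This recursive shape is precisely what fuels the induction on $d$ you mention; with your single-node-count-difference trees there is nothing for that induction to latch onto. Once you switch to the recursive construction, the no-degree-$2$ and two-deep-leaves conditions come for free by taking branching $\ge 3$ at every internal node.
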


Intuitively,
Lemma~\ref{lem:chandra_harel} states that bounded alternation rank formulas form a strict hierarchy on rooted trees of bounded depth. In order to prove Lemma~\ref{lem:converse_alternations}, we aim to ``scale''   Lemma~\ref{lem:chandra_harel} to subdivisions of bounded depth trees.
As an intermediate step, we 
will use the following lemma, which will also be used in Section~\ref{sec:dense}. Before stating and proving it, we define
\emph{levels} of vertices of a rooted tree $T$: for any $v \in V(T)$, the \emph{level} of $v$ is the number of vertices on the (unique) path from $v$ to the root of $T$. In particular, the level of the root is $0$, the level of the children of the root is $1$ etc.

\begin{lemma}
\label{lemma:ramsey-like}
    There is a function $h:\mathbb{N}\to\mathbb{N}$ such that
    for every $d,k,\ell\in\mathbb{N}$ with $d \ge 1$, if $T$ is a rooted tree of depth $d$ and branching at least $h(d,k,\ell)$ at every non-leaf vertex, then for every $c\colon V(T)\to \{1,\ldots,\ell\}$, there is a subtree $T'$ of $T$ that is isomorphic to $T^d_{k}$ and for every $v,u\in V(T')$ of the same level in $T'$, $c(v)=c(u)$.
\end{lemma}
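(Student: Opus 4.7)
The plan is to prove the statement by induction on $d$, with $h(d,k,\ell) := k\cdot \ell^{d-1}$ (and $h(1,k,\ell):=1$).
The base case $d=1$ is immediate: $T$ consists of the root alone, which is a copy of $T^1_k$, and the monochromatic-per-level condition is vacuous.
For the inductive step, I would take $T$ of depth $d\ge 2$ with branching at least $B:=h(d,k,\ell)$ at every non-leaf, working under the (implicit, and standard in this setting) assumption that $T$ is level-uniform so that every root-to-leaf path has exactly $d$ vertices. The root $r$ has $s\ge B$ children, and each subtree $T_i$ rooted at a child $c_i$ is level-uniform of depth $d-1$ with branching at least $B\ge h(d-1,k,\ell)$ at every non-leaf. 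Applying the inductive hypothesis to each $T_i$ (with the inherited coloring) yields a subtree $S_i\subseteq T_i$ isomorphic to $T^{d-1}_k$ in which all vertices of the same level inside $S_i$ receive the same color.

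The crux is a pigeonhole argument on the \emph{color profiles} of the $S_i$. For each $i$, define $\pi(S_i):=(c^i_0,c^i_1,\ldots,c^i_{d-2})\in\{1,\ldots,\ell\}^{d-1}$ where $c^i_j$ is the common color of the level-$j$ vertices of $S_i$. There are at most $\ell^{d-1}$ distinct profiles, and there are $s\ge k\cdot \ell^{d-1}$ candidates, so at least $k$ of them, say $S_{i_1},\ldots,S_{i_k}$, share a common profile $\pi$. Taking $T':=\{r\}\cup S_{i_1}\cup\cdots\cup S_{i_k}$, with $r$ made adjacent to the roots of each $S_{i_j}$, gives a subtree of $T$ isomorphic to $T^d_k$: level $0$ of $T'$ contains only $r$, while for each $j\ge 1$ the level-$j$ vertices of $T'$ are exactly the level-$(j-1)$ vertices of the $S_{i_j}$'s, and these all receive color $\pi_{j-1}$. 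This completes the induction, since we need $B \ge \max\{h(d-1,k,\ell),\, k\cdot \ell^{d-1}\}$, and the chosen $h$ clearly satisfies this.

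The main technical subtlety I foresee is handling the assumption of level-uniformity of $T$. Without it, a tree whose branching is large at every non-leaf could still have only a single deep branch at some level, which would obstruct the pigeonhole step (since one would not have $s$ many subtrees $T_i$ all of depth $d-1$). If the surrounding proof context does not already supply a level-uniform tree, one would preprocess $T$ by restricting to the set of vertices whose induced subtree reaches depth $d-i$ at level $i$; a careful bookkeeping then shows that the branching at every surviving non-leaf remains large enough to invoke the induction, possibly at the cost of a slightly larger (but still elementary) $h$. Once this reduction is in place, the induction plus pigeonhole argument goes through verbatim.
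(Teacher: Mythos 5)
Your proof is correct and follows the paper's argument essentially verbatim: same choice $h(d,k,\ell)=k\ell^{d-1}$, same induction on $d$, and same pigeonhole on the $(d-1)$-tuples of level-wise colors of the inductively obtained subtrees. Your concern about level-uniformity is well-spotted: the paper's proof likewise tacitly replaces each child subtree by one of depth exactly $d-1$, which is justified in the sole application to the complete tree $T^d_{h(d,k,r+1)}$ in Lemma~\ref{lemma:canonical-subd-tree} but is not strictly forced by the lemma's stated hypotheses.
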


\begin{proof}
We set $h(d,k,\ell)= k \ell^{d-1}$.
Our proof proceeds by induction on the depth $i$ of the considered trees.
The case where $d=1$ is trivially true.
Now inductively assume that all trees of depth at most $i-1$ and branching at least $h(i-1,k,\ell)$ satisfy the claimed property.  
Consider a (rooted) tree $T$ of depth $i> 1$ and branching at least $h(i,k,\ell)$ and a function $c\colon V(T)\to \{1,\ldots,\ell\}$.
Let $r$ be the root of $T$ and let $v_1,\ldots, v_p$, where $p\geq h(i,k,\ell)$, be the children of $r$ in $T$.
The subtrees $T_{v_1},\ldots, T_{v_p}$ of $T$ have depth at most $i-1$ and branching at least $h(i,k,\ell)\geq h(i-1,k,\ell)$.
Therefore, by induction hypothesis,
for every $j\in\{1,\ldots,p\}$, we may replace each $T_{v_j}$ with its subtree  $T_{j}'$ of depth $i-1$ and branching $k$ such that for every $v,u\in V(T_{j}')$ of the same level in $T_{j}'$, $c(v) = c(u)$.
Now, with each $v_j$ we associate a vector $(c_j^0,c_j^1,\ldots,c_j^{i-2})$, where $c_j^h$ is the common color under $c$ of all the vertices on level $h$ in $T_j'$. (In particular, $c_j^0=c(v_j)$.)
Since the number of different such vectors is $\ell^{i-1}$ and $p\geq k\ell^{i-1}$, there are is a collection of $k$ vertices among $v_1,\ldots,v_p$ whose  associated vectors are equal. By deleting the other $p-k$ children of $r$ and the subtrees rooted at them, we obtain a subtree $T'$ of $T$ of depth $i$ and branching $k$ with the claimed property.
\end{proof}

Before proving Lemma~\ref{lem:converse_alternations},
we show how to use Lemma~\ref{lemma:ramsey-like} in order to find a subdivision of some tree, where, in each level, subdivision paths have equal length. To describe this formally, let us give some additional definitions.
Let $T$ be a rooted tree with no nodes of degree $2$ and let $T'$ be a subdivision of~$T$.
Say that a vertex $v$ of $T'$ is \emph{principal}
if it has degree either $1$ or at least $3$. 
Those vertices correspond to vertices of~$T$,
and we assume that $V(T)\subset V(T')$.
A \emph{long edge} in $T'$ is a path connecting principal vertices whose internal vertices have degree $2$.
The long edges of $T'$ correspond to edges of $T$.
Given a $d\in\mathbb{N}$ with $d\ge 1$, a (rooted) tree $T$ of depth $d+1$,
and $r_1,\ldots,r_d\in\mathbb{N}$,
with $r_1,\ldots,r_d\ge 1$,
we define the
\emph{$(r_1,\ldots,r_d)$-subdivision} of $T$, denoted by $T^{(r_1,\ldots,r_d)}$, to be the (rooted) tree that is the subdivision of $T$ where, for each $i\in \{1,\ldots,d\}$, every long edge in $T^{(r_1,\ldots,r_d)}$ between  principal vertices corresponding to vertices of $T$ of levels $i-1$ and $i$ has length $r_i$.
We show the following:

\begin{lemma}
\label{lemma:canonical-subd-tree}
    Fix $d,k,r\in\mathbb{N}$ with $d\ge 1$.
    For every  ${\le}r$-subdivision $T'$ of $T^{d}_{h(d,k,r+1)}$,
    there are $r_1,\ldots,r_{d-1}\in\{1,\ldots,r+1\}$
    such that $T'$ contains a tree isomorphic to the $(r_1,\ldots,r_{d-1})$-subdivision of $T^{d}_k$ as a subgraph.
\end{lemma}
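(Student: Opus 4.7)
The plan is to encode the subdivision lengths as a vertex coloring on the underlying unsubdivided tree, and then to invoke the Ramsey-like Lemma~\ref{lemma:ramsey-like} to extract a ``uniform'' subtree in which all long edges at the same level have the same length.

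More concretely, let $T := T^{d}_{h(d,k,r+1)}$ and let $T'$ be an ${\le}r$-subdivision of $T$, so that every long edge of $T'$ (corresponding to an edge of $T$) is a path of length in $\{1,\ldots,r+1\}$.  I would identify the principal vertices of $T'$ with the vertices of $T$, and define a coloring $c \from V(T) \to \{1,\ldots,r+1\}$ as follows: for every non-root vertex $v \in V(T)$, let $c(v)$ be the length in $T'$ of the long edge joining $v$ to its parent in $T$; color the root arbitrarily.  Since $T$ has depth $d$ and branching $h(d,k,r+1)$, Lemma~\ref{lemma:ramsey-like} (applied with $\ell = r+1$) yields a subtree $T''$ of $T$ isomorphic to $T^{d}_{k}$ such that $c$ is constant on each level of $T''$.

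For each $i \in \{1,\ldots,d-1\}$, let $r_i \in \{1,\ldots,r+1\}$ denote the common value of $c$ on the level-$i$ vertices of $T''$.  Because $T''$ is a subtree of $T$ rooted at the same root (as produced by the proof of Lemma~\ref{lemma:ramsey-like}, which only removes children and their subtrees), the parent relation in $T''$ agrees with that in $T$, so for every level-$i$ vertex $v$ of $T''$ the long edge in $T'$ joining $v$ to its parent in $T''$ has length exactly $r_i$.  Consequently, the subgraph of $T'$ obtained as the union of $V(T'')$ together with all the long edges of $T'$ corresponding to edges of $T''$ is precisely the $(r_1,\ldots,r_{d-1})$-subdivision of $T''$, which is in turn isomorphic to the $(r_1,\ldots,r_{d-1})$-subdivision of $T^{d}_{k}$.

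The only point that requires any care is aligning conventions, namely checking that depth $d$ indeed yields $d-1$ levels of long edges (matching the $d-1$ parameters $r_1,\ldots,r_{d-1}$), and that the subtree supplied by Lemma~\ref{lemma:ramsey-like} preserves the parent function of $T$ so that the long edges inherited from $T'$ really have the claimed lengths.  Both are immediate from the definitions, so no genuine obstacle is expected.
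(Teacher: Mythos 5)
Your proof is correct and takes essentially the same approach as the paper: define the coloring $c$ by the length of the long edge to the parent, apply Lemma~\ref{lemma:ramsey-like} with $\ell=r+1$ to get a level-homogeneous $T''\cong T^d_k$, and read off $r_1,\ldots,r_{d-1}$ from the common colors per level. Your explicit remarks about the root convention and the preservation of the parent function under Lemma~\ref{lemma:ramsey-like} are exactly the small alignment points the paper glosses over, but the argument is the same.
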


\begin{proof}
    We define a function $c$ that maps every principal vertex $v$ of $T'$ to the length $c(v)\in\{1,\ldots,r+1\}$ of the long edge connecting $v$ to its parent in $T^{d}_{h(d,k,r+1)}$.
    By Lemma~\ref{lemma:ramsey-like},
    there is a subtree $T''$ of $T^{d}_{h(d,k,r+1)}$ of depth $d$ and branching $k$
    such that for every vertex $v$ of $T''$ of level $i$, there is an $r_i\in\{1,\ldots,r+1\}$ such that $c(v) = r_i$. 
    By discarding all vertices of $T'$ that do not correspond to vertices of $T''$ and all long edges with at least one endpoint in them, we obtain a subtree of $T'$ isomorphic to the $(r_1,\ldots,r_{d-1})$-subdivision of $T^{d}_k$.
\end{proof}

\begin{proof}[Proof of Lemma~\ref{lem:converse_alternations}]
Suppose towards a contradiction that $\CC$ has tree rank at least $d+1$.
Then, there is some $r\in\mathbb{N}$ such that every tree of depth at most $d+1$ is an $r$-shallow topological minor of some graph in $\CC$.
Since $\CC$ is monotone, for every (rooted) tree $T$ of depth at most $d+1$, 
some ${\le}r$-subdivision of $T$ is contained in $\C$.
Moreover, by Lemma~\ref{lemma:canonical-subd-tree}, we have that for every tree $T$ of depth $d+1$, there are $r_1,\ldots,r_d\in\{1,\ldots,r+1\}$ such that $T^{(r_1,\ldots,r_d)}$ is in $\mathcal{C}$.
By Lemma~\ref{lem:chandra_harel},
there is a sentence $\varphi$ such that for every sentence $\psi$ of alternation rank $d$,
there is a rooted tree $T$ of depth $d+1$ such that $T\models \varphi\Leftrightarrow \neg\psi$ and additionally $T$ has no vertices of degree $2$ and at least two of its leaves are in distance $d$ from the root.

For the rest of the proof, we fix the sentence $\phi$ and we aim to find a sentence $\psi$ of alternation rank $d$ (both $\phi$ and $\psi$ are over the signature of rooted trees) such that every rooted tree of depth $d+1$ with no nodes of degree $2$ and at least two nodes of level $d$ satisfies
both $\phi$ and $\psi$. This will contradict Lemma~\ref{lem:chandra_harel}.

First, we claim that we may rewrite $\phi$ to a sentence $\phi'$ over the signature of graphs
so that for every rooted tree $T$ of depth $d+1$ with no nodes of degree $2$ and at least two leaves of level $d$,
we have:
\begin{align}\label{eq:phiphi}
  T\models \phi\quad\iff \quad T^{(r_1,\ldots,r_d)}\models \phi';
\end{align}
here, $T^{(r_1,\ldots,r_d)}$ is treated as a graph (it is an unrooted tree).
This can be done using the following trick.
To define the root of (the unrooted) $T^{(r_1,\ldots,r_d)}$, we ask the existence of a sequence of vertices that form a maximum length path in $T^{(r_1,\ldots,r_d)}$; the length of such path should be exactly $1+2\cdot \sum_{i=1}^d r_i$ (since $T$ has depth $d+1$ and at least two leaves of level $d$)
and we recover the root as the middle vertex of this path.
Therefore, we can turn sentences on 
rooted trees $T$ of depth $d+1$ with no nodes of degree $2$ and at least two nodes of level $d$
to equivalent sentences on unrooted trees of the form $T^{(r_1,\ldots,r_d)}$. Indeed, a vertex $v$ is a parent of a vertex $u$ if and only if they are adjacent and $v$ belongs to the unique path (of length at most~$d$) from $u$ to the root.

By assumption, $\phi'$ is equivalent to a sentence $\psi'$ of alternation rank $d$ on $\CC$. We next prove the following claim. As the proof is conceptually easy, but technically somewhat tedious, we only provide a sketch.
\begin{claim}\label{lem:ass}
  Fix $d,r_1,\ldots,r_d\in\N$ 
  For every sentence $\psi'$ over the signature of rooted trees, there is a sentence $\psi$ (on the same signature) of the same alternation rank, 
  such that 
  for every rooted
  tree $T$ of depth $d+1$
  and no nodes of degree $2$, if $T'$ is the $(r_1,\ldots,r_d)$-subdivision of $T$, we have:
\begin{align}\label{eq:psipsi}
T\models \psi\quad\iff \quad T'\models \psi'.
\end{align}
\end{claim}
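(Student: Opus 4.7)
The plan is to simulate $T'$ inside $T$ by encoding each vertex of $T'$ as a pair $(v,j)$, where $v \in V(T)$ is a principal vertex and $j \in \{0,1,\ldots,R-1\}$ with $R := \max_i r_i$. The pair $(v,0)$ represents $v$ itself; for $v$ of level $i \geq 1$ in $T$ and $j \in \{1,\ldots,r_i-1\}$, the pair $(v,j)$ represents the unique internal vertex at distance $j$ from $v$ along the long edge from $v$ to $\mathrm{parent}_T(v)$ in $T'$. Pairs $(v,j)$ with $j \geq r_i$ (where $i$ is the level of $v$) are declared invalid. Since the depth of $T$ is at most $d+1$, the level of a vertex, and hence the validity predicate $\mathrm{valid}_j(v)$, is quantifier-free expressible using nested applications of $\mathrm{parent}$.

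I would then translate $\psi'$ into $\psi$ compositionally. Every quantifier $\exists z$ of $\psi'$ becomes $\exists v_z \bigvee_{j=0}^{R-1}\bigl(\mathrm{valid}_j(v_z)\wedge\cdots\bigr)$ in $\psi$, and dually every $\forall z$ becomes $\forall v_z \bigwedge_{j=0}^{R-1}\bigl(\mathrm{valid}_j(v_z) \to \cdots\bigr)$, where the body is the translated formula with $z$ bound to the encoding $(v_z,j)$. Atomic formulas are handled by case analysis: under encodings $z_1 \mapsto (v_1,j_1)$ and $z_2 \mapsto (v_2,j_2)$, the atom $\mathrm{parent}(z_1) = z_2$ becomes a quantifier-free formula branching on the level $i$ of $v_1$; if $j_1 < r_i - 1$ then the parent of $(v_1,j_1)$ in $T'$ is $(v_1,j_1+1)$, so we assert $v_2 = v_1$ and $j_2 = j_1+1$, while otherwise the parent is $(\mathrm{parent}_T(v_1),0)$, so we assert $v_2 = \mathrm{parent}_T(v_1)$ and $j_2 = 0$. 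The root is handled via the convention $\mathrm{parent}_T(\mathrm{root}) = \mathrm{root}$. Equality $z_1 = z_2$ is similarly translated as $v_1 = v_2 \wedge j_1 = j_2$ after the finite case split on the indices.

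The key point is that the disjunctions and conjunctions over the finite sets of indices $j$ and of possible levels are boolean combinations of fixed finite size and introduce no new quantifier alternations, so the quantifier structure of $\psi$ mirrors exactly that of $\psi'$ and the alternation ranks agree. A straightforward structural induction on $\psi'$ then verifies (\ref{eq:psipsi}). The main subtlety I expect to grind through carefully is that the subdivision length $r_i$ depends on the level of the endpoint, so the atomic translation must branch on that level; but since the depth of $T$ is bounded by the fixed constant $d+1$, level comparisons are expressible quantifier-free, and the branching collapses into a fixed boolean combination that leaves the alternation structure untouched.
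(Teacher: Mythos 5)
Your proposal is correct and takes essentially the same route as the paper's: encode each vertex of $T'$ as a principal vertex of $T$ together with a bounded offset along a long edge, translate $\psi'$ compositionally so that quantifiers over $T'$ become quantifiers over $T$ wrapped in bounded disjunctions/conjunctions over the offsets, and handle atoms by a finite quantifier-free case split on levels. The only cosmetic difference is that the paper tags each vertex with its full chain of principal ancestors (so $\exists z$ becomes a block $\exists x^0\cdots\exists x^i$), while you use the single closest principal descendant plus an offset, giving a slightly leaner encoding; either way the finite index data sits in the boolean structure and leaves the alternation rank unchanged.
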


\begin{proof}[Proof sketch of Claim~\ref{lem:ass}]
Keep in mind that both $T$ and $T'$ are rooted trees.
For every vertex $v$ of $T'$,
let $(v^0,v^1,\ldots,v^i)$ be the sequence of principal vertices of $T'$ on the path connecting $v$ to the root, as they appear when following the path from the root towards $v$.
Let $h$ denote the distance between $v$ and $v^i$; then $h\in \{0,\ldots,r_{i+1}-1\}$.
The vertex $v$ is uniquely determined by the tuple $F(v)=(v^0,\ldots,v^i,h)$, which can be defined, given $v$, using a first-order formula.
More precisely,
for every $i\in\{0,\ldots,d\}$ and every $h\in \{0,\ldots,r_{i+1}-1\}$,
there is a first-order formula $\rho_{i,h}(x_0,\ldots,x_i)$ on the signature of rooted trees,
such that for every $v\in V(T')$ and every $v^0,\ldots,v^i\in V(T)$,
$T\models \rho_{i,h}(v^0,\ldots,v^i)$ if and only if $(v^0,\ldots,v^i,h)= F(v)$.

We prove the claim by induction on the structure of $\psi'$. 
For the purpose of induction, we allow $\psi'$ to have free variables $x_1,\ldots,x_k$, which we signify  by writing $\psi'(x_1,\ldots,x_k)$.
We show that, for all $I=(i_1,\ldots,i_k)\in\{0,\ldots,d\}^k$ and $J=(h_1,\ldots,h_k)\in\{0,\ldots, r_{i_1+1}-1\}\times\cdots\times \{0,\ldots,r_{i_k+1}-1\}$, we can construct formulas 
$\psi_{I,J}(\bar{x}_1,\ldots,\bar{x}_k)$,
where
$\bar{x}_j = (x_j^0,\ldots,x_j^{i_j})$ for $j=1,\ldots,k$,
whose alternation rank is equal to the alternation rank of $\psi$,
such that the following property holds. 
For every $T$ and $T'$ as in the statement of the lemma,
all tuples $v_1,\ldots,v_k\in V(T')$ 
and for
every choice of $I$ and $J$,
and
$v_1^0,\ldots,v_1^{i_1},\ldots,v_k^0,\ldots,v_k^{i_k}\in V(T)$ satisfying $(v_j^0,\ldots,v_j^{i_j},h_j)=F(v_j)$
for $j=1,\ldots,k$ it holds that 
\begin{align*}
  T'\models\psi'(v_1,\ldots,v_k)\quad\iff\quad T\models\psi_{I,J}(v_1^0,\ldots,v_1^{i_1},\ldots,v_k^0,\ldots,v_k^{i_k}). 
\end{align*}
Note that if $\psi'$ is a sentence, that is, $k=0$,
we obtain the conclusion of the lemma.

In the induction base, we consider the atomic formulas $x=y$ and \emph{parent}$(x)$, and construct $\psi$ by hand, by guessing a suitable choice of $i_x,i_y,h_x,h_y$ (and equalities between $x^0,\ldots,x^{i_x}$ and $y^0,\ldots,y^{i_y}$) that certify the equality or the parent relation.

In the induction step, we need to handle the case when $\psi'$ is a boolean combination of two formulas,
and when $\psi'$ is of the form $\exists x\,\delta'(x,\tup y)$. The case of boolean combinations is handled easily, using the formulas obtained by inductive assumption.
So suppose that $\psi'$ is of the form $\exists x\,\delta'(x,\tup y)$.
Apply the induction hypothesis to $\delta'(x,\tup y)$,
obtaining formulas $\delta_{I,J}$ as described above.
To simplify notation, assume that $\tup y$ is empty, thus $\psi'$ is of the form $\exists x\, \delta'(x)$;
the general case proceeds analogously.
By the induction hypothesis, we obtain for each
$i\in\{0,\ldots,d\}$
and each $h\in\{0,\ldots,r_{i+1}\}$, a
formula $\delta_{i,h}(x^0,\ldots,x^i)$ of the same alternation rank as $\delta'$,
such that for every $v\in V(T')$ and every choice of $i,h$ and $v^0,\ldots,v^i$ such that $(v^0,\ldots,v^i,h)= F(v)$, it holds that $T'\models \delta'(v)\iff T\models \delta_{i,j}(v^0,\ldots,v^i)$.
The formula $\psi$ is then defined as:
$$\bigvee_{i\in\{0,\ldots,d\}}\bigvee_{h\in\{0,\ldots,r_{i+1}-1\}}\exists x^0
\ldots \exists x^i (\alpha(x^0,\ldots,x^i)\wedge \delta_{i,h}(x^0,\ldots,x^i)),$$
where $\alpha(x^0,\ldots,x^i)$ is an existential formula expressing that (the valuations) of $x^0,\ldots,x^i$ are principal vertices of $T'$, i.e., have degree different than $2$ in $T'$, (the valuations of) $x^{j-1}$ and $x^{j}$ are at distance exactly $r_{j}$, for each $j=1,\ldots,i$,
and $x^0$ is the root of $T'$. It follows from the inductive assumption that $\psi$ and $\psi'$ have equal  alternation ranks.

This completes the sketch of the construction of $\psi$, given $\psi'$.
\end{proof}

Let $\psi$ be the formula obtained in Claim~\ref{lem:ass} from $\psi'$. Then $\psi$ is a sentence on the signature of rooted trees that has alternation rank $d$.
We can further assume that $\psi$ is a sentence over the signature of graphs, since we can interpret adjacencies using the parent function without changing the alternation rank. 
Combining \eqref{eq:phiphi} with \eqref{eq:psipsi},
and since $\phi'$ is equivalent to $\psi'$ 
on $\cal C$, which contains for every tree $T\in\cal T_{d+1}$ the tree $T^{(r_1,\ldots,r_d)}$ for some $r_1,\ldots,r_d\in\{1,\ldots,r+1\}$,
we conclude that $\psi$ is equivalent to $\phi$ on rooted trees of depth $d+1$ with no nodes of degree $2$ and at least two nodes of level $d$.
This is a contradiction with Lemma~\ref{lem:chandra_harel}. This proves Lemma~\ref{lem:converse_alternations}, and completes the proof of Theorem~\ref{thm:collapse}.
\end{proof}

\section{Model checking}
\label{sec:mc}

In this section, we prove the existence of 
an elementarily-fpt model checking algorithm
for classes of bounded elementary tree rank.
The following statement in particular implies Theorem~\ref{thm:main}.

\begin{theorem}\label{thm:mc}
    Let $\CC$ be a class of  elementary tree rank $d$, for some $d\in\N$.
    Then model checking first-order logic is elementarily-fpt on  $\CC$.
    More precisely, there exists an elementary function $f$
    and an algorithm that, given a graph $G\in\CC$ and 
    first-order sentence $\phi$,
    determines whether $G\models \phi$ 
    in time $$f(|\phi|)\cdot |G|.$$
\end{theorem}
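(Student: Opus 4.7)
The plan combines three ingredients. First, by Proposition~\ref{prop:be}, every class of bounded elementary tree rank has elementarily bounded expansion. Second, by Theorem~\ref{thm:alternations}, every formula $\phi$ is equivalent on $\CC$ to a formula $\psi$ of alternation rank at most $3d-1$ and size elementary in $|\phi|$. Third, I would refine the known model checking algorithms for classes of bounded expansion (for instance, the algorithm of~\cite{lsd-journal}) so that, on a class of elementarily bounded expansion, the running time becomes elementary in the formula size provided the formula has a fixed alternation rank; this refinement is obtained by a careful inspection of the existing proofs, tracking the elementary bounds through every step. If the translation $\phi \mapsto \psi$ of Theorem~\ref{thm:alternations} were itself elementary-time computable, the theorem would follow immediately by feeding $\psi$ into the refined algorithm.

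The main obstacle is precisely that Theorem~\ref{thm:alternations} is proved via Gaifman's locality theorem in a non-effective manner: we only know $\psi$ exists with elementary size, not how to produce it in elementary time. To bypass this, I will never materialise $\psi$ explicitly, and instead compute first-order types. Let $q$ be the quantifier rank of $\phi$ and let $N = N(q, d)$ be the elementary size bound from Theorem~\ref{thm:alternations} applied to formulas of quantifier rank at most $q$ with at most one free variable. Let $\cal F$ be the finite set of all formulas of alternation rank $3d-1$ and size at most $N$, with at most one free variable; then $|\cal F|$ is elementary in $|\phi|$. For each $\tau(x) \in \cal F$, I invoke the refined bounded-expansion model checker to compute $S_\tau \coloneqq \set{v \in V(G) \mid G \models \tau(v)}$; doing this for all $\tau \in \cal F$ takes time elementary in $|\phi|$ times a linear factor in $|G|$.

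The data $\set{S_\tau}_{\tau \in \cal F}$ induces a partition of $V(G)$ into elementarily many classes, where vertices lie in the same class iff they satisfy exactly the same formulas from $\cal F$. By Theorem~\ref{thm:alternations}, every one-variable formula of quantifier rank at most $q$ is equivalent on $\CC$ to some member of $\cal F$, so this partition coincides with the partition of $V(G)$ by first-order types of quantifier rank $q$. Since the truth of any sentence of quantifier rank $q$ is determined by which one-variable types of quantifier rank $q-1$ are realised in $V(G)$ -- a reading directly supplied by the partition -- a finite syntactic computation independent of $G$ then decides $G \models \phi$. Summing the costs yields running time $f(|\phi|) \cdot |G|$ for an elementary $f$. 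The hardest step, as flagged, is the third ingredient above: tightening the bounded-expansion model checker of~\cite{lsd-journal} to produce elementary running-time bounds when the formula has fixed alternation rank and the class has elementarily bounded expansion.
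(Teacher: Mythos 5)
Your plan matches the paper's proof almost exactly in its first three ingredients: elementary bounded expansion from Proposition~\ref{prop:be}, the alternation-rank conversion of Theorem~\ref{thm:alternations}, and the refined bounded-expansion model checker (Lemma~\ref{lem:dkt-constant}, proved in the appendix). Computing the sets $S_\tau$ for all $\tau$ in an elementarily large family $\cal F$ of bounded-alternation-rank formulas is also exactly what the paper does in Lemma~\ref{lem:selectors-exist}.

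However, there is a genuine gap in the final step. You assert that, having computed the $\cal F$-type partition of $V(G)$ (which refines the $\equiv_q$-type partition), a "finite syntactic computation independent of $G$" then decides $G \models \phi$. That is not correct as stated. It is true that the $\equiv_q$-theory of $G$ is determined by the set of one-variable $\equiv_{q-1}$-types realised in $G$, and it is true that the $\cal F$-type of a vertex determines its $\equiv_q$-type on $\CC$. But the map from $\cal F$-types to $\equiv_q$-types is precisely the semantic correspondence supplied by Theorem~\ref{thm:alternations}, and — as you yourself flag earlier — this correspondence is only shown to \emph{exist}, not to be computable in elementary time. A "syntactic computation independent of $G$" would have to know, for an arbitrary formula $\psi(x)$ of quantifier rank $q-1$, which boolean combination of $\cal F$-formulas it is equivalent to on $\CC$; that is exactly the non-effective translation you are trying to avoid. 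So from the partition alone, without re-inspecting $G$, you cannot decide $\phi$ in elementary time.

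The paper's fix (which you should adopt) is to never rely on a purely syntactic reading of the partition. Instead it picks one representative vertex $v$ per realised $\cal F$-type (a "rank $q$ selection", Lemma~\ref{lem:selectors-exist}), and to evaluate $\exists x.\psi(x)$ it expands $G$ with a new constant interpreted as $v$, reduces the quantifier rank by one, and recurses on the structure class $\CC[1]$ (Lemma~\ref{lem:selectors-suffice}). This replaces the impossible syntactic step with a semantic check against $G$, at the cost of $g(q,q)^q$ recursive calls, which is still elementary. In other words, the partition is used only to \emph{bound} the set of witnesses one must try; it does not itself determine the truth of $\phi$ without going back into the graph. The rest of your argument stands once this step is patched.
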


The idea is as follows. We first prove that every class of bounded tree rank has bounded expansion.
We then use the result of \DKT~\cite{dvovrak2013testing},
which states that every sentence can be tested in linear time on graphs from a fixed class of bounded expansion. By inspecting the proof, we observe that the running time of this algorithm is elementarily-fpt for sentences of  alternation rank bounded by a fixed constant. However, Theorem~\ref{thm:alternations} does not allow us to directly convert (in elementary time) a given sentence $\phi$ into a  sentence of bounded alternation rank.
Nevertheless, it implies that in order to determine whether two vertices satisfy the same formulas $\psi(x)$ of a given quantifier rank $q$, it is enough to test 
whether they satisfy the same formulas of  alternation rank bounded by a fixed constant (independent of $q$). It is known that model checking can be reduced to the problem 
of deciding whether two vertices satisfy the same formulas of a given quantifier rank. Combining those observations yields the required result. We now present the details.

\medskip
We start by showing that every class of bounded tree rank has bounded expansion, whose definition we now recall.
For a graph $G$ and number $r\in\N$, 
let $\tilde\nabla_r(G)$ denote the maximum 
average degree, $2|E(H)|/|V(H)|$, among all $r$-shallow topological minors  $H$ of $G$.
A class $\CC$ has \emph{bounded expansion}
if for every $r\in\N$ there is a constant $c_r$ such that  $\tilde\nabla_r(G)\le c_r$ for all $G\in\CC$. Equivalently, one can postulate that $\tilde\nabla_r(\CC)\le \infty$ for every $r\in \N$, where we define $\tilde\nabla_r(\CC)\coloneqq \sup_{G\in \CC} \tilde\nabla_r(G)$.

\begin{proposition}\label{prop:be}Fix $d\in\N$.
    If $\CC$ is a class of tree rank at most $d$,
    then $\CC$ has bounded expansion.
    Moreover, if $\CC$ has elementary tree rank at most $d$, then there is an elementary function $f\from\N\to\N$ such that $\tilde\nabla_r(G)\le f(r)$, for all $G\in\CC$.
\end{proposition}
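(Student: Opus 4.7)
The plan is to combine two classical facts about containment of trees as subgraphs with the tree rank assumption. The first fact I will use is the standard greedy embedding lemma: every graph of minimum degree at least $N$ contains every tree on at most $N+1$ vertices as a subgraph. The second is the folklore observation that every graph of average degree at least $2N$ has a (non-empty) subgraph of minimum degree at least $N$; this is obtained by iteratively deleting vertices of degree less than $N$, which can only increase the ratio $|E|/|V|$, so the process must terminate with a subgraph of minimum degree $\geq N$. Combining these two facts, every graph of average degree at least $2|V(T)|$ contains the tree $T$ as a subgraph.

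Given this, I would fix $r\in\N$ and invoke the tree rank assumption to obtain some $k=k(r)$ such that $T^{d+1}_{k}$ is not an $r$-shallow topological minor of any graph in $\CC$; note that the number of vertices of $T^{d+1}_{k}$ is at most $N:=(d+1)k^d$. I then claim $\tilde\nabla_r(G)<2N$ for every $G\in\CC$. Indeed, suppose some $r$-shallow topological minor $H$ of some $G\in\CC$ satisfies $2|E(H)|/|V(H)|\geq 2N$. Then by the combined fact above, $T^{d+1}_{k}$ appears as a subgraph of $H$. Now since $H$ is an $r$-shallow topological minor of $G$, a ${\leq}r$-subdivision $H'$ of $H$ is a subgraph of $G$; restricting $H'$ to the paths corresponding to the edges of the copy of $T^{d+1}_{k}$ in $H$ yields a ${\leq}r$-subdivision of $T^{d+1}_{k}$ as a subgraph of $G$, contradicting the defining property of $k(r)$.

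For the moreover part, if $\CC$ has elementary tree rank at most $d$, then by definition we may take $k(r)$ to be an elementary function of $r$. Since $d$ is fixed, $f(r):=2(d+1)k(r)^d$ is then also elementary in $r$, and by the above argument it bounds $\tilde\nabla_r(G)$ uniformly over $G\in\CC$. There is no serious obstacle in this proof; the only nontrivial point to verify is the lifting step, namely that a subgraph of an $r$-shallow topological minor of $G$ is itself an $r$-shallow topological minor of $G$, which is immediate from the definition since the subdivision paths in $G$ can simply be restricted to those corresponding to edges of the chosen subgraph.
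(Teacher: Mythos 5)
Your proposal is correct and follows essentially the same route as the paper: both prove that a graph of sufficiently high minimum degree contains the excluded tree (the paper proves the slightly weaker variant with "$\delta$ vertices", you cite the Chv\'atal version with "$N+1$ vertices", as the paper's footnote notes), then pass from average degree to minimum degree, and finally lift the embedded tree to an $\le r$-subdivision in $G$. The only cosmetic difference is that the paper streamlines the lifting step via a "WLOG $\CC$ is monotone", whereas you spell it out directly; both are fine.
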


First, we show that classes of bounded tree rank have bounded minimum degree.
This follows immediately from the next lemma, which is a classic result of Chv\'atal~\cite{Chvatal77}\footnote{Chv\'atal's Lemma is sligthly stronger, it asserts that a graph of minimum degree at least $\delta$ contains every tree on $\delta+1$ vertices. Here we only need the simpler variant proved below.}.

\begin{lemma}\label{lem:degeneracy-forests}
    A graph of minimum degree at least
    $\delta$ contains every tree on $\delta$ vertices as a subgraph.
\end{lemma}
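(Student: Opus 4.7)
I will prove this by induction on the number of vertices of the tree $T$.

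In the base case, a tree on a single vertex embeds trivially into any nonempty graph (and if $\delta = 0$ the statement is vacuous for the only tree with $0$ vertices, or we can take the base case to be $|T|=1$).

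For the inductive step, let $T$ be a tree on $k$ vertices with $2 \le k \le \delta$, and let $G$ have minimum degree at least $\delta$. Pick any leaf $v$ of $T$, and let $u$ be its unique neighbor in $T$. The tree $T' := T - v$ has $k-1$ vertices, so by the inductive hypothesis there is an embedding $\phi$ of $T'$ as a subgraph of $G$. Now $\phi(u)$ has at least $\delta$ neighbors in $G$, whereas $\phi(V(T')) \setminus \{\phi(u)\}$ has only $k - 2 \le \delta - 2$ vertices. Thus $\phi(u)$ has some neighbor $w$ in $G$ that lies outside $\phi(V(T'))$. Extending $\phi$ by setting $\phi(v) := w$ yields an embedding of $T$ as a subgraph of $G$, since the only edge of $T$ incident to $v$ is $uv$, which is mapped to the edge $\phi(u)w$ present in $G$.

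The argument is entirely routine; there is no real obstacle, the only point requiring care is the counting inequality $k - 2 < \delta$ that ensures a fresh neighbor of $\phi(u)$ always exists for the extension.
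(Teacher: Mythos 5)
Your proof is correct and takes essentially the same greedy leaf-peeling approach as the paper; the only cosmetic difference is that you induct on the size of the tree while the paper inducts on $\delta$, and the counting step is identical in substance.
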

\begin{proof}
    We proceed by induction on $\delta$.
    For $\delta=1$ the statement holds trivially.
    Suppose $\delta>1$, and that the statement holds for $\delta-1$.
    Let $G$ be a graph of minimum degree at least $\delta$, 
    and $T$ be a tree on $\delta$ vertices. We show that $G$ contains $T$ as a subgraph.    
    
    Let $v$ be any leaf of $T$, and $v'$ its unique neighbor.
By inductive assumption, $G$ contains $T-v$ as a subgraph,
so assume that $T-v\subset G$.
Since $v'$ has degree at least $\delta$ in $G$ and $|V(T-v)|=\delta-1$, there is some vertex $w\in V(G)-V(T-v)$ that is adjacent to $v'$ in $G$. We can identify the leaf $v$ of $T$ with the vertex $w$ of $G$, demonstrating that 
$T$ is a subgraph of $G$.
\end{proof}

\begin{proof}[Proof of Proposition~\ref{prop:be}]
    Let $\CC$ be a class of tree rank at most $d$, and let $r\in \N$. We prove that 
    $\tilde\nabla_r(\CC)<\infty$. Without loss of generality, $\CC$ is monotone.

    By assumption, there is a tree $T$ of depth ${d+1}$
    such that $T\not\in \topminors_r(\CC)$.
    By Lemma~\ref{lem:degeneracy-forests}, 
    all graphs in $\topminors_r(\CC)$ have minimum degree less than $|T|$.

    It is well known that every graph $H$ of average degree
    at least $k$ has a subgraph $H'$ of minimum degree at least $k/2$. It follows that all graphs in $\topminors_r(\CC)$ have average degree bounded by $2|T|$.
Hence, $\tilde\nabla_r(\CC)\le 2|T|$. 
This proves the first claim of the proposition. The second claim, about elementary tree rank, follows immediately by noting that $|T|$ is assumed to be bounded by an elementary function of~$r$.
\end{proof}

The main result of \cite{dvovrak2013testing} states 
that model checking can be solved in linear time, 
for every fixed sentence $\phi$ and class $\CC$ of bounded expansion.
The proof proceeds by the method of \emph{quantifier elimination}, and actually proves a stronger statement,
for formulas $\phi(x_1,\ldots,x_k)$ with free variables. In particular, the following result is proved.

\begin{theorem}\label{thm:dkt}
    Let $\CC$ be a class of bounded expansion.
    There is a function $h$ and an
    algorithm that, given a graph $G\in\CC$ and formula $\phi(x_1,\ldots,x_k)$ 
    computes in time $h(\phi)\cdot |G|$ a data structure that answers in time $h(\phi)$ queries of the form:
    given $v_1,\ldots,v_k\in V(G)$, does $G\models \phi(v_1,\ldots,v_k)$?
\end{theorem}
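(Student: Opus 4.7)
My plan is to follow the quantifier elimination approach pioneered by Dvořák, Král, and Thomas, structured around Gaifman locality and augmentations that remain tame under the bounded expansion assumption.

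First, I would preprocess $\phi(x_1,\ldots,x_k)$ by applying Gaifman's locality theorem (Theorem~\ref{thm:Gaifman}) to rewrite it as a boolean combination of basic local sentences and an $r$-local formula $\rho(x_1,\ldots,x_k)$, where $r=r(\phi)$ depends only on $\phi$. The basic local sentences, being sentences, can be evaluated once at preprocessing time, their truth values stored as booleans. This reduces the task to building a data structure for the $r$-local formula $\rho$ and gluing everything back at query time via the fixed boolean combination, which only adds a constant per-query cost.

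The heart of the argument is quantifier elimination on $\rho$, carried out on an \emph{augmented} version of $G$. Because $\CC$ has bounded expansion, there is a constant $\Delta=\Delta(\phi)$ and an orientation of $G$ of in-degree at most $\Delta$ (computable in linear time by iteratively peeling low-degree vertices). I would apply a sequence of \emph{transitive fraternal augmentations}: at each step, for every pair of vertices $u,v$ sharing an out-neighbor, add and orient an edge between them, and similarly add transitive closure edges along directed paths of length at most~$2$. A key property of bounded expansion classes is that after any fixed number of such augmentations, the resulting digraph still has in-degree bounded by an (explicit) constant depending only on the number of augmentations and the grad bounds of $\CC$. After $O(r)$ augmentations, every pair of vertices within distance $r$ in $G$ becomes connected by a direct arc, allowing $r$-local quantification to be captured by inspecting a constant-size in-neighborhood in the augmented digraph.

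Now I would perform inductive quantifier elimination on $\rho$. For an innermost existential subformula $\exists y\, \psi(y, \bar{x})$ in which $y$ ranges within distance $r$ of the free variables, I guess the (finitely many) isomorphism types of $y$'s neighborhood pattern with respect to $\bar{x}$, and replace the existential with a disjunction: for each such type, check whether some in-neighbor of (one of the) $x_i$ in the augmented digraph realizes it. This is encoded by introducing fresh unary predicates $U_\tau$ on vertices, one per type $\tau$, with $U_\tau(v)$ true iff $v$ has an in-neighbor of type $\tau$; these predicates are computable in time $O(\Delta\cdot |G|)$ per layer because in-degrees are bounded. Iterating this over all quantifier layers of $\rho$ yields, after a constant number of passes, a quantifier-free formula $\rho'$ over the enriched signature (augmented arcs plus the new unary labels). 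The total preprocessing cost is $h(\phi)\cdot |G|$ for some $h$ depending only on $\phi$, since the number of augmentations, the number of types, and the number of quantifier layers are all bounded in terms of $\phi$.

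The data structure then stores, for each vertex $v$, the bit vector of values of all the $U_\tau$ (and the augmented arcs). A query $(v_1,\ldots,v_k)$ is answered by evaluating the quantifier-free formula $\rho'$ on the stored labels of $v_1,\ldots,v_k$ and the augmented adjacency between them, which takes time $h(\phi)$. The main obstacle I expect is the bookkeeping in the quantifier elimination step: carefully defining the type set at each layer so that it remains of size bounded in $\phi$ and ensuring that the augmentation-plus-labels procedure correctly captures existential quantification over the original graph. Verifying that bounded expansion is preserved throughout the augmentation sequence — and thus that in-degrees stay bounded so each pass runs in linear time — relies on the grad inequalities of Nešetřil and Ossona de Mendez and is the technical core of the argument.
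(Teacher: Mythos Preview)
Your outline follows the original Dvo\v{r}\'ak--Kr\'al--Thomas route (Gaifman locality plus transitive--fraternal augmentations), whereas the paper's proof in Appendix~\ref{app:qe} takes the alternative path of \cite{lsd-journal}: it uses low-treedepth colorings to reduce an existential formula on a bounded-expansion class to the bounded-treedepth case, then encodes bounded-treedepth structures as colored forests and runs a deterministic tree automaton to eliminate the quantifier block. No Gaifman step and no augmentations appear. The paper chooses this route precisely because it makes the dependence on $|\phi|$ trackable tower-by-tower, which is what is needed downstream for Lemma~\ref{lem:dkt-constant}; extracting comparable explicit bounds through iterated augmentations is possible but considerably messier.

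Your sketch is in the right spirit but has a real gap in the quantifier-elimination step. You assert that after $O(r)$ augmentations any two vertices at distance $\le r$ are joined by an arc, and then that eliminating $\exists y$ reduces to scanning the bounded \emph{in}-neighborhoods of the $x_i$. Neither claim is quite right as stated: augmentations keep \emph{in}-degree bounded but out-degree can blow up, so a witness $y$ may well be an out-neighbor of $x_i$, and simply labeling $x_i$ by ``has an in-neighbor of type $\tau$'' misses it. The actual DKT argument is more delicate: one first normalises the quantifier-free part so that the innermost variable $y$ co-occurs in atoms with at most one other variable, and the fraternal step is what guarantees that after enough rounds the relevant adjacency can always be oriented toward the free variable. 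You also defer evaluating the basic local sentences to ``preprocessing time'', but that is itself a model-checking problem of the same kind and has to go through the same machinery. None of this is fatal---the augmentation approach does work---but the sketch as written would not go through without these refinements.
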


We will need an improved bound on the function $h$. This is obtained in the next lemma, which is proved by analysing and refining the proof of Theorem~\ref{thm:dkt}. 
\begin{lemma}\label{lem:dkt-constant}
	Suppose that, for some integer $s \in \N$ and polynomial $P$, it holds that $\tilde\nabla_r(\CC) \leq \tower(s, P(r))$ for all $r \in \N$.
	Then the value $h(\phi)$
    in Theorem~\ref{thm:dkt} can be bounded by $\tower((s + 4)(q + 1), \poly(|\phi|))$,
    where $q$ is the alternation rank of 
    the formula $\phi(x_1,\ldots,x_k)$.
\end{lemma}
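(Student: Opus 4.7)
The plan is to revisit the quantifier-elimination algorithm underlying Theorem~\ref{thm:dkt}, as developed in \cite{lsd-journal}, and to track its tower-growth per quantifier \emph{alternation} rather than per individual quantifier. First, I would convert $\phi$ into prenex normal form with exactly $q+1$ alternating blocks of same-type quantifiers,
$$\phi \;=\; Q_1\bar x_1 \; Q_2\bar x_2 \; \cdots \; Q_{q+1}\bar x_{q+1} \; \psi_0,$$
where $\psi_0$ is quantifier-free and $Q_i \in \{\exists,\forall\}$ alternate, and then eliminate blocks from the innermost outwards.

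At each step, a subformula $Q\bar x\,\chi(\bar x,\bar y)$ is processed on the currently augmented graph: using the low-treedepth coloring / transitive fraternal augmentation machinery of \cite{lsd-journal}, one constructs in linear time a new family of unary predicates so that $Q\bar x\,\chi(\bar x,\bar y)$ becomes equivalent to a quantifier-free formula in $\bar y$ over the augmented structure. The crucial point is that the \emph{entire} block is handled by a single application of the augmentation procedure: the block's width $|\bar x|$ only affects the polynomial base $\poly(|\phi|)$ of the tower, because a single low-treedepth coloring of fixed radius using $\poly(|\phi|)$ colors already suffices to decide an existential statement of width $|\bar x|$. Consequently, the expansion-rate bound of the augmented class passes from $\tower(t, P_t(r))$ to $\tower(t+c, P_{t+1}(r))$ for an absolute constant $c$ and some polynomial $P_{t+1}$ depending only on $|\phi|$, independently of $|\bar x|$ and of which block is being processed.

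By bookkeeping the constants so that $c \le s+4$ suffices in the chosen variant of the procedure, and combining with the initial bound $\tilde\nabla_r(\CC) \le \tower(s, P(r))$, after $q+1$ rounds of block elimination the expansion rate of the augmented class is at most $\tower((s+4)(q+1), \poly(|\phi|))$. Since the final step is linear-time evaluation of a quantifier-free formula over a class of that expansion rate, the running-time constant $h(\phi)$ is bounded accordingly. The main obstacle is establishing the ``one block, at most $s+4$ tower levels'' claim rigorously: this requires a careful inspection of the inductive augmentation step in \cite{lsd-journal} to verify that the width $|\bar x|$ enters only the polynomial base $\poly(|\phi|)$, while the tower height grows by an absolute additive constant governed by the structure of the augmentation itself, not by the block's width or by the number of previous rounds beyond the additive shift already accounted for.
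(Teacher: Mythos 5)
Your overall plan — eliminate quantifiers block by block, arguing that each block costs a fixed number of tower levels — is the same as the paper's, which processes a batched formula of batched quantifier rank $q+1$ (equivalent to alternation rank $q$ by the paper's Lemma~\ref{lem:alternation-rank}) via an induction that strips one quantifier batch at a time using the bounded-expansion quantifier-elimination step (Lemma~\ref{lem:be-existential}). However, your central bookkeeping claim is wrong, and it is the crux of the argument.

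You assert that the expansion rate of the augmented class grows by roughly $s+4$ tower levels per eliminated block, so that after $q+1$ rounds the class has expansion $\tower((s+4)(q+1),\poly(|\phi|))$. This does not happen. The quantifier-elimination step in \cite{lsd-journal} (as in the paper's Theorem~\ref{thm:be-qe}) produces a structure $\widehat{\str A}$ whose \emph{Gaifman graph is a subgraph} of that of $\str A$: the augmentation only adds unary predicates, flags, and the parent function of a DFS-based elimination forest, whose arcs are already Gaifman edges. Consequently $\tilde\nabla_r$ does not increase at all across rounds; it stays bounded by $\tower(s,P(r))$ throughout. The quantity that compounds is the \emph{formula size}: one application of the existential-case lemma to a formula of size $N$ over a class with $\tilde\nabla_r \le \tower(s,P(r))$ yields a quantifier-free replacement of size about $\tower(s+4,\poly(N))$, where the $s+2$ levels come from the number of low-treedepth colors $|C| \le (\tilde\nabla_{2^{p}}(\CC))^{2^{\poly(p)}}$, and two more levels come from the tree-automaton determinization and treedepth-to-forest encoding. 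Because the next round sees this larger formula (but the same sparse class), the tower height adds $s+4$ per round through the nesting $\tower(s+4,\tower(s+4,\cdot))=\tower(2(s+4),\cdot)$, arriving at $\tower((s+4)(q+1),\poly(|\phi|))$ after $q+1$ rounds. Your arithmetic also does not quite close: starting from $\tower(s,P(r))$ and adding $s+4$ levels per round for $q+1$ rounds would leave you at tower height $s+(s+4)(q+1)$, not $(s+4)(q+1)$; the correct accounting charges the initial $s$ to the first application of the existential lemma rather than stacking it on top. Finally, the claim that $h(\phi)$ is ``bounded by the expansion rate of the final class'' is a category error — the query-time constant is governed by the size of the quantifier-free formula and the preprocessing time for the augmentation, not by the density of the final structure.
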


For completeness, we give
the proofs of both Theorem~\ref{thm:dkt} and Lemma~\ref{lem:dkt-constant} in Appendix~\ref{app:qe}. We now proceed with the proof of Theorem~\ref{thm:mc}.
\medskip

We recall the following fundamental notion from logic.
Fix a number $q\in\N$.
Two vertices $u,v$ of a structure $G$ 
are \emph{$q$-equivalent}, written $u\equiv_q v$, if $G\models \psi(u)\iff \psi(v)$ for every formula $\psi(x)$ of quantifier rank at most $q$.

It is known that for every fixed number $q$,
the number of $\equiv_q$-equivalence classes is bounded by a constant independent of the graph $G$.
However, finding the equivalence classes, given $G$,
is just as hard as the model checking problem
for formulas of quantifier rank at most $q$.
More precisely, in \cite{gajarsky2022differential} Gajarsk\'y, Gorsky, and Kreutzer showed
that in order to efficiently test whether a given sentence of quantifier rank at most $q$ holds in a graph $G$, it is enough to efficiently solve the following problem, for the graph $G$
expanded with $q$ constant symbols, interpreted as vertices of $G$:
For a fixed number $q$, 
compute a set $R_q\subset V(G)$ containing at least one representative of each $\equiv_q$-equivalence class, such that $R_q$ has size bounded by some constant (depending only on $q$, and not on $G$). 
We will use this observation to prove that model checking is elementarily-fpt on classes of bounded tree rank. 

Thus, it suffices to show that we can efficiently 
compute a set $R_q$ as above, given $G$.
By Theorem~\ref{thm:alternations},
for classes of tree rank $d$,
the equivalence relation $\equiv_q$ 
is refined by the equivalence relation 
defined analogously, but where instead 
of formulas $\psi(x)$ of quantifier rank at most $q$,
we consider formulas of alternation rank at most $3d$.
By Lemma~\ref{lem:dkt-constant}, 
such formulas can be evaluated in elementarily-fpt time. This way, we can efficiently compute a refinement of $\equiv_q$ with a bounded number 
of equivalence classes, which, by the observation of \cite{gajarsky2022differential} is sufficient to obtain an efficient model checking algorithm.

We now make this argumentation precise.

\medskip

For a class  $\CC$ of structures and $k\in\N$, let $\CC[k]$ denote the class of 
tuples $(G,v_1,\ldots,v_k)$,
where $G\in\CC$  and $v_1,\ldots,v_k\in V(G)$.
We view such a tuple as a structure over the signature of $\CC$ expanded with $k$ constant symbols $c_1,\ldots,c_k$, interpreted as the elements $v_1,\ldots,v_k$.

Fix a class $\CC$ of structures and numbers $q,p\in\N$.
A \emph{rank $q$ selection of order $p$} in a structure $G$ is a set $R$ of at most $p$ 
vertices of $G$ which contains some representative of every $\equiv_q$-equivalence class of $G$
(note that the same equivalence class may be represented more than once in $R$).
Let $g\from\N\times\N\to \N$ be a function.
A \emph{selector} of complexity $g$
for $\CC$ 
is an algorithm that, given $q,k\in\N$ and $(G,v_1\ldots,v_k)\in\CC[k]$
runs in time $g(q,k)\cdot |G|$ and 
returns a rank $q$ selection of order $g(q,k)$ for~$(G,v_1,\ldots,v_k)$.

\begin{lemma}\label{lem:selectors-suffice}
    Fix a class $\CC$ of structures
    and a function $g\from \N\times\N\to\N$.
    Suppose there is  selector of complexity $g$ for the class $\CC$.
    Then, given a structure $G\in\CC$ and a sentence $\phi$
    of quantifier rank $q$,
    it can be decided whether $\phi$ holds in $G$
     in time $\Oh(g(q,q)^q\cdot |\phi|)\cdot |G|$.
\end{lemma}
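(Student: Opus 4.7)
The plan is to prove this by induction on the structure of $\phi$, handling each quantifier by using the selector to restrict the search to a bounded number of candidate witnesses. To make the induction go through cleanly, I would strengthen the statement to formulas with free variables: for every formula $\phi(x_1,\ldots,x_k)$ of quantifier rank at most $q$ and every tuple $(G,v_1,\ldots,v_k)\in\CC[k]$, one can decide whether $G\models\phi(v_1,\ldots,v_k)$ in time $\Oh(g(q,q+k)^q\cdot|\phi|)\cdot|G|$. Setting $k=0$ recovers the statement of the lemma. Without loss of generality I will assume $g$ is non-decreasing in each argument, by replacing $g(q,k)$ with $\max_{q'\le q,\,k'\le k}g(q',k')$, which does not affect the asymptotic bound.

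The heart of the argument lies in the existential case $\phi=\exists x\,\psi(x,x_1,\ldots,x_k)$, where $\psi$ has quantifier rank $q-1$. The key observation is that whether an element $u\in V(G)$ satisfies $\psi(u,v_1,\ldots,v_k)$ depends only on the $\equiv_{q-1}$-class of $u$ in the expanded structure $(G,v_1,\ldots,v_k)\in\CC[k]$, since $\equiv_{q-1}$-equivalent elements satisfy exactly the same formulas in one free variable of quantifier rank at most $q-1$ over that signature. Hence it suffices to invoke the selector on $(G,v_1,\ldots,v_k)$ with rank parameter $q-1$, obtaining in time $g(q-1,k)\cdot|G|$ a set $R$ of size at most $g(q-1,k)$ that meets every $\equiv_{q-1}$-class; then $G\models\exists x\,\psi(x,v_1,\ldots,v_k)$ if and only if $G\models\psi(u,v_1,\ldots,v_k)$ for some $u\in R$, which is verified by making at most $|R|$ recursive calls on a formula of quantifier rank $q-1$ with $k+1$ free variables. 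Universal quantifiers are handled by the duality $\forall x=\neg\exists x\,\neg$, atomic formulas are evaluated directly in $\Oh(|\phi|)$ time, and boolean combinations are handled by summing the running times and sizes of the sub-formulas.

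Telescoping the recurrence along one path of length at most $q$ in the syntax tree yields a product of factors of the form $g(q-1-i,k+i)$ for $i=0,\ldots,q-1$; by monotonicity each such factor is bounded by $g(q,q+k)$, so the accumulated multiplicative overhead is at most $g(q,q+k)^q$. Combined with the $\Oh(|\phi|)$ cost at atomic leaves and the additive contributions from boolean combinations (using $|\alpha|+|\beta|\le|\phi|$), a routine induction on formula structure yields the bound $\Oh(g(q,q+k)^q\cdot|\phi|)\cdot|G|$; specialising to $k=0$ gives the claim. I expect the main technical point to be the bookkeeping of the complexity recurrence: one must track how the second argument of $g$ (the number of free variables, equivalently the number of added constants) grows as the recursion descends through quantifiers, while the first argument (the residual quantifier rank) decreases, and apply monotonicity correctly so that the final bound collapses to $g(q,q)^q$ rather than a larger expression.
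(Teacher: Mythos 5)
Your proof is correct and follows essentially the same approach as the paper's: induction on formula structure, using the selector to bound the set of candidate witnesses at each existential quantifier, with monotonicity of $g$ used to collapse the product of per-level factors to $g(q,q)^q$. The cosmetic differences (you call the selector with rank $q-1$ rather than the paper's $q$, and you keep free variables explicit rather than substituting constants and recursing on $\CC[1]$ as the paper does) are immaterial, since $\equiv_q$ refines $\equiv_{q-1}$ and the two ways of tracking free slots are formally equivalent.
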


\begin{proof}Fix $\CC$ and a selector of complexity $g$ for $\CC$. We may assume that the function $g$ is monotone in each coordinate.

We prove the statement by induction on $\phi$.
In the base case, $\phi$ is an atomic formula
(equality check or relation symbol applied to constant symbols), and can be decided in time $\Oh(1)$\footnote{Here we assume the model of computation on relational structures where given a tuple of elements, we can in constant time decide membership of this tuple to any relation present in the structure. In case of graph classes of bounded expansion (which encompass classes of bounded tree rank, by Theorem~\ref{prop:be}), this can be easily emulated in the standard RAM model. Namely, it is well-known that in classes of bounded expansion one can compute, in linear time, edge orientations with bounded outdegree. Then to check whether two vertices $u,v$ are adjacent it suffices to check whether $v$ belongs to the list of outneighbors of $u$ or $u$ belongs to the list of outneighbors of $v$; and this takes constant time. A similar trick can be also applied to classes of structures of bounded expansion with arity larger than $2$; see e.g.~\cite{aggregate-queries} for details.}.
In the inductive step, 
if $\phi$ is a boolean combination of two formulas $\phi_1,\phi_2$, then the conclusion follows 
from the inductive assumption applied to $\phi_1$ and $\phi_2$. Same reasoning applies when $\phi$ is a negation of a formula $\phi_1$.
So assume $\phi$ is of the form $\exists x\,\psi(x)$,
for some formula $\psi(x)$ of quantifier rank at most $q-1$.

Let $G\in\CC$. 
Run the given selector on $G$, for $k=0$ and $q$ equal to the quantifier rank of $\phi$.
This results in a set $R\subset V(G)$ with $|R|\le g(q,k)$.
Observe that the following equivalence holds:
\begin{align}\label{eq:equiv}
    G\models \phi\quad\Longleftrightarrow\quad 
    G\models \psi(v)\text{\quad for some $v\in R$}.        
\end{align}
The leftwards implication is obvious.
For the rightwards implication,
suppose that $G\models \phi$.
Then there is some $u\in V(G)$ such that $G\models \psi(u)$. By definition of $R$, there is some $v\in R$ 
such that $v\equiv_q u$.
Since $\psi(x)$ has quantifier rank at most $q$
and $v\equiv_q u$,
it follows that $G\models \psi(v)$.
This proves the rightwards implication.

For $v\in V(G)$, 
let $G_v$ denote the structure obtained 
by expanding $G$ with a constant symbol $c$, which is interpreted as the element $v$.
Let $\psi'$ be the sentence obtained from $\psi(x)$ by replacing the variable $x$ by the constant symbol $c$. Then 
\begin{align}\label{eq:psi'}
    G\models \psi(v)\quad\iff\quad G_v\models \psi'.     
\end{align}
The sentence $\psi'$ has quantifier rank $q-1$ and size $|\psi'|=|\psi|$.

Let $\CC'$ denote the class $\CC[1]$.
Observe that $\CC'[k]=\CC[k+1]$ for all $k\in\N$, up to a renaming of constant symbols. Therefore, the selector for $\CC$ of complexity $g$ can be casted to obtain a selector for $\CC'$ of complexity $g'\from\N\times\N\to \N$,
where $g'(q,k)=g(q,k+1)$.

By inductive assumption applied to $\CC'$ and $\psi'$,
for each fixed $v\in V(G)$ we can test whether 
$G_v\models \psi'$ in time $\Oh(g'(q-1,q-1)^{q-1}\cdot |\psi'|)\cdot |G|$.

By \eqref{eq:equiv} and \eqref{eq:psi'}, to test whether $G\models\phi$ holds, it is enough to 
test whether $G_v\models\psi'$ holds, for each $v\in R$. As $|R|\le g(q,q)$, this can be done jointly in time 
$\Oh(g(q,q)^q\cdot |\phi|)\cdot |G|$.
\end{proof}

\begin{lemma}\label{lem:selectors-exist}
    Let $\CC$ be a class of graphs of elementary tree rank at most $d$.
    There is an elementary function $g\from \N\times \N\to\N$ such that $\CC$ has a selector of complexity $g$.
\end{lemma}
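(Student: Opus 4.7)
The plan is to enumerate and evaluate, in elementary time, enough low-alternation-rank formulas to refine $\equiv_q$ on $\CC[k]$. We cannot explicitly compute, from a given formula $\phi(x)$ of quantifier rank $q$, the equivalent formula of alternation rank $3d-1$ promised by Theorem~\ref{thm:alternations} -- but we do not need to. It suffices to enumerate \emph{all} formulas of alternation rank $3d-1$ and size bounded by a suitable elementary function of $q$ and $k$, since among them must lie a formula equivalent on $\CC$ to every formula of quantifier rank $q$.

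Concretely, first invoke Proposition~\ref{prop:be} to see that $\CC$ has elementarily bounded expansion, so that Lemma~\ref{lem:dkt-constant} provides an evaluation algorithm for any formula $\psi$ of alternation rank $3d-1$ running in time elementary in $|\psi|$ times $|G|$. Next, recall that over any fixed finite signature, the number of pairwise inequivalent formulas $\phi(x)$ of quantifier rank $q$ is bounded elementarily in $q$, and each equivalence class admits a canonical representative of elementary size; modeling the $k$ constants of $\CC[k]$ as $k$ singleton unary predicates (which does not affect the tree rank of the underlying graph) makes this apply to $\CC[k]$ with a bound elementary in $q$ and $k$. Feeding these canonical representatives through Theorem~\ref{thm:alternations} yields, for each, an equivalent formula on $\CC$ of alternation rank $3d-1$ and size bounded by an elementary function $s(q,k)$. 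Therefore the set $\mathcal{F}$ of \emph{all} formulas of alternation rank at most $3d-1$ and size at most $s(q,k)$ has cardinality elementary in $q$ and $k$, and contains for every $\phi(x)$ of quantifier rank $q$ a formula equivalent to $\phi$ on $\CC$.

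The selector on input $(G,v_1,\ldots,v_k)$ then proceeds as follows. For each $\psi(x)\in\mathcal{F}$, apply Lemma~\ref{lem:dkt-constant} to compute in time elementary in $q,k$ times $|G|$ the set of vertices of $G$ satisfying $\psi$. This assigns to every $v\in V(G)$ a signature $\sigma(v)\in\{0,1\}^{\mathcal{F}}$. Bucket the vertices by $\sigma(v)$ and output one representative from each nonempty bucket. Any two vertices with identical signatures agree on every formula in $\mathcal{F}$, hence on the canonical representative of every $\equiv_q$-equivalence class, and are therefore $\equiv_q$-equivalent in $(G,v_1,\ldots,v_k)$. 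The number of buckets is at most $2^{|\mathcal{F}|}$, and the overall running time is $g(q,k)\cdot |G|$ for some elementary $g$, as required.

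The key point -- and the only delicate part -- is the size bound in the ``moreover'' clause of Theorem~\ref{thm:alternations}: without it, the enumeration $\mathcal{F}$ would not be of elementary size, and evaluating all its members on $G$ would blow up the running time. The rest of the argument is bookkeeping: matching the treatment of the $k$ constants of $\CC[k]$ to the graph setting in which Theorem~\ref{thm:alternations} and Lemma~\ref{lem:dkt-constant} are stated, and composing the elementary bounds to conclude that both the order $|R|$ of the output selection and the running time are elementary in $q$ and $k$.
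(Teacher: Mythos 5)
Your high-level plan matches the paper's: enumerate an elementary-size family of bounded-alternation-rank formulas, evaluate them in elementary time using Proposition~\ref{prop:be} together with Lemma~\ref{lem:dkt-constant}, bucket vertices by their truth vectors, and return one representative per bucket. The main conceptual point you single out -- that only the \emph{existence} of a small-size equivalent (the ``moreover'' of Theorem~\ref{thm:alternations}) is needed, not its computability -- is exactly right.

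However, there is a genuine gap in how you derive the size bound $s(q,k)$. You assert that ``the number of pairwise inequivalent formulas $\phi(x)$ of quantifier rank $q$ is bounded elementarily in $q$, and each equivalence class admits a canonical representative of elementary size.'' This is false: over a signature with a binary relation, the number of $q$-types (Hintikka formulas) with one free variable is $\tower(q,\Oh(1))$, and the canonical representatives have size $\tower(q,\Oh(1))$ as well. (This non-elementary growth is precisely the source of the Frick--Grohe lower bound on trees.) As a consequence, feeding those tower-sized canonical representatives through the ``moreover'' part of Theorem~\ref{thm:alternations}, whose stated guarantee bounds $|\psi|$ elementarily in $|\phi|$, yields a tower-sized $s(q,k)$, so your family $\mathcal{F}$ is not of elementary cardinality and the argument breaks.

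The repair -- and what the paper's proof implicitly relies on -- is that the proof of Theorem~\ref{thm:alternations} (via Lemma~\ref{lem:alternation}) actually establishes a stronger property: the size of the equivalent bounded-alternation formula is elementary in the \emph{quantifier rank} $q$ of $\phi$, independently of $|\phi|$. Concretely, the batch size $p=p(d,q,m)$ in Lemma~\ref{lem:alternation} depends only on $q$ and $m$ (and $m$ is elementary in $q$ once $r(d,q)$ is fixed), so the normalized $p$-batched sentence produced by Lemma~\ref{lem:batched-to-alt} has size elementary in $q$. Using this bound directly gives the required elementary-size pool $\Delta$ of formulas (alternation rank $\le 3d$, size $\le g(q,k)$ for elementary $g$) such that $\equiv_\Delta$ refines $\equiv_q$, with no reference to canonical representatives at all. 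If you replace your false claim by this observation, your proof becomes correct and coincides with the paper's.
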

\begin{proof}
    By the ``moreover'' part of  Theorem~\ref{thm:alternations}
    there is an elementary function $f\from\N\to\N$
    such that every formula $\phi(x)$ 
    is equivalent to a formula $\phi'(x)$ 
    of alternation rank $3d$, with $|\phi'|\le f(|\phi|)$.

    Consider a parameter $k$ and let $\Delta$ denote the set of all formulas $\psi(x)$
over the signature of $\CC[k]$
of alternation rank at most $3d$ and size at most $f(|\phi|)$, up to syntactic equivalence.
Recalling that $d$ is fixed, the same standard argument as the one used in the proof of Lemma~\ref{lem:batched-to-alt} allows us to compute the set $\Delta$ in elementary time, given $k$ and $f(|\phi|)$. Thus, since $f$ is elementary, $\Delta$ can be computed in
elementary time, given~$k$ and $\phi$.

    Fix $(G,v_1,\ldots,v_k)\in\CC[k]$.
    For two vertices $u,v$ of $G$,
 write $u\equiv_\Delta v$ if  $G\models \psi(u)\iff \psi(v)$ holds for every formula $\psi(x)\in \Delta$ (where the constant symbols $c_1,\ldots,c_k$ occurring in $\psi$ are interpreted as $v_1,\ldots,v_k$).
By Theorem~\ref{thm:alternations}, we have that 
$u\equiv_\Delta v$ implies $u\equiv_q v$.
The equivalence relation $\equiv_\Delta$ has at most $2^{|\Delta|}$ equivalence classes, and $|\Delta|$ is bounded by an elementary function of $|\phi|$.

Let $h$ be as in Theorem~\ref{thm:dkt}.
By the `moreover' part of Proposition~\ref{prop:be}, for some $s \in \N$ and some polynomial $P$, we have $\tilde\nabla_r(G) \leq \tower(s, P(r))$ for all $r \in \N$. 
Hence Lemma~\ref{lem:dkt-constant} applies and so $h(\psi)$ is bounded by an elementary function of $|\phi|$ and $k$.
Let $c\coloneqq \max_{\psi\in\Delta}h(\psi)$.
Then $c$ is also bounded by an elementary function of $|\phi|$ and $k$.

It follows from Theorem~\ref{thm:dkt} applied to each of the formulas $\psi(x)$, that given $(G,v_1,\ldots,v_k)$ with $G\in\CC$ in time $\textit{poly}(c\cdot |\Delta|)\cdot |G|$ we can compute a data structure that answers in time $\Oh(c)$ queries of the form: given $v\in V(G)$ and $\psi(x)\in\Delta$, does $G\models \psi(v)$?
(Again, the constants $c_1,\ldots,c_k$ occurring in $\psi$ are interpreted as $v_1,\ldots,v_k$).

Therefore, by iterating through all vertices $v\in V(G)$ and querying the above data structure 
for each $\psi(x)\in\Delta$,
we can compute in time $\textit{poly}(c\cdot 2^{|{\Delta}|})\cdot |G|$ 
a (minimal) set $R$ of representatives of the equivalence 
relation $\equiv_\Delta$.

 To summarize, we have that:
 \begin{itemize}
    \item The set $\Delta$ can be computed in elementary time, given $q$. 
    \item $|R|\le 2^{|\Delta|}$,
    \item $R$ can be computed in time $\textit{poly}(c\cdot 2^{|{\Delta}|})\cdot |G|$, given $(G,v_1,\ldots,v_k)\in\CC[k]$ and $\Delta$,
    \item $R$ contains some representative of every $\equiv_q$-equivalence class.
 \end{itemize}

 Therefore, $R$ is a rank $q$ selection 
 of order at most $2^{|\Delta|}$.
 Moreover, it follows from the above that the algorithm computing $R$ described above, is a selector for $\CC$ 
 of complexity bounded by some elementary function $g\from\N\times\N\to \N$.
\end{proof}

 Theorem~\ref{thm:mc} 
follows by combining Lemma~\ref{lem:selectors-exist} with Lemma~\ref{lem:selectors-suffice}.

\begin{proof}[Proof of Theorem~\ref{thm:mc}]
Let $\CC$ be a class of graphs of elementary tree rank at most $d$.
By Lemma~\ref{lem:selectors-exist}, 
 the assumptions of Lemma~\ref{lem:selectors-suffice} are satisfied for an elementary function $g$.
 This yields an algorithm that tests whether a given structure $G\in\CC$ satisfies a given sentence $\phi$ of quantifier rank $q$
 in time $\Oh(g(q,q)^q\cdot |\phi|)\cdot |G|$.
 Since $g(q,q)^q\cdot |\phi|$ is elementary in $\phi$,
 this proves that model checking is elementarily-fpt on $\CC$.
\end{proof}

\section{Lower bounds}\label{sec:lowerbound}

In this section we provide our complexity lower bound for classes of large tree rank. To state the result in a detailed form, we need the following definition that provides a finer perspective on the concept of elementary running time.
For $d\in\N$ and a graph class $\CC$, we say that $\CC$ admits \emph{$d$-fold elementary model checking}
if there is an algorithm that for a given $G\in \mathcal{C}$ and sentence $\varphi$,
decides whether $G\models \varphi$ in time $\tower(d,c\cdot |\varphi|)\cdot {|V(G)|}^{c'}$, for some constants $c$ and~$c'$.
We prove the following result, which immediately implies Theorem~\ref{thm:lowerbound-intro}.
\begin{theorem}\label{thm:lower-bound}
Assuming ${\sf FPT}\neq {\sf AW}[*]$, every monotone graph class with $d$-fold elementary model checking has tree rank at most~$d+5$.
\end{theorem}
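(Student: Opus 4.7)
The plan is to prove the theorem by contradiction: assume that a monotone class $\CC$ has tree rank at least $d+6$ and yet admits $d$-fold elementary model checking, and derive ${\sf FPT}={\sf AW}[*]$. The argument has three ingredients: a graded strengthening of the Frick--Grohe lower bound on trees, an extraction of canonical shallow subdivisions from $\CC$, and a first-order interpretation to transfer the lower bound.

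First I would extract from Frick and Grohe~\cite{frick2004complexity} a \emph{depth-parameterised} version of their non-elementary lower bound: assuming ${\sf FPT}\neq{\sf AW}[*]$, there is a universal constant $s$ (tracking the bookkeeping yields $s=5$) such that for every $e\ge s+1$, FO model checking on the class of rooted trees of depth $e$ cannot be solved in time $\tower(e-s-1,c|\varphi|)\cdot |T|^{c'}$ for any constants $c,c'$. This hierarchy is implicit in their construction, which encodes an ${\sf AW}[*]$-hard tiling problem via a tower-counter realised inside a rooted tree, with each additional level of the tree simulating one extra level of the tower; the fixed overhead $s$ accounts for the gadgets used to code bit-positions, equality, and the outer reduction envelope. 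Extracting this precise hierarchy from \cite{frick2004complexity} is the technical heart of the argument and is the step I expect to be the main obstacle.

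Next, I would invoke the tree rank hypothesis. Since $\CC$ has tree rank at least $d+6$, by Definition~\ref{def:rank} there is some $r\in\N$ with $\cal T_{d+6}\subseteq\topminors_r(\CC)$, and monotonicity of $\CC$ yields that for every rooted tree $T$ of depth $d+6$ some ${\le}r$-subdivision of $T$ belongs to $\CC$. Applying Lemma~\ref{lemma:canonical-subd-tree} to a sufficiently branching tree of depth $d+6$, and iterating over the finitely many tuples $(r_1,\ldots,r_{d+5})\in\{1,\ldots,r+1\}^{d+5}$, one can pick a tuple $(r_1,\ldots,r_{d+5})$ such that the class of all $(r_1,\ldots,r_{d+5})$-subdivisions of trees of depth $d+6$ is contained in $\CC$. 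Now, exactly as in the proof of Lemma~\ref{lem:converse_alternations}, I would build a first-order interpretation recovering the rooted tree $T$ inside the graph $T^{(r_1,\ldots,r_{d+5})}$: the principal vertices form the universe, the root is identified as the centre of a longest path (after duplicating a leaf to ensure at least two deepest leaves), and the parent relation is expressed by existentially guessing a path of the appropriate fixed length $r_i$ through degree-two vertices. The size of the interpretation depends only on $r$ (hence only on $\CC$), so every FO sentence $\varphi$ on rooted trees translates to an FO sentence $\varphi'$ on graphs with $|\varphi'|\le c_r|\varphi|$ and $T\models\varphi\iff T^{(r_1,\ldots,r_{d+5})}\models\varphi'$.

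Finally, combining the reduction with the hypothetical $d$-fold elementary model checking algorithm for $\CC$ yields an algorithm that, on every rooted tree $T$ of depth $d+6$ and every sentence $\varphi$, decides $T\models\varphi$ in time $\tower(d,cc_r|\varphi|)\cdot ((r+1)|T|)^{c'}$. Setting $e:=d+6$ and $s:=5$, this contradicts the depth-parameterised Frick--Grohe bound of Step~1 and therefore ${\sf FPT}={\sf AW}[*]$, as required. Apart from the quantitative extraction of the depth hierarchy in Step~1, the remaining ingredients reuse the interpretation-based machinery developed earlier in the paper, essentially repackaging the proof of Lemma~\ref{lem:converse_alternations} into an algorithmic reduction.
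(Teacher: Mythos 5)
Your plan is structurally close to the paper's, but it hinges on an unproved intermediate result in Step~1, and that is exactly where the real work of the paper's proof is done — so the gap you flag as ``the main obstacle'' is genuine, not a bookkeeping issue.

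The paper does not attempt to extract a ``depth-parameterised Frick--Grohe'' lower bound for trees of each fixed depth $e$. Instead, its Lemma~\ref{lemma:lower-bound:reduction} reduces model checking on \emph{arbitrary} graphs to model checking on $\CC$: given an $n$-vertex graph $G$, it chooses $m$ by the threshold condition $\tower(d+2,m-1) < n \le \tower(d+2,m)$, encodes $G$ as a forest over $[m]$ of depth $d+4$ (Lemma~\ref{lem:lowerbound:constr}), uncolors it to a forest of depth $d+5$ with no degree-$2$ nodes (Lemma~\ref{lem:uncoloring-forests}), embeds an ${\le}r$-subdivision of that forest in $\CC$ using tree rank $\ge d+6$ and monotonicity, and then does the tower arithmetic (using that $\tower(d+1,m^2) \le \tower(d+2,m-1) < n$) to show the composite algorithm is $(d+1)$-fold elementary on \emph{all} graphs. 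That contradicts the already-cited Frick--Grohe result that no $d'$-fold elementary model checking exists for all graphs. The crucial conceptual move is the $n$-dependent choice of $m$, which is what converts the non-elementary general lower bound into a bounded-depth statement; proving your Step~1 hierarchy would require essentially the same encoding machinery, so Step~1 cannot be imported as a black box from \cite{frick2004complexity}.

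Two smaller remarks. First, your invocation of Lemma~\ref{lemma:canonical-subd-tree} to get canonical $(r_1,\ldots,r_{d+5})$-subdivisions is unnecessary here: for the algorithmic reduction it suffices to take \emph{any} ${\le}r$-subdivision $F''$ in $\CC$ and rewrite $\psi'$ into a sentence $\psi''$ that quantifies over principal vertices and interprets edges as short degree-$2$ paths of length \emph{at most} $r$; fixed per-level subdivision lengths are only needed for the alternation-rank lower bound (Lemma~\ref{lem:converse_alternations}), where one must match the syntax of the Chandra--Harel formula. Second, your claim that ``tracking the bookkeeping yields $s=5$'' is an assertion without an argument; the paper's $d+5$ offset falls out explicitly from the $+2$ (shifting the tower index in the $m$-threshold), $+2$ (depth increase in Lemma~\ref{lem:lowerbound:constr}) and $+1$ (uncoloring), and these offsets are not visible without actually carrying out the reduction.
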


Let $m\in\mathbb{N}$ with $m\geq 1$.
A \emph{tree of depth $1$ over $[m]$} is an element of $[m]$.
For $i > 1$, a \emph{tree of depth $i$ over $[m]$} is a rooted tree in which the subtrees rooted at the children of the root are non-isomorphic trees of depth $i-1$ over $[m]$.
For $i \ge 1$, a \emph{forest of depth $i$ over $[m]$} is a collection of pairwise non-isomorphic trees of depth $i$ over $[m]$.
Recall that we view rooted forests and trees as structures equipped with the parent function. Forests and trees over $[m]$ are viewed as structures on the signature consisting of the unary function \emph{parent} and $m$ unary relation symbols (that correspond to colors on the leaves).
We refer to such signature as a signature of \emph{rooted forests with $m$ colors}.
To shift from the signature of rooted forests with colors to the signature of (uncolored and unrooted) forests, we will use the following easy lemma.
By \emph{depth} of an unrooted tree $T$, is the minimum $k$ 
for which there is a way to root $T$ such that the obtained rooted tree has depth $k$.
The \emph{depth of an unrooted forest} is the maximum depth of its components.

\begin{lemma}
\label{lem:uncoloring-forests}
    Fix $d,m\in\mathbb{N}$ with $d\ge 1$ and a signature $\Sigma$ of rooted forests with $m$ colors.
    For every $\Sigma$-sentence $\varphi$ there is a sentence $\varphi'$ over the signature of graphs, such that for every forest $F$ of depth at most $d$ over $[m]$, there is a forest $F'$ (over the signature of graphs) such that $F\models \phi$ if and only if $F'\models\phi'$.
    Moreover, $|\phi'|=\Oh(m\cdot d\cdot  |\phi|)$, $|F'|=\Oh(m\cdot |F|)$, and $F'$ has depth at most $d+1$ and no nodes of degree $2$.
\end{lemma}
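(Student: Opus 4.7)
The plan is to construct $F'$ by attaching carefully chosen pendant leaves to each vertex of $F$, so that the extra degree information encodes both the parent/child orientation and the leaf colors of $F$, and then to design a first-order interpretation $I$ from the signature of graphs back to the signature of rooted $m$-colored forests with $I(F')\cong F$. The desired formula $\varphi'$ will then be obtained from $\varphi$ by the standard interpretation-based rewriting.

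For the construction, I would attach to each vertex $v$ of $F$ a number $N(v)$ of fresh pendant vertices, chosen according to the role of $v$: $N(v)=c+1$ for a non-root leaf of color $c$, $N(v)=m+2$ for a non-root internal vertex, $N(v)=m+3$ for an internal root, and $N(v)=c+m+3$ for a singleton-tree root of color~$c$. The resulting $F'$ is a forest, has size $\Oh(m\cdot|F|)$, and admits a rooting of depth at most $d+1$ by placing each pendant one level below its attachment point. The four pendant-count values fall into the pairwise disjoint intervals $\{2,\ldots,m+1\}$, $\{m+2\}$, $\{m+3\}$, $\{m+4,\ldots,2m+3\}$, so the role and color of each original vertex can be read off from its pendant count. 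A direct degree computation shows that every original vertex ends up with degree at least $3$ while every pendant has degree exactly $1$, so $F'$ has no vertex of degree~$2$.

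For the interpretation $I$, the domain predicate selects the vertices of $F'$ whose degree is different from $1$, namely the originals. The number of pendant neighbors of an original vertex is first-order definable, which lets me define predicates $\mathrm{isLeaf}$, $\mathrm{isRoot}$ and the color predicates $C_c$ directly. The delicate ingredient is the parent function. For a non-root $x$, I would say that $\mathrm{parent}(x,y)$ holds when $y$ is an original neighbor of $x$ in $F'$ and there exists a walk in $F'$ of length at most $d$ from $y$ to some root-vertex that avoids $x$; for a root $x$, the condition becomes $x=y$. Because pendants have degree $1$, walks in $F'$ between original vertices lie entirely in the original tree, so this correctly picks out the unique original neighbor of $x$ on the $F$-path from $x$ to its root. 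The walk would be encoded by a single existential block of $d+1$ variables with stuttering $w_i=w_{i+1}$ and a disjunction over the position at which the walk first reaches a root-vertex, yielding a parent-formula of size $\Oh(m+d)$.

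Finally, $\varphi'$ is obtained from $\varphi$ by the standard rewriting: all quantifications are relativized to the domain predicate, and every atomic formula involving $\mathrm{parent}$ or a color predicate is replaced by its defining formula from $I$. A routine count yields $|\varphi'|=\Oh(m\cdot d\cdot|\varphi|)$, and since $I(F')\cong F$ one obtains $F\models\varphi\iff F'\models\varphi'$. The main technical obstacle is the definition of the parent function: in an unrooted graph one must first identify the root of each tree, which the pendant-count convention makes structurally possible, and then express ``$y$ lies on the path from $x$ to the root'' without general reachability, which the bounded depth of $F$ reduces to the existence of a bounded-length walk avoiding $x$.
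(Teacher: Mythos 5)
Your proof is correct and follows essentially the same approach as the paper: encode the root/internal/leaf roles and leaf colors via pendant counts, recover the parent function as a bounded-length path (walk) to a root vertex, and obtain $\varphi'$ by first-order interpretation rewriting. The paper does this in two stages (first enrich $F$ with two extra color predicates marking roots and internal non-leaf nodes, then uncolor via pendants), whereas you fold everything into a single pendant-count scheme; your scheme is arguably cleaner since it places the four roles into explicitly disjoint intervals, cleanly handling the corner case of a singleton-tree root that is simultaneously a root and a colored leaf, which the paper glosses over.
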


\begin{proof}
Let $C_1,\ldots,C_m$ be the interpretations in $F$ of the unary predicate symbols of $\Sigma$ and keep in mind that, by construction, every leaf of $F$ belongs to exactly one $C_i$.

We first define the unrooted forest $\hat{F}$ with $m+2$ colors by introducing an additional unary predicate $C_{m+1}$ that is interpreted as the root
and a unary predicate $C_{m+2}$ that is interpreted as the set of all nodes of $F$ that are neither the root nor leaves.
Observe that we can define a sentence $\hat{\phi}$ obtained from $\phi$ by encoding the fact that a node $v$ is a parent of a node $u$ by asking the existence of an edge between $u$ and $v$ and a path from $u$ to a node in $C_{m+1}$ that contains $v$ (keep in mind that in $\hat{F}$, $C_{m+1}$ is the singleton consisting of the root of $F$). It is easy to see that $F\models \phi$ if and only if $\hat{F}\models \hat{\phi}$ and that the size of $\hat{\phi}$ is $\Oh(d\cdot |\phi|)$.

To obtain an (uncolored) forest $F'$ from $\hat{F}$, we add $i+2$ pendant nodes to each node $v$ of $\hat{F}$ that belongs to $C_i$, for each $i\in\{1,\ldots,m+2\}$.
Observe that $|F'|=\Oh(m\cdot |F|)$ and $F'$ has depth at most $d+1$ and no nodes of degree $2$.
The sentence $\phi'$ is obtained from $\phi$ by restricting the quantification to nodes of degree at least $3$ and by encoding whether a node belongs to $C_i$ by asking whether it is adjacent to $i+2$ nodes of degree $1$.
Observe that $\hat{F}\models \hat{\phi}$ if and only $F'\models \phi'$ and that $|\phi'|=\Oh(m\cdot|\hat{\phi}|)=\Oh(m\cdot d\cdot |\phi|)$.
\end{proof}

Observe that for every fixed $d,m\in\mathbb{N}$ with $d,m \ge 1$, there are precisely $\tower(d-1,m)$ non-isomorphic trees of depth $d$ over $[m]$ and precisely $\tower(d,m)$ non-isomorphic forests of depth $d$ over~$[m]$.
Given a forest $F$ and a node $a\in V(F)$, we denote by $T_a$ the subtree of $F$ rooted at $a$.

 The following proposition holds. We omit the proof, as it is analogous to \cite[Lemma 30]{frick2004complexity}, which proves the special case where $m=2$.

\begin{proposition}\label{prop:encoding}
Let $d,m\in\mathbb{N}$ with $d,m\ge 1$.
There is a formula $\xi_{d,m}(x,y)$ of size $\Oh(d m)$ such that for every forest $F$ of depth $d$ over $[m]$, and every $a,b\in V(F)$, the subtrees $T_a$ and $T_b$ of $F$ rooted at $a$ and $b$ are isomorphic if and only if $F\models \xi_{d,m}(a,b)$.
\end{proposition}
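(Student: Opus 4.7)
The plan is to build $\xi_{d,m}$ by induction on $d$. In the base case $d=1$, a tree of depth $1$ over $[m]$ is a single leaf labelled by some $i\in[m]$, so the natural choice is
$$\xi_{1,m}(x,y) := \bigwedge_{i=1}^m (C_i(x) \leftrightarrow C_i(y)),$$
of size $\Oh(m)$, which correctly expresses that two one-vertex subtrees are isomorphic iff their unique vertices carry the same color.

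For the inductive step $d>1$, having already constructed $\xi_{d-1,m}$ of size $\Oh((d-1)m)$, I will define $\xi_{d,m}(x,y)$ as the conjunction of the color check $\bigwedge_{i=1}^m (C_i(x)\leftrightarrow C_i(y))$ and a single quantifier block $\forall z\,\exists z'\,\chi(x,y,z,z')$, where the quantifier-free matrix $\chi$ says: \emph{if $z$ is a child of $x$ or of $y$ (i.e., $\mathrm{parent}(z)\in\{x,y\}$ and $z\notin\{x,y\}$), then $z'$ is a child of the other element of $\{x,y\}$ and $\xi_{d-1,m}(z,z')$ holds.} Explicitly, $\chi$ can be written as
$$(\mathrm{parent}(z){=}x\wedge z{\neq}x)\vee(\mathrm{parent}(z){=}y\wedge z{\neq}y)\;\to\;\bigl(\Psi(x,y,z,z')\wedge\xi_{d-1,m}(z,z')\bigr),$$
where $\Psi$ is the quantifier-free formula asserting that $z'$ is a child of $y$ precisely when $z$ is a child of $x$, and vice versa. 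The key design choice is to handle \emph{both} directions of child-matching (children of $x$ paired with children of $y$, and the reverse) inside a single quantifier alternation, so that $\xi_{d-1,m}$ appears only once in the definition of $\xi_{d,m}$. This yields the additive recurrence $|\xi_{d,m}|=|\xi_{d-1,m}|+\Oh(m)$, summing to the desired $\Oh(dm)$ bound.

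Correctness will follow by a straightforward induction on $d$. The forward direction (if $T_a\cong T_b$ via an isomorphism $\pi$ then $\xi_{d,m}(a,b)$ holds) is immediate: for $z$ a child of $a$ take $z':=\pi(z)$, for $z$ a child of $b$ take $z':=\pi^{-1}(z)$, and for all other $z$ the antecedent of the implication inside $\chi$ is false so $z'$ can be chosen arbitrarily. The backward direction is where the main work lies: from the witness function $z\mapsto z'$ promised by $\exists z'$, I need to assemble an isomorphism $T_a\to T_b$. This step crucially uses the defining property of trees over $[m]$ --- that subtrees rooted at siblings are pairwise non-isomorphic --- which, combined with the inductive hypothesis ($\xi_{d-1,m}(z,z')$ implies $T_z\cong T_{z'}$), forces the witness map to restrict to a well-defined bijection between the children of $a$ and those of $b$. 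The outer color check is what allows $\xi_{d,m}$ to reject pairs $(a,b)$ where $T_a$ and $T_b$ have different depths (for instance when $a$ is a leaf and $b$ is internal): in such cases $\forall z$ would otherwise be satisfied vacuously.

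The main obstacle I expect is precisely this size bound. A naive formulation would use two separate quantifier blocks --- one quantifying children of $x$ and one quantifying children of $y$ --- each carrying its own recursive call to $\xi_{d-1,m}$. This leads to the doubling recurrence $|\xi_{d,m}|=2|\xi_{d-1,m}|+\Oh(m)$, hence $|\xi_{d,m}|=\Omega(2^d m)$, exponentially worse than the $\Oh(dm)$ target demanded by Proposition~\ref{prop:encoding} (and needed for the downstream lower bound of Theorem~\ref{thm:lower-bound}). The symmetric single-quantifier trick above, whose correctness hinges on the pairwise non-isomorphism of sibling subtrees guaranteeing essential uniqueness of the witness $z'$, is what overcomes this obstacle. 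Once this design is in place, the remaining verification is routine.
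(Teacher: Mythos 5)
Your proposal is correct and matches the paper's intended approach: the paper omits the proof and cites Lemma~30 of Frick and Grohe~\cite{frick2004complexity} (the $m=2$ case), which uses exactly this symmetric single-quantifier-block trick with one recursive call, whose soundness in the backward direction rests on the pairwise non-isomorphism of sibling subtrees, giving the additive recurrence and the $\Oh(dm)$ bound. One small slip in your closing remark: when $a$ is a leaf and $b$ has children, $\forall z$ is \emph{not} satisfied vacuously (the children of $b$ already falsify the inner implication since $a$ has no child to serve as $z'$), so the color check is really needed to separate two childless nodes of different color status (e.g.\ a colored leaf from an uncolored childless node, or two leaves of different colors), but this does not affect the correctness of the construction.
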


Using Proposition~\ref{prop:encoding}, we prove the following:

\begin{lemma}\label{lem:lowerbound:constr}
Let $d,n,m\in\mathbb{N}$ with $d,m\ge 1$ and $\tower(d,m-1)  <  n \le \tower(d,m)$.
For every graph $G$ on $n$ vertices and every sentence $\varphi$,
there is a forest $F$ and a sentence $\psi$ such that the following hold:
\begin{itemize}
\item $F$ is a forest of depth $d+2$ over $[m]$,
\item $F$ has $\Oh(n^{2d})$ vertices,
\item $\psi$ has size $\Oh(|\varphi|\cdot d m)$, and
\item $G\models \varphi$ if and only if $F\models \psi$.
\end{itemize}
\end{lemma}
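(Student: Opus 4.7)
The plan is to encode the graph $G$ by a forest of depth $d+2$ over $[m]$ using one distinguished subtree per vertex and one per edge, and then rewrite $\varphi$ into a sentence $\psi$ that recovers $G$ from this encoding through tree-isomorphism tests.

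First I will use the fact that there are exactly $\tower(d,m)\geq n$ pairwise non-isomorphic trees of depth $d+1$ over $[m]$ to fix an injection $i\mapsto T_i$ from $\{1,\ldots,n\}$ into (representatives of isomorphism classes of) such trees. For each vertex $v_i$ I form a \emph{vertex gadget} $\widehat{T}_i$ by adding a fresh root whose unique child has subtree $T_i$; for each edge $v_iv_j\in E(G)$ I form an \emph{edge gadget} $E_{ij}$ by adding a fresh root with two children whose subtrees are $T_i$ and $T_j$ (well-defined because $T_i\not\cong T_j$ for $i\neq j$). All these gadgets are trees of depth $d+2$ over $[m]$; since vertex gadgets have a root of degree $1$ while edge gadgets have a root of degree $2$, and since distinct edges (resp.\ distinct vertices) produce distinct unordered pairs of $T_i$'s (resp.\ distinct $T_i$'s), the gadgets are pairwise non-isomorphic. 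Hence their union $F$ is a valid forest of depth $d+2$ over $[m]$.

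For the size bound I will observe that a tree of depth $d+1$ over $[m]$ has at most $\poly(n)$ vertices: at level $i$ it carries at most $\tower(i-1,m)$ subtrees (one per non-isomorphic type), so an easy induction using $n\le\tower(d,m)$ and $n>\tower(d,m-1)$ yields a polynomial bound on the tree size. With $n$ vertex gadgets and at most $\binom{n}{2}$ edge gadgets, each of size polynomial in $n$, the total size fits within $\Oh(n^{2d})$.

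The sentence $\psi$ is constructed by a straightforward rewriting of $\varphi$: every quantifier is restricted to range over vertex gadget roots --- i.e., roots of trees in $F$ that have exactly one child, which is easily expressible with the parent function --- equalities $x=y$ are left unchanged (the bijection preserves equality), and each edge atom $E(x,y)$ is replaced by the formula asserting the existence of a root $z$ in $F$ with exactly two children $a,b$ such that, up to swapping $a$ and $b$, the subtree at $a$ is isomorphic to the subtree $T_x$ below $x$ and the subtree at $b$ is isomorphic to the subtree $T_y$ below $y$. The subtree-isomorphism tests use the formula $\xi_{d+1,m}$ from Proposition~\ref{prop:encoding}, each of size $\Oh(dm)$, which gives $|\psi|=\Oh(|\varphi|\cdot dm)$. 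The equivalence $G\models\varphi \iff F\models\psi$ then follows by a routine structural induction: the map $v_i\mapsto$ root of $\widehat{T}_i$ is a bijection between $V(G)$ and the set of vertex gadget roots of $F$, and by construction $v_iv_j\in E(G)$ holds precisely when $F$ contains an edge gadget whose two root-children carry subtrees isomorphic to $T_i$ and $T_j$, which is exactly what the translation of $E(x,y)$ captures. The only subtle point I expect is the depth matching between vertex and edge gadgets, which is handled by the single-child padding of vertex gadgets; this padding simultaneously provides a syntactic way to identify vertex gadgets inside $\psi$.
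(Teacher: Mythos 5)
Your proposal is correct and follows essentially the same approach as the paper's proof: both encode each vertex $v_i$ by a fresh-rooted tree with a single child carrying a distinct depth-$(d{+}1)$ tree $T_i$ over $[m]$, encode each edge by a fresh-rooted tree with two children carrying $T_i$ and $T_j$, take $F$ to be the disjoint union of all gadgets, and rewrite $\varphi$ by relativizing quantifiers to single-child roots and replacing $E(x,y)$ with an isomorphism test built from $\xi$. Your size argument is stated a bit more loosely ("polynomial bound") than the paper's, which explicitly bounds the branching in any $T_v$ by $n$ to get $|T_v| = \Oh(n^d)$ and hence $|F| = \Oh(n^{d+2}) \le \Oh(n^{2d})$; making that step quantitative is the only thing you would need to add, but the underlying idea is the same.
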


\begin{proof}
We map each vertex $v\in V(G)$ to a tree $T_v$ of depth $d+1$ over $[m]$.
Since  $n \le \tower(d,m)$, we can choose all these trees to be non-isomorphic.
For every $v\in V(G)$,
we consider the tree $H_v$ obtained by
creating a new root $r_v$ and making it the parent of the root of $T_v$.
Also, for every edge $e=uv\in E(G)$,
we consider the tree $H_e$ obtained by taking a copy $T_u^e$ of $T_u$ and a copy $T_v^e$ of $T_v$, creating a new root $r_e$, and making it the parent of the roots of $T_u^e$ and $T_v^e$.
Then, we set $F$ to be the disjoint union of all the trees~$H_x$, for $x\in V(G)\cup E(G)$. Thus, $F$ is a rooted forest over $[m]$.

We now analyze the size of each tree $H_v$.  By our assumption we have $\tower(d,m-1) <  n$, and we claim that this implies that no tree $T_v$ can have a vertex with branching $n$. Suppose towards a contradiction that there is a vertex $z$ of $T_v$ of branching at least $n$.
Let $i\ge 2$ be the depth of the subtree $T_z$ of $T_v$ rooted at~$z$. Since the branching of $z$ is at most the number of non-isomorphic trees of depth $i-1$ over $[m]$, we have that $n \le \tower(i-2,m) \le \tower(d-2,m)$.
But for large enough $n$ and $m\geq 2,$ we have that $\tower(d-2, m) < \tower(d, m-1)$.
This would imply that $n < n$, a contradiction. Consequently, each $T_v$ can have at most $n^d$ nodes.
Therefore, since each $T_u$ has size $\Oh(n^d)$, $F$ has size $\Oh(n^{2d})$.

To obtain $\psi$ from $\varphi$,
we first restrict
the quantification to the roots of the trees $H_v, v\in V(G)$ in $F$; the roots of the trees $H_v, v\in V(G)$ are distinguished from the roots of the trees $H_e, e\in E(G)$ by the number of their children (the root of each $H_v$ has exactly one child and the root of each $H_e$ has exactly two children).
We then use the formula $\xi_{d,m}(x,y)$ from Proposition~\ref{prop:encoding} to replace every atomic formula $E(x,y)$ in $\varphi$ by the formula that checks 
the existence of an edge $e=uv$ by asking the following:
the existence of a vertex $r_e$ (the root of the candidate $H_e$ encoding $e$)
that has two children 
$z_v$ and $z_u$ whose subtrees (in $H_e$) are pairwise isomorphic to $T_v$ and $T_u$.
It is easy to see that since $\xi_{d,m}$ has size $\Oh(dm)$,
the obtained formula $\psi$ has size $\Oh(|\varphi|\cdot d m)$, and $G\models \varphi$ if and only if $F\models \psi$.
\end{proof}

\begin{lemma}\label{lemma:lower-bound:reduction}
Let $d\in\mathbb{N}$.
Let $\mathcal{C}$ be a monotone class of tree rank at least $d+6$.
If $\mathcal{C}$ admits $d$-fold elementary model checking, then the class of all graphs admits $(d+1)$-fold elementary model checking.
\end{lemma}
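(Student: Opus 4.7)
The plan is to give a polynomial-time reduction from model checking on arbitrary graphs to model checking on~$\mathcal{C}$. Given a graph $G$ on $n$ vertices and a first-order sentence $\varphi$, I will construct $\tilde F\in\mathcal{C}$ and a sentence $\hat\varphi$ such that $G\models\varphi\iff\tilde F\models\hat\varphi$, with $|\tilde F|=\poly(n)$ and $|\hat\varphi|=\Oh(|\varphi|\cdot m^{2})$, where $m=\Oh(\log^{(d+2)} n)$. Running the assumed $d$-fold elementary algorithm for $\mathcal{C}$ on $(\tilde F,\hat\varphi)$ then decides $G\models\varphi$ in time $\tower(d,c\cdot|\hat\varphi|)\cdot\poly(n)$, which I claim is $(d+1)$-fold elementary in $|\varphi|$ and polynomial in~$n$.

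The reduction proceeds in four steps. Choose $m$ minimal with $n\le\tower(d+2,m)$, so that $m=\Oh(\log^{(d+2)}n)$, and apply Lemma~\ref{lem:lowerbound:constr} with depth parameter $d+2$ to obtain a rooted colored forest $F$ of depth $d+4$ with $|F|=\poly(n)$ and a sentence $\psi$ of size $\Oh(|\varphi|m)$ satisfying $G\models\varphi\iff F\models\psi$. Apply Lemma~\ref{lem:uncoloring-forests} to produce an uncolored, unrooted forest $F'$ of depth $d+5$ with no vertex of degree~$2$, and a sentence $\varphi'$ of size $\Oh(|\varphi|m^{2})$ with $F\models\psi\iff F'\models\varphi'$. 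Since $\mathcal{C}$ has tree rank at least $d+6$, some $r\in\N$ witnesses that every tree of depth $d+6$ has an ${\le}r$-subdivision in $\mathcal{C}$ (by monotonicity); combining this with Lemma~\ref{lemma:canonical-subd-tree} and a pigeonhole over the $(r+1)^{d+5}$ possible subdivision patterns, fix a pattern $(r_{1},\ldots,r_{d+5})\in\{1,\ldots,r+1\}^{d+5}$ such that the $(r_{1},\ldots,r_{d+5})$-subdivision $S_{k}$ of $T^{d+6}_{k}$ lies in $\mathcal{C}$ for arbitrarily large~$k$. Since every component of $F'$ has depth at most $d+5$ and branching $\poly(n)$, embed the components as pairwise disjoint subtrees of $T^{d+6}_{k}$ rooted at distinct children of the root of $T^{d+6}_{k}$, for $k$ polynomial in~$n$. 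Subdividing each edge of $F'$ at embedded depth $j$ into a path of length $r_{j+1}$ gives a graph $\tilde F$ which is a subgraph of $S_{k}$, hence $\tilde F\in\mathcal{C}$ by monotonicity, with $|\tilde F|=\poly(n)$. Finally, obtain $\hat\varphi$ from $\varphi'$: the principal vertices of $\tilde F$ (those of degree $\ne 2$) coincide with vertices of $F'$, and two principal vertices are adjacent in $F'$ iff they are joined in $\tilde F$ by a path of length $r_{j}$, for some $j\in\{2,\ldots,d+5\}$, whose internal vertices all have degree~$2$. Relativising the quantifiers of $\varphi'$ to principal vertices and rewriting each atom $E(x,y)$ as the corresponding bounded disjunction of existentially quantified paths then yields $\hat\varphi$ of size $\Oh(|\varphi'|)=\Oh(|\varphi|m^{2})$, the constants absorbing the fixed values of $d$ and~$r$.

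For the running-time analysis, the model checking algorithm on $\mathcal{C}$ runs in time $\tower(d,\Oh(|\varphi|m^{2}))\cdot\poly(n)$. Using $x\le 2^{x}$ and $m^{2}\le 2^{2m}$ yields $|\varphi|m^{2}\le 2^{\Oh(|\varphi|+m)}$, so $\tower(d,\Oh(|\varphi|m^{2}))\le\tower(d+1,\Oh(|\varphi|+m))$. A case split completes the bound: if $|\varphi|\ge m$, the right-hand side is $\tower(d+1,\Oh(|\varphi|))$, and if $|\varphi|<m=\Oh(\log^{(d+2)}n)$, then $\tower(d+1,\Oh(m))\le\poly(n)$, which follows from the identity $\tower(d+1,\log^{(d+2)}n)=\log n$ obtained by unfolding the tower. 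Combining, the total time is $\tower(d+1,\Oh(|\varphi|))\cdot\poly(n)$, witnessing $(d+1)$-fold elementary model checking of all graphs. The delicate part of the plan is the third step: one must align the depth-based subdivision of $F'$ with the fixed pattern of $S_{k}$, and ensure that no pair of non-adjacent principal vertices of $\tilde F$ is spuriously connected by a degree-$2$ path of length $r_{j}$. This last point is precisely what the degree-$2$-free property of $F'$ guarantees: principal vertices always interrupt any long chain of degree-$2$ subdivision vertices in $\tilde F$, so degree-$2$ paths between principal vertices bijectively correspond to edges of $F'$.
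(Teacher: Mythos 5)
Your proof is correct, and it follows the same overall skeleton as the paper: apply Lemma~\ref{lem:lowerbound:constr}, then Lemma~\ref{lem:uncoloring-forests}, then realize the resulting degree-$2$-free forest $F'$ as a subdivision inside $\mathcal{C}$, rewrite the sentence to a sentence on the subdivision by relativising to principal vertices and simulating edges via short degree-$2$ paths, and finally unpack the tower. However, your third step takes a genuinely different and substantially heavier route than the paper's. You insist on a \emph{fixed} subdivision pattern $(r_1,\ldots,r_{d+5})$: you invoke Lemma~\ref{lemma:canonical-subd-tree}, pigeonhole over the $(r+1)^{d+5}$ possible patterns to extract a pattern realized for arbitrarily large $k$, and then explicitly embed the components of $F'$ into $T^{d+6}_k$ so that the pattern-driven subdivision $\tilde F$ is a subgraph of $S_k$. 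The paper does none of this. It simply observes that, because $\CC$ has tree rank at least $d+6$ and is monotone, \emph{some} ${\le}r$-subdivision $F''$ of $F'$ already lies in $\CC$, and its formula $\psi''$ interprets an edge of $F'$ as a degree-$2$ path of length \emph{at most} $r$ between principal vertices — so nothing needs to be canonicalised or pigeonholed. Your version is not wrong: the monotone-closure argument via pigeonhole is valid, exact path lengths work once $F'$ is degree-$2$-free, and the index range $j\in\{2,\ldots,d+5\}$ is consistent with rooting each component of $F'$ at level $1$ of $T^{d+6}_k$. But it pulls in extra machinery that the paper reserves for Section~\ref{sec:lower-alt} and buys nothing here; the ``at most $r$'' formulation of the adjacency test makes the canonical pattern unnecessary. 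The running-time calculation is also carried out differently but correctly: you absorb the $|\varphi|\cdot m^2$ product via $xy\le 2^{x+y}$ and then case-split on $|\varphi|\gtrless m$, whereas the paper uses the multiplicative tower inequality $\tower(h,xy)\le\Oh(\tower(h+1,x)\cdot\tower(h+1,y))$ together with $\tower(d+1,m^2)\le\tower(d+2,m-1)<n$; both yield $\tower(d+1,\Oh(|\varphi|))\cdot\poly(n)$.
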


\begin{proof}
Let $G$ be a graph on $n$ vertices and let $\varphi$ be a sentence.
We set $m\in\mathbb{N}$ such that $\tower(d+2, m-1) < n \le \tower(d+2,m)$.

We consider the forest $F$ of depth $d+4$ over $[m]$ and the sentence $\psi$ obtained from Lemma~\ref{lem:lowerbound:constr}.
Keep in mind that $F$ has 
$\Oh(n^{2(d+2)})$ vertices,
$\psi$ is a sentence in the signature of rooted trees over $[m]$ and has size $\Oh(|\varphi|\cdot (d+2)m)$ and $G\models \varphi$ if and only if $F\models \psi$.
By applying Lemma~\ref{lem:uncoloring-forests},
we obtain a sentence $\psi'$ over the signature of graphs and an (uncolored and unrooted) forest $F'$ of depth at most $d+5$ with no nodes of degree $2$,
such that $|\psi'|=\Oh(m\cdot d\cdot |\psi|) = \Oh(|\varphi|\cdot d(d+2)m^2)$, $|F'|\leq \Oh(m\cdot |F|)$, and $F\models \psi$ if and only if $F'\models\psi'$.

Observe that, since $\mathcal{C}$ has tree rank at least $d+6$,
there is some $r\in\mathbb{N}$ such that every
tree of depth $d+6$, and therefore every forest of depth at most $d+5$,
is contained in a graph in $\mathcal{C}$ as an $r$-shallow topological minor.
Hence, an ${\le}{r}$-subdivision $F''$ of $F'$  is contained as a subgraph in some graph in $\mathcal{C}$,
and the fact that $\mathcal{C}$ is monotone implies that $F''$ is itself in $\mathcal{C}$. Note that $|F''|\leq \Oh(r\cdot |F'|)\leq \Oh(rm\cdot |F|)=\Oh(rm\cdot n^{2d+4})$.

We now define a sentence $\psi''$
over the signature of graphs such that $F'\models\psi'$ if and only if $F''\models\psi''$.
To get $\psi''$, we restrict the quantification of $\psi'$ only to principal vertices (these are the vertices of degree different than $2$) and by interpreting edges as paths of length at most $r$ between principal vertices whose internal vertices are all of degree $2$.
Observe that $|\psi''| = \Oh(r\cdot |\psi'|)$.
All in all, the following properties hold for $F''$ and $\phi''$:
\begin{itemize}
\item $F''$ is a forest (over the signature of graphs) that belongs to $\mathcal{C}$ and has size $\Oh(rm\cdot n^{2d+4})\leq \Oh(n^{2d+6})$,
\item $\varphi''$ has size $\Oh(|\varphi|\cdot r d(d+2)m^2)$,
and
\item $G\models \varphi$ if and only if $F''\models \varphi''$.
\end{itemize}
Therefore, to decide whether $G\models \varphi$, we equivalently decide whether $F''\models \varphi''$,
which by assumption can be done in time $\tower(d,\tilde{c}\cdot|\varphi''|)\cdot |V(F'')|^{\tilde{c}'}$, for some constants $\tilde{c}$ and $\tilde{c}'$, which is upper-bounded by $\tower(d,c\cdot |\varphi|\cdot r d(d+2)m^2)\cdot n^{c'd}$, for some constants $c$ and $c'$.

We now study the term $\tower(d,c\cdot |\varphi|\cdot r d(d+2)m^2)$.
It is easy to show that for every $h,x,y\in\mathbb{N}$, $$\tower(h,x\cdot y)\le
\tower(h,x^2 +y^2)\le\tower(h,x^2) \cdot  \tower(h,y^2) =$$
$$= \Oh(\tower(h+1,x) \cdot \tower(h+1,y)).$$
Therefore, we have that \begin{align*}
tower(d,c\cdot |\varphi| &  \cdot r d(d+2)m^2)  \le \\
 & \Oh(\tower(d+1,c\cdot d(d+2)r\cdot |\varphi|) \cdot \tower(d+1,m^2)).
\end{align*}
This, combined with the fact that $m$ is chosen with the property that $\tower(d+2,m-1)< n$ and that $\tower(d+1,m^2)\le \tower(d+2,m-1)$, implies that deciding whether $G\models \varphi$ can be done in time
$\tower(d+1,c\cdot d(d+2)r\cdot |\varphi|) \cdot n^{c''}$ for some constant~$c''$ (depending on $d$). This means that the class of all graphs admits $(d+1)$-fold elementary model checking.
\end{proof}

Theorem~\ref{thm:lower-bound} follows by combining Lemma~\ref{lemma:lower-bound:reduction} and the following result from~\cite[Theorem~3]{frick2004complexity}.

\begin{proposition}
Assume that ${\sf FPT}\neq {\sf AW}[*]$.
There is no $d\in \mathbb{N}$ such that the class of all graphs admits $d$-fold elementary model-checking.
\end{proposition}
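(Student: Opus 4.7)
The plan is to derive this from the fact that first-order model checking on the class of all graphs is complete for the parameterized complexity class $\AW[*]$, a classical result that can be taken as the definition of $\AW[*]$ (see Flum and Grohe~\cite{flum-grohe}). Indeed, a $d$-fold elementary algorithm is, for fixed $d$, a particular instance of an $\FPT$ algorithm, so if such an algorithm existed for model checking on all graphs, then $\FPT = \AW[*]$, contradicting the hypothesis.

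More concretely, I would proceed as follows. Suppose, for contradiction, that some $d\in \N$ satisfies the following: there is an algorithm that decides $G \models \varphi$ in time $\tower(d, c\cdot|\varphi|)\cdot |V(G)|^{c'}$, for constants $c$ and $c'$ depending only on $d$. Define $f\from \N \to \N$ by $f(k) := \tower(d, c\cdot k)$; then $f$ is a computable (in fact elementary) function of $k$. Hence the same algorithm decides, in time $f(|\varphi|) \cdot |V(G)|^{c'}$, whether a given graph $G$ satisfies a given first-order sentence $\varphi$. This is by definition an $\FPT$ algorithm for the first-order model checking problem on the class of all graphs.

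Since first-order model checking on the class of all graphs is $\AW[*]$-complete, the existence of such an $\FPT$ algorithm would imply $\AW[*] = \FPT$, contradicting our standing assumption. This contradiction proves that no such $d$ exists.

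There is no real obstacle here: the only substantive ingredient is the $\AW[*]$-completeness of FO model checking on arbitrary graphs, which is a well-known fact from the foundations of parameterized complexity and is directly invoked by Frick and Grohe~\cite{frick2004complexity} for this very corollary. (Note that in their paper, Frick and Grohe actually prove the much stronger statement that the same lower bound holds already on the class of trees, via an encoding argument; but for the present proposition we only need the trivial direction that goes through $\AW[*]$-completeness on general graphs.)
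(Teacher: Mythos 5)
Your argument is correct. You observe that a $d$-fold elementary algorithm is in particular an $\FPT$ algorithm (with parameter function $f(k) = \tower(d, c\cdot k)$, which is computable), and then invoke the $\AW[*]$-completeness of $p$-MC(FO) on arbitrary (finite) graphs to conclude that such an algorithm would collapse $\AW[*]$ to $\FPT$. Both ingredients are sound: the $\AW[*]$-completeness of first-order model checking (under fpt-reductions) is a standard result in parameterized complexity, valid already for graphs via the usual encoding of structures, and the trim-to-$\FPT$ observation is immediate.

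The paper, by contrast, simply cites Frick and Grohe's Theorem 3. As you note yourself, Frick and Grohe establish a considerably stronger result --- the non-existence of $d$-fold elementary model checking already on trees (indeed even on binary strings) --- via a careful tower-compressing encoding of instances of an $\AW[*]$-hard problem into small trees. The full strength of that encoding is what the paper actually needs elsewhere (it drives the reduction in Lemma~\ref{lemma:lower-bound:reduction}), but for the proposition as literally stated, about the class of \emph{all} graphs, your shortcut through $\AW[*]$-completeness is a legitimate, and strictly more elementary, alternative: it replaces the nontrivial encoding argument by a direct appeal to completeness. The two approaches are therefore genuinely different in mechanism, though of course not in conclusion.
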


\section{Rank vs tree rank}
\label{sec:dense}

In this section we compare the notion of tree rank of a graph class with a more general notion of \emph{rank}. The notion of rank is defined using transductions, which we present here in the setting of colored graphs. A {transduction} is an operation which inputs a graph, adds some unary predicates in an arbitrary way, and applies a fixed interpretation,
thus obtaining an output graph. This is formalized below. 

Let $\Sigma$, $\Gamma$ be two signatures of colored graphs and let $p$ be the number of unary predicates in $\Gamma$. A \emph{transduction} $P: \Sigma \to \Gamma$ consists of unary predicate symbols $L_1,\ldots, L_m$ and an interpretation $I$ from $\Sigma \cup \{L_1,\ldots, L_m\}$ to $\Gamma$. We say that $H$ is an output of $P$ on $G$ if one can mark some vertices of $G$ by unary predicates $L_1,\ldots, L_m$ to obtain  a colored graph $G'$ such that $H = I(G')$. Note that in general there are multiple outputs of $P$ on any $G$, and so $P$ generates a \emph{set} of outputs on $G$ and not just a single graph. We denote the set of outputs of $P$ on $G$ by $P(G)$ and denote the fact that $H$ is an output of $P$ on $G$ by $H \in P(G)$.

Let $P\from \Sigma \to \Gamma$ be a transduction and let  $I = \{\psi_E,\vartheta,\theta_1,\ldots,\theta_p\}$ be the interpretation used in $P$.
Since in a transduction we can mark vertices in $G$ by unary predicates in an arbitrary way, we can assume that the formulas $\vartheta$ and $\theta_1,\ldots, \theta_p$ from $I$ are of the form $\vartheta(x):=L'(x)$ and $\theta_i(x):=L_i'(x)$ for some unary predicates $L',L_1',\ldots,L_p'$. Thus, at the cost of introducing $p+1$ extra unary predicate symbols, we can assume that all formulas in $P$ except $\psi_E$ are  quantifier-free. We then define the \emph{quantifier rank of a transduction} to be the quantifier rank of $\psi_E$. We also define the \emph{number of unary predicates used by $P$} to be  $m$ plus the number of unary predicates in $\Sigma$.

Transductions can be applied to graph classes component-wise: $P(\C): = \bigcup_{G \in \C} P(G)$. We say that a graph class $\C$ transduces graph class $\D$ if there exists a transduction such that $\D \subseteq P(\C)$.

\begin{definition}Fix $d\in\N$.
    The \emph{rank} of a class $\CC$ 
    is the largest number $d$ such that $\CC$ transduces the class 
    $\cal T_d$ of all trees of depth $d$.    
\end{definition}

\subsection{Rank \texorpdfstring{$\le$}{≤} tree rank + 1}

We first show that rank is a less restrictive concept than tree rank through the following result.

\begin{proposition}
\label{prop:sparse_rank_rank}
Let $\C$ be a graph class of tree rank at most $d$. Then the rank of $\C$ is at most $d + 1$.
\end{proposition}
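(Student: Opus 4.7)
The plan is to proceed by induction on $d$, using the game characterisation of tree rank (Theorem~\ref{thm:game_char}) together with Gaifman's locality theorem (Theorem~\ref{thm:Gaifman}). The base case $d = 0$ is vacuous, since a class of tree rank $0$ contains only empty graphs and hence transduces only empty graphs. For the inductive step, assume the claim for $d - 1$, let $\CC$ have tree rank at most $d \ge 1$, and suppose, for contradiction, that $\CC$ transduces $\cal T_{d+2}$ via a transduction $P$ whose edge formula $\psi_E$ has quantifier rank $q$. After invoking Gaifman's theorem and absorbing the finitely many basic local sentences of $\psi_E$ into fresh unary marks available to $P$, $\psi_E$ becomes a Boolean combination of $r_0$-local formulas with $r_0 \coloneqq 7^q$. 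Consequently, every edge of $P(G)$ is either a \emph{near} edge, with $\dist_G(x,y) \le 2 r_0$, or a \emph{far} edge whose truth depends only on the $r_0$-local isomorphism types of its endpoints individually.

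Choose $r \coloneqq 2 r_0 (d + 2)$ and, by Theorem~\ref{thm:game_char}, pick $m \in \N$ such that Splitter wins the $m$-batched splitter game of radius $r$ on every $G \in \CC$ in $d$ rounds. Define
\[
\CC' \coloneqq \bigl\{B^G_r(v) - \{u_1, \ldots, u_m\} : G \in \CC,\ v \in V(G),\ (u_1, \ldots, u_m) \text{ a winning Splitter response to } v\bigr\}.
\]
By Theorem~\ref{thm:game_char}, $\CC'$ has tree rank at most $d - 1$, so by the inductive hypothesis it does not transduce $\cal T_{d+1}$. I aim to contradict this by exhibiting a transduction $P'$ that transduces $\cal T_{d+1}$ from $\CC'$. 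Given any $T' \in \cal T_{d+1}$, I take $T \in \cal T_{d+2}$ consisting of a fresh root joined to $m + 1$ disjoint copies of $T'$, pick $G \in \CC$ with $T \in P(G)$, let $v$ be the vertex of $G$ realising the root, and apply Splitter's response at $v$ to obtain $G' \in \CC'$. The transduction $P'$ is then designed to mark $v$ together with $u_1, \ldots, u_m$ and to reuse $\psi_E$ relativised to $G'$ in order to cut out one of the $m + 1$ copies of $T'$ that is disjoint from the removals.

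The main obstacle is ensuring that some complete copy of $T'$ actually lies inside $B_r^G(v)$, since a \emph{far} edge of $T$ can connect $v$ or any descendant to a vertex of $G$ arbitrarily far away. To deal with this, I would exploit the rigid combinatorics of far edges: they form a blow-up of a graph on the finitely many $r_0$-local types appearing in $\CC$, and such a blow-up can embed the acyclic tree $T$ only in very restricted configurations before producing short cycles. An iterative pigeonhole over types at each level of $T$, combined with the acyclicity of $T$, should force at least one copy of $T'$ to be attached to $v$ via a chain of near edges of length at most $d + 1$; by the choice $r \ge 2 r_0 (d + 2)$ this entire copy lies inside $B_r^G(v)$, and the redundancy of $m + 1$ copies ensures that at least one survives removal of $\{u_1, \ldots, u_m\}$, yielding the desired $P'$ and the contradiction. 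The case $d = 1$ warrants separate attention, since there the above reduction produces an essentially trivial $\CC'$ (only empty graphs), so a more direct type-counting analysis of far edges in bounded-degree graphs is needed to rule out transductions of $\cal T_{3}$.
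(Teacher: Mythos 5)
Your proposal follows the same high-level strategy as the paper---induction on $d$, Gaifman locality applied to $\psi_E$, and the batched splitter game---but it contains two genuine gaps. The first and most serious is the claim that $\CC'$ has tree rank at most $d-1$. When Splitter, after Localiser plays $v$, removes $S=\{u_1,\dots,u_m\}$ from $B^G_r(v)$, the resulting graph has $(r,m)$-game rank at most $d-1$ \emph{only with respect to the specific radius $r$ you fixed}; tree rank $\le d-1$ would require a uniform bound over all radii, and nothing guarantees this for $\CC'$. The paper explicitly warns against exactly this pitfall (in the overview preceding the analogous inductive argument in Section~\ref{sec:alt}) and avoids it by carrying out the induction at the level of $(r,m)$-game rank for a fixed $r$, not tree rank: Lemma~\ref{lem:sparse_rank_implies_rank} proves, by induction on $d$, a statement of the form ``for every transduction $P$ of bounded quantifier rank and number of colors there exists $r$ such that for every $m$ there is $k$ such that every graph of $(r,m)$-game rank $d$ has $T^{d+2}_k\notin P(G)$.'' As written, your inductive hypothesis (phrased about tree rank of a class) simply cannot be applied to $\CC'$.

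The second gap is the plan to ``reuse $\psi_E$ relativised to $G'$'' where $G'=B^G_r(v)-\{u_1,\dots,u_m\}$. Deleting $u_1,\dots,u_m$ perturbs the $r_0$-local isomorphism types of the remaining vertices, so $\psi_E$ evaluated in $G'$ need not agree with $\psi_E$ evaluated in $G$. The paper handles this by isolating rather than deleting the removed vertices ($G\ast S$, Definition~\ref{def:isolate}) and introducing $2m$ fresh unary predicates marking each $u_i$ and its former neighbourhood, so that a quantifier-free transduction $Q$ can restore the deleted edges before the interpretation is applied (see the composition $P''$ at the end of Lemma~\ref{lem:sparse_rank_implies_rank}). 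Beyond these two issues, your near/far-edge dichotomy and the pigeonhole over local types is the right intuition---it is essentially what Lemma~\ref{lem:transduction_radius} makes precise via the Ramsey-type Lemma~\ref{lemma:ramsey-like} together with a distance-chase on $\rho$-types---but the steps you defer (``should force \dots a chain of near edges'', ``at least one survives removal'') are precisely where the technical work lies. Note also that the centre of the relevant ball need not be the image of the root of $T$; in the paper's argument it is a \emph{child} of the root of the Ramsey-monochromatic subtree. Finally, the induction should bottom out at $d=1$ rather than $d=0$; the paper handles $d=1$ directly, since at game rank $1$ the ball $B^G_r(w)$ has at most $m$ vertices, so a large $T^2_k$ cannot appear.
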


We start with a lemma which tells us that if we can transduce a tree of depth $d+1$ and large branching in a graph $G$, then we can locally transduce a tree of depth $d$ and smaller but still large branching in $G$.

\begin{lemma}
\label{lem:transduction_radius}
Let $d \in \N$ with $d\ge 1$ and let $P$ be a transduction of quantifier rank $q$ which uses $t$ unary predicates.
Then there exists a transduction $P'$ which uses $t+1$ unary predicates and a function $g\from\N \to \N$ such that for any graph the following holds: If $T^{d+1}_{g(k)} \in P(G)$, then there is $w \in V(G)$ such that for any $r' \ge (6d + 1)7^q$ we have that $T^{d}_k \in P'(B^G_{r'}(w))$.

Moreover, the quantifier rank of $P'$ is $f(q)$, where $f$ is the function from Gaifman's theorem (Theorem~\ref{thm:Gaifman}).
\end{lemma}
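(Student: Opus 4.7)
The plan is to combine Gaifman's locality theorem with a combinatorial argument that locates a ``local'' copy of $T^d_k$ inside the transduced tree.

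First, I would apply Gaifman's theorem (Theorem~\ref{thm:Gaifman}) to the edge-defining formula $\psi_E(x,y)$ of the interpretation underlying $P$. This rewrites $\psi_E$ as a boolean combination of a $7^q$-local formula $\rho(x,y)$ and basic local sentences $\sigma_1,\ldots,\sigma_p$ of locality radius $7^q$, with quantifier rank at most $f(q)$. Since each $\sigma_i$ has a definite truth value in $G$ under the markings that yield $T^{d+1}_{g(k)}$, on $G$ the formula $\psi_E$ is equivalent to a $7^q$-local formula $\rho^*(x,y)$ of quantifier rank at most $f(q)$.

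Second, the extra unary predicate available to $P'$ is used to transfer the boolean information of the basic local sentences from $G$ into the restricted ball $B^G_{r'}(w)$, while simultaneously singling out the vertices of the target subtree. Concretely, one uses the extra predicate to mark both the vertices of the intended $T^d_k$ (so that the edge formula of $P'$ applies only to them) and, where needed, a bounded number of further ``witness'' vertices inside the ball testifying to the truth values of the $\sigma_i$'s; the edge formula of $P'$ then computes $\rho^*(x,y)$ using these witnesses in place of evaluating the basic local sentences globally.

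Third, the core combinatorial step is to prove the following claim: for $g(k)$ sufficiently large (depending on $d$, $k$, $q$, and $t$), the transduced tree $T^{d+1}_{g(k)}$ contains a subtree isomorphic to $T^d_k$ whose vertex set is contained in some ball $B^G_{6d\cdot 7^q}(w)$ of $G$. Granted this, every marked vertex $v$ satisfies $B^G_{7^q}(v) \subseteq B^G_{(6d+1)\cdot 7^q}(w)$, so $\rho^*$ can be faithfully evaluated inside any ball of radius at least $(6d+1)\cdot 7^q$, yielding $T^d_k \in P'(B^G_{r'}(w))$ for every such $r'$. I would prove the claim by induction on $d$: at each step, one uses pigeonhole on the bounded number of $7^q$-local isomorphism types of vertices of $G$ to consolidate $k$ children of the current root inside a ball of radius $6\cdot 7^q$ around a suitable vertex, on which the inductive hypothesis for $d-1$ then applies.

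The main obstacle is the combinatorial step: two tree-adjacent vertices in $T^{d+1}_{g(k)}$ may be arbitrarily far apart in $G$, so the subtree in general does not lie in a ball of bounded radius around any single vertex. The crucial observation enabling progress is that, in the regime where two vertices $x,y$ are at $G$-distance greater than $2\cdot 7^q$, the $7^q$-local formula $\rho^*(x,y)$ is determined purely by the ordered pair of $7^q$-local isomorphism types of $x$ and $y$. This enables a type-based replacement: among many children of the same type, one can find $k$ that cluster inside a ball of radius $6\cdot 7^q$ around a chosen vertex, and this is where the blow-up in $g(k)$ across the $d$ recursive levels comes from.
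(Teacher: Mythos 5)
Your plan follows essentially the same route as the paper: apply Gaifman to the edge formula of $P$, reduce it on the fixed $G$ to a $7^q$-local formula, find a depth-$d$ subtree of the transduced $T^{d+1}_{g(k)}$ that lives inside a bounded-radius ball, and recover it via a transduction $P'$ of quantifier rank $f(q)$ using one extra predicate. You also correctly identify the key enabling fact that, for $x,y$ at $G$-distance $> 2\cdot 7^q$, the truth of the $7^q$-local formula $\rho(x,y)$ depends only on the pair of local types of $x$ and $y$. That said, there are two places where the proposal as written would not go through.

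First, the handling of the basic local sentences is not sound. A basic local sentence $\sigma_i$ asserts the existence of pairwise far-apart vertices with prescribed local properties, possibly spread all over $G$; if $\sigma_i$ is true in $G$, its witnesses need not lie in any fixed ball, and if $\sigma_i$ is false there are no witnesses at all, so one cannot ``transfer the boolean information'' by marking witness vertices inside $B^G_{r'}(w)$. This is also unnecessary: once the $\sigma_i$ are evaluated on the fixed $G$, the whole Gaifman normal form of $\psi_E$ collapses to one of $\rho$, $\neg\rho$, true, or false. Only a single bit of information needs to be passed, and this is exactly what the one extra predicate $L$ is for. The paper uses the edge formula $\gamma(x,y):= (\exists z\, L(z)) \Leftrightarrow \rho(x,y)$, so that whether any vertex is $L$-marked flips between $\rho$ and $\neg\rho$. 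Your proposal instead overloads the extra predicate for both the domain of the target $T^d_k$ and the witness encoding, which is both incorrect and more predicates than you have.

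Second, the clustering step is the heart of the lemma and is glossed over. After applying the Ramsey-like lemma to get a subtree $T'$ of depth $d+1$ with level-monochromatic types, the key fact is that \emph{every} parent--child pair $(v,u)$ in $T'$ with $v$ not the root of $T'$ satisfies $\dist_G(u,v)\leq 6\cdot 7^q$; this is not a selection among children (``one can find $k$ that cluster'') but a uniform bound, and it is proved by a specific three-hop sibling walk: if $\dist_G(u,v)>2\cdot 7^q$ then the $\rho$-type of their type pair is ``edge,'' so two siblings $v_1,v_2$ of $v$ and a child $u_1$ of $v_1$ must all be within $2\cdot 7^q$ of the appropriate vertices (otherwise the same type pair would force edges in $T'$ that are not there), giving a path $u\to v_2\to u_1\to v$ of length at most $6\cdot 7^q$. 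This argument requires $v$ to have siblings, which is precisely why the root of $T'$ must be discarded and the final tree $\hat T \cong T^d_k$ is rooted at a child $w$ of the root of $T'$: the root's children need not cluster at all. Your inductive sketch does not explain this depth drop or supply the sibling argument, and the phrase ``pigeonhole on types $\Rightarrow$ clustering'' is not by itself a valid deduction — same-typed siblings can be arbitrarily far apart when the relevant $\rho$-type is ``non-edge.''
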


In the proof below, for a formula $\rho(x,y)$ and a graph $G$, we define $\rho(G)$ to be the graph with the same vertex set as $G$ and edge set $E(\rho(G)):=\{uv~|~G \models \rho(u,v)\}$.

\begin{proof}
Let $P$ be a transduction of quantifier rank $q$ which uses $t$ unary predicates and let 
 $\psi(x,y)$ be the interpretation formula of quantifier rank $q$ used in $P$.
We apply Gaifman's theorem to $\psi(x,y)$ to obtain an equivalent formula $\hat{\psi}(x,y)$ of quantifier rank $f(q)$ which is a boolean combination of basic local sentences $\tau_1,\ldots, \tau_s$ and an $r$-local formula $\rho(x,y)$, where $r=7^q$.
Note that if $G$ is any graph (here we think of $G$ as being already marked with the unary predicates which $P$ uses), then after evaluating every basic local sentence $\tau_i$ on $G$ and replacing $\tau_i$ in $\hat{\psi}(x,y)$ with \textit{true} or \textit{false} accordingly, the formula $\psi(x,y)$
reduces to $\rho(x,y)$ or $\lnot\rho(x,y)$, meaning that for each $G$ we have $\psi(G) = \rho(G)$ or $\psi(G) = \lnot\rho(G)$
(in fact, it could happen that $\psi$ is evaluated to \textit{true} or \textit{false} by the aforementioned procedure, but then $\psi(G)$ is the complete graph or the edgeless graph on $|V(G)|$ vertices, respectively, and the lemma holds trivially). 
The transduction $P'$ will introduce one new unary predicate symbol and use the interpretation formula $\gamma(x,y):= (\exists z L(z)) \Leftrightarrow \rho(x,y)$;
here the existence of a vertex marked with unary predicate $L$ serves as a `flag' which tells the interpretation whether to apply $\rho$ or $\lnot\rho$. 

Our plan in the rest of the proof is as follows. After defining the function $g$ suitably, we consider a graph $G$ such that  $T^{d+1}_{g(k)} \in P(G)$. 
We then know that $T^{d+1}_{g(k)}$ is an induced subgraph of $\rho(G)$ or of $\lnot\rho(G)$.
We will  find a vertex $w$ of $G$ such that $T^d_k$ is an induced subgraph of 
$\rho(B^G_{r'}(w))$ or of $\lnot\rho(B^G_{r'}(w))$, according to the cases above. 
Since $P'$ will then recover $T^d_k$ from $B^G_{r'}(w)$ in either case, we will focus only on the case then $T^{d+1}_{g(k)}$ is an induced subgraph of $\rho(G)$, noting that the case of $\lnot\rho$ is symmetric.

Before defining $g$ and proceeding with the proof, we will discuss some additional properties of $\rho$. 
We will use the following fact which follows from locality of $\rho$: 

\textit{($\ast$) Let $G$ be a graph, $S \subset V(G)$ and $u,v \in S$ vertices of $G$ such that $B^G_r(v) \subset S$ and $B^G_r(u) \subset S$. Then $G \models \rho(u,v) \Longleftrightarrow G[S] \models \rho(u,v)$.}

It is well known that for any fixed value of quantifier rank there is a bounded number of non-equivalent formulas of this quantifier rank. 
Consequently, there is a fixed number of non-equivalent $r$-local formulas of quantifier rank at most $f(q)$, where  $f$ is the function from Gaifman's theorem (Theorem~\ref{thm:Gaifman}).
For any $G$ and $v \in V(G)$ we define the type of $v$ by
$$tp(v):=\{\alpha~|~\text{ $\alpha$ is $r$-local, $qr(\alpha) \le f(q)$ and $G \models \alpha(v)$}\}$$
Note that by the discussion above, the set ${\sf Types}$ of possible types has bounded size, say $z$. 
One can deduce from Gaifman's theorem that if $G$ is any graph and $u,v$ are two vertices of $G$ such that $\dist(u,v) > 2r$, then whether  $G \models \rho(u,v)$ holds depends only on $tp(u)$ and $tp(v)$ (see for example~\cite{bd_deg_interp} for details). We then define, for any two types $a,b \in {\sf Types}$, the $\rho$-type of $\{a,b\}$ to be $1$ if we have  $G \models \rho(u,v)$ for any $G$ and $u,v \in V(G)$ such that $\dist(u,v) > 2r$, $tp(u) = a$ and $tp(v) = b$; and $0$ otherwise. 

Finally, we set $g(k)=h(d+1,k,z)$, where $h$ is the function of Lemma~\ref{lemma:ramsey-like}.
For the rest of the proof fix a graph $G$ such that $\psi(G)$ contains $T^{d+1}_{g(k)}$ as an induced subgraph and let us denote this copy by $T$.   As discussed above, it is enough to consider the case when $T$ is an induced subgraph of $\rho(G)$.
Our goal will be to find a subtree $\hat{T}$ of $T$ of depth $d$ and branching $k$ such that $V(\hat{T})$ is contained $B_{6dr}^{G}(w)$ for some vertex $w \in V(G)$. 
Then, after taking any $r' \ge (6d+1)r$, all vertices of $\hat{T}$ and their $r$-neighborhoods in $G$  are in 
$B^G_{r'}(w)$. 
Consequently, by $(\ast)$ we will have for any $u,v \in V(\hat{T})$ that $G \models \rho(u,v) \Longleftrightarrow B^G_{r'}(w) \models \rho(u,v)$, and so $\hat{T}$ will be an induced subgraph of $\rho(B^G_{r'}(w))$, which will finish the proof.

As an intermediate step before finding $\hat{T}$, we first find a  subtree $T'$ of $T$ of branching $k$ and depth $d+1$ in which all vertices at the same level have the same type in $G$.
The existence of $T'$ follows directly from Lemma~\ref{lemma:ramsey-like}.

We claim that for any two vertices $u,v$ of $T'$ such that $u$ is a child of $v$ and $v$ is not the root of $T'$ we have that  $\dist_G(u,v) \le 6r$ in $G$.
We consider two cases. If $\dist_G(u,v) \le 2r$, then we are done. Otherwise, if $\dist_G(u,v) > 2r$ then we proceed as follows.
Since $uv \in E(T')$, the $\rho$-type of $\{tp(u),tp(v)\}$ has to be~$1$. Let $v_1,v_2$ be two siblings of $v$. In $T'$ we have that $uv_1 \not\in E(T')$ and  $uv_2 \not\in E(T')$, and so we cannot have $\dist_G(u,v_1) > 2r$ or $\dist_G(u,v_2) > 2r$, as this would contradict the $\rho$-type of $\{tp(u),tp(v)\}$. This means that we have $\dist_G(u,v_1) \le 2r$ and
$\dist_G(u,v_2) \le 2r$.
Let $u_1$ and $u_2$ be children of $v_1$ and $v_2$, respectively. With the same reasoning as above we can argue that $\dist_G(u_1,v) \le 2r$ and $\dist_G(u_1,v_2) \le 2r$. Then the path from $u$ going through $v_2$ and $u_1$ and finishing in $v$ certifies that the distance between $u$ and $v$ is at most $6r$, as desired.

Let $w$ be any child of the root of $T'$ and let $\hat{T}$ be the subtree of $T'$ rooted at $w$. Since the distance in $G$ between any two parent-child nodes in $T'$ is at most $6r$, the whole vertex set of $\hat{T}$ is contained in $B^G_{6dr}(w)$. Then we have that the $r$-neighborhood of any $v \in V(\hat{T})$ is contained in $B^G_{(6d+1)r}(w)$, and consequently we have that $\hat{T}$ is an induced subgraph of $\rho(B^G_{r'}(w))$, as desired.
\end{proof}

Proposition~\ref{prop:sparse_rank_rank} follows easily from the next lemma. In its proof we will use some tools form Section~\ref{sec:aux}, namely the notion of $(r,m)$-game rank (Definition~\ref{def:rm_rank}), operation $G \ast S$ (Definition~\ref{def:isolate}) and Lemma~\ref{lem:game_rank}.

\begin{lemma}
\label{lem:sparse_rank_implies_rank}
Let $d \in \N$ with $d\ge 1$.
Let $P$ be a transduction of quantifier rank $q$ which uses $p$ unary predicates. There exists $r(d,q,p)$ and, for every $m\in \N$, $k=k(d,q,p,m)$ so that for any $G$ of  $(r,m)$-game rank $d$ we have that $T^{d+2}_k \not\in P(G)$.
\end{lemma}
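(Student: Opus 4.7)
We proceed by induction on $d$, combining Lemma~\ref{lem:transduction_radius} (to localize a transduction to a ball around a chosen vertex) with the batched splitter game (to pass to a subgraph of lower game rank). For the base case $d = 1$, we set $r(1, q, p) := 13 \cdot 7^q$, which accommodates an application of Lemma~\ref{lem:transduction_radius} with parameter $d = 2$. Assume towards contradiction that $T^3_{g(m)} \in P(G)$, where $g$ is the function provided by the lemma. Then $T^2_m \in P'(B^G_r(w))$ for some vertex $w \in V(G)$ and some transduction $P'$. Since $G$ has $(r, m)$-game rank $1$, every $r$-ball has at most $m$ vertices; but $|V(T^2_m)| = m + 1$, a contradiction. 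Thus $k(1, q, p, m) := g(m)$ works.

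For the inductive step $d > 1$, suppose $T^{d+2}_k \in P(G)$ for $G$ of $(r, m)$-game rank $d$ and sufficiently large $k$. Lemma~\ref{lem:transduction_radius} applied with parameter $d + 1$ yields a vertex $w$, a transduction $P'$ of quantifier rank $f(q)$ using $p + 1$ unary predicates, and a value $k_1$ depending on $k$, such that $T^{d+1}_{k_1} \in P'(B^G_{r'}(w))$ with $r' := (6d + 7) \cdot 7^q$. Playing $w$ in the splitter game on $G$, Splitter has a response $S \subseteq V(G)$ with $|S| \le m$ such that $G^* := B^G_r(w) - S$ has $(r, m)$-game rank at most $d - 1$ (requiring $r \ge r'$). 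The key step is to construct a transduction $P''$, of quantifier rank $f'(q)$ and using $p'' = p + \Oh(1)$ unary predicates (both bounded independently of $m$), such that $T^{d+1}_{k_2} \in P''(G^*)$ for some $k_2$ growing with $k_1$. Applying the inductive hypothesis to $G^*$ and $P''$ then provides the contradiction, provided $k(d, q, p, m)$ is chosen large enough that $k_2 \ge k(d - 1, f'(q), p'', m)$. We then set $r(d, q, p) := \max\{r', r(d - 1, f'(q), p'')\}$, which depends only on $d, q, p$.

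The main obstacle is precisely this construction of $P''$: we must simulate, within the reduced graph $G^*$, the output of $P'$ on the ball $B^G_{r'}(w)$, despite $G^*$ missing up to $m$ vertices of $S$. The approach is to employ vertex copying and a small fixed number of additional unary predicates to introduce proxies for the $S$-vertices that appear in the tree $T^{d+1}_{k_1}$ and to encode the tree's level structure, so that $P''$'s interpretation can mimic $P'$'s interpretation on the reconstructed graph. When the root $w$ of $T^{d+1}_{k_1}$ itself lies in $S$, a \emph{fake root} trick saves the day: a designated proxy is promoted to a new root connected to the surviving children of $w$ in $T^{d+1}_{k_1}$, which preserves the depth of the output tree while losing at most $m$ branches. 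The delicate technical point is to carry out this reconstruction with a number of extra predicates bounded independently of $m$, so that $r(d, q, p)$ remains a function of $d, q, p$ alone; this is achieved by marking $S$-vertices collectively rather than individually and by leveraging the tree's rigid combinatorial structure to recover the requisite adjacencies, rather than attempting a faithful reconstruction of all edges of $B^G_{r'}(w)$.
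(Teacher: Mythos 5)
Your overall plan follows the paper's proof closely: induction on $d$, using Lemma~\ref{lem:transduction_radius} to localise the transduction to a ball, playing Splitter's response to drop the game rank, and applying the induction hypothesis to a derived transduction on the resulting graph. The base case is fine. However, the inductive step contains a genuine gap centred on your construction of $P''$.

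You insist on constructing $P''$ with only $p + \Oh(1)$ unary predicates, ``bounded independently of $m$,'' and propose to do so via ``vertex copying,'' ``proxies,'' and ``marking $S$-vertices collectively.'' This runs into two problems. First, the transduction model in the paper (Section~\ref{sec:dense}) consists of colouring by unary predicates followed by a single interpretation — there is no vertex copying, so proxies are unavailable. Second, marking $S$ collectively with a constant number of colours genuinely loses information needed to run $P'$. The formula $\psi_E$ of $P'$ is evaluated in the ball $B^G_{r'}(w)$, and its truth value at a pair of surviving vertices can depend on vertices of $S$ and on which $S$-vertex is adjacent to which survivor; a collective mark cannot disambiguate this, and no ``rigid combinatorial structure of the output tree'' helps you recover adjacencies of the \emph{input} graph that the formula needs to see. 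No concrete construction is given, and I do not believe one exists with $\Oh(1)$ extra predicates.

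Crucially, the difficulty you are trying to engineer around does not exist. The paper takes $H := B^G_r(w) \ast S$ (isolating $S$, not deleting it) and recovers $B^G_r(w)$ from $H$ by a quantifier-free transduction $Q$ using $2m$ fresh predicates: one to mark each $u_i$ and one to mark its neighbours. Thus $P'' = P' \circ Q$ uses $p + 2m + 1$ predicates. The fact that this count grows with $m$ is harmless because the recursion $r(1,q,p) = \Oh(7^q)$ and $r(d,q,p) = \max\{(6d+1)\cdot 7^q,\ r(d-1,f(q),p+1)\}$ makes $r(d,q,p)$ independent of its third argument $p$ by induction; hence $r(d-1,f(q),p+2m+1) = r(d-1,f(q),p+1)$ and the induction closes. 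Your concern that $r$ might depend on $m$ through $p''$ is therefore moot, and the correct move is simply to allow $p''$ to grow with $m$ as the paper does, rather than to attempt an $\Oh(1)$-colour reconstruction that the model does not support.
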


\begin{proof}

We proceed by induction on $d$. Let $d=1$. Let $P$ be any transduction of quantifier rank $q$. Set $r\coloneqq (6d+1)\cdot 7^q$.
Let $m$ be arbitrary. 
We set $k\coloneqq g(m)$, where $g$ is the function from Lemma~\ref{lem:transduction_radius}, and claim that for no graph $G$ of  $(r,m)$-game rank one  we have that $T^3_k$ is an induced subgraph of $\psi(G)$.
Assume for contradiction that there is a graph $G$ of $(r,m)$-game rank one such that $T^3_k \in P(G)$. Then by Lemma~\ref{lem:transduction_radius} there is $w\in V(G)$ such that $T^2_m \in P'(B^G_r(w))$. But since $G$ is of $(r,m)$-game rank one, we know that $B^G_r(w)$ contains at most $m$ vertices, and since $T^2_m$ has $m+1$ vertices, this yields a contradiction.

For $d> 1$ assume that $P$ is any transduction of quantifier rank $q$. 
We apply Lemma~\ref{lem:transduction_radius} to $P$ to obtain transduction $P'$ of quantifier rank $f(q)$ with $p+1$ unary predicates, and function $g$.
We apply the induction hypothesis to obtain value  $r_0\coloneqq r(d-1, f(q), p+1)$.
Set $r\coloneqq \max\{(6d+1)\cdot 7^q,r_0\}$.
Let $m$ be arbitrary.
We set $k = k(d,q,p,m) \coloneqq g(k')$, where $k'\coloneqq k(d-1,f(q),p+2m+1,m)$ is defined through the inductive hypothesis.
and claim that every $G$ of  $(r,m)$-game rank at most  $d$ we have that $T^{d+2}_k \not \in P(G)$. 
Suppose for contradiction that there is such $G$ with $T^{d+2}_k \in P(G)$. 
We will show that there is a graph $H$ of  $(r,m)$-game rank $d-1$ and a transduction $P''$ of quantifier rank $f(q)$ which uses $p+2m+1$ unary predicates such that  $T^{d+1}_{k'} \in P''(H)$.
Since $r \ge r_0$, we have that  $H$ is also of tree $(r_0,m)$-game rank at most $d-1$, and so this will yield a contradiction with the induction hypothesis.

We first construct $H$. By Lemma~\ref{lem:transduction_radius}, as $r \ge (6d+1)7^q$, there is a vertex $w$ of $G$ such that $T^{d+1}_{k'} \in P'(B^G_r(w))$.  For convenience we set $G' \coloneqq  B^G_r(w)$.  Since $G$ is of $(r,m)$-game rank at most $d$, there exist $m$ vertices $u_1,\ldots, u_m$ in $V(G')$ such that $G \setminus \{u_1,\ldots, u_m\}$ is of $(r,m)$-game rank at most $d-1$.
By Lemma~\ref{lem:game_rank} the graph $G' \ast \{u_1,\ldots, u_m\}$ is also of $(r,m)$-game rank at most $d-1$. We then take as $H$ to be the graph $H:= G' \ast \{u_1,\ldots, u_m\}$.

We now proceed with describing $P''$. It is easily seen that $G'$ can be transduced from $H$ by a quantifier-free transduction $Q$ which uses $2m$ new predicates -- this transduction marks each $u_i$ with a unary predicate $L_i$ and each neighbor of $u_i$ in $G'$ with a unary predicate $N_i$ and then using a quantifier-free formula recovers the deleted edges. We then have $G' \in Q(H)$. 
Then, since $T^{d+1}_{k'} \in P'(G')$ and $G' \in Q(H)$, by composing $P'$ with $Q$ we obtain a transduction $P''$ such that $T^{d+1}_{k'} \in P''(H)$. Since $P'$ has quantifier rank $f(q)$ and uses $p+1$ unary predicates, and $Q$ is quantifier-free and uses $2m$ predicates, the transduction $P''$ has quantifier rank $f(q)$ and uses $p + 2m + 1$ unary predicates, as desired.
\end{proof}

\begin{proof}[Proof of Proposition~\ref{prop:sparse_rank_rank}]
Let $\C$ be a graph class of  tree rank at most $d$. Let $P$ be any transduction. Let $r$ be the integer obtained by applying Lemma~\ref{lem:sparse_rank_implies_rank} to $P$. Since $\C$ has tree rank at most $d$, there is $m$ such that every $G \in \C$ has $(r,m)$-game rank at most $d$. Then by Lemma~\ref{lem:sparse_rank_implies_rank} there exists $k$ such that we have that $T^{d+2}_k \not\in P(G)$ for every $G \in \C$, which means that $\C$ cannot have  rank $d+2$ and so it has rank at most $d+1$.
\end{proof}

\subsection{Tree rank \texorpdfstring{$\le$}{≤} rank, for weakly sparse classes}
\label{sec:wsparse}

In this section, we prove the following.

\begin{proposition}\label{prop:wsparse}
    Let $\CC$ be a weakly sparse class.
    Then the tree rank of $\CC$ is not greater than the rank of $\CC$.
\end{proposition}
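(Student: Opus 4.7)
The strategy is to show that if $\CC$ has tree rank at least $d$, then $\CC$ transduces the class $\cal T_d$, witnessing that the rank of $\CC$ is at least $d$. By the definition of tree rank, there is some $r \in \N$ such that every tree of depth $d$ occurs as an $\leq r$-shallow topological minor of some graph in $\CC$; in particular, $T^d_k$ admits such an embedding for every $k$. Applying Lemma~\ref{lemma:canonical-subd-tree} and a pigeonhole over the finitely many subdivision signatures in $\{1, \ldots, r+1\}^{d-1}$, I would fix parameters $r_1, \ldots, r_{d-1}$ so that for arbitrarily large $k$ some $G_k \in \CC$ contains a copy of $T^*_k := (T^d_k)^{(r_1, \ldots, r_{d-1})}$ as a subgraph.

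The crucial next step is to upgrade these subgraph embeddings into induced ones. Here I would invoke the Ramsey-type results for $K_{s,s}$-free graphs alluded to in Section~\ref{sec:overview}, concerning the existence of induced subdivisions of trees in weakly sparse classes of unbounded degeneracy. Combined with an iterative trimming of the branches of $T^*_k$ level by level --- using $K_{s,s}$-freeness to bound the ``extra'' edges between subtrees inside $G_k[V(T^*_k)]$, and possibly further subdividing to absorb them --- the plan is to obtain, for arbitrarily large $k$, a graph $G'_k \in \CC$ that contains an \emph{induced} copy of $(T^d_k)^{(r'_1, \ldots, r'_{d-1})}$, for some fixed $r'_1, \ldots, r'_{d-1}$ (depending only on $s$, $r$, and $d$). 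I expect this to be the main obstacle of the proof: turning an arbitrary subgraph embedding into an induced one while keeping the subdivision signature bounded is exactly where the weak sparsity hypothesis is indispensable.

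With an induced subdivision in hand, constructing the transduction $P$ witnessing $\cal T_d \subseteq P(\CC)$ is straightforward. The transduction uses a finite palette of unary predicates: a ``selection'' predicate $S$ marking the vertex set of the induced subdivision; one ``level-color'' predicate for each of the $1 + \sum_{i=1}^{d-1} r'_i$ depths in $(T^d_k)^{(r'_1, \ldots, r'_{d-1})}$; and a further predicate $U$ marking a downward-closed subset of the principal vertices. The vertex formula $\vartheta(x)$ keeps precisely the vertices in $U$, while the edge formula $\psi_E(x, y)$ requires that $x$ and $y$ carry principal-level colors at two consecutive principal levels $i$ and $i+1$, and are joined by a path of length exactly $r'_{i+1}$ whose internal vertices all belong to $S$ and carry the intermediate level-colors in the correct order. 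Since the marked subdivision is induced in $G'_k$, every such path must trace a long edge of the subdivision, so the transduction outputs the subtree of $T^d_k$ determined by~$U$. As every tree in $\cal T_d$ is isomorphic to such a subtree for sufficiently large $k$, this gives $\cal T_d \subseteq P(\CC)$ and hence $\CC$ has rank at least~$d$.
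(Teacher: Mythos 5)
Your overall strategy --- show that tree rank $\geq d$ implies $\CC$ transduces $\cal T_d$ --- matches the paper, and the opening reduction via Lemma~\ref{lemma:canonical-subd-tree} and pigeonhole on subdivision signatures is fine. The problem is your second paragraph, which you yourself flag as ``the main obstacle'': upgrading a \emph{subgraph} copy of $T^*_k$ in $G_k$ to an \emph{induced} copy of some canonical subdivision. As written, this step does not go through. Weak sparsity gives you only that $G_k[V(T^*_k)]$ is $K_{s,s}$-free, and $K_{s,s}$-free graphs on $n$ vertices can still have $\Theta(n^{2-1/s})$ edges (K\H{o}v\'ari--S\'os--Tur\'an), so $K_{s,s}$-freeness alone does not ``bound the extra edges'' in any usable way. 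The Kierstead--Penrice theorem you gesture at applies to classes of \emph{unbounded minimum degree} and produces arbitrary induced trees, not induced subdivisions with controlled signatures; there is no obvious way to feed a bounded-signature requirement into it. The ``iterative trimming'' and ``further subdividing to absorb'' the extra edges is also not meaningful here, since you are working inside a fixed host $G_k$ rather than building a graph. In short, the passage from subgraph to induced subgraph is precisely the difficulty, and your plan for it is not an argument.

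The paper avoids this difficulty entirely with a dichotomy on expansion that you are missing. If $\CC$ has unbounded expansion, then Dvo\v{r}\'ak's theorem (Theorem~\ref{thm:dvorak}) plus Kierstead--Penrice (Theorem~\ref{thm:kp}) gives that $\CC$ transduces the class of \emph{all} trees (Corollary~\ref{cor:forests}), hence has unbounded rank --- no need to control the depth or signature. If $\CC$ has bounded expansion, then it has bounded star-chromatic number, so by Lemma~\ref{lem:star-chrom} it transduces its monotone closure $\CC'$; since tree rank $\geq d$ means $\CC'$ literally \emph{contains} some $\leq r$-subdivision of every tree of depth $d$ as a member, transducing $\cal T_d$ from $\CC'$ is immediate, and transitivity finishes. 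This sidesteps the induced-subgraph problem: instead of trying to clean up $G_k[V(T^*_k)]$, the paper passes to a class where the subdivision \emph{is} the whole graph. If you wanted to repair your approach, you would essentially have to rediscover this dichotomy locally (is $\{G_k[V(T^*_k)]\}$ of bounded or unbounded expansion?), at which point the star-chromatic-number route is both shorter and cleaner.
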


Proposition~\ref{prop:wsparse} follows by combining several known results, which we now recall.
The following result is due to Kierstead and Penrice~\cite{kierstead-penrice}.
\begin{theorem}[Kierstead and Penrice,~\cite{kierstead-penrice}]\label{thm:kp}
    Let $\CC$ be a weakly sparse class of unbounded minimum degree. Then $\CC$ contains all trees as induced subgraphs.
\end{theorem}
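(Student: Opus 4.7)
The plan is to prove by induction on the depth $d$ of $T$ the following strengthening: for every rooted tree $T$ of depth $d$ and every $t\in\N$, there exists $\delta = \delta(T,t)\in\N$ such that every $K_{t,t}$-free graph $G$ with minimum degree at least $\delta$ contains an induced copy of $T$. The theorem then follows, since $\CC$ is $K_{t,t}$-free for some $t$ by weak sparsity and contains graphs with arbitrarily large minimum degree by assumption.

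The base case of small depth (in particular $d=2$, i.e., stars) relies on the following Ramsey-type observation: in a $K_{t,t}$-free graph $G$, the neighborhood $G[N(v)]$ of any vertex $v$ is itself $K_{t,t}$-free and therefore $K_{2t}$-free, since $K_{2t}$ contains $K_{t,t}$ as a subgraph. By Ramsey's theorem, any $K_{2t}$-free graph on at least $R(2t, k)$ vertices contains an independent set of size $k$. Hence when $\delta$ is sufficiently large, every vertex $v$ is the center of an induced star with $k$ leaves.

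For the inductive step, I would root $T$ at an arbitrary vertex $r$, consider the subtrees $T_1, \ldots, T_k$ rooted at its children $c_1, \ldots, c_k$ (each of depth less than $d$), and carry out the embedding as follows: pick a vertex $v_0 \in V(G)$ to play the role of $r$, extract a large independent set $I \subseteq N(v_0)$ via the Ramsey observation above, select $k$ vertices $u_1, \ldots, u_k \in I$ to serve as images of $c_1, \ldots, c_k$, and recursively embed each $T_i$ rooted at $u_i$ using the inductive hypothesis applied to some suitable subregion of $G$ of large minimum degree.

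The main obstacle is ensuring that the embeddings of distinct subtrees $T_i$ are not merely vertex-disjoint but also pairwise non-adjacent in $G$ (apart from the intended root-to-child edges). To enforce this, each newly embedded $T_i$ must live in the common non-neighborhood of all previously embedded vertices, and a priori this restricted subgraph could have drastically smaller minimum degree than $G$. The $K_{t,t}$-freeness limits the damage in a crucial way: the common neighborhood of any $t$ already-embedded vertices has size at most $t-1$, so on average only a bounded fraction of the candidate region can be adjacent to many of the previous images. Iterating a bipartite Ramsey or Kővári–Sós–Turán–style extraction within the candidate region lets us find a clean subregion that remains $K_{t,t}$-free and retains sufficient minimum degree to continue the recursion. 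Carrying out this refinement with explicit quantitative control on $\delta(T,t)$, so that a finite threshold suffices at each level of the induction, is the technical heart of the argument.
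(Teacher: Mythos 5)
The paper does not prove this theorem; it is cited as a black-box result of Kierstead and Penrice. So what you must be judged against is whether your sketch, on its own terms, closes.

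Your strengthening (induction on depth, with a threshold $\delta(T,t)$ depending on the tree and the excluded biclique) is the right target, and the base case for stars is correct: $K_{t,t}$-freeness gives $K_{2t}$-freeness of each $G[N(v)]$, and Ramsey then produces a large independent set in the neighborhood. But the inductive step has a genuine gap, precisely at the point you label the ``technical heart.'' Two things go wrong. First, your key structural observation --- that any $t$ already-embedded vertices have common neighborhood of size $<t$ --- only controls vertices adjacent to \emph{many} of the embedded vertices. What you actually need to avoid is every vertex adjacent to at least \emph{one} of them, and this is a much larger set: a single previously embedded vertex $w$ may have $N(w)$ covering almost all of $G$, and $K_{t,t}$-freeness does nothing to prevent that. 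Second, even granting that most of the graph survives the excision, the surviving induced subgraph need not have large minimum degree: the Kővári--Sós--Turán bound controls total edge counts across a bipartition, not the degree of a typical surviving vertex within the surviving region, and ``bipartite Ramsey'' gives you monochromatic bicliques, not degree guarantees. Concretely, one cannot prove a lemma of the form ``for any $s$ vertices $W$ in a $K_{t,t}$-free graph of minimum degree $\geq\delta$, the graph $G-N[W]$ retains minimum degree $\geq\delta'$'' --- it is false already for $|W|=1$ when $N[w]$ swallows the whole graph. So the recursion cannot simply carve out non-neighborhoods and re-apply the hypothesis; the actual Kierstead--Penrice argument requires a more careful inductive invariant that keeps track of a reservoir of usable vertices rather than relying on preservation of global minimum degree after deleting neighborhoods. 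Your outline is a reasonable first attempt and correctly diagnoses where the difficulty lies, but as written the inductive step does not go through and the proof is incomplete.
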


The following result is due to 
Dvo\v r\'ak \cite{DVORAK2018143}.

\begin{theorem}[Dvo\v r\'ak, \cite{DVORAK2018143}]\label{thm:dvorak}
    Let $\CC$ be a hereditary, weakly sparse class of unbounded expansion.
    Then there is a number $r\ge 0$ such that for every $\delta\in\N$,
    $\CC$ contains the $r$-subdivision of some graph of minimum degree larger than $\delta$.
\end{theorem}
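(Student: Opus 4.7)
}

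The plan is to extract from unbounded expansion a succession of cleaner substructures inside $\CC$, ending with an \emph{induced} $r$-subdivision of a graph of large minimum degree. First, since $\CC$ has unbounded expansion, there exists $s\in\N$ with $\tilde\nabla_s(\CC)=\infty$. Hence for every $d\in\N$ I can choose $G_d\in\CC$ together with an $s$-shallow topological minor $F_d$ of $G_d$ of average degree at least $2d$; iteratively deleting vertices of degree below $d$ from $F_d$ leaves a subgraph $F_d'$ of minimum degree at least $d$ that is still an $s$-shallow topological minor of $G_d$. So $G_d$ contains, as a subgraph, an ${\le}s$-subdivision $H_d$ of $F_d'$, in which each edge $e\in E(F_d')$ is realised by a path of length $\ell(e)\in\{1,\ldots,s+1\}$.

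Second, I would regularise the subdivision lengths. Colour each edge of $F_d'$ by its length $\ell(e)\in\{1,\ldots,s+1\}$; one of the $s+1$ colour classes contains at least a $1/(s+1)$-fraction of the edges, so the corresponding monochromatic subgraph has average degree $\Omega(d/(s+1))$, and one cleaning pass yields a subgraph $F_d''\subseteq F_d'$ of minimum degree $\Omega(d/(s+1))$ in which every edge has a common length $r_d+1$. This gives, as a subgraph of $G_d$, an exact $r_d$-subdivision of $F_d''$, for some $r_d\in\{0,\ldots,s\}$. Since $r_d$ takes only finitely many values while $d\to\infty$, pigeonhole fixes a single $r\in\{0,\ldots,s\}$ that is attained for arbitrarily large $d$, and we restrict to such $d$ in what follows.

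Third, and this is the main obstacle, I would eliminate chords. Let $V'$ be the vertex set of the exact $r$-subdivision inside $G_d$, partitioned into principal vertices $V_P$ (copies of $V(F_d'')$) and internal vertices $V_S$. The induced subgraph $G_d[V']$ may carry chords: edges of $G_d$ that do not belong to the subdivision. Here weak sparsity is essential. Since $\CC$ excludes some biclique $K_{t,t}$ as a subgraph, the classical Zarankiewicz-type bound limits the number of chords inside $V'$ to $O(|V'|^{2-1/t})$, which is sublinear in the $\Omega(|V_P|)$ subdivision edges. A probabilistic argument (keep each principal vertex with a suitable probability $p$, retain the associated subdivision paths between surviving principal vertices, and then discard any principal vertex whose associated paths meet a chord) then yields, with positive probability, a set $U\subseteq V_P$ of size still growing with $d$ such that the induced subgraph of $G_d$ on the associated principal and internal vertices is \emph{exactly} the $r$-subdivision of $F_d''[U]$, with $F_d''[U]$ still of minimum degree above $\delta$ provided $d$ was chosen large enough. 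Heredity of $\CC$ then places this induced subgraph in $\CC$, completing the proof.

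The main technical difficulty is tuning the deletion probability $p$ so that simultaneously only few bad principal vertices are removed and the minimum degree of $F_d''[U]$ stays above $\delta$. This is exactly where $K_{t,t}$-freeness, via the Zarankiewicz bound, is needed; it is the step that fails in general without weak sparsity, and also the step most reminiscent of the Kierstead--Penrice argument (Theorem~\ref{thm:kp}), which can serve as a template.
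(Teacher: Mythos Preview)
The paper does not prove Theorem~\ref{thm:dvorak}; it is quoted from Dvo\v{r}\'ak~\cite{DVORAK2018143} and used as a black box in Section~\ref{sec:wsparse}. So there is no in-paper proof to compare against, and your proposal must stand on its own.

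Your Steps 1 and 2 are fine: unbounded expansion gives, for each $d$, some $G_d\in\CC$ containing as a subgraph an exact $r_d$-subdivision of a graph $F_d''$ of minimum degree $\Omega(d)$, and pigeonhole on $r_d\in\{0,\ldots,s\}$ fixes a single $r$.

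Step 3, however, has a genuine gap. Your central quantitative claim is incorrect: the K\H{o}v\'ari--S\'os--Tur\'an bound gives at most $O(|V'|^{2-1/t})$ edges in $G_d[V']$, which is \emph{super}linear in $|V'|$, not sublinear. Since for $r\ge 1$ the number of subdivision edges is $\Theta(|V'|)$, the number of chords may be $\Theta(|V'|^{2-1/t})$, far exceeding $|V_P|$. Worse, the random-deletion scheme fails already on chords internal to a single subdivision path: if both endpoints of a chord lie on the path for an edge $uv\in E(F_d'')$, then the chord survives exactly when $u$ and $v$ are both sampled, i.e.\ with the \emph{same} probability $p^2$ as the path itself; so conditioned on the path surviving, the chord survives with probability $1$ and forces you to discard $u$ or $v$. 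The expected number of such discards is $\Theta(p^2|E(F_d'')|)$, which is $\ll p\,|V_P|$ only if $p\ll 1/D$ (where $D$ is the minimum degree of $F_d''$); but then each surviving principal vertex has only $pD\ll 1$ surviving $F_d''$-neighbours in expectation, so you cannot keep minimum degree above any fixed $\delta$. No tuning of $p$ rescues this.

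Dvo\v{r}\'ak's actual argument is substantially more delicate: it iteratively cleans the embedded subdivision so that internal vertices end up with degree exactly $2$ in the induced subgraph, and this requires controlling several interacting bad configurations rather than a single global chord count. The analogy with Kierstead--Penrice is reasonable in spirit, but that proof exploits the linear edge count of trees, whereas here the host of the subdivision has superlinearly many edges, and the template does not transfer.
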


\begin{corollary}\label{cor:dvorak}
    Let $\CC$ be a weakly sparse class of unbounded expansion.
    Then $\CC$ transduces a weakly sparse class of unbounded minimum degree.
\end{corollary}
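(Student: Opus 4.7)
The plan is to combine Theorem~\ref{thm:dvorak} with a transduction that recovers the underlying graph from its subdivision, and then to invoke a standard probabilistic extraction of weakly sparse subgraphs of large minimum degree.

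First I would pass to the hereditary closure of $\mathcal{C}$: taking induced subgraphs preserves the forbidden biclique so the closure is still weakly sparse, shallow topological minors pass to supergraphs so it still has unbounded expansion, and it is obtained from $\mathcal{C}$ by the trivial transduction marking a subset $S$ of vertices and outputting the induced subgraph on $S$. Assume $\mathcal{C}$ is hereditary, and apply Theorem~\ref{thm:dvorak} to obtain $r \geq 0$ such that for every $\delta \in \mathbb{N}$ there is a graph $G_\delta$ with $\delta(G_\delta) > \delta$ whose $r$-subdivision $H_\delta$ lies in $\mathcal{C}$.

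Next I would define a transduction $P$ using two unary predicates $\Pi$ and $M$, where $\Pi$ is intended to mark the principal vertices of $H_\delta$ (those corresponding to $V(G_\delta)$) and $M$ marks a chosen subset of subdivision vertices. The interpretation restricts the domain to $\Pi$-marked vertices and declares $x \sim y$ iff there exist $z_1, \ldots, z_r$ forming a path $x = z_0, z_1, \ldots, z_r, z_{r+1} = y$ in $H_\delta$ with each internal vertex satisfying $M(z_i) \wedge \neg \Pi(z_i)$. Since in the $r$-subdivision the unique length-$(r+1)$ path between two principals is the subdivision path of the corresponding edge of $G_\delta$, by marking exactly the internal vertices of an arbitrary subset $F \subseteq E(G_\delta)$ one realizes the spanning subgraph $(V(G_\delta), F)$ as an output of $P$. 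In particular $P(H_\delta)$ contains every spanning subgraph of $G_\delta$. (The case $r=0$ is trivial, since then $G_\delta \in \mathcal{C}$ is already $K_{t,t}$-free for the $t$ witnessing weak sparsity of $\mathcal{C}$, and the identity transduction suffices.)

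Finally, I would invoke the following standard extremal fact, provable by the probabilistic method with alteration: every graph $G$ with $\delta(G) \geq d$ contains a $K_{2,2}$-free subgraph $H$ with $\delta(H) = \Omega(\sqrt{d})$. One includes each edge of $G$ independently with probability $p \sim c/\sqrt{d}$, deletes one edge from each surviving $C_4$, and prunes vertices of low degree. Applying this to each $G_\delta$ yields $K_{2,2}$-free subgraphs $H^*_\delta$ with $\delta(H^*_\delta) \to \infty$ as $\delta \to \infty$. The class $\mathcal{D} := \{H^*_\delta : \delta \in \mathbb{N}\}$ is then weakly sparse (every member is $K_{2,2}$-free) and of unbounded minimum degree, and $\mathcal{D} \subseteq P(\mathcal{C})$ since each $H^*_\delta$ is a spanning subgraph of the corresponding $G_\delta$, hence lies in $P(H_\delta)$. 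Thus $\mathcal{C}$ transduces $\mathcal{D}$, as required. The main obstacle is the extremal lemma itself: ensuring that the probabilistic argument yields a lower bound on \emph{minimum} (rather than merely average) degree requires careful control of the pruning step and of the probability parameter, particularly since $G_\delta$ may be far from regular.
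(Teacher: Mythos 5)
The first three paragraphs of your proposal track the paper's argument closely: you pass to the hereditary closure, invoke Theorem~\ref{thm:dvorak} to obtain a fixed $r$ and graphs $G_\delta$ of unbounded minimum degree whose $r$-subdivisions lie in $\CC$, and build a transduction that recovers (subgraphs of) $G_\delta$ from its $r$-subdivision by marking principal vertices. All of that is fine.

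The gap is in the final step. You want to prove that every graph of minimum degree at least $d$ contains a $K_{2,2}$-free subgraph of minimum degree $\Omega(\sqrt{d})$, via the alteration method. This bound is not achievable by alteration. A back-of-the-envelope calculation already shows the problem: if $v$ has degree $d$ in a $d$-regular $G$, the number of $4$-cycles through $v$ is of order $d^3$, so after sampling edges with probability $p$, the expected number of surviving $4$-cycles through $v$ is $\Theta(d^3p^4)$, which must be beaten by the expected retained degree $pd$; this forces $p\lesssim d^{-2/3}$, giving expected degree $d^{1/3}$, not $d^{1/2}$. The $\sqrt d$ bound for $K_n$ requires algebraic constructions (projective plane incidence graphs), not alteration. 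Moreover, you need the lemma for \emph{arbitrary} $G_\delta$, not $K_n$, and $G_\delta$ may be far from regular and may itself contain large bicliques (subdividing destroys bicliques, so $G_\delta$ is not constrained by the weak sparsity of $\CC$). A statement of the form ``every graph of large minimum degree contains a $K_{2,2}$-free subgraph of large minimum degree'' is indeed true (it is a theorem of K\"uhn and Osthus), but it is not a standard probabilistic fact and it does not come with a $\sqrt d$ bound. Your closing sentence correctly identifies this as the crux.

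The paper sidesteps the extremal question entirely with a dichotomy that is the key idea you are missing. Let $\DD$ be the class of all graphs whose $r$-subdivision lies in $\CC$; it is monotone (since $\CC$ is hereditary and the $r$-subdivision of a subgraph of $G$ is an induced subgraph of the $r$-subdivision of $G$), has unbounded minimum degree by Theorem~\ref{thm:dvorak}, and is transduced by $\CC$. Now, either $\DD$ is already weakly sparse, in which case you are done; or $\DD$ contains $K_{t,t}$ for all $t$, and then by monotonicity it contains \emph{every} bipartite graph. In particular, you may simply \emph{choose} a class $\DD'\subset\DD$ of $K_{2,2}$-free bipartite graphs of unbounded minimum degree (e.g.\ incidence graphs of projective planes), rather than \emph{extract} one from each $G_\delta$. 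This replaces your extremal lemma with a one-line case distinction plus the existence of dense $K_{2,2}$-free graphs, which is completely elementary.
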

\begin{proof}
    Without loss of generality $\CC$ is hereditary, since every class transduces its own hereditary closure.
    Let $r\ge 0$ be as in Theorem~\ref{thm:dvorak}.
    If $r=0$ then there is nothing to prove, so suppose $r\ge 1$.
    
    Let $\DD$ be the class of graphs such that the $r$-subdivision of  every graph in $\DD$ belongs to $\CC$.
    Then $\DD$ has unbounded minimum degree, and is monotone.
    Moreover, is easy to see that $\CC$ transduces $\DD$.
    If $\DD$ is weakly sparse, then we are done.
    Otherwise, $\DD$ contains all bipartite graphs, by monotonicity.
     There exists a class $\DD'$ of $K_{2,2}$-free bipartite graphs of unbounded minimum degree,
    so $\DD\supseteq \DD'$.
    In particular, $\CC$ transduces $\DD'$, and we are done.
\end{proof}

Combining Corollary~\ref{cor:dvorak} with Theorem~\ref{thm:kp} yields:
\begin{corollary}\label{cor:forests}
    Let $\CC$ be a weakly sparse class of unbounded expansion. Then $\CC$ transduces the class of all trees.
\end{corollary}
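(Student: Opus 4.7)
The plan is to chain Corollary~\ref{cor:dvorak} with Theorem~\ref{thm:kp}, and then observe that passing to induced subgraphs is itself a transduction, so that transductions compose to give the claimed conclusion. More precisely, since $\CC$ is weakly sparse and has unbounded expansion, Corollary~\ref{cor:dvorak} provides a weakly sparse class $\DD$ of unbounded minimum degree such that $\CC$ transduces~$\DD$. By Theorem~\ref{thm:kp}, every tree appears as an induced subgraph of some graph in $\DD$.

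Next, there is a simple transduction $Q$ that extracts induced subgraphs: $Q$ introduces a single fresh unary predicate $M$ and uses the quantifier-free interpretation with $\vartheta(x):=M(x)$ and $\psi_E(x,y):=E(x,y)$. For any graph $H\in\DD$ and any subset $S\subseteq V(H)$, one can mark exactly the vertices in $S$ with $M$, and then $Q$ outputs $H[S]$. Consequently, if $T$ is a tree and $T$ is an induced subgraph of some $H\in\DD$, then $T\in Q(H)\subseteq Q(\DD)$. Applying this to every tree shows that $\cal T\subseteq Q(\DD)$, where $\cal T$ denotes the class of all trees.

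Finally, transductions compose: if $\CC$ transduces $\DD$ via some transduction $P$, then the composition $Q\circ P$ is again a transduction (one adds the unary predicates used by $P$ and by $Q$, then applies the interpretations in turn, absorbing the second interpretation into the first in the standard way). Hence $\cal T\subseteq Q(\DD)\subseteq (Q\circ P)(\CC)$, which means that $\CC$ transduces $\cal T$, as required.

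The only potentially subtle step is the composition of transductions in the final paragraph, but since $Q$ is quantifier-free and only introduces one extra unary predicate and relativises the universe, its composition with $P$ is routine: the universe formula of $P$ is conjoined with the marking predicate of $Q$, the edge formula of $P$ is used unchanged (as $\psi_E^Q$ is simply $E(x,y)$), and the predicate sets are taken in union. Hence no genuine obstacle arises, and the corollary follows directly.
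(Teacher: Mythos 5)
Your proof is correct and takes essentially the same route the paper intends: chain Corollary~\ref{cor:dvorak} with Theorem~\ref{thm:kp}, extract induced subgraphs via a quantifier-free transduction with a single fresh marking predicate, and compose. The paper leaves these routine steps implicit ("Combining\ldots yields"), and your writeup just spells them out accurately.
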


In particular, if $\CC$ has bounded rank and is weakly sparse, then $\CC$ has bounded expansion. 
Classes of bounded expansion in particular have bounded star-chromatic number. The definition will not matter here,
as we will only rely on the following, known lemma
(see e.g. \cite[Lemma 7.2]{gajarsky2022stable}).

\begin{lemma}\label{lem:star-chrom}
    Let $\CC$ be a class of bounded star-chromatic number,
    and let $\CC'$ be its monotone closure. Then $\CC$ transduces $\CC'$.
\end{lemma}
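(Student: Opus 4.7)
The plan is to design an explicit transduction that takes a graph $G\in\CC$ equipped with a star coloring, plus extra unary marks, and outputs an arbitrary subgraph $H\subseteq G$. The key combinatorial property I will use is that, in a star coloring, for any two colors $i,j$ the bichromatic subgraph (induced on color classes $i$ and $j$) is a star forest; equivalently, every edge $uv$ with $c(u)=i$ and $c(v)=j$ has an endpoint whose only neighbor of the opposite color in $G$ is the other endpoint of that edge. This is what will let me encode an arbitrary set of edges via only vertex labels.

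Concretely, fix a constant $k$ bounding the star-chromatic number of $\CC$, and let $c\from V(G)\to\{1,\dots,k\}$ be a star coloring of a given $G\in\CC$. For a target subgraph $H\subseteq G$, I will mark:
\begin{itemize}
    \item a unary predicate $V_H$ on the vertices of $H$ (to restrict the universe);
    \item unary predicates $C_1,\dots,C_k$ recording the coloring $c$;
    \item for each color $j\in\{1,\dots,k\}$, a unary predicate $L_j$, where $v$ is put into $L_j$ iff $v$ has exactly one $j$-colored neighbor in $G$, that neighbor belongs to $V_H$, and the edge between them is an edge of $H$.
\end{itemize}
The interpretation then uses the domain formula $\vartheta(x)\equiv V_H(x)$ and the following formula for adjacency:
\[
\psi_E(x,y)\;\equiv\; E(x,y)\;\wedge\;\bigvee_{i,j=1}^{k} \bigl(C_i(x)\wedge C_j(y)\wedge (\alpha_{i,j}(x,y)\vee \alpha_{j,i}(y,x))\bigr),
\]
where $\alpha_{i,j}(x,y)$ is the quantifier-free (or at most existential) formula expressing ``$C_i(x)\wedge C_j(y)\wedge L_j(x)\wedge E(x,y)\wedge \forall z\,(E(x,z)\wedge C_j(z)\to z=y)$'', i.e.\ $y$ is the unique $j$-colored $G$-neighbor of the ``leaf'' $x$.

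To see correctness, take any edge $uv\in E(H)$ with $c(u)=i$ and $c(v)=j$. Since the bichromatic subgraph of $G$ on colors $i,j$ is a star forest, the component of this subgraph containing $uv$ is a star, so at least one of $u,v$ has degree $1$ in that bichromatic subgraph. If $u$ is such a leaf, then $v$ is its unique $j$-colored neighbor in $G$, and since $uv\in E(H)$ we marked $u$ with $L_j$; hence $\alpha_{i,j}(u,v)$ holds. The symmetric case puts $v\in L_i$ and makes $\alpha_{j,i}(v,u)$ hold. Conversely, if $\psi_E(u,v)$ holds then $uv\in E(G)$ and one of the clauses forces the corresponding endpoint to be in $L_j$ or $L_i$, which by definition of these predicates means $uv\in E(H)$. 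Finally, restricting to $V_H$ yields exactly the subgraph $H$.

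The number of unary marks is $1+2k$ and the interpretation has bounded quantifier rank, so this is a valid transduction, independent of $G$. Since $H$ was an arbitrary subgraph of an arbitrary $G\in\CC$, we conclude that $\CC$ transduces its monotone closure $\CC'$. The only nontrivial point is the star-forest property of bichromatic classes, which is the defining feature of star colorings; once that is in hand, the encoding above is immediate and the verification is routine.
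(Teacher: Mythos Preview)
Your argument is correct. The paper does not give its own proof of this lemma; it only cites it as known (see e.g.\ \cite[Lemma 7.2]{gajarsky2022stable}), so there is no in-paper proof to compare against. That said, the approach you take is exactly the standard one underlying that reference: a star coloring guarantees that every bichromatic subgraph is a star forest, so each edge has a ``leaf'' endpoint that can carry a unary mark encoding whether the edge survives in the subgraph $H$. Your formulas and the correctness verification are fine; the quantifier rank is $1$ and the number of marks is $O(k)$, as you note.

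One cosmetic remark: you describe $\alpha_{i,j}$ as ``quantifier-free (or at most existential)'' but then write a universal quantifier; it is a $\Pi_1$-formula of quantifier rank $1$, which is perfectly fine for a transduction. Also, the uniqueness clause in $\alpha_{i,j}$ is not strictly needed for soundness (membership in $L_j$ already forces uniqueness of the $j$-colored neighbor), but it does no harm.
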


Proposition~\ref{prop:wsparse} easily follows 
from the above results.

\begin{proof}[Proof of Proposition~\ref{prop:wsparse}]
Let $\CC$ be a weakly sparse class,
and suppose that  $\CC$  has tree rank at least $d$.
We argue that $\CC$ transduces $\cal T_{d}$,
thus proving that $\CC$ has rank at least $d$.

If $\CC$ has unbounded expansion then $\CC$ transduces the class of all trees, 
and therefore has unbounded rank, and we are done.

So suppose $\CC$ has bounded expansion.
Let $\CC'$ denote the monotone closure of $\CC$.
Then $\CC$  transduces $\CC'$ by Lemma~\ref{lem:star-chrom}.
By assumption, there is $r\ge 0$ such that $\CC'$ contains some ${\le}r$-subdivision of every tree in $\cal T_{d}$. It is easy to see that $\CC'$ transduces $\cal T_{d}$.
By transitivity, $\CC$ also transduces $\cal T_{d}$.
In any case,  $\CC$ has rank at least $d$.
\end{proof}

\bibliographystyle{plain}

\begin{thebibliography}{10}

    \bibitem{BODLAENDER19981}
    Hans~L. Bodlaender.
    \newblock A partial k-arboretum of graphs with bounded treewidth.
    \newblock {\em Theoretical Computer Science}, 209(1):1--45, 1998.
    
    \bibitem{blcw}
    \'{E}douard Bonnet, Jan Dreier, Jakub Gajarsk\'{y}, Stephan Kreutzer, Nikolas M\"{a}hlmann, Pierre Simon, and Szymon Toru\'{n}czyk.
    \newblock Model checking on interpretations of classes of bounded local cliquewidth.
    \newblock In {\em Proceedings of the 37th Annual ACM/IEEE Symposium on Logic in Computer Science, LICS 2022}. ACM, 2022.
    
    \bibitem{tww4}
    \'{E}douard Bonnet, Ugo Giocanti, Patrice Ossona~de Mendez, Pierre Simon, St\'{e}phan Thomass\'{e}, and Szymon Toru\'{n}czyk.
    \newblock Twin-width {IV}: {O}rdered graphs and matrices.
    \newblock In {\em Proceedings of the 54th Annual ACM SIGACT Symposium on Theory of Computing, STOC 2022}, pages 924--937. ACM, 2022.
    
    \bibitem{tww1}
    {\'{E}}douard Bonnet, Eun~Jung Kim, St{\'{e}}phan Thomass{\'{e}}, and R{\'{e}}mi Watrigant.
    \newblock Twin-width {I:} tractable {FO} model checking.
    \newblock In {\em Proceedings of the 61st {IEEE} Annual Symposium on Foundations of Computer Science, {FOCS} 2020}, pages 601--612. {IEEE}, 2020.
    
    \bibitem{transduction-quasiorder}
    Samuel Braunfeld, Jaroslav Ne\v{s}et\v{r}il, Patrice {Ossona de Mendez}, and Sebastian Siebertz.
    \newblock On the first-order transduction quasiorder of hereditary classes of graphs.
    \newblock {\em arXiv}, abs/2208.14412, 2022.
    
    \bibitem{chandra1982structure}
    Ashok Chandra and David Harel.
    \newblock Structure and complexity of relational queries.
    \newblock {\em Journal of Computer and System Sciences}, 25(1):99--128, 1982.
    
    \bibitem{Chvatal77}
    Va\v{s}ek Chv{\'{a}}tal.
    \newblock Tree-complete graph ramsey numbers.
    \newblock {\em J. Graph Theory}, 1(1):93, 1977.
    
    \bibitem{Courcelle90}
    Bruno Courcelle.
    \newblock The {M}onadic {S}econd-{O}rder {L}ogic of graphs. {I}. {R}ecognizable sets of finite graphs.
    \newblock {\em Inf. Comput.}, 85(1):12--75, 1990.
    
    \bibitem{model-theory-makes-formulas-large}
    Anuj Dawar, Martin Grohe, Stephan Kreutzer, and Nicole Schweikardt.
    \newblock Model theory makes formulas large.
    \newblock In {\em Proceedings of the 34th International Conference on Automata, Languages and Programming, ICALP 2007}, pages 913--924. Springer-Verlag, 2007.
    
    \bibitem{DBLP:journals/corr/abs-2311-18740}
    Jan Dreier, Ioannis Eleftheriadis, Nikolas M{\"{a}}hlmann, Rose McCarty, Micha{\l{}} Pilipczuk, and Szymon Toru{\'n{}}czyk.
    \newblock First-order model checking on monadically stable graph classes.
    \newblock {\em CoRR}, abs/2311.18740, 2023.
    
    \bibitem{dms}
    Jan Dreier, Nikolas M\"{a}hlmann, and Sebastian Siebertz.
    \newblock First-order model checking on structurally sparse graph classes.
    \newblock In {\em Proceedings of the 55th Annual ACM Symposium on Theory of Computing, STOC 2023}, pages 567--580. ACM, 2023.
    
    \bibitem{DVORAK2018143}
    Zden{\v e}k Dvo{\v r}{\'a}k.
    \newblock Induced subdivisions and bounded expansion.
    \newblock {\em European Journal of Combinatorics}, 69:143--148, 2018.
    
    \bibitem{dvovrak2013testing}
    Zden{\v{e}}k Dvo{\v{r}}{\'a}k, Daniel Kr{\'a}l, and Robin Thomas.
    \newblock Testing first-order properties for subclasses of sparse graphs.
    \newblock {\em J. ACM}, 60(5):36, 2013.
    
    \bibitem{dvorak2007asymptotical}
    Zden\v{e}k Dvo\v{r}{\'a}k.
    \newblock {\em Asymptotical structure of combinatorial objects}.
    \newblock PhD thesis, Charles University, Faculty of Mathematics and Physics, 2007.
    
    \bibitem{flum-grohe}
    J\"{o}rg Flum and Martin Grohe.
    \newblock Fixed-parameter tractability, definability, and model-checking.
    \newblock {\em SIAM Journal on Computing}, 31(1):113--145, 2001.
    
    \bibitem{frick2004complexity}
    Markus Frick and Martin Grohe.
    \newblock The complexity of first-order and monadic second-order logic revisited.
    \newblock {\em Annals of Pure and Applied Logic}, 130(1-3):3--31, 2004.
    
    \bibitem{gaifman1982local}
    Haim Gaifman.
    \newblock On local and non-local properties.
    \newblock {\em Studies in Logic and the Foundations of Mathematics}, 107:105--135, 1982.
    
    \bibitem{gajarsky2022differential}
    Jakub Gajarsk{\'{y}}, Maximilian Gorsky, and Stephan Kreutzer.
    \newblock Differential games, locality, and model checking for {FO} logic of graphs.
    \newblock In {\em Proceedings of the 30th {EACSL} Annual Conference on Computer Science Logic, CSL 2022}, volume 216 of {\em LIPIcs}, pages 22:1--22:18. Schloss Dagstuhl - Leibniz-Zentrum f{\"{u}}r Informatik, 2022.
    
    \bibitem{gajarsky-hlineny}
    Jakub Gajarsk{\'{y}} and Petr Hlinen{\'{y}}.
    \newblock Kernelizing {MSO} properties of trees of fixed height, and some consequences.
    \newblock {\em Log. Methods Comput. Sci.}, 11(1), 2015.
    
    \bibitem{bd_deg_interp}
    Jakub Gajarsk{\'{y}}, Petr Hlin\v{e}n{\'{y}}, Jan Obdr\v{z}{\'{a}}lek, Daniel Lokshtanov, and M.~S. Ramanujan.
    \newblock A new perspective on {FO} model checking of dense graph classes.
    \newblock {\em {ACM} Trans. Comput. Log.}, 21(4):28:1--28:23, 2020.
    
    \bibitem{lsd-journal}
    Jakub Gajarsk{\'{y}}, Stephan Kreutzer, Jaroslav Ne\v{s}et\v{r}il, Patrice {Ossona de Mendez}, Micha\l{} Pilipczuk, Sebastian Siebertz, and Szymon Toru\'nczyk.
    \newblock First-order interpretations of bounded expansion classes.
    \newblock {\em {ACM} Trans. Comput. Log.}, 21(4):29:1--29:41, 2020.
    
    \bibitem{gajarsky2022stable}
    Jakub Gajarsk{\'{y}}, Micha\l{} Pilipczuk, and Szymon Toru\'nczyk.
    \newblock Stable graphs of bounded twin-width.
    \newblock In {\em 37th Annual {ACM/IEEE} Symposium on Logic in Computer Science, {LICS} 2022}, pages 39:1--39:12. {ACM}, 2022.
    
    \bibitem{shrubdepth-journal}
    Robert Ganian, Petr Hlinen{\'{y}}, Jaroslav Nesetril, Jan Obdrz{\'{a}}lek, and Patrice~Ossona de~Mendez.
    \newblock Shrub-depth: Capturing height of dense graphs.
    \newblock {\em Log. Methods Comput. Sci.}, 15(1), 2019.
    
    \bibitem{grokre11}
    Martin Grohe and Stephan Kreutzer.
    \newblock Methods for algorithmic meta theorems.
    \newblock In M.~Grohe and J.A. Makowsky, editors, {\em Model Theoretic Methods in Finite Combinatorics}, volume 558 of {\em Contemporary Mathematics}, pages 181--206. American Mathematical Society, 2011.
    
    \bibitem{gks}
    Martin Grohe, Stephan Kreutzer, and Sebastian Siebertz.
    \newblock Deciding first-order properties of nowhere dense graphs.
    \newblock {\em J. {ACM}}, 64(3):17:1--17:32, 2017.
    
    \bibitem{DBLP:journals/lmcs/KazanaS19}
    Wojciech Kazana and Luc Segoufin.
    \newblock First-order queries on classes of structures with bounded expansion.
    \newblock {\em Log. Methods Comput. Sci.}, 16(1), 2020.
    
    \bibitem{kierstead-penrice}
    H.~A. Kierstead and S.~G. Penrice.
    \newblock Radius two trees specify $\chi$-bounded classes.
    \newblock {\em J. Graph Theory}, 18(2):119--129, mar 1994.
    
    \bibitem{KIROUSIS1986205}
    Lefteris~M. Kirousis and Christos~H. Papadimitriou.
    \newblock Searching and pebbling.
    \newblock {\em Theoretical Computer Science}, 47:205--218, 1986.
    
    \bibitem{lampis}
    Michael Lampis.
    \newblock {First Order Logic on Pathwidth Revisited Again}.
    \newblock In {\em Proceedings of the 50th International Colloquium on Automata, Languages, and Programming, ICALP 2023}, volume 261 of {\em Leibniz International Proceedings in Informatics (LIPIcs)}, pages 132:1--132:17. Schloss Dagstuhl --- Leibniz-Zentrum f{\"u}r Informatik, 2023.
    
    \bibitem{grad-and-bounded-expansion-Nesetril}
    Jaroslav Ne{\v s}et{\v r}il and Patrice {Ossona de Mendez}.
    \newblock Grad and classes with bounded expansion i. decompositions.
    \newblock {\em European Journal of Combinatorics}, 29(3):760--776, 2008.
    
    \bibitem{sparsity}
    Jaroslav Ne\v{s}et\v{r}il and Patrice {Ossona de Mendez}.
    \newblock {\em Sparsity --- {G}raphs, {S}tructures, and {A}lgorithms}, volume~28 of {\em Algorithms and combinatorics}.
    \newblock Springer, 2012.
    
    \bibitem{NesetrilRMS20}
    Jaroslav Ne\v{s}et\v{r}il, Roman Rabinovich, Patrice {Ossona de Mendez}, and Sebastian Siebertz.
    \newblock Linear rankwidth meets stability.
    \newblock In {\em Proceedings of the 31st {ACM-SIAM} Symposium on Discrete Algorithms, {SODA} 2020}, pages 1180--1199. {SIAM}, 2020.
    
    \bibitem{aggregate-queries}
    Szymon Toru\'{n}czyk.
    \newblock Aggregate queries on sparse databases.
    \newblock In {\em Proceedings of the 39th ACM SIGMOD-SIGACT-SIGAI Symposium on Principles of Database Systems, PODS 2020}, page 427–443. ACM, 2020.
    
    \bibitem{flipwidth}
    Szymon Toruńczyk.
    \newblock Flip-width: Cops and robber on dense graphs.
    \newblock In {\em Proceedings of the 64th IEEE Annual Symposium on Foundations of Computer Science, FOCS 2023}, pages 663--700. IEEE, 2023.
    \newblock arXiv abs/2302.00352.
    
\end{thebibliography}

\appendix

\section{Quantifier elimination in classes of bounded expansion}\label{app:qe}
In this section, in particular we prove Lemma~\ref{lem:dkt-constant}. We prove a more general statement, Theorem~\ref{thm:be-qe} below.

For a signature $\Sigma$ and a class $\CC$ of $\Sigma$-structures, we say that $\CC$
has \emph{bounded expansion} if the class of Gaifman graphs of structures in $\CC$ has bounded expansion.
By a \emph{flag} we mean a relation symbol $F$ of arity~$0$.
Therefore, it is interpreted in a given structure 
as a boolean value, \emph{true} or \emph{false},
and we can use it in a formula, e.g. writing $F\land \alpha$ for some formula $\alpha$.
Next, whenever $\tup x$ is a~set of variables (of first-order logic) and $\str A$ is a~$\Sigma$-structure, we define $\str A^{\tup x}$ as the set of all valuations of $\tup x$ comprising arbitrary elements of $\str A$ (i.e., the set of functions from $\tup x$ to $\str A$), and for a~first-order formula $\varphi(\tup x)$, define $\varphi_{\str A} \coloneqq \{\tup a \in \str A^{\tup x}\,\colon\, \str A \models \varphi(\tup a)\}$.

The goal of this section is to prove the following result, essentially due to \DKT~\cite{dvovrak2013testing}. Our addition here is the ``moreover'' part, which is obtained by analysing  the proof presented in \cite{lsd-journal}, carefully keeping track of the dependence of the time complexity on $\varphi$.

\begin{theorem}\label{thm:be-qe}
Let $\CCC$ be a class of structures with bounded expansion and let $\phi(\tup x)$ be a first-order formula.
There exists a signature $\wh\Sigma$ consisting of unary relation and function symbols and flags, a class $\wh\CCC$ of~~$\wh\Sigma$-structures with bounded expansion, a quantifier-free $\wh\Sigma$-formula $\widehat\phi(\tup x)$ 
and an algorithm which given $\str A\in\CCC$
outputs in time $\Oof_{\CCC,\phi}(|\str A|)$ a structure $\widehat{\str A}\in\widehat{\CCC}$ whose Gaifman graph is contained in the Gaifman graph of $\str A$, such that $\phi_{\str A}=\widehat\phi_{\widehat{\str A}}.$

Moreover, if, for some $s \in \N$, we have that $\tilde\nabla_r(\CC) \le \tower(s, \poly(r))$ for all $r \in \N$, then:
\begin{itemize}
	\item the formula $\wh\phi$ can be computed, given $\phi$, in time $\tower((s + 4)(q + 1),\poly(|\phi|))$,	 where $q$ is the alternation rank of $\phi$,
	\item  	
	the structure $\wh {\str A}$ can be computed, given $\phi$ and $\str A$, in time $\tower((s + 4)(q + 1),\poly(|\phi|))\cdot |\str A|$.
\end{itemize}
\end{theorem}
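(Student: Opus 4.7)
The plan is to follow the quantifier elimination scheme for classes of bounded expansion developed by \DKT~\cite{dvovrak2013testing}, using the cleaner presentation via transitive fraternal augmentations from~\cite{lsd-journal}, while carefully tracking the dependence of every quantity on $|\phi|$ in order to extract the ``moreover'' part.

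First, I would invoke the \emph{transitive fraternal augmentation} procedure: for any class $\CC$ of bounded expansion and any parameter $r\in\N$, there is a linear-time algorithm that augments each $\str A\in \CC$ with a bounded number of unary function symbols (``pointers'') and unary relation symbols, yielding a structure $\str A^{(r)}$ on an enriched signature, such that (a) the Gaifman graph of $\str A^{(r)}$ is contained in the Gaifman graph of $\str A$ together with edges added to close it under depth-$r$ minor relations; (b) the class $\{\str A^{(r)}\colon \str A\in\CC\}$ still has bounded expansion; and (c) if $\tilde\nabla_r(\CC)\le \tower(s,\poly(r))$, then the grads of the augmented class, as well as the per-element running time of the augmentation, are elementary in $r$ at tower height $s+\Oof(1)$.

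Second, I would proceed by induction on the alternation rank $q$ of $\phi$. In the base case ($\phi$ quantifier-free) I set $\wh\phi\coloneqq\phi$ and $\wh\Sigma$ equal to $\Sigma$ augmented trivially. In the inductive step handling a block of existential quantifiers, given $\phi(\tup x)=\exists \tup y.\,\psi(\tup x,\tup y)$ with $\psi$ of smaller alternation rank, I apply the inductive hypothesis to $\psi$ to obtain a quantifier-free $\wh\psi(\tup x,\tup y)$ on some augmented class. Working inside this augmented class, I would eliminate the existential over $\tup y$ by case analysis: either some coordinate of $\tup y$ is reachable from $\tup x$ by a short sequence of pointer applications, in which case that coordinate is expressible as a term $f_{i_\ell}(\cdots f_{i_1}(x_j)\cdots)$, or $\tup y$ is disjoint from the pointer-closure of $\tup x$, in which case the existence of $\tup y$ with a given atomic type depends only on global counts that can be precomputed and recorded as flags. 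Aggregating the finitely many such cases into a disjunction yields a quantifier-free $\wh\phi(\tup x)$ after one further augmentation pass at a larger depth.

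Third, I would tally the running times. Each inductive step (i) increases the required augmentation depth by a factor elementary in the current formula size, (ii) enlarges the signature by a number of unary symbols elementary in the previous signature, and (iii) expands the formula by a tower of bounded height in the current parameters. Since each augmentation pass contributes an additive $s+\Oof(1)$ to the tower height of the running time (via property (c) above), composing these across the $q+1$ quantifier blocks of $\phi$ produces the bound $\tower((s+4)(q+1),\poly(|\phi|))$, provided the constant $4$ is chosen to absorb the overhead of type enumeration, signature growth, and Gaifman overhead per block.

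The main obstacle is ensuring that the tower height grows only linearly in the alternation rank $q$, rather than in the full quantifier rank of $\phi$. Achieving this requires batching every block of consecutive same-polarity quantifiers into a single augmentation step; otherwise, each individual quantifier would contribute its own tower layer. This batching requires, for each block, enumerating all pointer-paths of lengths bounded by the block's arity simultaneously, which keeps the per-step blow-up singly exponential in $|\psi|$ but on top of a base that is itself a tower of height $(s+4)\cdot q$. Alongside this, we must verify that the augmented class at each level of the recursion remains of bounded expansion with grads within the target tower bound, which is property (c) applied inductively.
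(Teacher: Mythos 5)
Your proposal takes a genuinely different route from the paper's: you work with transitive fraternal augmentations and pointer-functions in the style of the original \DKT argument, whereas the paper's proof (in Appendix~\ref{app:qe}) routes through low-treedepth colorings, elimination forests, and \emph{tree valuation automata} on bounded-depth forests. The distinction is not cosmetic, and it is exactly where the critical difficulty lives.

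The gap in your argument is the claim that an entire existential block $\exists\tup y.\psi(\tup x,\tup y)$ with $|\tup y|=\ell$ can be eliminated in a \emph{single} augmentation pass by ``enumerating all pointer-paths of lengths bounded by the block's arity simultaneously.'' This is precisely the step that the classical fraternal-augmentation approach does \emph{not} obviously support: eliminating one quantifier at a time is straightforward there (either the witness $y$ is a bounded pointer-term of $\tup x$, or only precomputed counts matter), but for $\ell$ witnesses you must additionally control the ways in which the $y_i$ can cluster \emph{among themselves}, recursively, independently of $\tup x$. Concretely, $y_2$ may be a pointer-term of $y_1$ while $y_1$ is itself ``free'' of $\tup x$, $y_3$ may attach to either, and so on; the case analysis over these clustering patterns requires information about the existence of such nested configurations, which cannot be reduced to scalar counts without an additional structural tool. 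If you instead eliminate one quantifier per pass, you recover a correct proof, but the tower height then scales with the \emph{quantifier rank} of $\phi$ rather than its alternation rank, and the moreover-part of the theorem fails. The paper escapes this by pushing each block, after distributing variables across a low-treedepth coloring, down to the bounded-depth-forest case and there eliminating the \emph{whole} block at once via a deterministic tree automaton: the automaton's run on the empty valuation, together with trimmed child-state counts, encodes exactly the information needed to decide satisfaction of an arbitrary MSO (hence FO) formula on a fixed tuple $\tup a$, and this handles all internal clustering of witnesses uniformly. To make your route work, you would need a separate lemma proving that a full block can be absorbed in one augmentation round, and as written the proposal asserts this without proof. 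A secondary point that also needs care is your claim that each augmentation pass adds only an additive $s+\Oof(1)$ to the tower height of the grads; this is plausible but you have not justified it, and it is load-bearing for the final bound.
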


The key points of the statement are that (1) the formula $\widehat\phi(\tup x)$ is quantifier-free,  (2) the algorithm computes $\wh{\str A}$ from $\str A$ 
in linear time, and (3) $\phi_{\str A}=\widehat\phi_{\widehat{\str A}}.$
Thus, the theorem allows us to replace a formula $\phi(\tup x)$ with a quantifier-free formula $\wh\phi(\tup x)$, 
after processing the structure $\str A$ 
in linear time. 
For our purpose in this paper, the key observation is encompassed in the last bullet, concerning elementarity.

In the case when $\phi$ is a sentence (that is, $\tup x$ is empty),
Theorem~\ref{thm:be-qe} in particular allows us to solve the model-checking problem (\emph{does $\str A$ satisfy $\phi$?}) in time $\Oof_{\CCC,\phi}(|\str A|)$, for $\str A\in\CC$.
 Theorem~\ref{thm:be-qe} also solves a~more general problem: it allows to compute, in time $\Oof_{\CC,\phi}(|\str A|)$,
a data structure that answers the following queries in 
time $\Oof_{\CC,\phi}(1)$: \emph{does a given tuple $\tup a\in \str A^{\tup x}$ satisfy $\phi(\tup a)$?}
We remark that even though the statement of Theorem~\ref{thm:be-qe} is non-uniform in $\phi$,
the proof is in fact uniform in $\phi$, 
as is required in the statements of Theorem~\ref{thm:dkt}
and  Lemma~\ref{lem:dkt-constant}.

This proves  Theorem~\ref{thm:dkt}.
Furthermore, Lemma~\ref{lem:dkt-constant} follows immediately from the ``moreover'' part of Theorem~\ref{thm:be-qe}. 

\medskip
To prove Theorem~\ref{thm:be-qe} it is enough 
to consider the case when $\phi(\tup x)$ 
is an~existential formula -- that is, of the form $\exists \bar y.\psi(\tup x,\tup y)$ where $\psi$ is quantifier-free -- 
as the general case proceeds by an repeated application of this case (see below for more details).
The proof of Theorem~\ref{thm:be-qe} proceeds by considering such existential formulas, and 
classes $\CC$ of increasing generality:
\begin{enumerate}
	\item First we prove the statement in the case where $\CC$ is a class of forests (possibly vertex-colored). 
	\item Next, we prove the statement in the case where $\CC$ is a class of relational structures of bounded treedepth (meaning that the class of underlying Gaifman graphs has bounded treedepth). This case is solved by reducing to the case of forests of bounded depth, by considering the elimination forests of the Gaifman graphs of the structures in $\CC$, expanded with colors used to encode the relations. 
	\item Finally, to prove the general case, we reduce it to the previous one, using low treedepth colorings.
\end{enumerate}

We start with considering the special case when $\CCC$ is a class of forests of bounded depth.

\subsection*{Quantifier elimination in forests of bounded depth}

\newcommand{\parent}{\mathit{parent}}
Fix a signature $\Sigma$ consisting of unary relation symbols, flags,
and one unary function $\parent$.
In this section, 
by a \emph{forest} of depth $d$ we mean a structure $F$ over the signature $\Sigma$,
where $\parent$ is the parent function of some rooted forest  (see Fig.~\ref{fig:forest}). We assume by convention that if $r$ is a root, then $\parent(r)=r$.
Whenever $A \subseteq F$, we define $\parent(A) = \{\parent(v)\,\colon\,v \in A\}$.
For $i \in \N$ and $v \in F$, we set $\parent^i(v)$ to be the $i$-fold application of $\parent$ of $v$; we define $\parent^i(A)$ for $A \subseteq F$ accordingly.
Since $F$ has depth at most $d$, we can assume that $0 \leq i < d$ in all considered formulas.

\begin{figure}[!ht]\centering
	\includegraphics[page=3,scale=0.9]{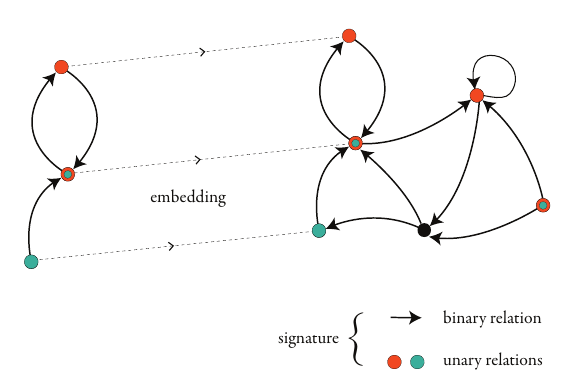}
	\caption{A forest of depth $3$ with two unary relations.}
	\label{fig:forest}
\end{figure}

We prove a slight strengthening of Theorem~\ref{thm:be-qe} in the case when $\CC$ is a class of forests of depth at most $d$. In this strengthening, the constructed graph $\wh F$ is a forest 
whose vertices and parent function are exactly the same as in $F$,
and $\wh F$ may have some additional unary predicates and flags.
In this case, we call $\wh F$ a \emph{coloring} of $F$.

\begin{lemma}\label{lem:forest-qe}	
	For every existential
 $\Sigma$-formula $\phi(\tup x)=\exists \tup y. \psi(\tup x,\tup y)$, where $\psi$ is quantifier-free,
	and every colored forest $F$ of depth $d$,
	there exist
	\begin{itemize}
		\item a 
		 signature $\wh\Sigma$ extending $\Sigma$ by unary predicates and flags, 
		\item a quantifier-free $\wh\Sigma$-formula $\wh\phi(\tup x)$, 
		\item a $\wh\Sigma$-coloring $\wh F$ of $F$, 
	\end{itemize}	
	such that:
	\begin{enumerate}
		\item  $\phi_F=\wh\phi_{\wh F},$
		\item $\wh\Sigma$ and $\wh\phi(\tup x)$ are computable from $\phi(\tup x)$ and $d$ in time bounded by $\tower(3, \poly(|\phi|, d))$,
		\item $\wh F$ is computable from $\phi(\tup x)$, $d$ and $F$ in time bounded by
		$\tower(3, \poly(|\phi|, d)) \cdot |F|$.
	\end{enumerate}
\end{lemma}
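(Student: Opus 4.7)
The plan is to convert $\phi(\tup x) = \exists \tup y.\,\psi(\tup x, \tup y)$ into disjunctive normal form in its quantifier-free kernel and treat each conjunction of literals separately; it thus suffices to eliminate $\exists \tup y$ for the case where $\psi = \tau$ is a single conjunction of literals. The atoms of $\tau$ only involve terms from the finite set $T \coloneqq \{\mathit{parent}^i(z) : z \in \tup x \cup \tup y,\; 0 \le i < d\}$, so the truth of $\tau$ under a valuation is completely determined by the equivalence relation induced on $T$ by equality, the unary type of each equivalence class, and the flag values of $F$.

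For each such ``equality-type pattern'' $\pi$ that makes $\tau$ satisfiable, I would partition its equivalence classes into \emph{pinned} classes (those containing some term $\mathit{parent}^j(x_k)$, hence determined by $\tup x$) and \emph{free} classes (those containing only $\mathit{parent}^i(y_\ell)$-terms). The free classes, together with the parent relation among them and their required unary types, form an abstract ``shape'' $\sigma$---a rooted sub-forest structure of depth at most $d$, annotated with inequalities to the pinned ancestors of $\tup x$. Finding a witness $\tup y$ then amounts to embedding $\sigma$ into $F$ coherently with the pinned part, while avoiding a bounded number of forbidden vertices, namely the ancestors of $\tup x$.

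For each shape $\sigma$, I would introduce into $\widehat\Sigma$ a fresh unary predicate $P_\sigma$ holding at a vertex $v \in F$ iff $v$ can serve as the root of an embedding of $\sigma$ into the sub-forest rooted at $v$; such predicates are computable in a single bottom-up traversal of $F$ in time $\Oof(|F|)$ per shape. In addition, for each shape and each threshold $N$ polynomial in $|\phi|$ and $d$, I would add a flag recording whether at least $N$ vertices of $F$ satisfy $P_\sigma$; this lets the quantifier-free formula deal with the ``distinct from the ancestors of $\tup x$'' requirement by simply demanding a sufficient surplus of candidates. The existential $\exists \tup y.\,\tau(\tup x,\tup y)$ then translates into a quantifier-free boolean combination of the predicates $P_\sigma$ evaluated at ancestors of $\tup x$, of the threshold flags, and of the literals of $\tau$ not involving $\tup y$.

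The main technical obstacle is the bookkeeping needed to fit within the triply exponential bound. The number of patterns $\pi$ is at most $2^{\poly(|\phi|, d)}$, the DNF expansion of $\psi$ contributes another exponential in $|\psi|$, and the shapes enumerate within the same range; summing over all disjuncts and shapes, both $\widehat\Sigma$ and $\widehat\phi$ have size bounded by $\tower(3, \poly(|\phi|, d))$. Finally, $\widehat F$ is obtained by one linear-time bottom-up pass per added predicate, yielding the stated $\tower(3, \poly(|\phi|, d)) \cdot |F|$ computation time.
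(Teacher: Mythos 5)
Your proposal takes a genuinely different route from the paper. The paper converts $\phi$ to an MSO formula, builds a tree-valuation automaton (Lemma~\ref{lem:MSO-to-aut}), determinizes it by the powerset construction, computes the run on the empty valuation, and labels each node with its state plus a count-vector $\mathit{cnt}^\star_v$ of children's states trimmed at $|\tup x|+2$; the key claim is that satisfaction of $\phi(\tup a)$ depends only on the atomic type of $\tup a$ in that labeling. You instead do a direct combinatorial analysis of equality-type patterns over the terms $\mathit{parent}^i(z)$, split classes into pinned and free, view the free part as a shape to embed, and introduce ``embeddability'' predicates $P_\sigma$ and counting flags. This is closer in spirit to the original Dvo\v r\'ak--Kr\'al--Thomas style quantifier elimination than to the automata-theoretic reformulation chosen by the paper. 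Both routes are legitimate, and yours has the virtue of being more elementary; the paper's buys a very clean handling of the exact bookkeeping that is the hard part of your sketch.

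That said, your sketch has concrete gaps. First, a free component may be anchored to a pinned class (e.g.\ $\pi$ forces $\mathit{parent}(y_\ell)=\mathit{parent}^2(x_k)$). Then the embedding of that component must be rooted at a \emph{child} of the vertex $\mathit{parent}^2(a_k)$, and children are not expressible from $\tup x$ by quantifier-free terms over $\mathit{parent}$. Global flags counting how many vertices in all of $F$ satisfy $P_\sigma$ do not help here: what you need are \emph{per-vertex} counting predicates of the form ``$v$ has at least $k$ children $c$ with $P_\sigma(c)$,'' trimmed at a polynomial threshold, added to $\wh\Sigma$. (This is precisely what the paper's $\mathit{cnt}^\star_v$ labels encode.) Second, your phrase ``$P_\sigma$ evaluated at ancestors of $\tup x$'' is misleading: the roots of free shapes are in general not ancestors of $\tup x$, they are children of ancestors or roots of other trees in $F$. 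Third, a single pattern $\pi$ may give rise to several free components that must be embedded pairwise disjointly and disjointly from the ancestors of $\tup a$; your threshold flags justify avoiding the bounded forbidden set, but the argument that ``enough roots'' yields disjoint \emph{embeddings} (not just distinct roots) needs to be spelled out---it works because distinct children of a fixed pinned parent, or distinct tree-roots of $F$, have disjoint subtrees, but you did not make that observation. None of these are fatal to the approach, but as written the sketch does not yet constitute a proof.
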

The proof of  Lemma~\ref{lem:forest-qe} is given below. It follows the proof from~\cite{lsd-journal},
and uses automata on trees, which are recalled below.

\newcommand{\singleton}{\emph{singleton}}
\paragraph{Monadic Second Order logic}
In the course of the proof of Lemma~\ref{lem:forest-qe} it will be much more convenient to work with a~more powerful Monadic Second Order logic (MSO), allowing for quantification over sets of vertices: that is, each variable $X$ in an~MSO formula may be interpreted as a~set of vertices of a~forest $F$.
More precisely, a~\emph{valuation} of a~collection $\tup X$ of set variables is a~function $\tup A \colon V(T) \to 2^{\tup X}$.

For convenience, we will actually assume that all variables in MSO formulas are set variables: a~vertex $v$ will be modeled by a~singleton $\{v\}$.
Given a~set variable $X$, we define the predicate $\singleton(X)$ verifying that $X$ contains exactly one element.
Then, MSO formulas on forests can use the function symbol $\parent$, the predicate $\singleton$, test for the containment and equality of sets and quantify over sets of vertices.

It can be easily verified that a~formula (resp.\ quantifier-free formula, existential formula) $\varphi$ of first-order logic can be rewritten in linear time into an~equivalent formula (resp.\ quantifier-free formula, existential formula) $\varphi'$ of monadic second-order logic with the same number of variables, and that $|\varphi'| \leq \Oh(|\varphi|)$.

\newcommand{\multiset}[1]{\{\!\!\{#1\}\!\!\}}%
\newcommand{\aut}{\cal}%
\paragraph{Tree valuation automata}

For a set $Q$, let $M(Q)$ denote 
the set of 
multisets of elements of $Q$, which are formally represented by functions $A\from Q\to \N$ counting the number of occurrences of each element of $Q$ in the multiset.
For $q_1,\ldots,q_n\in Q$,
let $\multiset{q_1,\ldots,q_n}$ denote 
the element of $M(Q)$ represented by the function $A\from Q\to\set{0,1,\ldots,n}$
such that $A(q)=\#\setof{i\in\set{1,\ldots,n}}{q_i=q}$.

Fix $k\in\N$. Denote by $M_k(Q)$ the set of functions $A\from Q\to \set{0,\ldots,k}$.
Define the mapping  $\textit{trim}_k\from\N\to \set{0,1,\ldots,k}$
such that $\textit{trim}_k(n)=\min(n,k)$.
This naturally induces a mapping 
$\textit{trim}_k\from M(Q)\to M_k(Q)$,
such that $\textit{trim}_k(A)(q)=\textit{trim}_k(A(q))$.
We write $\multiset{q_1,\ldots,q_n}_k$
for $\textit{trim}_k(\multiset{q_1,\ldots,q_n})$.

\medskip
We now define an automata model that inputs pairs $(T,\tup A)$ consisting of a tree $T$ and valuation $\tup A$ of a~collection of set variables $\tup X$.

A \emph{tree valuation automaton} $\cal A(\tup X)$ consists of:
\begin{itemize}
	\item a finite set of \emph{states} $Q$,
	\item a set $Q_F\subset Q$ of \emph{accepting states},
	\item a set $\delta\subset M(Q)\times 2^{\tup X}\times Q$ of \emph{transitions}.
\end{itemize}
We require that the relation $\delta$ 
is finitely presentable, in the following sense.
There is a number $k\in\N$
such that $(A,\tup X,q)\in\delta$ 
if and only if $(\textit{trim}_k(A),\tup X,q)\in\delta$. The minimum such $k$ is called the \emph{threshold} of $\aut A$.
In this case, $\delta$ is naturally represented by the set of triples 
$(A,\bar X,q)\in \delta$ such that $A\in M_k(Q)$.
Also, the \emph{size} of $\aut A$ is defined as $(k+1)^{|Q|} \cdot 2^{|\tup X|} \cdot |Q|$.

Fix a tree $T$
with a valuation $\tup A\from V(T)\to 2^\tup X$. 
A \emph{run} of $\cal A$ on $(T,\tup A)$ 
is a mapping $\rho\from V(T)\to Q$ such that for every node $v$ with label $\sigma\in\Sigma$ and 
	children $v_1,\ldots,v_n$, we have that $$(\multiset{\rho(v_1),\ldots,\rho(v_n)},\tup A[v],\rho(v))\quad\in\quad\delta.$$
	The run is \emph{accepting} if 
		it labels the root of $T$ by an accepting state.
The automaton $\cal A(\tup X)$ \emph{accepts}
 $(T,\tup A)$ 
if there is some accepting run of $\cal A$ on $(T,\tup A)$.

\paragraph{Deterministic automata}
An automaton $\cal A$ as above is \emph{deterministic}
if $\delta$ is a function 
$\delta\from  M(Q)\times 2^{\tup X}\to Q$.
Such an automaton has a unique run on every pair $(T,\tup A)$, and this run can be computed in time $\Oof(|\cal A||T|)$, by processing the tree $T$ from the leaves towards the root.
Any automaton can be converted to an equivalent deterministic automaton, using the classic powerset construction recalled below.

Given an automaton $\cal A$ as above,
 define the deterministic \emph{powerset automaton} $P(\cal A)$,
with the following components:
\begin{itemize}
	\item set of states $2^Q$,
	\item set of accepting states consisting of sets $\subset Q$ that contain an accepting state of $\cal A$,
	\item transition function $\hat\delta\from 
	M(2^Q)\times 2^{\tup X}\to 2^Q$ defined by:
	\[\begin{split}
	\hat\delta(\multiset{Q_1&,\ldots,Q_n},\tup A)\\&=\setof{q\in Q}{\exists q_1\in Q_1\ldots\exists q_n\in Q_n: (\multiset{q_1,\ldots,q_n},\tup A,q)\in \delta}.\end{split}\]
\end{itemize}
It is easy to see that $\cal A$ and $P(\cal A)$ accept the same pairs $(T,\tup A)$.
Moreover, $P(\aut A)$ has threshold bounded from above by $|Q|\cdot t_{\cal A}$, where $t_{\cal A}$ is the threshold of $\cal A$, and can be computed from $\aut A$ in exponential time.

\paragraph{Product construction}
By a~standard construction, given two automata $\aut A_1$ and $\aut A_2$ over the same set of variables $\tup X$ with the sets of states $Q_1, Q_2$, repsectively,  we can produce a~product automaton $\aut A_1 \cup \aut A_2$ (resp.\ $\aut A_1 \cap \aut A_2$) over $\tup X$ with the set of states $Q_1 \times Q_2$ such that, for any input pair $(T, \tup A)$, the product automaton accepts $(T, \tup A)$ if and only if either $\aut A_1$ or $\aut A_2$ accept (resp.\ both $\aut A_1$ and $\aut A_2$ accept).
The product automaton can be constructed in time linear in the size of the resulting automaton.
It can also be easily seen that if $\aut A_1$ and $\aut A_2$ have threshold at most $k$, then so do $\aut A_1 \cup \aut A_2$ and $\aut A_1 \cap \aut A_2$.

\paragraph{From MSO formulas to automata}
The following result is folklore. We sketch its proof below.
\begin{lemma}\label{lem:MSO-to-aut}
	For every existential formula $\phi(\bar X)=\exists \bar Y.\psi(\bar X,\bar Y)$  there is a tree valuation automaton $\aut A(\bar X)$ 
	such that for every tree $T$ and valuation $\bar A$,
		 $$T\models \psi(\bar A)\text{ \quad if and only if \quad}\aut A(\bar X)\text{ accepts }(T,\bar A).$$ 
		 $\aut A(\bar X)$ has threshold at most $2$, at most $2^{\poly(|\varphi|, d)}$ states and can be constructed 
		 from $\varphi$ in time 
		  $2^{2^{\poly(|\varphi|, d)}}$.
\end{lemma}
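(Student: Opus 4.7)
The plan is to build the automaton by structural induction on the quantifier-free matrix $\psi(\bar X,\bar Y)$, treating every free set variable uniformly as part of the input alphabet, and then to absorb the existential quantifier $\exists \bar Y$ into the nondeterminism already present in our automaton model by a~single projection step at the end. First I~would push negations down to atoms, so that $\psi$ uses only $\wedge$, $\vee$, and literals, which increases the size by at most a~constant factor.

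For the base case I~would construct small automata by hand for each atomic or negated-atomic subformula. Predicates such as $\singleton(X)$, $X \subseteq Y$, and $X = Y$ only need to count occurrences of a~marker up to $2$, which is exactly what threshold $2$ supports, and they use $\Oh(1)$ states. Atoms of the form $\parent^i(x) \in Y$ with $0 \le i < d$ are handled by propagating, along the path from the singleton $x$ to its $i$-th ancestor, a~state that records the residual distance and checks that the target node lies in~$Y$. Since the forest has depth at most~$d$, each such atomic automaton uses $\Oh(d)$ states and retains threshold $2$, because only the identity of the single upward-travelling marker needs to be transmitted, not a~multiplicity.

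For the inductive step, conjunctions are realised by $\aut A_1 \cap \aut A_2$ and disjunctions by $\aut A_1 \cup \aut A_2$; the product construction recalled earlier preserves threshold, multiplies state counts, and runs in time linear in the output size. Iterating through the $\Oh(|\psi|)$ Boolean connectives yields an~automaton over $\bar X \cup \bar Y$ with at most $2^{\poly(|\varphi|,d)}$ states and threshold still $2$, constructible in time $2^{\poly(|\varphi|,d)}$, which is well within the stated bound $\tower(2,\poly(|\varphi|,d))$. Finally, to handle $\exists \bar Y$, I~modify the transition relation so that at every node the automaton reads only the restriction of the input to $\bar X$ and nondeterministically guesses the missing bits for $\bar Y$: formally, a~triple $(A,\tup{A_X},q)$ is put into the new transition relation iff $(A,\tup{A_X}\cup \tup{A_Y},q)$ was a~transition for some $\tup{A_Y}\subseteq \bar Y$. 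This preserves the state set and the threshold, and the resulting nondeterministic automaton $\aut A(\bar X)$ accepts $(T,\bar A)$ exactly when there exists $\bar B$ with $T\models \psi(\bar A,\bar B)$, i.e.\ exactly when $T\models \phi(\bar A)$.

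The main obstacle I~expect is keeping the threshold at~$2$ throughout the construction, in particular for the parent atoms: a~naive encoding that counts upward-propagated markers would let the threshold grow linearly with~$d$. The key idea is to use the state rather than the multiplicity to track the relevant ancestor chain --- since $x$ ranges over singletons, each subtree contains at most one active upward marker, and threshold $2$ already suffices to verify uniqueness at the common ancestor where sibling chains merge. Verifying that the product and projection constructions genuinely preserve the threshold when applied to automata built in this way is the delicate point of the argument.
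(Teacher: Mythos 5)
Your proposal follows essentially the same route as the paper's own proof: convert the matrix to negation normal form, build threshold-$2$ automata by hand for the atomic formulas, combine them via the product construction for $\wedge$ and $\vee$, and finally handle $\exists\bar Y$ by projecting the transitions onto the $\bar X$-component, letting the existing nondeterminism guess the $\bar Y$-part. The only small inaccuracy is the claimed construction time of $2^{\poly(|\varphi|,d)}$: since the transition table has size roughly $3^{|Q|}\cdot 2^{|\bar X|}\cdot |Q|$ with $|Q|=2^{\poly(|\varphi|,d)}$, writing out the automaton already takes time $2^{2^{\poly(|\varphi|,d)}}$, but as you yourself note this stays within the stated bound $\tower(2,\poly(|\varphi|,d))$.
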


\begin{proof}[Proof sketch]
First consider the case when $\varphi$ is quantifier-free, that is, $\bar Y=\emptyset$
and $\phi(\bar X)=\psi(\bar X)$.
Moreover, we can assume that $\psi$ is in negation normal form. Then $\psi$ is a positive boolean combination of literals.
When $\psi$ is a literal, an automaton of threshold at most $2$ and at most $O(d)$ states can be constructed by hand. 
For conjunctions and disjunctions, the product automaton construction is employed.
The resulting automaton has threshold at most $2$ and at most $2^{\poly(|\varphi|, d)}$ states, so its size is at most $2^{2^{\poly(|\varphi|, d)}}$.
This handles the case when $\psi$ is quantifier-free.

In the general case, first construct the automaton $\aut B(\tup X\cup\tup Y)$ for the quantifier-free formula $\psi(\tup X,\tup Y)$ as just described.
The required automaton $\aut A(\tup X)$ for $\phi(\tup X)$ has the same states and accepting states as $\aut B$, and has transitions 
$(A,B\cap \tup X,q)$ for every transition $(A,B,q)$ of $\aut B$.
\end{proof}

\paragraph{Proof of Lemma~\ref{lem:forest-qe}}
We now move on to the proof of Lemma~\ref{lem:forest-qe}.
We prove a version for trees, rather than forests, as this is more consistent with the model of automata introduced above. Thus, we assume that the input forests are trees. The general case reduces easily to this case, by converting a forest to a tree by introducing a dummy root (the opposite conversion, from trees back to forests is also applied in the end, and this requires using flags to mark the unary predicates of the removed root).
Furthermore, 
for simplicity, we assume that the signature $\Sigma$
consists only of the parent function, and does not include unary predicates. The unary predicates do not cause any essential trouble, but only complicate the notation slightly.

We now describe the construction of the formula $\wh\phi(\tup x)$ and tree $\wh T$,
given $\phi(\tup x)=\exists \bar y.\psi(\tup x,\tup y)$ and tree~$T$.

We may convert $\phi(\tup x)$ to a monadic second-order formula $\phi'(\tup X)$, as described above.
So for a tree~$T$, a tuple $\tup a\in V(T)^{\tup x}$
may be seen as a valuation of $\tup X$, which assigns to each set variable $X\in\tup X$ -- corresponding to the variable $x$ of $\phi$ -- the set $\set{\tup a(x)}\subset V(T)$.
Then, 
\begin{align}\label{eq:phi'}
	T\models\phi(\tup a)\quad\iff\quad T\models \phi'(\tup a).	
\end{align}

Compute the automaton $\aut A'(\tup X)$ from $\phi'(\tup X)$, as given by Lemma~\ref{lem:MSO-to-aut}.
Next, we convert 
$\aut A'$ to an equivalent deterministic automaton $\aut A$, using the powerset construction.
This incurs an exponential blowup in the size of the automaton; so the number of states of this automaton is upper-bounded by $\tower(2, \poly(|\varphi|, d))$, and its size is upper-bounded by $\tower(3, \poly(|\varphi|, d))$.
Let $Q$ be the set of states of $\aut A$.

\newcommand{\cnt}{\emph{count}}
Given the tree $T$,
compute the (unique) run $\rho_\emptyset$ of $\aut A(\tup X)$ on $(T,\tup \emptyset)$, where $\bar\emptyset$ is the \emph{empty valuation} of~$\bar X$, namely $\bar\emptyset(X)=\emptyset$ for all $X\in \bar X$.
For a~node $v \in T$ let $\cnt_v : Q \to \N$ be the function sending a~state $q \in Q$ to the number of children of $v$ that are labeled $q$ according to $\rho_\emptyset$, and define $\cnt^\star_v : Q \to \{0, 1, \dots, |\tup x| + 2\}$ by letting $\cnt^\star_v(q) = \mathit{trim}_{|\tup x| + 2}(\cnt_v(q))$.
Finally, we define $\wh T$ as the tree $T$, where each node $v \in T$ is assigned the label $(q, \cnt^\star_v(q))$.
Observe that the number of possible labels is bounded by $|Q| \cdot (|\tup x| + 3)^{|Q|} \leq \tower(3, \poly(|\varphi|, d))$.
We now show that:

\begin{claim}
	For any rooted tree $T$ of depth at most $d$ and a~tuple $\tup a \in V(T)^{\tup x}$, the satisfaction of $T \models \varphi(\tup a)$ depends only on the atomic type of $\tup a$ in $\wh T$ -- i.e., the set of all equalities of the form $\parent^i(u) = \parent^j(v)$ for $0 \leq i, j < d$ and $u, v \in \tup a$ that hold in $\wh T$, and the labels of vertices $\parent^i(u)$ in $\wh T$ for $0 \leq i < d$ and $u \in \tup a$.
	Moreover, given an~atomic type of $\tup a$ in $T$, we can verify if $T \models \varphi(\tup a)$ in time $\tower(3, \poly(|\varphi|, d))$.
\end{claim}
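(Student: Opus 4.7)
The plan is to simulate the deterministic tree automaton $\aut A$ on the input $(T,\tup a)$ using only the data encoded by the atomic type of $\tup a$ in $\wh T$. By construction of $\aut A$, this is equivalent to deciding whether $T\models\varphi(\tup a)$. Define the \emph{skeleton} $S = \{\parent^i(u) : u \in \tup a,\ 0 \le i < d\}$, the set of all ancestors in $T$ of the vertices in $\tup a$. Since $T$ has depth at most $d$, the root of $T$ lies in $S$, and $|S| \le d\cdot |\tup x|$. The atomic type of $\tup a$ in $\wh T$ precisely determines $S$ as an abstract rooted forest with vertex labels: the listed equalities $\parent^i(u)=\parent^j(v)$ identify which formal ancestors coincide, and the recorded $\wh T$-labels supply, for each $v \in S$, both the baseline state $\rho_\emptyset(v)$ and the count function $\cnt^\star_v$.

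The key observation is that for every $v \notin S$, the subtree of $T$ rooted at $v$ is disjoint from $\tup a$, so the input this subtree presents to $\aut A$ under the valuation $\tup a$ coincides with the input under the empty valuation; by determinism of $\aut A$, the run $\rho$ of $\aut A$ on $(T,\tup a)$ satisfies $\rho(v)=\rho_\emptyset(v)$. Hence $\rho$ differs from $\rho_\emptyset$ only on $S$, and we can compute $\rho$ restricted to $S$ by a single bottom-up sweep through the abstract skeleton. For $v \in S$ whose skeleton children $w_1,\ldots,w_s$ (with $s \le |\tup x|$) have already been processed, the multiset of $\rho$-states of \emph{all} children of $v$ decomposes as $M_S + M_{\bar S}$: the part $M_S = \multiset{\rho(w_1),\ldots,\rho(w_s)}$ is known exactly from the inductive step, while $M_{\bar S}$ equals the multiset of $\rho_\emptyset$-states of the non-skeleton children, i.e., the multiset encoded (up to threshold $|\tup x|+2$) by $\cnt^\star_v$ minus $\multiset{\rho_\emptyset(w_1),\ldots,\rho_\emptyset(w_s)}$, read off from the $\wh T$-labels of $w_1,\ldots,w_s$. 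Since the subtrahend contributes at most $|\tup x|$ to any count, the threshold $|\tup x|+2$ is calibrated precisely so that $M_{\bar S}$, and therefore the combined multiset, can be recovered with the precision required by $\aut A$'s transition function. A single transition lookup then yields $\rho(v)$.

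Since the root of $T$ belongs to $S$, the procedure ultimately computes $\rho$ at the root, and $\aut A$ accepts $(T,\tup a)$ iff this state is accepting; the outcome depends only on the atomic type of $\tup a$ in $\wh T$. For the time bound, the skeleton has at most $d\cdot|\tup x|$ vertices; each bottom-up step requires a constant number of multiset operations on counts capped at $|\tup x|+2$ together with one lookup in the transition table of $\aut A$, whose size is bounded by $\tower(3,\poly(|\varphi|,d))$. Aggregating, the whole verification runs in time $\tower(3,\poly(|\varphi|,d))$.

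The main obstacle is the threshold bookkeeping in the inductive step: one must verify that the $+2$ slack above the $|\tup x|$-sized skeleton contribution in the definition of $\cnt^\star_v$ really does suffice to reconstruct the transition of $\aut A$. This in turn requires carefully tracking the counting threshold inherited by $\aut A$ from the nondeterministic $\aut A'$ (where it is at most $2$) through the powerset construction, and confirming that it is compatible with the subtraction-and-cap computation sketched above.
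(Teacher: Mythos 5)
Your proof is correct and follows essentially the same approach as the paper's: you identify the skeleton of ancestors of $\tup a$, observe that the run of the deterministic automaton agrees with $\rho_\emptyset$ off the skeleton, and reconstruct the run on the skeleton bottom-up by recovering the trimmed children-state multiset from $\cnt^\star_v$ together with the inductively computed skeleton contributions. Your $M_S + M_{\bar S}$ decomposition is just a slight repackaging of the paper's $\Delta$-correction formula, and your remark about the threshold calibration ($|\tup x|+2$ slack versus the at-most-$|\tup x|$ skeleton contribution and the threshold-$2$ transition function) is precisely the point the paper's sketch also leaves at the same level of detail.
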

\begin{proof}[Sketch of the proof]
	Given a tuple $\tup a\in V(T)^{\tup x}$, we have that $T\models\phi(\tup a)$ if and only if $\aut A$ accepts $(T,\tup a)$.
	Since $\aut A$ is deterministic, it has exactly one accepting run on $(T,\tup a)$; denote this run by $\rho_{\tup a}$.
	The run $\rho_{\tup a}$ differs from the run $\rho_{\emptyset}$ only on the vertices $v$ that are ancestors of the vertices in $\tup a$.

	We will now sketch how to reconstruct $\rho_{\tup a}(v)$ for these vertices $v$.
	We iterate all ancestors $v$ of the vertices in $\tup a$ in the bottom-up order.
	Let $C_v$ be the set of children of $v$ in $T$, and let $C_{v, \tup a}$ be the restriction of $C_v$ to the ancestors of the vertices in $\tup a$; note that $|C_{v, \tup a}| \leq |\tup a| = |\tup x|$.
	Since $\aut A$ has threshold $2$ and is deterministic, the state $\rho_{\tup a}(v)$ can be uniquely deduced from $\mathit{trim}_2(\multiset{\rho_{\tup a}(c) :\, c \in C_v})$ and, for every entry $a \in \tup a$, the information whether $v = a$.
	Letting $A\,\colon\,Q \to \N$ be the function representing $\multiset{\rho_{\tup a}(c) :\, c \in C_v}$, we have that
	\[ A(q) = \cnt_v(q) - |\{c \in C_{v, \tup a} : \rho_\emptyset(c) = q\}| + |\{c \in C_{v, \tup a} : \rho_{\tup a}(c) = q\}|. \]
	The values $c \in C_{v, \tup a}$ for which $\rho_\emptyset(c) = q$ can be determined from the atomic type of $\tup a$, and the states $\rho_{\tup a}(c)$ for $c \in C_{v, \tup a}$ have been already computed; hence we can uniquely determine $\Delta := - |\{c \in C_{v, \tup a} : \rho_\emptyset(c) = q\}| + |\{c \in C_{v, \tup a} : \rho_{\tup a}(c) = q\}|$.
	Since $\Delta \geq -|\tup x|$, we can then compute $\mathit{trim}_2(A(q))$ from $\cnt^\star_v(q)$ and $\Delta$.
	Given that, we can evaluate $\mathit{trim}_2(\multiset{\rho_{\tup a}(c) :\, c \in C_v})$.
	All in all, we can determine $\rho_{\tup a}(v)$, looking only at the atomic type of $\tup a$ in $\wh T$ and the previously computed values $\rho_{\tup a}(c)$ for those children $c$ of $v$ that are ancestors of the vertices in $\tup a$.
	
	Thus the satisfaction of $T \models \varphi(\tup a)$ only depends on the atomic type of $\tup a$ in $\wh T$ (and not on the tree $\wh T$ itself).
	It is also easy to verify that this procedure can be implemented in time $\tower(3, \poly(|\varphi|, d))$.
\end{proof}

By the claim above, we determine the set $\Lambda$ of atomic types of $\tup x$-tuples in $\wh T$, for which $T \models \varphi(\tup a)$ if and only if the atomic type of $\tup a$ belongs to $\Lambda$.
Note that $|\Lambda| \leq \tower(3, \poly(|\varphi|, d))$ since there are at most $\tower(3, \poly(|\varphi|, d))$ distinct atomic types of $\tup x$-tuples in $\wh T$ in total.
Also, the set $\Lambda$ can be found in time $\tower(3, \poly(|\varphi|, d))$.

Given that, we rewrite $\varphi$ to a~quantifier-free formula $\wh \varphi(\tup x)$ as follows:
\[ \wh \varphi(\tup x) = \bigvee_{\tau \in \Lambda} [\tup x\text{ has atomic type }\tau\text{ in }\wh T]. \]
Such $\tup x$ (together with the tree $\wh T$) satisfies all the requirements of the lemma.

This completes the sketch of the proof of  Lemma~\ref{lem:forest-qe}.
See \cite{lsd-journal} for more details.

\subsection*{Bounded treedepth}
We now prove Theorem~\ref{thm:be-qe} in the case of existential formulas when $\CCC$ is a class 
of structures of bounded treedepth (we refer to \cite{sparsity} for the definition and properties of treedepth). We repeat the statement in this case.

\begin{lemma}\label{lem:td-qe}
	Let $\phi(\tup x)$  be an existential first-order formula of the form $\exists \tup y.\psi(\tup x,\tup y)$, and let $d\in\N$. There exists a signature $\wh\Sigma$ consisting of unary relation and function symbols and flags, and  a quantifier-free $\wh\Sigma$-formula $\widehat\phi(\tup x)$ 
	such that for every $\Sigma$-structure  $\str A$ of treedepth $d$
	there is a structure $\widehat{\str A}\in\widehat{\CCC}$ whose Gaifman graph is contained in the Gaifman graph of $\str A$, such that $\phi_{\str A}=\widehat\phi_{\widehat{\str A}}.$
	Moreover:
	\begin{itemize}
		\item the formula $\wh\phi$ can be computed, given $\phi$ and $d$, in time bounded by $\tower(4, \poly(|\varphi|, d))$,
		\item the structure $\wh {\str A}$ can be computed, given $\phi, d$ and $\str A$, in time bounded by $\tower(4, \poly(|\varphi|, d)) \cdot |\str A|$.
	\end{itemize}
\end{lemma}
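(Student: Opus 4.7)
The plan is to reduce this case to the forest case (Lemma~\ref{lem:forest-qe}) via elimination forests. First, I would compute in linear time an elimination forest $F$ of depth $d$ for the Gaifman graph of $\str A$; this is a standard task when treedepth is bounded, and can be done in time $f(d)\cdot|\str A|$ for some function $f$. The key property of $F$ is that every relation tuple of $\str A$ induces a clique in the Gaifman graph and therefore lies along a single root-to-leaf chain of $F$.

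Next, I would encode $\str A$ as a colored forest $F'$ with the same underlying tree and parent function as $F$, enriching the signature with one unary predicate for each relation symbol $R\in\Sigma$ of arity $k$ and each depth pattern $(i_1,\ldots,i_k)\in\{0,\ldots,d-1\}^k$: this predicate marks a vertex $v$ iff $R(\parent^{i_1}(v),\ldots,\parent^{i_k}(v))$ holds in $\str A$. The total number of added unary predicates is polynomial in $d$ (with $\Sigma$ fixed), and the Gaifman graph of $F'$ is by construction contained in that of $\str A$, since the parent edges of the elimination forest are Gaifman edges.

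Then I would translate $\phi(\tup x)=\exists\tup y.\psi(\tup x,\tup y)$ into an existential formula $\phi'(\tup x)$ over the signature of $F'$: each atomic formula $R(z_1,\ldots,z_k)$ in $\psi$ is rewritten as a quantifier-free disjunction enumerating which of $z_1,\ldots,z_k$ is deepest in $F$ and what the depth offsets of the others are, where each disjunct verifies the relevant ancestor relations via bounded iterations $\parent^j$ for $j<d$ and then checks the appropriate new unary predicate on the deepest variable. Since $F$ has depth at most $d$, the clause ``$u$ is an ancestor of $v$'' expands into $\bigvee_{i<d} u=\parent^i(v)$ and introduces no new quantifiers. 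The resulting $\phi'$ is existential, has the same quantifier block as $\phi$, and has size $\poly(|\phi|,d)$. Applying Lemma~\ref{lem:forest-qe} to $\phi'$ and $F'$ yields $\wh\Sigma$, a quantifier-free formula $\wh\phi(\tup x)$, and a coloring $\wh F$ of $F'$ with $\phi'_{F'}=\wh\phi_{\wh F}$; setting $\wh{\str A}\coloneqq\wh F$ gives $\phi_{\str A}=\wh\phi_{\wh{\str A}}$, and the Gaifman graph of $\wh{\str A}$ remains contained in that of $\str A$. The bounds of Lemma~\ref{lem:forest-qe} applied to $\phi'$ give $\tower(3,\poly(|\phi|,d))$ for $\wh\phi$ and $\tower(3,\poly(|\phi|,d))\cdot|\str A|$ for $\wh{\str A}$, comfortably within the stated $\tower(4,\poly(|\phi|,d))$ budget.

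The main obstacle, albeit a mild one, is ensuring that the rewriting of atomic formulas stays quantifier-free and of size polynomial in $|\phi|$ and $d$; this rests entirely on the depth bound, which converts unbounded ancestor quantification into a bounded disjunction of equalities between parent-iterates. Everything else is routine bookkeeping and a straightforward composition of the linear-time elimination-forest computation with the forest-case algorithm of Lemma~\ref{lem:forest-qe}.
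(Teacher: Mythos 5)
Your proposal follows essentially the same route as the paper: reduce to the forest case (Lemma~\ref{lem:forest-qe}) via an elimination forest, encode the relations of $\str A$ by unary predicates indexed by depth offsets, rewrite each atom as a quantifier-free disjunction over offset patterns, and compose. However, there is a genuine gap in one of the claims you treat as automatic.

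You assert that ``the parent edges of the elimination forest are Gaifman edges,'' and use this to conclude that the Gaifman graph of $\wh{\str A}$ is contained in the Gaifman graph of $\str A$. For an \emph{optimal-depth} elimination forest this is simply false: a $4$-cycle $abcd$ has treedepth $3$, and every depth-$3$ elimination tree of it must use a chord (e.g.\ root $a$, child $c$, children $b,d$ of $c$ --- here $ac$ is not a graph edge). So if you insist on a depth-$d$ elimination forest, you cannot conclude the required Gaifman-graph containment, which is part of the lemma statement and is what propagates bounded expansion in the final Theorem~\ref{thm:be-qe}. The paper circumvents this by taking $F$ to be a \emph{depth-first search tree} of the Gaifman graph: a DFS tree is simultaneously an elimination forest \emph{and} a subgraph of the graph, at the cost of having depth only bounded by $2^d-1$ rather than $d$ (since treedepth $d$ forbids paths on $2^d$ vertices). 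This worse depth bound is still elementary in $d$ and is absorbed by the $\tower(4,\poly(|\phi|,d))$ budget, so nothing is lost; and a DFS tree is computable in genuinely linear time without appealing to any FPT treedepth algorithm, which your proposal also glosses over. With the DFS tree substituted for the optimal elimination forest (and the offset range $\{0,\ldots,d-1\}$ replaced by $\{0,\ldots,2^d-1\}$), your argument matches the paper's.
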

\begin{proof}
We show how to reduce this case to the case of forests of bounded depth.
Without loss of generality, we assume that the signature $\Sigma$ of $\CCC$ is relational by converting the functions to relations in the standard way (this possibly increases the number of existentially quantified variables $\tup y$, but this increase is at most linear in the size of $\varphi$).

Suppose the Gaifman graph of each $\str A\in\CCC$ has treedepth at most $d$.
Fix $\str A\in\CCC$ and any depth-first search tree $F$ of the Gaifman graph (say, $G$) of $\str A$.
Then $F$ is a~subgraph of $G$ that is also a~(rooted) elimination forest $F$ of $G$ (since $F$ is a~DFS tree).
Also since $G$ has treedepth at most $d$, it does not contain the path with $2^d$ vertices as a~subgraph; so neither does $F$, implying that $F$ has depth less than $2^d$.
	
Let $\parent\from\str A\to\str A$ denote the parent function of $F$; note that whenever $\parent(v) \neq v$, we have $\{v, \parent(v)\} \in E(G)$.
For every relation $R\in\Sigma$ of arity $k$ and every tuple
$\tup h=(h_1,\ldots,h_k)\in\set{0,\ldots,2^d - 1}^k$ containing at least one $0$ define a unary predicate $R_{\tup h}$ on $\str A$ so that for $a\in\str A$,
\[R_{\tup h}(a)\quad\Longleftrightarrow\quad R(\parent^{h_1}(a),\ldots,\parent^{h_k}(a)).\]
Then the following equivalence holds  in $\str A$:
\begin{align}\label{eq:forest-formula}
    R(x_1,\ldots,x_k)\quad\Longleftrightarrow\quad\bigvee_{i=1}^k\bigvee_{\tup h}\left(R_{\tup h}(x_i)\land \bigwedge_{j=1}^k (\parent^{h_j}(x_i)=x_j)\right).    
\end{align} 
The implication $\Leftarrow$ holds by definition of $R_{\tup h}$. For the  implication $\Rightarrow$, if $R(a_1,\ldots,a_k)$ holds, then $a_1,\ldots,a_k$ form a clique in the Gaifman graph of $\str A$, and hence lie on some root-to-leaf path in $F$ as $F$ is an elimination forest. Then there is some $i$ such that $a_i$ is the descendant of all $a_1,\ldots,a_k$, and  a tuple $\tup h\in\set{0,\ldots,2^d-1}^k$ such that $a_j=\parent^{h_j}(a_i)$ for $j=1,\ldots,k$ and $h_i=0$. Then $R_{\tup h}(a_i)$ holds.

Let $\Sigma'$ be the signature consisting 
of the unary function symbol $\parent$  and the unary predicates $R_{\tup h}$, for each $R\in\Sigma$ of arity $k$ and $\tup h\in\set{0,\ldots,2^d - 1}^k$; so $\Sigma'$ has at most $2^{\poly(|\varphi|, d)}$ unary predicates.
For $\str A\in\CCC$ let 
$\str A'$ be the $\Sigma'$-structure with the same domain as $\str A$ in which the function $\parent$ and relations $R_{\tup h}$ are interpreted as defined above.
Let $\phi'$ be obtained from $\phi$ by  replacing each atom $R(x_1,\ldots,x_k)$ in $\phi(\tup x)$ according to the right-hand side of~\eqref{eq:forest-formula}. 
Then $|\phi'| \leq 2^{\poly(|\varphi|, d)}$ and we have that $\phi_{\str A}=\phi'_F$ for every $\str A\in\CCC$ and $F\in \CCC'$ obtained as described above, and moreover, $F$ can be computed from $\str A$ in time $2^{\poly(|\varphi|, d)} \cdot |F|$. Applying Lemma~\ref{lem:forest-qe} to $\phi'$ and $\str A'$ yields the conclusion.
\end{proof}

\subsection*{Bounded expansion}

\newcommand{\A}{\str A}
\newcommand{\whA}{\wh{\A}}
We now proceed to proving  Theorem~\ref{thm:be-qe} in the most general case of bounded expansion.
We again begin with presenting the quantifier elimination for existential formulas.

\begin{lemma}\label{lem:be-existential}
Let $\CCC$ be a class of structures with bounded expansion and let $\phi(\tup x) = \exists{\tup y}. \psi(\tup x, \tup y)$ be an existential first-order formula.
Let $s \in \N$ be such that $\tilde\nabla_r(\str A) \le \tower(s, \poly(r))$ for $r \in \N$ and $\str A \in \CCC$.

There exists a signature $\wh\Sigma$ consisting of unary relation and function symbols and flags; a class $\wh\CCC$ of~~$\wh\Sigma$-structures with bounded expansion with the property that for each $\wh{\str A} \in \wh\CCC$ the Gaifman graph of $\wh{\str A}$ is a~subgraph of the Gaifman graph of some $\str A \in \CCC$; a quantifier-free $\wh\Sigma$-formula $\widehat\phi(\tup x)$ such that for every $\Sigma$-structure $\A \in \CCC$ there is a~structure $\wh \A \in \wh\CCC$ whose Gaifman graph is contained in the Gaifman graph of $\str A$, such that $\phi_{\str A}=\widehat\phi_{\widehat{\str A}}$.
Moreover:
\begin{itemize}
	\item the formula $\wh\phi$ can be computed, given $\phi$, in time $\tower(s + 4,\poly(|\phi|))$,
	\item  	
	the structure $\wh {\str A}$ can be computed, given $\phi$ and $\str A$, in time $\tower(s + 4,\poly(|\phi|))\cdot |\str A|$.
\end{itemize}
\end{lemma}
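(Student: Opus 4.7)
The plan is to reduce the case of bounded expansion to the bounded treedepth case (Lemma~\ref{lem:td-qe}) by means of \emph{low treedepth colorings}, which are a standard tool in sparsity theory. Recall that for every class $\CCC$ of bounded expansion and every $d \in \N$, there is a number $N = N(d,\CCC)$ such that every $\str A \in \CCC$ admits a coloring $\chi \from V(\str A) \to \set{1,\ldots,N}$, computable in time $\Oof_{\CCC,d}(|\str A|)$, with the property that for every $C \subset \set{1,\ldots,N}$ of size at most $d$, the induced substructure $\str A[C] := \str A[\chi^{-1}(C)]$ has treedepth at most $d$. Crucially, under the assumption $\tilde\nabla_r(\CCC) \le \tower(s, \poly(r))$, the known constructions of such colorings yield $N(d,\CCC) \le \tower(s+1, \poly(d))$.

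I would set $d := k + p$ where $k = |\tup x|$ and $p = |\tup y|$, compute a low treedepth coloring $\chi$ with $N = N(d,\CCC)$ colors, and then apply Lemma~\ref{lem:td-qe} to the formula $\phi$ and treedepth bound $d$ to obtain, once and for all, a signature $\Sigma^\circ$ and a quantifier-free $\Sigma^\circ$-formula $\phi^\circ(\tup x)$. For each $C \subset \set{1,\ldots,N}$ with $|C| \le d$, I would run the algorithm of Lemma~\ref{lem:td-qe} on $\str A[C]$ to produce an annotated $\Sigma^\circ$-structure $\str A^\circ_C$ on $\chi^{-1}(C)$. The output signature $\wh\Sigma$ would consist of one $C$-indexed copy of each symbol of $\Sigma^\circ$ together with unary predicates $\kappa_C$ marking the color classes, and $\wh{\str A}$ would be the common domain $V(\str A)$ with, for every such $C$, the predicates and functions of $\str A^\circ_C$ recorded on its $C$-indexed copy. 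The output formula is then
\[
  \wh\phi(\tup x) \;:=\; \bigvee_{C} \Big( \bigwedge_{i=1}^{k} \kappa_C(x_i) \;\wedge\; \phi^\circ[C](\tup x)\Big),
\]
where $\phi^\circ[C]$ is the syntactic rewriting of $\phi^\circ$ that uses the $C$-indexed copies of the symbols in $\Sigma^\circ$, and $C$ ranges over the at most $N^d$ subsets of colors of size at most $d$.

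Correctness follows from the observation that, since $\psi(\tup x,\tup y)$ is quantifier-free, a satisfying witness $(\tup a,\tup b)$ depends only on its induced substructure on $\tup a \cup \tup b$, which uses at most $d$ color classes; picking $C$ to contain those classes gives $\str A[C]\models \phi(\tup a)$, and conversely any witness inside some $\str A[C]$ is a witness in $\str A$. The Gaifman graph of $\wh{\str A}$ is contained in that of $\str A$ because each $\str A^\circ_C$ has Gaifman graph contained in $\str A[C] \subseteq \str A$ (by Lemma~\ref{lem:td-qe}) and only unary symbols and flags are added. Bounded expansion of $\wh\CCC$ follows because the Gaifman graphs of the structures in $\wh\CCC$ are subgraphs of Gaifman graphs of structures in $\CCC$. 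For the running time, each vertex lies in at most $\binom{N-1}{d-1} \le N^d$ many sets $\chi^{-1}(C)$, and Lemma~\ref{lem:td-qe} spends $\tower(4,\poly(|\phi|,d))$ time per vertex in $\str A[C]$; since $d = \Oof(|\phi|)$ and $N \le \tower(s+1,\poly(|\phi|))$, the total work collapses to $\tower(s+4,\poly(|\phi|))\cdot |\str A|$, and $|\wh\phi|$ satisfies the same bound.

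The main obstacle I expect is carefully checking the elementary bound on the number of colors $N(d,\CCC)$ in a low treedepth coloring under the hypothesis $\tilde\nabla_r(\CCC) \le \tower(s,\poly(r))$: the standard proofs (via iterated neighborhood transversals built from the grads $\nabla_i$) go through several levels of exponentiation, and one must inspect each layer to confirm that the tower height does not grow beyond $s+1$. Everything else is bookkeeping: once elementary low treedepth colorings are in hand, the reduction to Lemma~\ref{lem:td-qe} per color class and the union of the resulting annotations yields an existential-to-quantifier-free translation of the claimed complexity.
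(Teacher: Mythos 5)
Your proposal is correct and takes essentially the same route as the paper's proof: both use low treedepth colorings with an elementary bound on the number of colors (under the hypothesis $\tilde\nabla_r(\CC)\le\tower(s,\poly(r))$) to split the structure into color-induced pieces of bounded treedepth, apply Lemma~\ref{lem:td-qe} to each piece, and union the annotations with a disjunction over color subsets. The only cosmetic difference is that the paper indexes the disjunction by color assignments $f\from\tup x\tup y\to C$ with the color constraints baked into the per-piece formula $\phi_f$, whereas you index by color subsets directly and gate each disjunct with $\bigwedge_i\kappa_C(x_i)$; the counting and time analysis collapse to the same $\tower(s+4,\poly(|\phi|))$ bound either way.
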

\begin{proof}
	As in the proof of Lemma~\ref{lem:td-qe}, we assume that $\Sigma$ is relational at the expense of the linear increase in the size of $\varphi$.
	Set $p := |\tup x| + |\tup y|$ (so $p \leq |\varphi|$).
	The results of Sparsity theory \cite{sparsity} imply the following.
	
	\begin{claim}[{\cite{sparsity}}] 
		\label{cl:low-td-col}
		There is a~set of colors $C$, with $|C| \leq (\tilde\nabla_{2^{\raisebox{1.5pt}{\tiny $p$}}}(\CC))^{2^{\raisebox{1.5pt}{\tiny $\poly(p)$}}}$, and an~algorithm which given $\A \in \CCC$ computes, in time $|C| \leq (\tilde\nabla_{2^{\raisebox{1.5pt}{\tiny $p$}}}(\CC))^{2^{\raisebox{1.5pt}{\tiny $\poly(p)$}}} \cdot |\A|$, a~coloring $\chi : \A \to C$, such that for every nonempty $D \subseteq C$ with $|D| \leq p$, the subgraph of the Gaifman graph of $\A$ induced by $\chi^{-1}(D)$ has treedepth at most $|D|$.
	\end{claim}
	Note that the algorithmic part is not shown in \cite{sparsity}, but it follows from the arguments presented there.
	
	Fix $\str A\in \CCC$.
    For each $c\in C$ let $U_c$ be the unary predicate marking the set $\chi^{-1}(c)$ in $\str A$.
    Then the following equivalence holds in $\str A$:
    \begin{align}\label{eq:disjunction}
        \exists \tup y.\psi(\tup x,\tup y)\quad\Longleftrightarrow\quad \bigvee_{f\from {\tup x\tup y}\to C}\exists \tup y.\left(\psi(\tup x;\tup y)\land\bigwedge_{z\in \tup x\tup y}U_{f(z)}(z)\right).
    \end{align}
    The implication $\Leftarrow$ is immediate.
    For the  implication $\Rightarrow$, suppose 
    $\psi(\tup a,\tup b)$ holds in $\str A$.
    Let $f(z)=\chi(\tup a\tup b(z))$ for $z\in\tup x\tup y$.
    By definition,  $\tup a\tup b$ satisfies $\bigwedge_{z\in\tup x\tup y}U_{f(z)}(z)$.
    
    \medskip
    Fix a function $f\from {\tup x\tup y}\to C$ assigning colors to variables in $\tup x\tup y$, and let $D\subset C$ be its range.
    For each $\str A\in\CCC$ let $\str A_f$ denote the substructure of $\str A$ induced by $\chi^{-1}(D)$ together with the unary predicates $U_c=\chi^{-1}(c)$ for $c\in D$.
    Let $\CCC_f=\setof{\str A_f}{\str A\in\CCC}$. Then $\CCC_f$ is a class of relational structures of treedepth at most $|D| \leq p$.
    Denote $$\phi_f(\tup x):=\exists \tup y.\left(\psi(\tup x,\tup y)\land \bigwedge_{z\in \tup x\tup y}U_{f(z)}(z)\right).$$ By Lemma~\ref{lem:td-qe}, 
	there is a quantifier-free formula $\wh\phi_f(\tup x)$, such that for $\str A_f\in\CCC_f$
	we can compute a structure $\wh {\str A_f}$ such that $(\phi_f)_{\str A_f}=(\wh \phi_f)_{\wh {\str A}_f}$.
	The formula $\wh\phi_f$ can be computed from 
	$\phi_f$ in time bounded by $\tower(4, \poly(|\varphi_f|, p)) \leq \tower(4, \poly(|\varphi|))$ (and therefore $|\wh\phi_f| \leq \tower(4, \poly(|\varphi|))$), and the structure $\wh{\str A_f}$ can be computed from $\phi_f$ and $\str A$ in time 
	$\tower(4, \poly(|\varphi|)) \cdot |\str A|$.

	Without loss of generality we assume that the signatures of the classes 
	$\CCC_f$ are pairwise disjoint. 
	Let $\wh {\str A}$ be the structure obtained by superimposing the structures $\wh{\str A}_f$, for all $f\from\tup x\tup y\to C$.
	Let $\wh\phi$ be the disjunction of the formulas $\wh\phi_f$,
	for all $f\from {\tup x\tup y}\to C$.
	By \eqref{eq:disjunction}, we  conclude that $\phi_\str A=\wh \phi_{\wh {\str A}}$, as  required.
	The construction time and the size of $\wh\phi$ are each bounded by
	\[
	\begin{split}
	|C|^p \cdot& \tower(4, \poly(|\phi|)) \\& \leq \tower(s + 2, \poly(|\phi|)) \cdot \tower(4, \poly(\phi|)) \\& \leq \tower(s + 4, \poly(|\phi|)),
	\end{split} \]
	and similarly, $\wh{\str A}$ can be constructed in time $\tower(s + 4, \poly(|\phi|)) \cdot |\str A|$.
\end{proof}

In the general case suppose we are given a~formula $\phi'$ of alternation rank $q'$.
By Lemma~\ref{lem:alternation-rank}, we convert it in time $\poly(|\phi'|)$ to a~batched formula $\phi$ of batched quantifier rank at most $q' + 1$ with $|\phi| \leq \poly(|\phi'|)$.
Then it remains to prove the following statement:
\begin{lemma}
Let $\CCC$ be a class of structures with bounded expansion.
Let $\phi(\tup x)$ be a first-order formula of batched quantifier rank $q$.
There exists a signature $\wh{\Sigma(\phi)}$ consisting of unary relation and function symbols and flags, a class $\wh{\CCC(\phi)}$ of~~$\wh{\Sigma(\phi)}$-structures with bounded expansion, a quantifier-free $\wh{\Sigma(\phi)}$-formula $\widehat\phi(\tup x)$ 
and an algorithm which given $\str A\in\CCC$
outputs in time $\Oof_{\CCC,\phi}(|\str A|)$ a structure $\widehat{\str A(\phi)}\in\widehat{\CCC(\phi)}$ whose Gaifman graph is contained in the Gaifman graph of $\str A$, such that $\phi_{\str A}=\widehat\phi_{\widehat{\str A(\phi)}}.$
Moreover, if $\tilde\nabla_r(\CC) \le \tower(s, \poly(r))$ for all $r \in \N$, then:
\begin{itemize}
	\item the formula $\wh\phi$ can be computed, given $\phi$, in time $\tower((s + 4)q,\poly(|\phi|))$,
	\item  	
	the structure $\wh {\str A(\phi)}$ can be computed, given $\phi$ and $\str A$, in time $\tower((s + 4)q,\poly(|\phi|))\cdot |\str A|$.
\end{itemize}
\end{lemma}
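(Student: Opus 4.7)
The plan is to prove the lemma by induction on the batched quantifier rank $q$. The base case $q=0$ is immediate: $\phi$ is quantifier-free, so setting $\wh\phi := \phi$, $\wh{\str A(\phi)} := \str A$, $\wh{\CCC(\phi)} := \CCC$, and $\wh{\Sigma(\phi)} := \Sigma$ satisfies all requirements within time $\poly(|\phi|) = \tower(0, \poly(|\phi|))$.

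For the inductive step, I would recurse on the syntactic structure of $\phi$. Boolean combinations are handled transparently: the quantifier-free translation distributes over the boolean connectives, and the enriched structures produced for the various subformulas can be superposed (after a disjoint renaming of their extra unary predicates, functions, and flags) onto $\str A$. For a subformula of the form $Q \tup y.\, \psi(\tup x, \tup y)$ with $Q \in \{\exists, \forall\}$ a batch of at most $m$ quantifiers and $\psi$ of batched quantifier rank $q-1$, I would proceed in two stages. First, invoke the inductive hypothesis at rank $q-1$ on $\psi$ (replacing $\psi$ by $\neg \psi$ first if $Q = \forall$, since negation preserves batched quantifier rank), obtaining a quantifier-free $\wh\psi$ on an enriched structure $\wh{\str A}_\psi$ lying in a class $\wh\CCC_\psi$ of bounded expansion whose Gaifman graphs are subgraphs of those of structures in $\CCC$. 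Second, apply Lemma~\ref{lem:be-existential} to the existential formula $\exists \tup y.\, \wh\psi$ over the class $\wh\CCC_\psi$, obtaining a quantifier-free formula and a further enrichment; if $Q = \forall$, negate the resulting quantifier-free formula (negation preserves quantifier-freeness).

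The crucial invariant is that the expansion parameter $s$ remains unchanged throughout the recursion. This follows from the containment-of-Gaifman-graphs clause of Lemma~\ref{lem:be-existential}, which guarantees that the bound $\tilde\nabla_r(\cdot) \le \tower(s, \poly(r))$ persists at every recursion depth. To bound the running time, let $\tau_q$ denote the complexity at depth $q$. Each inductive step applies Lemma~\ref{lem:be-existential} to a formula of size at most $\tau_{q-1}$, at cost $\tower(s+4,\, \poly(\tau_{q-1}))$. Using the identities $\tower(a, \tower(b, x)) = \tower(a+b, x)$ and $\poly(\tower(k, x)) \le \tower(k, \poly(x))$, unfolding the recurrence $\tau_q \le \tower(s+4,\, \poly(\tau_{q-1}))$ with $\tau_0 = \poly(|\phi|)$ yields $\tau_q \le \tower((s+4)q,\, \poly(|\phi|))$, as required.

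The main obstacle is precisely the preservation of the expansion profile through the induction: if the parameter $s$ were to strictly increase at each inductive step (e.g. because Lemma~\ref{lem:be-existential} introduced new long-range edges in the Gaifman graph), the tower height would accumulate superlinearly in $q$ rather than linearly, breaking the claimed bound $(s+4)q$. The Gaifman-graph containment clause in Lemma~\ref{lem:be-existential} is exactly what keeps $s$ fixed across levels and makes the linear-in-$q$ tower height attainable.
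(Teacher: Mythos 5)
Your proof is correct and follows essentially the same approach as the paper: induction on the batched quantifier rank, handling Boolean connectives by superposing the enriched structures on disjoint extra signatures, and handling each quantifier block by applying Lemma~\ref{lem:be-existential} to the quantifier-free residue from the inductive call. The paper's only organizational difference is that it splits off a $\neg\phi'$ case instead of reducing $\forall$ to $\neg\exists\neg$ explicitly; you also correctly identify the key point, namely that the Gaifman-graph containment clause of Lemma~\ref{lem:be-existential} keeps the expansion parameter $s$ fixed across recursion levels, which is what yields the linear-in-$q$ tower height.
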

\begin{proof}
	Induction on the structure of $\phi$.
	For $q = 0$, the statement is trivial as we can take $\wh{\phi} = \phi'$ and $\wh{\A(\phi)} = \A$ for all $\A \in \CC$.
	From now on suppose that $q \geq 1$.
	\begin{itemize}[nosep]
		\item If $\phi = \neg \phi'$, then define inductively $\wh{\phi} = \neg\wh{\phi'}$ and $\wh{\A(\phi)} = \wh{\A(\phi')}$ for all $\A \in \CC$.
		\item If $\phi = \phi_1 \vee \phi_2$, then consider that the inductive signatures $\wh{\Sigma(\phi_1)}$, $\wh{\Sigma(\phi_2)}$ are disjoint. Define $\wh{\Sigma(\phi)} = \wh{\Sigma(\phi_1)} \cup \wh{\Sigma(\phi_2)}$ and $\wh{\phi} = \wh{\phi_1} \vee \wh{\phi_2}$.
		For $\A \in \CC$, define $\wh{\A(\phi)}$ by superimposing the structures $\wh{\A(\phi_1)}$, $\wh{\A(\phi_2)}$.
		\item If $\phi(\tup x) = \exists{\tup y}. \psi(\tup x, \tup y)$, where $\psi$ has batched quantifier rank at most $q - 1$, then apply induction to $\psi$, yielding (in particular) a~formula $\wh{\psi}$ of size at most $\tower((s + 4)(q - 1), \poly(|\psi|))$.
		Apply Lemma~\ref{lem:be-existential} to the existential formula $\phi'(\tup x) = \exists{\tup y}.\wh{\psi}(\tup x, \tup y)$; this gives us a~formula $\wh{\phi}$ of size at most
		$\tower(4, \poly(|\wh{\psi}|)) \leq \tower((s+4)q, \poly(|\psi|))$.
		Also, for $\A \in \CC$, we define $\wh{\A(\phi)}$ by applying Lemma~\ref{lem:be-existential} to $\wh{\A(\psi)}$.
	\end{itemize}
	The time bounds can be verified straightforwardly with Lemma~\ref{lem:be-existential}.
\end{proof}

This completes the proof of Theorem~\ref{thm:be-qe}.

\end{document}